\newcommand{\abs}[1]{\left\lvert#1\right\rvert}
\newcommand{\norm}[1]{\left\lVert#1\right\rVert}
\DeclareMathOperator{\polylog}{polylog}
\newtheorem{theorem}{Theorem}
\newtheorem{lemma}{Lemma}
\newtheorem{proposition}[lemma]{Proposition}
\newtheorem{corollary}[lemma]{Corollary}
\theoremstyle{remark}
\newtheorem*{remark}{Remark}
\theoremstyle{plain}
\newcommand{\eq}[1]{\cref{eq:#1}}
\newcommand{\thm}[1]{\hyperref[thm:#1]{Theorem~\ref*{thm:#1}}}
\newcommand{\defn}[1]{\hyperref[defn:#1]{Definition~\ref*{defn:#1}}}
\newcommand{\lem}[1]{\hyperref[lem:#1]{Lemma~\ref*{lem:#1}}}
\newcommand{\prop}[1]{\hyperref[prop:#1]{Proposition~\ref*{prop:#1}}}
\newcommand{\fig}[1]{\hyperref[fig:#1]{Figure~\ref*{fig:#1}}}
\newcommand{\tab}[1]{\hyperref[tab:#1]{Table~\ref*{tab:#1}}}
\renewcommand{\sec}[1]{\hyperref[sec:#1]{Section~\ref*{sec:#1}}}
\newcommand{\append}[1]{\hyperref[append:#1]{Appendix~\ref*{append:#1}}}
\newcommand{\cor}[1]{\hyperref[cor:#1]{Corollary~\ref*{cor:#1}}}
\newcommand{\obs}[1]{\hyperref[obs:#1]{Observation~\ref*{obs:#1}}}
\newcommand{\ket}[1]{|#1\rangle}
\newcommand{\bra}[1]{\langle#1|}
\newcommand{\ketbra}[2]{\ket{#1}\!\bra{#2}}
\newcommand\blfootnote[1]{%
	\begingroup
	\renewcommand\thefootnote{}\footnote{#1}%
	\addtocounter{footnote}{-1}%
	\endgroup
}
\newenvironment{customthm}[1]
{\innercustomthm}
{\endinnercustomthm}
\DeclareFontFamily{U}{matha}{\hyphenchar\font45}
\DeclareFontShape{U}{matha}{m}{n}{
	<5> <6> <7> <8> <9> <10> gen * matha
	<10.95> matha10 <12> <14.4> <17.28> <20.74> <24.88> matha12
}{}
\DeclareSymbolFont{matha}{U}{matha}{m}{n}
\DeclareFontFamily{U}{mathx}{\hyphenchar\font45}
\DeclareFontShape{U}{mathx}{m}{n}{
	<5> <6> <7> <8> <9> <10>
	<10.95> <12> <14.4> <17.28> <20.74> <24.88>
	mathx10
}{}
\DeclareSymbolFont{mathx}{U}{mathx}{m}{n}
\DeclareMathSymbol{\obot}         {2}{matha}{"6B}
\DeclareMathSymbol{\bigobot}       {1}{mathx}{"CB}
\newcommand{\alphaU}{\alpha_{\scriptscriptstyle\mathbf{U}}}
\newcommand{\alphaT}{\alpha_{\scriptscriptstyle \widetilde{\mathbf{T}},\psi}}
\newcommand{\alphaFP}{\alpha_{\scriptscriptstyle \mathbf{F}'}}
\newcommand{\alphaF}{\alpha_{\scriptscriptstyle \mathbf{F},\psi}}
\newcommand{\pfail}{p_{\scriptscriptstyle \text{fail}}}
\newcommand{\sv}{\text{sv}}
\newcommand{\alphaPPsi}{\alpha_{\scriptscriptstyle p,\psi}}
\newcommand{\alphaFPsi}{\alpha_{\scriptscriptstyle \exp,\psi}}
\newcommand{\alphaPsiFaber}{\alpha_{\scriptscriptstyle \mathbf{\Psi}}}
\def\newmaketag{%
  \def\maketag@@@##1{\hbox{\m@th\normalfont\normalsize##1}}%
  }
\renewcommand*{\backrefalt}[4]{%
\ifcase #1 %
No citations.%
\or
(Cited on page #2).%
\else
(Cited on pages #2).%
\fi
}
\patchcmd\NAT@citexnum{\let\NAT@last@num\NAT@num}{\MakeLinkTarget[cite]{}\Hy@backout{\@citeb\@extra@b@citeb}\let\NAT@last@num\NAT@num}{}{\fail}
\apptocmd{\sloppy}{\hbadness 10000\relax}{}{}
\title{\huge\vspace{-1.cm} Quantum eigenvalue processing}
\author{Guang Hao Low$^1$ and Yuan Su$^2$}
\date{\vspace{-15mm}}
\begin{document}
\maketitle

\blfootnote{This is an enhanced version of the paper entitled \emph{Quantum eigenvalue processing} presented at the 65th IEEE Symposium on Foundations of Computer Science~\cite{QEVP2} and published in the SIAM Journal on Computing~\cite{QEVP}.}
\blfootnote{$^1$Azure Quantum, Microsoft, Redmond, WA 98052, USA. Now at Google Quantum AI, Venice, CA 90291, USA.}
\blfootnote{$^2$Azure Quantum, Microsoft, Redmond, WA 98052, USA. Now at the AWS Center for Quantum Computing, Pasadena, CA 91106, USA.}

\begin{abstract}
Many problems in linear algebra require processing eigenvalues of the input matrices. As eigenvalues are different from singular values for non-normal operators, these problems are out of reach of the existing quantum singular value algorithm and its descendants.

We present a Quantum EigenValue Estimation (QEVE) algorithm and a Quantum EigenValue Transformation (QEVT) algorithm that estimate and transform eigenvalues of high-dimensional matrices accessed by a quantum computer through block encoding oracles. We focus on input matrices with real spectra and Jordan forms---a broad class of operators that can describe non-Hermitian physics and transcorrelated quantum chemistry. However, our technique also handles general non-normal matrices with complex eigenvalues, and our method remains efficient even when the Jordan basis is ill conditioned.

Our QEVE estimates an eigenvalue of a diagonalizable matrix using $\mathbf{O}\left(\alpha\kappa/\epsilon\log(1/p)\right)$ queries to its block encoding and a unitary preparing the corresponding eigenstate, in terms of error $\epsilon$, failure probability $p$, normalization factor $\alpha$ of the block encoding, and condition number $\kappa$ of its basis transformation. This solves the eigenvalue estimation problem for a broad class of non-normal matrices with the Heisenberg-limited scaling, which naturally reduces to the optimal estimation of singular values that has long been known. Our approach is conceptually simple, based on reductions to the optimal scaling quantum linear system algorithm, improving over prior approaches using differential equation solvers which are polylogarithmic away from optimum.

Our QEVT implements transformations on eigenvalues of the input matrix through the Chebyshev and Faber approximations. As these expansions provide a close-to-best uniform polynomial approximation of functions over the complex plane, the query complexity of QEVT is expected to be nearly optimal. In particular, our eigenvalue algorithm achieves a performance comparable to previous singular value transformation results for the special
case of Hermitian inputs, where eigenvalues coincide with singular values in magnitude.

As an application, we present a quantum differential equation algorithm based on QEVT, whose query complexity scales strictly linear in the evolution time $t$ for an average-case diagonalizable input with imaginary spectra, whereas the best previous approach has a complexity with an extra multiplicative $\polylog(t)$ factor. We also develop a quantum algorithm for preparing the ground state of matrices with real spectra, which reduces to the nearly optimal result for Hermitian Hamiltonians from previous work.

Underlying both QEVE and QEVT is an efficient quantum algorithm for preparing the Chebyshev history state through its matrix generating function, encoding Chebyshev polynomials of the input matrix in quantum superposition, which may be of independent interest. Prior to our work, it was known how to efficiently create such a state only for Hermitian inputs. We then extend this result to prepare the Faber history state, achieving efficient eigenvalue transformation over the complex plane. Independently, we develop techniques to generate $n$ Fourier coefficients using $\mathbf{O}(\polylog(n))$ gates, improving over prior approaches with a cost of $\mathbf{\Theta}(n)$.

Our result thus provides a unifying framework for processing eigenvalues of matrices on a quantum computer.
\end{abstract}

\newpage
{
	\thispagestyle{empty}
	\clearpage\tableofcontents
	\thispagestyle{empty}
}
\newpage

\section{Introduction}
\label{sec:intro}
Many problems in linear algebra can be solved by processing eigenvalues of the input matrix. As eigenvalues of a non-normal matrix are different from its singular values, such problems are not approachable by the existing quantum singular value algorithm.

The goal of this work is to present an algorithmic framework to process eigenvalues of high-dimensional non-normal matrices on a quantum computer, going beyond previous quantum algorithms targeting at the singular values. We focus on operators with real spectra and Jordan forms---a broad class of non-normal matrices arising in the study of non-Hermitian physics and transcorrelated quantum chemistry.
However, our method also applies to more general operators with eigenvalues in the complex plane, with no dependence on the Jordan condition number.

Within the proposed framework, we develop: (i) a quantum algorithm for estimating eigenvalues of a non-normal matrix, with the complexity strictly achieving Heisenberg-limited scaling when the input is diagonalizable with real spectra; (ii) a quantum algorithm for applying arbitrary polynomial transformations to the input matrix, nearly reproducing previous singular value results when the input is Hermitian; (iii) a quantum algorithm for solving systems of linear differential equations, with a strictly linear scaling in the evolution time for an average diagonalizable input matrix with imaginary spectra; and (iv) a quantum algorithm for ground state preparation, recovering the nearly optimal scaling from previous work when the inputs are Hermitian Hamiltonians.
We summarize these algorithms in \tab{nonnormal}, explaining how we treat non-normality of the input in each case.

The core technique underpinning our results is the efficient creation of a history state encoding Chebyshev polynomials of the input matrix in quantum superposition, with further generalizations to Faber polynomials for eigenvalue processing over the complex plane. This is in turn achieved by implementing an operator version of the Chebyshev and Faber generating functions.
While the Chebyshev history state of a Hermitian matrix can be efficiently generated via quantum walk, no such a mechanism was available for non-normal operators. To our knowledge, our work elucidates the first connection between quantum computing and the vast field of Faber polynomials, which provide a nearly-optimal basis for uniform function approximation over the complex domain.

Our result thus suggests the use of matrix generating functions as a powerful methodology for solving linear algebraic problems on a quantum computer.

\subsection{Eigenvalue processing}
\label{sec:intro_process}

Quantum computers can operate quantum systems of exponentially large dimensions with only polynomially many resources. This feature underlies the exponential speedups found in various promising applications, including simulating quantum systems~\cite{Lloyd1996universal}, solving systems of linear equations~\cite{Harrow2009}, and factoring integers~\cite{sho_polynomialtime_1997}. These computational problems typically have inputs encoded by matrices such as multi-qubit unitaries and Hamiltonians, and their solutions can be obtained by processing the exponentially large matrices on a quantum computer using only a polynomial amount of resources. For quantum simulation, this involves applying the exponential function $H\mapsto e^{-itH}$ to the target Hamiltonian. For solving linear equations, this means implementing the inverse function $A\mapsto A^{-1}$ on a well-conditioned coefficient matrix. And for factoring integers, this entails estimating eigenvalues of the modular multiplication operator.

To directly harness this exponential power of quantum computers, an algorithmic technique known as the Quantum Singular Value Transformation (QSVT) was proposed~\cite{Gilyen2018singular}. Given the singular value decomposition $A=V\Sigma U^\dagger$ of the input matrix, QSVT applies polynomial functions $p$ to the singular values of $A$ in the following manner:
\begin{equation}
    A=V\Sigma U^\dagger\mapsto p_{\sv}(A)=Vp(\Sigma)U^\dagger.
\end{equation}
Notably, this can be realized on a quantum computer with a query complexity that depends on degree of the polynomial rather than dimensions of the underlying Hilbert space, avoiding an explicit calculation of the exponentially large basis transformations $U$ and $V$. As the notion of singular values plays a fundamental role in linear algebra, QSVT has found a host of applications in quantum linear algebra and has unified diverse quantum algorithms~\cite{Martyn21,childs2017lecture,LinLin22,deWolf19}, ranging from Hamiltonian simulation~\cite{Low2016HamSim} to fixed-point quantum search~\cite{Yoder2014Fixed} and eigenstate filtering~\cite{Lin2020optimalpolynomial}, while providing a systematic methodology to implement transformations on singular values with optimal query complexity~\cite{Montanaro2024quantum}.

Closely related to the singular value transformation is the problem of singular value estimation, whose solution underlies the quantum speedups for factoring integers~\cite{sho_polynomialtime_1997} and elucidating chemical reactions~\cite{vonBurg21,Lee21}. Here, an initial state close to a right singular vector $\ket{\psi_j}$ of $A$ is given, and the goal is to estimate the corresponding singular value $\sigma_j$. By generalizing the quantum phase estimation algorithm, a solution to the singular value estimation can be obtained with an optimal number of controlled queries to operators encoding the input matrix~\cite{kerenidis2016quantum,CGJ19,Gilyen2018singular}. In the special case where $A$ is Hermitian, essentially the same algorithm can be used for eigenvalue estimation, as eigenvalues and singular values coincide in absolute value.

However, many problems arising in practice require processing \emph{eigenvalues} of the input matrix, not its \emph{singular values}. For a diagonalizable input $A$, this means performing the polynomial transformation directly on $A$, which has the action
\begin{equation}
    A=S\Lambda S^{-1}\mapsto p(A)=Sp(\Lambda)S^{-1},
\end{equation}
with an invertible basis transformation $S$.
Such eigenvalue processing problems arise naturally in a variety of applications, including solving linear differential equations~\cite{Berry2017Differential}, simulating non-Hermitian physics~\cite{Bender07,Ashida20}, simulating transcorrelated quantum chemistry~\cite{McArdle20}, and estimating spectral properties of stochastic matrices~\cite{Zhang2024Nonnormal}. 
Of course, any singular value processing problem can always be reformulated as an eigenvalue problem through the Hermitian dilation $A\mapsto\ketbra{0}{1}\otimes A+\ketbra{1}{0}\otimes A^\dagger$.
But the converse does not hold for non-normal matrices. In fact, eigenvalue processing appears far more challenging than singular value processing in many respects:
(i) the basis transformation $S$ is invertible but not necessarily unitary;
(ii) eigenvalues are generally complex numbers, unlike singular values that are real and nonnegative; and
(iii) $A$ may only admit the Jordan form decomposition in general, where the factor $\Lambda$ is not diagonal.
See \sec{prelim_matrix} for formal definitions of the singular value, eigenvalue, and Jordan form transformations.
Hence, existing singular value algorithms are not applicable to such eigenvalue problems, and there is no unifying framework to solve them on a quantum computer with optimal query complexity.

\subsection{Chebyshev history state generation}
\label{sec:intro_history}

We present a Quantum EigenValue Estimation (QEVE) algorithm and a Quantum EigenValue Transformation (QEVT) algorithm that estimate and transform eigenvalues of high-dimensional non-normal matrices accessed by a quantum computer. Our main results and applications are illustrated diagrammatically in \fig{diagram} and tabulated in \tab{nonnormal}.

\begin{figure}[t]
	\centering
\includegraphics[width=0.95\textwidth]{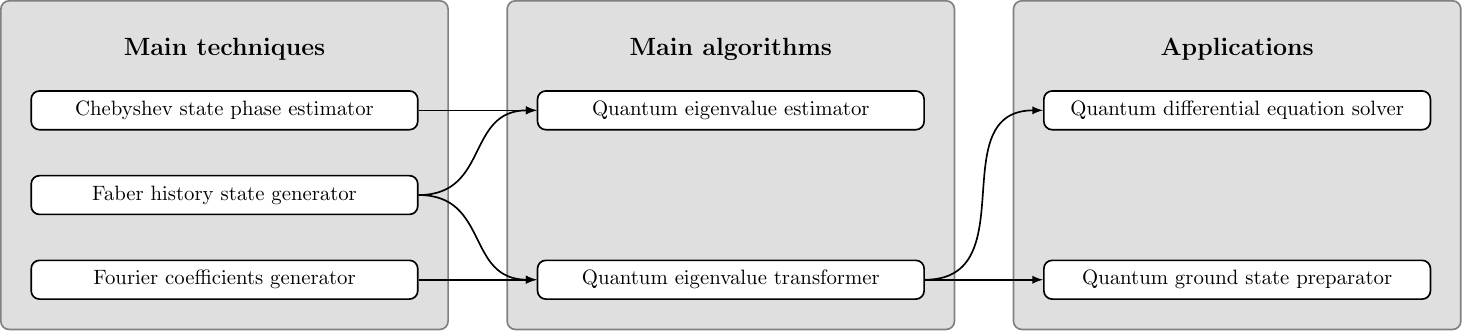}
\caption{A diagrammatic illustration of quantum eigenvalue processing and its applications. See \tab{nonnormal} for a summary of common treatments of non-normality of the input matrix and query complexity.}
\label{fig:diagram}
\end{figure}

The common tool underpinning both QEVE and QEVT is an efficient algorithm to prepare a history state encoding polynomials of the input matrix through the use of generating functions.
To be specific, consider the expansion of a polynomial $p$ with respect to a polynomial basis
\begin{equation}
\label{eq:poly_expansion}
    p(A)=\sum_{j=0}^{n-1}\beta_jp_j(A),
\end{equation}
where $p_j$ are degree-$j$ polynomials, and $A$ is for now assumed to satisfy $\norm{A}=1$ for simplicity ($\norm{\cdot}$ is the spectral norm). Given a block encoding of $A$ (see \sec{prelim_block} for the formal definition), if $A$ is Hermitian, prior art provides a highly efficient method based on quantum walk for generating $p_j(A)$ that are Chebyshev polynomials of $A$ using only $\mathbf{O}(j)$ queries to the block encoding~\cite{Childs2010,Low2019hamiltonian}. The QSVT algorithm in particular allows one to block encode an arbitrary $p(A)$ with query complexity
\begin{equation*}
    \mathbf{O}\left(n \norm{p}_{\max,[-1,1]}\right),
\end{equation*}
where $\norm{p}_{\max,[-1,1]}=\max_{x\in[-1,1]}\abs{p(x)}$.

However, when $A$ is non-normal, the QSVT technique does not apply and previous methods only allow one to efficiently generate monomials $A^j$ using $\mathbf{O}(j)$ queries. Hence if arbitrary polynomials $p(A)$ or $p_j(A)$ of $A$ are desired, previous methods are only able to block encode $p(A)$ by taking a linear combination of monomials, with a cost scaling like
\begin{equation}
    \mathbf{O}\left(n\sum_{j=0}^{n-1}\abs{\beta_j}\right)
    =\mathbf{O}\left(n\norm{\beta}_1\right),
\end{equation}
where $\beta_j$ are coefficients of the monomial expansion of $p$. Note that all the terms above add up constructively, so the result $\norm{\beta}_1=\sum_{j=0}^{n-1}\abs{\beta_j}$ can be significantly larger than the desired scaling with $\norm{p}_{\max,[-1,1]}$. In fact, this cost is exponentially large in $n$ for the differential equation and the ground state preparation problem to be discussed here. 

Our approach overcomes this exponential complexity of block encoding in a monomial basis by providing means to directly and efficiently generate a polynomial basis $p_j(A)$ of a non-normal matrix $A$, where $p_j$ may, for instance, include Chebyshev polynomials. Our approach is based on a matrix version of the generating function $\sum_{j=0}^\infty y^j p_j(x)=g(y,x)$ of the polynomial basis $p_j$. Specifically, we introduce the $n$-by-$n$ lower shift matrix $L$ and aim to implement 
\begin{equation}
    \sum_{j=0}^{n-1} L^j\otimes p_j\left(\frac{A}{\alpha_A}\right)=\sum_{j=0}^\infty L^j\otimes p_j\left(\frac{A}{\alpha_A}\right)=g\left(L\otimes I,I\otimes \frac{A}{\alpha_A}\right)
\end{equation}
with a complexity polynomial in $n$ (we will incorporate the normalization factor $\alpha_A$ of the block encoding of $A$ hereafter). 
The above equation follows from the definition of generating function, along with the substitution $y=L\otimes I$ and $x=I\otimes\frac{A}{\alpha_A}$.
This is then measured to estimate the target eigenvalue, or is further combined with a subroutine that generates the expansion coefficients $\beta_j$ to efficiently transform the eigenvalues.

When the polynomial basis is selected to be Chebyshev polynomials, we have the following \thm{generate_history} which will be established in \sec{history}.

\begin{customthm}{\ref{thm:generate_history}}[Chebyshev history state generation] 
    Let $A$ be a square matrix with only real eigenvalues, such that $A/\alpha_A$ is block encoded by $O_A$ with some normalization factor $\alpha_A\geq\norm{A}$. 
    Let $O_\psi\ket{0}=\ket{\psi}$ be the oracle preparing the initial state, 
    and $O_{\widetilde\beta}\ket{0}\propto\sum_{k=0}^{n-1}(\widetilde\beta_k-\widetilde\beta_{k+2})\ket{n-1-k}$ be the oracle preparing the shifting of coefficients $\widetilde\beta_k$ ($k=0,\ldots,n-1$). 
    Then, the quantum state proportional to
    \begin{equation}
    \label{eq:cheby_history}
        \ket{0}\sum_{l=0}^{n-1}\ket{l}
        \sum_{k=n-1-l}^{n-1}\widetilde{\beta}_k\widetilde{\mathbf{T}}_{k+l-n+1}\left(\frac{A}{\alpha_A}\right)\ket{\psi}
        +\sum_{s=1}^{\eta}\ket{s}\sum_{l=0}^{n-1}\ket{l}
        \sum_{k=0}^{n-1}\widetilde{\beta}_k\widetilde{\mathbf{T}}_{k}\left(\frac{A}{\alpha_A}\right)\ket{\psi}
    \end{equation}
    can be prepared with accuracy $\epsilon$ and probability $1-\pfail$ using
    \begin{equation}
        \mathbf{O}\left(\alphaU n(\eta+1)\log\left(\frac{1}{\epsilon}\right)\log\left(\frac{1}{\pfail}\right)\right)
    \end{equation}
    queries to controlled-$O_A$, controlled-$O_\psi$, controlled-$O_{\widetilde\beta}$, and their inverses, where 
    \begin{equation}
    \label{eq:alphaU}
        \alphaU\geq\max_{j=0,1,\ldots,n-1}\norm{\mathbf{U}_{j}\left(\frac{A}{\alpha_A}\right)},
    \end{equation}
    is an upper bound on Chebyshev polynomials of the second kind $\mathbf{U}_j(x)$,
    $\widetilde{\mathbf{T}}_k(x)$ are rescaled Chebyshev polynomials of the first kind to be defined in \sec{prelim_cheby_fourier}, and $\norm{\cdot}$ denotes the Euclidean norm for vectors and the spectral norm for operators.
\end{customthm}

The output state of our algorithm \eq{cheby_history} resembles those constructed by previous quantum differential equation algorithms. See~\cite{Berry2017Differential} for further explanations on how they relate to the history states commonly used in quantum complexity theory.
Specifically, this is a state of three quantum registers. The third register is the system register holding the input state, on which we perform the truncated Chebyshev expansion. The first register indicates whether the expansion has shifted indices, whereas the amount of shifting is further recorded in the second register. Here, the parameter $\eta$ controls the probability of preparing the Chebyshev partial sum. In our applications, we will adjust the parameter $\eta$ (choosing either $\eta=0$ or $\eta=1$) as well as the expansion coefficients $\widetilde{\beta}_k$, so that the resulting history state can be used in QEVE or QEVT respectively.

The complexity of our algorithm depends on largest size of the input operator under the polynomial basis mapping $\alphaU\geq\max_{j=0,1,\ldots,n-1}\norm{\mathbf{U}_{j}\left(\frac{A}{\alpha_A}\right)}$, which can be further bounded using properties of the input matrix $A$. For instance, if $A/\alpha_A=SJS^{-1}$ has a Jordan form decomposition, with $\kappa_S\geq\norm{S}\norm{S^{-1}}$ an upper bound on the \emph{Jordan condition number} and $d_{\max}$ size of the largest Jordan block, then $\alphaU=\mathbf{O}\left(n^{d_{\max}-1}\kappa_S\right)$ grows polynomially as a function in $n$. In particular, we have $\alphaU=\mathbf{O}\left(\kappa_S\right)$ if $d_{\max}=1$, resulting in the complexity
\begin{equation}
    \mathbf{O}\left(\kappa_Sn(\eta+1)\log\left(\frac{1}{\epsilon}\right)\right)
\end{equation}
for generating the Chebyshev history state of a diagonalizable input matrix $A$. As will be explained in \append{analysis_cheby}, this is essentially a bound based on the \emph{spectral abscissa}, corresponding to the analysis~\cite[Section 3.2]{Krovi2023improvedquantum} of differential equation solvers. Note however that there exist other bounds to determine $\alphaU$ each having its own strength and weakness, and there is no unifying one that completely dominates the others.
See \tab{nonnormal} for more details.
For the purpose of generality, we choose to express the complexity of our algorithm in terms of a general upper bound $\alphaU\geq\max_{j=0,1,\ldots,n-1}\norm{\mathbf{U}_{j}\left(\frac{A}{\alpha_A}\right)}$, which can be further refined when our algorithm is applied to a concrete problem.

To generate the Chebyshev history state, we use a matrix version of the Chebyshev generating function $\sum_{j=0}^\infty y^j \widetilde{\mathbf{T}}_j(x)=\frac{1}{2}\frac{1-y^2}{1-2yx+y^2}$:
\begin{equation}
    \sum_{j=0}^{n-1}L^j\otimes \widetilde{\mathbf{T}}_j\left(\frac{A}{\alpha_A}\right)
    =\sum_{j=0}^{\infty}L^j\otimes \widetilde{\mathbf{T}}_j\left(\frac{A}{\alpha_A}\right)
    =\frac{I\otimes I-L^2\otimes I}{2(I\otimes I+L^2\otimes I-2L\otimes \frac{A}{\alpha_A})},
\end{equation}
where $L$ is the $n$-by-$n$ lower shift matrix $L=\sum_{k=0}^{n-2}\ketbra{k+1}{k}$ such that $L^n=0$. When applied to an ancilla state encoding the Chebyshev coefficients $\widetilde\beta_k$, this generates the first term of our desired history state. The second term can then be generated by repeating the subterm flagged by $\ket{l}=\ket{n-1}$ a total number of $\sim\eta n$ times using the runaway padding trick~\cite{Berry2017Differential}. 

The application of the matrix Chebyshev generating function to the initial state can be formulated as solving a system of linear equations with coefficient matrix $C=(I\otimes I+L^2\otimes I-2L\otimes \frac{A}{\alpha_A})$ and target vector $b=\frac{I\otimes I-L^2\otimes I}{2}\left(\sum_{k=0}^{n-1}\widetilde\beta_k\ket{n-1-k}\ket{\psi}\right)\propto
\sum_{k=0}^{n-1}(\widetilde\beta_k-\widetilde\beta_{k+2})\ket{n-1-k}\ket{\psi}$, which can in turn be solved by a quantum linear system algorithm. To produce an $\epsilon$-approximate solution state $\ket{x}$ corresponding to the equation $Cx=b$, the fastest quantum linear system solver makes $\mathbf{O}\left(\alpha_C\alpha_{C^{-1}}\log\left(\frac{1}{\epsilon}\right)\right)$ queries to the block encoding of $C/\alpha_C$ and the unitary preparing the normalized version of $b$ as a quantum state~\cite{Costa22,OptInit,Dalzell2024shortcut}, where $\alpha_{C^{-1}}\geq\norm{C^{-1}}$ is an upper bound on size of the inverse operator. The claimed complexity is then established by showing explicitly how the padded version of $(I\otimes I+L^2\otimes I-2L\otimes \frac{A}{\alpha_A})$ can be block encoded using $1$ query to the block encoding of $A/\alpha_A$, and by upper bounding condition number of the resulting linear system.

\begin{figure}[t]
\centering
\begin{subfigure}[t]{.25\linewidth}
\hfill
\includegraphics[width = 0.95\textwidth]{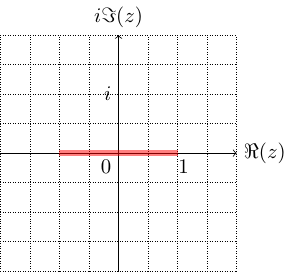}
\subcaption{}
\label{fig:region_interval}
\end{subfigure}
\begin{subfigure}[t]{.25\linewidth}
\hfill
\includegraphics[width = 0.95\textwidth]{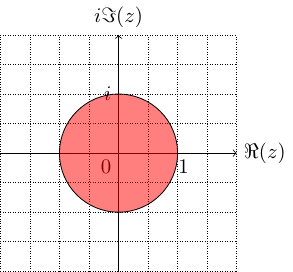}
\subcaption{}
\label{fig:region_disk}
\end{subfigure}
\begin{subfigure}[t]{.25\linewidth}
\hfill
\includegraphics[width = 0.95\textwidth]{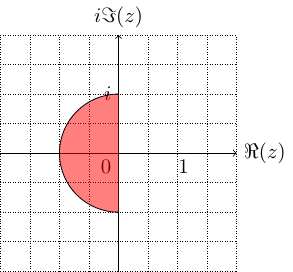}
\subcaption{}
\label{fig:region_general}
\end{subfigure}
\begin{subfigure}[t]{.25\linewidth}
\hfill
\includegraphics[width = 0.95\textwidth]{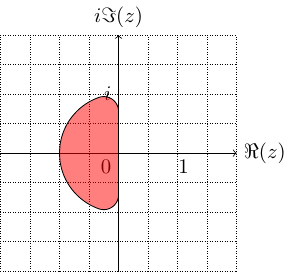}
\subcaption{}
\label{fig:region_general2}
\end{subfigure}
\caption{Illustration of regions in the complex plane that enclose eigenvalues of the input matrices. Subfigure (a) represents the real interval $[-1,1]$, where functions can be nearly best approximated by truncating the Chebyshev expansion. Subfigure (b) represents the unit disk, where functions can be nearly best approximated by truncating the Taylor expansion. Subfigures (c) and (d) represent more general regions in the complex plane, where functions can be nearly best approximated by truncating the Faber expansion. Subfigure (c) shows a unit semidisk on the left half-plane, with Faber expansion generated by the Elliott's conformal map~\cite{Coleman1987FaberCircularSectors}, whereas Subfigure (d) is a smooth deformation of (c).}
\label{fig:region}
\end{figure}

Chebyshev expansion provides a nearly best uniform polynomial approximation of functions over the real interval $[-1,1]$ (\fig{region_interval}). They are thus especially suitable for processing eigenvalues of matrices with real spectra. As aforementioned, matrices with real eigenvalues and Jordan forms already constitute a broad class of non-normal matrices that have applications in non-Hermitian physics~\cite{Bender07,Ashida20} and transcorrelated quantum chemistry~\cite{McArdle20} (complementary to the Hermitian case of~\cite{Motta20}). However, our technique is extendable to handle circular disks by implementing Taylor expansion (\fig{region_disk}), as well as more general complex regions by implementing Faber expansion~\cite{suetin1998series} (\fig{region_general} and \fig{region_general2}), providing a unified quantum algorithmic framework for eigenvalue processing.

\subsection{Quantum eigenvalue estimation}
\label{sec:intro_est}

The ability to generate the Chebyshev history states allows us to estimate and transform real eigenvalues of non-normal matrices.

Recall that although optimal quantum algorithms for the singular value estimation have long been known, the optimal eigenvalue estimation has remained elusive for non-normal matrices. Specifically, prior approaches to QEVE work by generating a state of the form
\begin{equation}
    \frac{\sum_{l=0}^{n-1}\ket{l}e^{2\pi il\frac{A}{\alpha_A}}\ket{\psi}}{\norm{\sum_{l=0}^{n-1}\ket{l}e^{2\pi il\frac{A}{\alpha_A}}\ket{\psi}}}
\end{equation}
using quantum differential equation algorithms~\cite{Shao2019eigenvalues,Shao2020GeneralizedEigenvalue}. When $\ket{\psi}\approx\ket{\psi_j}$ is close to an eigenstate of $A$ with eigenvalue $\lambda_j$, the resulting state (omitting the second register) is close to the Fourier state $\frac{1}{\sqrt{n}}\sum_{l=0}^{n-1}e^{2\pi il\frac{\lambda_j}{\alpha_A}}\ket{l}$.
Then the standard phase estimation algorithm suffices to estimate $\lambda_j/\alpha_A$ with accuracy $\mathbf{O}(1/n)$ and a constant success probability strictly larger than $1/2$, which can be boosted to at least $1-\pfail$ by repeating $\mathbf{O}(\log(1/\pfail))$ times and taking the median of measurement outcomes.
Unfortunately, existing quantum differential equation algorithms are not known to be optimal. For instance, consider a diagonalizable matrix $A/\alpha_A=S(\Lambda/\alpha_A) S^{-1}$. Then it takes
\addtocounter{equation}{1}
\begin{equation}
    \mathbf{O}\left(\kappa_Sn\polylog\left(\frac{\kappa_Sn}{\delta}\right)\right)
\end{equation}
queries to even produce the Fourier state with accuracy $\delta$. Choosing $\delta=\mathbf{\Theta}(1)$ sufficiently small and $n=\mathbf{\Theta}(\alpha_A/\epsilon)$, one gets a suboptimal quantum algorithm for eigenvalue estimation with cost
\begin{equation}
    \mathbf{O}\left(\frac{\alpha_A\kappa_S}{\epsilon}\polylog\left(\frac{\alpha_A\kappa_S}{\epsilon}\right)\log\left(\frac{1}{\pfail}\right)\right).
\end{equation}

In contrast, our approach starts by generating the following Chebyshev history state:
\begin{equation}
        \frac{\sum_{l=0}^{n-1}\ket{l}\widetilde{\mathbf{T}}_l\left(\frac{A}{\alpha_A}\right)\ket{\psi}}{\norm{\sum_{l=0}^{n-1}\ket{l}\widetilde{\mathbf{T}}_l\left(\frac{A}{\alpha_A}\right)\ket{\psi}}}
    \end{equation}
using only
\begin{equation}
        \mathbf{O}\left(\kappa_Sn\log\left(\frac{1}{\delta}\right)\right)
    \end{equation}
queries (for a diagonalizable input), where $\widetilde{\mathbf{T}}_l(x)$ are the rescaled Chebyshev polynomials as above.
To prepare this state, we can use our \thm{generate_history} with $\eta=0$ and set $\widetilde{\beta}_k=1$ when $k=n-1$ and $\widetilde{\beta}_k=0$ otherwise.
When $\ket{\psi}\approx\ket{\psi_j}$ is close to an eigenstate of $A$ with eigenvalue $\lambda_j$, the state we produce is close to the Chebyshev state $\frac{1}{\sqrt{\widetilde\alpha_{\lambda_j}}}\sum_{l=0}^{n-1}\widetilde{\mathbf{T}}_l\left(\frac{\lambda_j}{\alpha_A}\right)\ket{l}$,
where $\widetilde\alpha_{\lambda_j}=\sum_{l=0}^{n-1}\widetilde{\mathbf{T}}_l^2\left(\frac{\lambda_j}{\alpha_A}\right)$ is the normalization factor. This state is comparable to the Fourier state used by previous work, but its generation only takes $\mathbf{O}\left(\kappa_Sn\log\left(1/\delta\right)\right)$ queries and is thus significantly faster. It remains to explain how we estimate $\lambda_j$ given copies of such Chebyshev history states.

To this end, we use the observation that our generated state has an overlap of $\mathbf{\Omega}\left(1-1/\sqrt{n}\right)$ with the unrescaled Chebyshev state $\frac{1}{\sqrt{\alpha_{\lambda_j}}}\sum_{l=0}^{n-1}\mathbf{T}_l\left(\frac{\lambda_j}{\alpha_A}\right)\ket{l}
    =\frac{1}{\sqrt{\alpha_{\phi_j}}}\sum_{l=0}^{n-1}\cos\left(2\pi l\phi_j\right)\ket{l}$,
where $\phi_j=\frac{1}{2\pi}\arccos\left(\frac{\lambda_j}{\alpha_A}\right)$ and $\alpha_{\lambda_j}=\alpha_{\phi_j}=\sum_{l=0}^{n-1}\cos^2\left(2\pi l\phi_j\right)$.
Here, $\cos\left(2\pi l\phi_j\right)$ are periodic trigonometric functions whose discrete spectra are given by the Kronecker delta functions, so the phase angle $\phi_j$ should be extractable using the Fourier transform. To realize this intuition, we develop a variant of quantum phase estimation in \thm{chebyshev_qpe} for the Chebyshev history state, which may be of independent interest.

Given quantum state $\frac{1}{\sqrt{\alpha_\phi}}\sum_{l=0}^{n-1}\cos\left(2\pi l\phi\right)\ket{l}$, the Chebyshev state phase estimation algorithm outputs a value $l\in\{0,\ldots,n-1\}$ such that $\frac{l}{n}\approx\phi$ in modular distance.
Thus we can use $\alpha_A\cos\left(2\pi\frac{l}{n}\right)$ to estimate $\lambda_j$, and we achieve an accuracy $\epsilon$ by setting $n=\mathbf{O}\left(\alpha_A/\epsilon\right)$. This fails with a constant probability strictly smaller than $\frac{1}{2}$. By repeating $\mathbf{O}\left(\log(1/\pfail)\right)$ times and taking the median, the failure probability can be exponentially suppressed to below $\pfail$. We thus have the following \thm{qeve} which we preview here and prove in \sec{est}.

\begin{customthm}{\ref{thm:qeve}}[Quantum eigenvalue estimation]
    Let $A$ be a square matrix with only real eigenvalues, such that $A/\alpha_A$ is block encoded by $O_A$ with some normalization factor $\alpha_A\geq\norm{A}$.  
    Suppose that oracle $O_\psi\ket{0}=\ket{\psi}$ prepares an initial state within distance $\norm{\ket{\psi}-\ket{\psi_{\lambda_j}}}=\mathbf{O}(\sqrt{\epsilon/\alpha_A}/\alphaU)$ from an eigenstate $\ket{\psi_{\lambda_j}}$ such that $A\ket{\psi_{\lambda_j}}=\lambda_j\ket{\psi_{\lambda_j}}$, where $\alphaU$ satisfies \eq{alphaU} with
\addtocounter{equation}{3}
    \begin{equation}
        n=\mathbf{O}\left(\frac{\alpha_A}{\epsilon}\right).
    \end{equation}
    Then, the eigenvalue $\lambda_j$ can be estimated with accuracy $\epsilon$ and probability $1-\pfail$ using
    \begin{equation}
        \mathbf{O}\left(\frac{\alpha_A}{\epsilon}\alphaU\log\left(\frac{1}{\pfail}\right)\right)
    \end{equation}
    queries to controlled-$O_A$, controlled-$O_\psi$, and their inverses.
\end{customthm}

The setting in which the above algorithm works is pretty general as it only requires the input matrix $A$ to have real spectra. As is already explained, there are various methods one can use to further bound $\alphaU$. In the case where $A/\alpha_A=S(\Lambda/\alpha_A)S^{-1}$ is diagonalizable, $\alphaU=\mathbf{O}(\kappa_S)$ and the cost of QEVE algorithm becomes
\begin{equation}
    \mathbf{O}\left(\frac{\alpha_A\kappa_S}{\epsilon}\log\left(\frac{1}{\pfail}\right)\right).
\end{equation}
The scaling with the inverse precision $\sim\frac{\alpha_A}{\epsilon}$ cannot be improved even in the ideal case where the exact eigenstate is provided. It is closely related~\cite{Atia2017} to the \emph{Heisenberg scaling} in quantum metrology~\cite{Giovannetti06,Zwierz10}, and we will adopt this terminology throughout the remainder of the paper.

\subsection{Quantum eigenvalue transformation}
\label{sec:intro_transform}

Besides the eigenvalue estimation, the availability of the Chebyshev history state also allows us to apply polynomial functions to the eigenvalues of non-normal matrices. This is formally realized by the QEVT algorithm in \thm{qevt}, which we state here and establish in \sec{transform}.

\begin{customthm}{\ref{thm:qevt}}[Quantum eigenvalue transformation]
    Let $A$ be a square matrix with only real eigenvalues, such that $A/\alpha_A$ is block encoded by $O_A$ with some normalization factor $\alpha_A\geq\norm{A}$.
    Let $p(x)=\sum_{k=0}^{n-1}\widetilde\beta_k\widetilde{\mathbf{T}}_{k}(x)=\sum_{k=0}^{n-1}\beta_k{\mathbf{T}}_{k}(x)$ be the Chebyshev expansion of a degree-($n-1$) polynomial $p$.
    Let $O_\psi\ket{0}=\ket{\psi}$ be the oracle preparing the initial state, 
    and $O_{\widetilde\beta}\ket{0}\propto\sum_{k=0}^{n-1}(\widetilde\beta_k-\widetilde\beta_{k+2})\ket{n-1-k}$ be the oracle preparing the shifting of coefficients $\widetilde\beta_k$ ($k=0,\ldots,n-1$). 
    Then, the quantum state
    \begin{equation}
        \frac{p\left(\frac{A}{\alpha_A}\right)\ket{\psi}}{\norm{p\left(\frac{A}{\alpha_A}\right)\ket{\psi}}}
    \end{equation}
    can be prepared with accuracy $\epsilon$ and probability $1-\pfail$ using
    \begin{equation}
        \mathbf{O}\left(\frac{\alphaT}{\alphaPPsi}\alphaU n\log\left(\frac{\alphaT}{\alphaPPsi\epsilon}\right)\log\left(\frac{1}{\pfail}\right)\right)
    \end{equation}
    queries to controlled-$O_A$, controlled-$O_\psi$, controlled-$O_{\widetilde\beta}$, and their inverses,
    where $\alphaU$ satisfies \eq{alphaU} and
    \begin{equation}
    \label{eq:alphaT}
        \alphaT\geq
        \max_{l=0,1,\ldots,n-1}\norm{\sum_{k=l}^{n-1}\widetilde\beta_k\widetilde{\mathbf{T}}_{k-l}\left(\frac{A}{\alpha_A}\right)\ket{\psi}},\qquad
        \alphaPPsi\leq\norm{p\left(\frac{A}{\alpha_A}\right)\ket{\psi}}
    \end{equation}
    are upper bound on the maximum shifted partial sum of the Chebyshev expansion and lower bound on the transformed state.
\end{customthm}

Our QEVT algorithm proceeds by preparing a Chebyshev history state using \thm{generate_history} with $\eta=1$, followed by a fixed-point amplitude amplification. The number of amplitude amplification steps is determined by the ratio of the shifted partial sum $\alphaT\geq\max_l\norm{\sum_{k=l}^{n-1}\widetilde\beta_k\widetilde{\mathbf{T}}_{k-l}\left(\frac{A}{\alpha_A}\right)\ket{\psi}}$ and the desired $\alphaPPsi\leq\norm{p\left(\frac{A}{\alpha_A}\right)\ket{\psi}}$. This ratio arises in a similar way as (although is incomparable to) that of the quantum differential equation solvers~\cite{Krovi2023improvedquantum,Fang2023timemarchingbased,BerryCosta22}. Just like $\alphaU$, there are multiple ways one can further bound $\alphaT$, which we explain in \append{analysis_cheby}. For now, let us assume that the input matrix $A/\alpha_A=S(\Lambda/\alpha_A)S^{-1}$ is diagonalizable with some upper bound $\kappa_S\geq\norm{S}\norm{S^{-1}}$ on the condition number to simplify the discussion.

To analyze the matrix function $\sum_{k=l}^{n-1}\widetilde\beta_k\widetilde{\mathbf{T}}_{k-l}\left(\frac{A}{\alpha_A}\right)$, we can then diagonalize $A$ and consider instead the scalar function $\sum_{k=l}^{n-1}\widetilde\beta_k\widetilde{\mathbf{T}}_{k-l}(x)$ for the diagonal entries. For a given $l$, the shifted Chebyshev partial sum has a max-norm growing like
\begin{equation}
    \norm{\sum_{k=l}^{n-1}\widetilde\beta_k\widetilde{\mathbf{T}}_{k-l}}_{\max,[-1,1]}
    =\max_{x\in[-1,1]}\abs{\sum_{k=l}^{n-1}\widetilde\beta_k\widetilde{\mathbf{T}}_{k-l}(x)}
    =\mathbf{O}\left(\norm{p}_{\max,[-1,1]}\log(n)\right).
\end{equation}
Thus $\alphaT=\mathbf{O}\left(\norm{p}_{\max,[-1,1]}\kappa_S\log(n)\right)$ and our QEVT has the query complexity 
\begin{equation}
    \mathbf{O}\left(
    \frac{\norm{p}_{\max,[-1,1]}\kappa_S^2n}{\norm{p\left(\frac{A}{\alpha_A}\right)\ket{\psi}}}
    \log\left(
    \frac{\norm{p}_{\max,[-1,1]}\kappa_S\log(n)}{\norm{p\left(\frac{A}{\alpha_A}\right)\ket{\psi}}\epsilon}\right)\log\left(n\right)\log\left(\frac{1}{\pfail}\right)\right)
\end{equation}
in the worst case (for presentational purpose, we have used the actual value of $\norm{p\left(\frac{A}{\alpha_A}\right)\ket{\psi}}$ as opposed to its lower bound). However, we show in \append{analysis_cheby_carleson} that this shifted partial sum is actually smaller with respect to the $2$-norm:
\begin{equation}
    \norm{\sum_{k=l}^{n-1}\widetilde\beta_k\widetilde{\mathbf{T}}_{k-l}}_{2,[-1,1]}
    =\sqrt{\int_{-1}^{1}\mathrm{d}x\abs{\sum_{k=l}^{n-1}\widetilde\beta_k\widetilde{\mathbf{T}}_{k-l}(x)}^2}
    =\mathbf{O}\left(\norm{p}_{\max,[-1,1]}\right).
\end{equation}
This gives $\alphaT=\mathbf{O}\left(\norm{p}_{\max,[-1,1]}\kappa_S\right)$ and leads to the complexity
\begin{equation}
    \mathbf{O}\left(
    \frac{\norm{p}_{\max,[-1,1]}\kappa_S^2n}{\norm{p\left(\frac{A}{\alpha_A}\right)\ket{\psi}}}
    \log\left(
    \frac{\norm{p}_{\max,[-1,1]}\kappa_S}{\norm{p\left(\frac{A}{\alpha_A}\right)\ket{\psi}}\epsilon}\right)\log\left(\frac{1}{\pfail}\right)\right)
\end{equation}
when eigenvalues of the input matrix are randomly chosen, so our algorithm has a much better performance on average.
Many recent work have examined the use of randomness in improving quantum simulation algorithms.
However, most of those results have focused on the product-formula-based algorithms~\cite{Childs2019fasterquantum,Campbell18,ChenBrandao21,Zhao21}. Our result demonstrates that randomness can also be useful for speeding up more advanced quantum algorithms, which have many applications to the quantum simulation problem and beyond.

We emphasize that the QEVT algorithm as stated above not only solves the eigenvalue transformation problem for non-normal matrices, but actually provides a highly efficient solution, in the sense that its performance nearly recovers that of the QSVT algorithm for transforming singular values. Specifically, for polynomial functions $p(x)$ with $\norm{p}_{\max,[-1,1]}\leq 1$ and diagonalizable matrices, we can generate an $\epsilon$-approximate Chebyshev history state with query complexity $\mathbf{O}\left(\kappa_Sn\log\left(\frac{1}{\epsilon}\right)\right)$,
measuring which produces the quantum state $p\left(\frac{A}{\alpha_A}\right)\ket{\psi}$ with probability $\mathbf{\Omega}\left(\frac{\norm{p\left(\frac{A}{\alpha_A}\right)\ket{\psi}}^2}{\kappa_S^2}\right)$.
This is to be compared with the QSVT algorithm that uses $n$ queries and outputs the state $p_{\sv}\left(\frac{A}{\alpha_A}\right)\ket{\psi}$ with success probability $\norm{p_{\sv}\left(\frac{A}{\alpha_A}\right)\ket{\psi}}^2$.
By performing an additional fixed-point amplitude amplification, we obtain the normalized state
\addtocounter{equation}{2}
\begin{equation}
    \frac{p_{\sv}\left(\frac{A}{\alpha_A}\right)\ket{\psi}}{\norm{p_{\sv}\left(\frac{A}{\alpha_A}\right)\ket{\psi}}}
\end{equation}
using
\begin{equation}
    \mathbf{O}\left(\frac{n}{\norm{p_{\sv}\left(\frac{A}{\alpha_A}\right)\ket{\psi}}}\log\left(\frac{1}{\pfail}\right)\right)
\end{equation}
queries to the block encoding, whereas the complexity of QEVT is
\begin{equation}
    \mathbf{O}\left(
    \frac{\kappa_S^2 n}{\norm{p\left(\frac{A}{\alpha_A}\right)\ket{\psi}}}
    \log\left(
    \frac{\kappa_S}{\norm{p\left(\frac{A}{\alpha_A}\right)\ket{\psi}}\epsilon}\right)\log\left(\frac{1}{\pfail}\right)\right).
\end{equation}
On the common ground where the input matrix is Hermitian, our result has thus naturally recovered the complexity of QSVT for transforming singular values, up to a polylogarithmic factor (independent of $n$).
In particular, this implies a quantum algorithm for solving systems of linear differential equations with a strictly linear scaling in time for an average diagonalizable input, as well as a quantum algorithm for ground state preparation with a nearly optimal combined dependence on the inverse gap and inverse accuracy, which we discuss in the next subsection.

Our QEVT algorithm is formulated as a state preparation procedure, where the goal is to create a quantum state proportional to the transformed input matrix applied to the initial state. However, by using the quantum linear system solver~\cite[Corollary 69]{Gilyen2018singular} in place of~\cite{Costa22,OptInit,Dalzell2024shortcut}, it is fairly straightforward to derive a block encoding version of QEVT. We will not use this in our paper, as the block encoding introduces additional normalization factors that ruin our (nearly) optimal results for solving differential equations and preparing ground states. Nevertheless, we state this block encoding algorithm as \thm{qevt_block} in \sec{transform} for completeness, in hoping that it is useful in scenarios where QEVT serves as a subroutine.
 
In the actual circuit implementation of QEVT,
we need to implement a shifted version of the oracle $O_{\widetilde\beta}\ket{0}=\frac{1}{\norm{\widetilde\beta}}\sum_{k=0}^{n-1}\widetilde\beta_k\ket{k}$ preparing Chebyshev expansion coefficients of the target function in superposition. This state can be prepared using standard circuit techniques with a gate complexity of $\mathbf{\Theta}(n)$ (although no lower bound is known for this task). However, our truncate order $n$ in general scales polynomially with the input parameters (such as the evolution time and the inverse spectral gap), and can lead to a significant gate complexity overhead. 
We describe an alternative circuit implementation in \thm{fourier_coeff} that has gate complexity $\mathbf{O}(\polylog(n))$, by re-expressing the Chebyshev coefficients as Fourier coefficients and performing a cyclic convolution in the frequency domain. For presentational purpose, we defer a formal statement and proof of this result to \sec{fourier}.

\subsection{Applications}
\label{sec:intro_app}

Differential equations arise naturally in a broad range of scientific disciplines including engineering, physics, economics, and biology. However, classical differential equation solvers can struggle to handle problems of large dimensions, which motivates the development of quantum algorithms. To be concrete, consider the system of first-order linear differential equations $\frac{\mathrm{d}}{\mathrm{d}t}x(t)=Cx(t)$,
whose solution is given formally by $x(t)=e^{tC}x(0)$.
When $C$ has purely imaginary eigenvalues, we can prepare the solution state using QEVT by implementing the function $f(x)=e^{-i\alpha_Ctx}$ on the matrix $iC/\alpha_C$ (that has real spectra), which can be easily constructed from a block encoding of $C/\alpha_C$. We establish an equivalent version of this result for $A=iC$ as \thm{diff_eq}, whose proof will be given in \sec{app_diff_eq}.

Our algorithm proceeds by applying \thm{qevt} to the function $e^{-i\alpha_Atx}$ truncated at order
\addtocounter{equation}{2}
    \begin{equation}
        n=\mathbf{O}\left(\alpha_At+\log\left(\frac{\kappa_S}{\norm{e^{-itA}\ket{\psi}}\epsilon}\right)\right),
    \end{equation}
where $A/\alpha_A=SJS^{-1}$ has a Jordan condition number upper bounded by $\kappa_S\geq\norm{S}\norm{S}^{-1}$.
The complexity of our algorithm depends on the amplitude amplification ratio $\alphaT/\norm{e^{-itA}\ket{\psi}}$, as well as largest size $\alphaU$ of the block encoded operator under the mapping of polynomial basis. These two factors arise in a similar way as (although are not directly comparable to) those of previous differential equation solvers~\cite{Krovi2023improvedquantum}. However, these complexities become comparable when the input matrix is diagonalizable---a setting relevant for practical applications~\cite{Ashida20,McArdle20}. Then, we show that $\alphaU=\mathbf{O}\left(\kappa_S\right)$ (\append{analysis_cheby_bernstein}), whereas $\alphaT/\norm{e^{-itA}\ket{\psi}}=\mathbf{O}\left(\kappa_S\right)$ holds for an average input (\append{analysis_cheby_carleson}). This leads to the strictly linear scaling in the evolution time
\begin{equation}
    \mathbf{O}\left(\kappa_S^2\left(\alpha_At+\log\left(\frac{\kappa_S}{\epsilon}\right)\right)\log\left(\frac{\kappa_S}{\epsilon}\right)\log\left(\frac{1}{\pfail}\right)\right),
\end{equation}
shaving off a $\polylog(t)$ factor from the best previous result under the same setting.
This is reminiscent of the query complexity improvement of the Chebyshev-based method over the Taylor-based method for Hamiltonian simulation~\cite{Low2016HamSim}. However, to achieve this for solving differential equations, we would need both the new eigenvalue processing technique and the tighter analysis of Fourier truncation error.

It is worth noting that previous work proposed an alternative method to realize QEVT, based on the contour integration formula: $f(A)=1/(2\pi i)\int_{\mathcal{C}}\mathrm{d}z\ f(z)(zI-A)^{-1}$, where $\mathcal{C}$ is a contour enclosing all eigenvalues of $A$~\cite{Fang2023timemarchingbased,Takahira2020QuantumCauchy,Takahira21}. This method requires implementing a discrete version of the integral coherently on a quantum computer, and its performance depends largely on the choice of contours. With a circular contour, this method led to a quantum differential equation algorithm with a quadratic scaling in time~\cite{Fang2023timemarchingbased}. Rigorous analysis of a general contour becomes more complicated, and it is unclear how much improvement this method offers for other applications.

More recent work~\cite{AnChildsLin23} developed a quantum differential equation algorithm whose complexity is linear in the evolution time, along with additional dependence on an amplitude amplification cost that can be implicitly time dependent. That result is obtained under the assumption that the input matrix has a nonpositive numerical abscissa, similar to our Faber-based algorithm (\thm{diff_eq_faber}) to be introduced below. This is however incompatible with the setting of our \thm{diff_eq} where the input matrix has only imaginary eigenvalues. In fact, a matrix with nonpositive numerical abscissa has only imaginary eigenvalues, if and only if the matrix is anti-Hermitian (\append{analysis_faber_crouzeix}). Thus the result of~\cite{AnChildsLin23} is not immediately useful for applications such as transcorrelated quantum chemistry, where matrices have real spectra but are not necessarily Hermitian.

We now turn to our second application: the quantum ground state preparation. In the case where the input operator is a Hermitian Hamiltonian, this problem has been extensively studied by previous work such as~\cite{Poulin09,Ge19}, and can be solved near optimally on a quantum computer~\cite{Lin2020nearoptimalground}. Here, we extend the scope of previous results by considering non-normal matrices with real eigenvalues, which are relevant to applications in non-Hermitian physics and transcorrelated quantum chemistry. Specifically, let $A$ be a matrix with only real eigenvalues and an upper bound $\kappa_S$ on its Jordan condition number, block encoded with a normalization factor $\alpha_A$. Suppose that $\lambda_0$ is the smallest eigenvalue of $A$ with the corresponding eigenstate $\ket{\psi_0}$, and is \emph{nondefective} and \emph{nonderogatory}. That means, there is only one Jordan block in $A$ corresponding to the eigenvalue $\lambda_0$, and the size of that block is $1$.
Assume further that $\lambda_0$ is separated from the next eigenvalue $\lambda_1$: $\lambda_0\leq-\frac{\delta_A}{2}<0<\frac{\delta_A}{2}\leq\lambda_1$
for some spectral gap $\delta_A>0$.
Then our goal is to prepare a quantum state that $\epsilon$-approximates the ground state $\ket{\psi_0}$ up to a global phase, given an initial state $\ket{\psi}=\gamma_0\ket{\psi_0}+\sum_{l=1}^{d-1}\gamma_l\ket{\psi_l}$ 
expanded in the Jordan basis.
We achieve this using \thm{ground}, which is established in \sec{app_ground}.

Our algorithm proceeds by applying \thm{qevt} to the error function $1-\mathbf{Erf}(cx)=1-\frac{2}{\sqrt{\pi}}\int_{0}^{cx}\mathrm{d}y\ e^{-y^2}$ with a rescaling factor $c=\mathbf{O}\left(\frac{\alpha_A}{\delta_A}
        \sqrt{\log\left(\frac{\alpha_A}{\delta_A}\frac{\kappa_S}{|\gamma_0|\epsilon}\right)}\right)$ truncated at order 
\addtocounter{equation}{1}
    \begin{equation}
        n=\mathbf{O}\left(\frac{\alpha_A}{\delta_A}
        \log\left(\frac{\alpha_A}{\delta_A}\frac{\kappa_S}{|\gamma_0|\epsilon}\right)\right).
    \end{equation}
Similar as above, we will describe the algorithm in its full generality, keeping factors like $\alphaU$ and $\alphaT$ that can be further refined in concrete problems. For instance, if the input matrix is diagonalizable, then we show that $\alphaU=\mathbf{O}(\kappa_S)$ and $n=\mathbf{O}\left(\frac{\alpha_A}{\delta_A}\log\left(\frac{\kappa_S}{|\gamma_0|\epsilon}\right)\right)$, whereas the amplification ratio $\mathbf{O}\left(\frac{\kappa_S}{|\gamma_0|}\right)$ holds on average (with an additional $\log(n)$ for the worst-case input). This gives the query complexity
\begin{equation}
    \mathbf{O}\left(\frac{\kappa_S^2}{|\gamma_0|}\frac{\alpha_A}{\delta_A}\log^2\left(\frac{\kappa_S}{|\gamma_0|\epsilon}\right)
    \log\left(\frac{1}{\pfail}\right)
    \right).
\end{equation}
Thus when the input matrix $A$ is Hermitian, our result recovers the nearly optimal ground state preparation result~\cite{Lin2020nearoptimalground} up to a logarithmic factor. However, our algorithm is more general in that it applies to non-normal matrices with real eigenvalues whose ground states are still well defined.

\subsection{Eigenvalue processing over the complex plane}
\label{sec:intro_faber}

\begin{figure}[t]
	\centering
\includegraphics[width=0.75\textwidth]{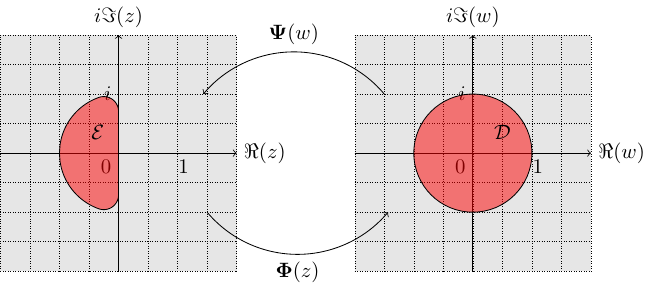}
\caption{Illustration of the unit disk $\mathcal{D}$, the target region $\mathcal{E}$, and the exterior Riemann mappings $\mathbf{\Psi}$, $\mathbf{\Phi}$ associated with the definition of Faber polynomials.}
\label{fig:faber_def}
\end{figure}

To simplify the analysis of algorithms, we have so far focused on the case where input matrices have only real eigenvalues from $[-1,1]$. In this case, we develop techniques to efficiently generate the Chebyshev history states, and our result is comparable to previous results for processing singular values. However, our techniques are applicable to more general matrices whose eigenvalues are enclosed by regions in the complex plane, thereby providing a quantum linear algebraic framework far more versatile than QSVT.

The core idea behind this generalization is to create the polynomial basis for a nearly-best uniform approximation over the target eigenvalue enclosing region. This is in turn achieved by passage from known polynomial basis for the real interval $[-1,1]$ or the unit disk $\mathcal{D}$, with the help of conformal maps. To be specific, consider a compact region $\mathcal{E}$ that includes eigenvalues of the input matrix. Under reasonable mathematical assumptions, there exists a unique conformal map
\begin{equation}
    \mathbf{\Phi}:\mathcal{E}^c\rightarrow\mathcal{D}^c,\qquad\mathbf{\Phi}(z)=w,
\end{equation}
known as the \emph{exterior Riemann map}, that sends the complement of $\mathcal{E}$ conformally onto the exterior of the unit disk $\mathcal{D}=\{|w|\leq1\}$ and satisfies $\mathbf{\Phi}(\infty)=\infty$, $\mathbf{\Phi}'(\infty)=\lim_{z\rightarrow\infty}\frac{\mathbf{\Phi}(z)}{z}=\zeta>0$,
with inverse
\begin{equation}
    \mathbf{\Psi}:\mathcal{D}^c\rightarrow\mathcal{E}^c,\qquad\mathbf{\Psi}(w)=z,
\end{equation}
where complement is taken with respect to the extended complex plane $\mathbb{C}\cup\{\infty\}$.
This implies that $\mathbf{\Phi}$ has a Laurent expansion in some neighborhood of $\infty$ as $\mathbf{\Phi}(z)=\zeta z+\zeta_0+\frac{\zeta_1}{z}+\frac{\zeta_2}{z^2}+\cdots$ and for the same reason $\mathbf{\Psi}(w)=\varsigma w+\varsigma_0+\frac{\varsigma_1}{w}+\frac{\varsigma_2}{w^2}+\cdots$. Then the $j$th Faber polynomial $\mathbf{F}_j(z)$ for the region $\mathcal{E}$ is defined as the polynomial part of the Laurent series of $\mathbf{\Phi}^j(z)$. See \fig{faber_def} for an illustration of regions and conformal maps relevant to the definition of Faber polynomials.

Faber polynomials provide a general methodology for constructing polynomial expansions that encompass Chebyshev and Taylor expansions as special cases. For instance, using the Joukowsky map $\mathbf{\Psi}(w)=\frac{w+w^{-1}}{2}$ and its inverse $\mathbf{\Phi}(z)=z+\sqrt{z^2-1}$, one can re-express Chebyshev polynomials $\mathbf{T}_j(x)=\cos(j\arccos(x))=\mathbf{F}_j(x)$ as Faber polynomials over the real interval $[-1,1]$. Similarly, using the affine map $\mathbf{\Phi}(z)=\frac{z-z_0}{\rho}$, one can identify power functions $\frac{(z-z_0)^j}{\rho^j}=\mathbf{F}_j(z)$ as Faber polynomials for the disk $\{|z-z_0|\leq\rho\}$. The significance of Faber polynomials however is that they provide a nearly-best uniform approximation of functions over the complex plane~\cite[Page 190]{suetin1998series}. Thus, by generating Faber polynomials in quantum superposition, one expects to obtain efficient quantum algorithms for matrices with complex eigenvalues, going beyond the Chebyshev-based algorithms discussed above.

To realize this idea, we use a matrix version of the Faber generating function
\begin{equation}
\sum_{j=0}^{n-1} L^j\otimes\mathbf{F}_j\left(\frac{A}{\alpha_A}\right)
=\sum_{j=0}^\infty L^j\otimes\mathbf{F}_j\left(\frac{A}{\alpha_A}\right)=\frac{\mathbf{\Psi}'(L^{-1})\otimes I}{L\mathbf{\Psi}(L^{-1})\otimes I-L\otimes \frac{A}{\alpha_A}},
\end{equation}
where $L$ is the $n$-by-$n$ lower shift matrix. Note that the Laurent series of $w\mathbf{\Psi}(w^{-1})$ is actually a power series and so $L\mathbf{\Psi}(L^{-1})$ is well defined, despite the fact that $L$ itself is not invertible. Similar to the Chebyshev case, we aim to bundle the numerator with a subroutine that prepares the Faber coefficients, and invert the denominator using a quantum linear system solver. The technical challenge, however, is that we need to implement operators such as $\mathbf{\Psi}'(L^{-1})$ and $L\mathbf{\Psi}(L^{-1})$ through efficient block encodings. We overcome this by using the Fourier expansions
\begin{equation}
    \mathbf{\Psi}'(e^{i\omega})=\varsigma -\varsigma_1e^{-2i\omega}-2\varsigma_2e^{-3i\omega}+\cdots,\qquad
	e^{-i\omega}\mathbf{\Psi}(e^{i\omega})=\varsigma +\varsigma_0e^{-i\omega}+\varsigma_1e^{-2i\omega}+\varsigma_2e^{-3i\omega}+\cdots
\end{equation}
This allows us to invoke \thm{fourier_coeff} to efficiently generate the Fourier coefficients, thereby producing the desired block encoding. We summarize our result for generating the Faber history states in \thm{faber_history}, which will be previewed below and further discussed in \sec{faber}.

\begin{customthm}{\ref{thm:faber_history}}[Faber history state generation]
Let $A$ be a square matrix such that $A/\alpha_A$ is block encoded by $O_A$ with some normalization factor $\alpha_A\geq\norm{A}$. Suppose that eigenvalues of $A/\alpha_A$ are enclosed by a Faber region $\mathcal{E}$ with associated conformal maps $\mathbf{\Phi}:\mathcal{E}^c\rightarrow\mathcal{D}^c$, $\mathbf{\Psi}:\mathcal{D}^c\rightarrow\mathcal{E}^c$ and Faber polynomials $\mathbf{F}_n(z)$.
Let $O_\psi\ket{0}=\ket{\psi}$ be the oracle preparing the initial state, and $O_{\beta}\ket{0}\propto\mathbf{\Psi}'(L_n^{-1})\sum_{k=0}^{n-1}\beta_k\ket{n-1-k}$ be the oracle preparing the shifting of coefficients $\beta$.
Then, the quantum state proportional to
	\begin{equation*}
  \ket{0}\sum_{l=0}^{n-1}\ket{l}
			\sum_{k=n-1-l}^{n-1}{\beta}_k{\mathbf{F}}_{k+l-n+1}\left(\frac{A}{\alpha_A}\right)\ket{\psi}
			+\sum_{s=1}^{\eta}\ket{s}\sum_{l=0}^{n-1}\ket{l}
			\sum_{k=0}^{n-1}{\beta}_k{\mathbf{F}}_{k}\left(\frac{A}{\alpha_A}\right)\ket{\psi}
	\end{equation*}
can be prepared with accuracy $\epsilon$ and probability $1-\pfail$ using
\begin{equation*}
    \mathbf{O}\left(\alphaFP n(\eta+1)\log\left(\frac{1}{\epsilon}\right)\log\left(\frac{1}{\pfail}\right)\right)
\end{equation*}
queries to controlled-$O_A$, controlled-$O_\psi$, controlled $O_{\widetilde\beta}$, and their inverses, where
\begin{equation*}
    \alphaFP\geq\max_{j=1,\ldots,n}\norm{\frac{\mathbf{F}_j'\left(\frac{A}{\alpha_A}\right)}{j}}
\end{equation*}
is an upper bound on the derivative of Faber polynomials.
\end{customthm}

With an additional fixed-point amplitude amplification, we obtain the Faber-based quantum eigenvalue transformation algorithm in \thm{qevt_faber}, previewed below.
As an application, we develop a quantum differential equation algorithm in \thm{diff_eq_faber} for general coefficient matrices by implementing Faber polynomials over a compact set (such as the one shown in \fig{region_general2}) on the left half of the complex plane. We also show in \thm{qeve_extreme} how to estimate \emph{leading eigenvalues} (eigenvalues of maximum absolute value) by directly implementing the Taylor expansion.
However, the success probability of our algorithm would decay drastically when applied to non-leading eigenvalues, which is partially addressed by the more recent method from~\cite{alase2024resolvent}.

\begin{customthm}{\ref{thm:qevt_faber}}[Quantum eigenvalue transformation, Faber version]
Let $A$ be a square matrix such that $A/\alpha_A$ is block encoded by $O_A$ with some normalization factor $\alpha_A\geq\norm{A}$. Suppose that eigenvalues of $A/\alpha_A$ are enclosed by a Faber region $\mathcal{E}$ with associated conformal maps $\mathbf{\Phi}:\mathcal{E}^c\rightarrow\mathcal{D}^c$, $\mathbf{\Psi}:\mathcal{D}^c\rightarrow\mathcal{E}^c$ and Faber polynomials $\mathbf{F}_n(z)$.
Let $p(z)=\sum_{k=0}^{n-1}\beta_k{\mathbf{F}}_{k}(z)$ be the Faber expansion of a degree-($n-1$) polynomial $p$.
Let $O_\psi\ket{0}=\ket{\psi}$ be the oracle preparing the initial state, and $O_{\beta}\ket{0}\propto\mathbf{\Psi}'(L_n^{-1})\sum_{k=0}^{n-1}\beta_k\ket{n-1-k}$ be the oracle preparing the shifting of coefficients $\beta$.
Then, the quantum state
    \begin{equation*}
        \frac{p\left(\frac{A}{\alpha_A}\right)\ket{\psi}}{\norm{p\left(\frac{A}{\alpha_A}\right)\ket{\psi}}}
    \end{equation*}
can be prepared with accuracy $\epsilon$ and probability $1-\pfail$ using
\begin{equation*}
    \mathbf{O}\left(\frac{\alphaF}{\alphaPPsi}\alphaFP n\log\left(\frac{\alphaF}{\alphaPPsi\epsilon}\right)\log\left(\frac{1}{\pfail}\right)\right)
\end{equation*}
queries to controlled-$O_A$, controlled-$O_\psi$, controlled $O_{\widetilde\beta}$, and their inverses, where 
\begin{equation*}
    \alphaFP\geq\max_{j=1,\ldots,n}\norm{\frac{\mathbf{F}_j'\left(\frac{A}{\alpha_A}\right)}{j}},\quad
    \alphaF\geq\max_{l=0,1,\ldots,n-1}\norm{\sum_{k=l}^{n-1}\beta_k\mathbf{F}_{k-l}\left(\frac{A}{\alpha_A}\right)\ket{\psi}},\quad
    \alphaPPsi\leq\norm{p\left(\frac{A}{\alpha_A}\right)\ket{\psi}}
\end{equation*}
are upper bound on the derivative of Faber polynomials, upper bound on the shifted Faber partial sum and lower bound on the transformed state respectively.
\end{customthm}

\begin{table}[t]
    \centering
    \resizebox{\textwidth}{!}{%
    \begin{tabular}{c|c|c}
        Algorithm & Query complexity & Measure of non-normality \\
        \hline
        \begin{tabular}{c}
             Quantum eigenvalue estimator\\
             (\thm{qeve})
        \end{tabular} & $\mathbf{O}\left(\frac{\alpha_A\kappa_S}{\epsilon}\log\left(\frac{1}{\pfail}\right)\right)$ & Jordan condition number\\
        \begin{tabular}{c}
             Quantum eigenvalue transformer\\
             (\thm{qevt})
        \end{tabular} & $\mathbf{O}\left(
    \frac{\norm{p}_{\max,[-1,1]}\kappa_S^2}{\norm{p\left(\frac{A}{\alpha_A}\right)\ket{\psi}}}
    n\log\left(
    \frac{\norm{p}_{\max,[-1,1]}\kappa_S}{\norm{p\left(\frac{A}{\alpha_A}\right)\ket{\psi}}\epsilon}\right)\log\left(\frac{1}{\pfail}\right)\right)$ & Jordan condition number\\
        \begin{tabular}{c}
             Quantum differential equation solver\\
             (\thm{diff_eq})
        \end{tabular} & $\mathbf{O}\left(\kappa_S^2\left(\alpha_At+\log\left(\frac{\kappa_S}{\epsilon}\right)\right)\log\left(\frac{\kappa_S}{\epsilon}\right)\log\left(\frac{1}{\pfail}\right)\right)$ & Jordan condition number\\
        \begin{tabular}{c}
             Quantum ground state preparator\\
             (\thm{ground})
        \end{tabular}
        & $\mathbf{O}\left(\frac{\kappa_S^2}{|\gamma_0|}\frac{\alpha_A}{\delta_A}\log^2\left(\frac{\kappa_S}{|\gamma_0|\epsilon}\right)
    \log\left(\frac{1}{\pfail}\right)
    \right)$ & Jordan condition number\\
    \begin{tabular}{c}
         Quantum eigenvalue transformer\\
         (\thm{qevt_faber})
    \end{tabular}
    & $\mathbf{O}\left(
    \frac{\norm{p}_{\max,\partial\mathcal{E}}}{\norm{p\left(\frac{A}{\alpha_A}\right)\ket{\psi}}}
    n\log\left(
    \frac{\norm{p}_{\max,\partial\mathcal{E}}}{\norm{p\left(\frac{A}{\alpha_A}\right)\ket{\psi}}\epsilon}\right)\log\left(\frac{1}{\pfail}\right)\right)$ & Numerical range/pseudospectrum\\
    \begin{tabular}{c}
         Quantum differential equation solver\\
         (\thm{diff_eq_faber})
    \end{tabular}
    & $\mathbf{O}\left(\frac{\alpha_At}{\norm{e^{tA}\ket{\psi}}}\polylog\left(\frac{\alpha_At}{\norm{e^{tA}\ket{\psi}}\epsilon}\right)\log\left(\frac{1}{\pfail}\right)\right)$ & Numerical range/pseudospectrum\\
    \begin{tabular}{c}
         Quantum eigenvalue estimator\\
         (\thm{qeve_extreme})
    \end{tabular}
    & $\mathbf{O}\left(\frac{\alpha_A}{\lambda_{\max}\epsilon}\kappa_S\log\left(\frac{1}{\pfail}\right)\right)$ & Jordan condition number\\
    \end{tabular}
    }
    \caption{Summary of common measures of non-normality and the corresponding complexity of quantum eigenvalue processing algorithms. The Jordan condition number, introduced in \sec{prelim_matrix}, is a commonly used measure of non-normality in numerical linear algebra~\cite[Page 444]{Trefethen05}. However, this measure is not suitable for problems with ill-conditioned Jordan basis. Alternatively, we can apply Faber approximations over the numerical range/pseudospectrum of the input matrix, to be formalized in \append{analysis_faber_crouzeix} and \append{analysis_faber_pseudospectrum} respectively, which leads to query complexity independent of the Jordan condition number. 
    See the relevant theorem statements for definitions of the remaining scaling parameters.}
    \label{tab:nonnormal}
\end{table}

For the purpose of generality, we will express the complexity of Faber-based algorithms using upper bounds like
\begin{equation}
\label{eq:alpha_f}
    \alphaFP\geq\max_{j=1,\ldots,n}\norm{\frac{\mathbf{F}_j'\left(\frac{A}{\alpha_A}\right)}{j}},\qquad
    \alphaF\geq\max_{l=0,1,\ldots,n-1}\norm{\sum_{k=l}^{n-1}\beta_k\mathbf{F}_{k-l}\left(\frac{A}{\alpha_A}\right)\ket{\psi}}.
\end{equation}
Similar to $\alphaU$ and $\alphaT$ in the Chebyshev case, these parameters can be further upper bounded when the Faber algorithms are applied to a concrete problem. Specifically, we consider the case where the target region $\mathcal{E}$ contains the \emph{numerical range} of the input matrix, which generalizes previous bounds for differential equation solvers based on the \emph{numerical abscissa}~\cite[Section 3.1]{Krovi2023improvedquantum}. A similar bound can be obtained when the \emph{pseudospectrum} of the input matrix is enclosed by $\mathcal{E}$. 
Intuitively, one can view ``numerical range'' and ``pseudospectrum'' as two common relaxations of the  notion of eigenvalues. These relaxations provide useful tools for bounding the size of matrix functions.
Indeed, as will be explained in \append{analysis_faber}, our resulting bounds are independent of the Jordan condition number $\kappa_S$ and therefore do not suffer from an ill-conditioned Jordan basis transformation that can arise in the Chebyshev case. See \tab{nonnormal} for a summary of common treatments of non-normality of input matrices and the corresponding complexity of quantum eigenvalue processing algorithms.

We summarize in \sec{prelim} preliminaries required to understand our results, and include in \sec{discuss} a brief summary of our work and a collection of questions for future work.

\section{Preliminaries}
\label{sec:prelim}
In this section, we present prerequisites that are necessary to understand our results on eigenvalue processing. We begin in \sec{prelim_notation} with an introduction of notation and terminology to be used throughout this paper. The next two subsections, \sec{prelim_cheby_fourier} and \sec{prelim_matrix}, summarize background material from mathematical analysis and linear algebra respectively. We also formally define in \sec{prelim_matrix} the \emph{eigenvalue transformation} and the \emph{singular value transformation} of a square matrix. A reader who has a sufficient background on these topics may proceed to \sec{prelim_block}, where we introduce the block encoding framework of quantum algorithms, on which our results are based.

\subsection{Notation and terminology}
\label{sec:prelim_notation}

We use lowercase Latin and Greek alphabets to represent vectors defined on a discrete set and functions defined on a continuum. For instance, we often write $\beta=\begin{bmatrix}
    \beta_0 & \beta_1 & \cdots
\end{bmatrix}$ to represent coefficients of a polynomial expansion and $f:[-1,1]\rightarrow\mathbb{C}$ to denote the target function for the eigenvalue transformation. To quantify the size of such vectors and functions, we use the \emph{$\ell_1$-norm} $\norm{\beta}_1=\sum_{j=0}^\infty\abs{\beta_j}$ and $\mathcal{L}_1$-norm $\norm{f}_{1,[-1,1]}=\int_{-1}^{1}\mathrm{d}x \abs{f(x)}$, the \emph{Euclidean norm} $\norm{\beta}=\sqrt{\sum_{j=0}^\infty\abs{\beta_j}^2}$ and $\mathcal{L}_2$-norm $\norm{f}_{2,[-1,1]}=\sqrt{\int_{-1}^{1}\mathrm{d}x \abs{f(x)}^2}$, and the \emph{max-norm} $\norm{\beta}_{\max}=\sup_{j=0,1,\ldots}\abs{\beta_j}$ and $\mathcal{L}_\infty$-norm $\norm{f}_{\max,[-1,1]}=\sup_{x\in[-1,1]}\abs{f(x)}$, dropping the underlying discrete set from the subscript of a vector norm if it is clear from the context. We will occasionally compute norms for anonymous functions. For instance, when writing $\norm{\sum_{j=0}^{\infty}\beta_je^{-ij(\cdot)}}_{2,[-\pi,\pi]}$, we are computing $\mathcal{L}_2$-norm $\sqrt{\int_{-\pi}^{\pi}\mathrm{d}\omega\abs{g(\omega)}^2}$ of the function $g(\omega)=\sum_{j=0}^{\infty}\beta_je^{-ij\omega}$.
We use the \emph{Dirac notation} $\ket{\psi}$ to denote a vector only when it has unit length $\norm{\ket{\psi}}=1$ with respect to the Euclidean norm. 

On the other hand, matrices and operators are represented by uppercase Latin and Greek letters in our paper, and their size is typically quantified by the \emph{operator norm} $\norm{\cdot}$ (also known as the \emph{spectral norm}). This notation of the operator norm is compatible with that of the Euclidean norm, as their values coincide for a vector when treated as a mapping from a one-dimensional space. Example notation of matrices includes: the input matrix of a quantum algorithm $A$ , the diagonal factor of an eigendecomposition $\Lambda$, the Jordan form of a matrix $J$ and the corresponding basis transformation $S$, the diagonal factor of a singular value decomposition $\Sigma$, the identity matrix $I$, and the lower shift matrix $L$ with $1$ on the first lower diagonal and $0$ elsewhere. When necessary, we will use subscripts to represent dimensions of the matrix. For instance, we have $L_n=\sum_{k=0}^{n-2}\ketbra{k+1}{k}$.

We can obtain a block matrix $B$ by stacking up submatrices $B_{jk}$:
\begin{equation}
    B=
    \begin{bmatrix}
        B_{11} & \cdots & B_{1n}\\
        \vdots & \vdots & \vdots\\
        B_{n1} & \cdots & B_{nn}
    \end{bmatrix}.
\end{equation}
We require that partition of the rows is the same as that of the columns, so that the block structure is respected under matrix multiplication. In the case where all $B_{jk}\in\mathbb{C}^{d\times d}$ are $d$-by-$d$ matrices, we can treat $B\in\mathbb{C}^{n\times n}\otimes\mathbb{C}^{d\times d}$ as acting on the tensor product space $\mathbb{C}^{n}\otimes\mathbb{C}^{d}$, writing
\begin{equation}
    B=
    \begin{bmatrix}
        B_{11} & \cdots & B_{1n}\\
        \vdots & \vdots & \vdots\\
        B_{n1} & \cdots & B_{nn}
    \end{bmatrix}
    =\sum_{j,k=1}^n\ketbra{j}{k}\otimes B_{jk}.
\end{equation}
We will slightly abuse the notation and sometimes use the above representation even when not all $B_{jk}$ are of the same size.
We give a bound in the following lemma on the spectral norm of a block matrix, using a generalized notion of max row and column sum norms defined for block matrices.

\begin{lemma}[Spectral norm bound for block matrices]
\label{lem:block_norm}
    For $d_j$-by-$d_k$ matrices $B_{jk}\in\mathbb{C}^{d_j\times d_k}$,
    \begin{equation}
        \norm{\begin{bmatrix}
        B_{11} & \cdots & B_{1n}\\
        \vdots & \vdots & \vdots\\
        B_{n1} & \cdots & B_{nn}
        \end{bmatrix}}
        \leq
        \sqrt{\max_{1\leq k\leq n}\sum_{j=1}^n\norm{B_{jk}}}
        \sqrt{\max_{1\leq j\leq n}\sum_{k=1}^n\norm{B_{jk}}}.
    \end{equation}
\end{lemma}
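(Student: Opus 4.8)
The plan is to reduce the statement to a bilinear-form estimate and then apply the Cauchy--Schwarz inequality twice, giving a block-matrix analogue of the classical bound $\norm{M}\leq\sqrt{\norm{M}_1\,\norm{M}_\infty}$ relating the spectral norm of a matrix to its max-column-sum and max-row-sum norms.

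First I would write $\norm{B}=\sup\,\abs{u^\dagger B v}$, the supremum being over unit vectors $u$ in the row space and $v$ in the column space. Partitioning compatibly with the block structure, $u=(u_1,\ldots,u_n)$ with $u_j\in\mathbb{C}^{d_j}$ and $v=(v_1,\ldots,v_n)$ with $v_k\in\mathbb{C}^{d_k}$, so that $\sum_j\norm{u_j}^2=\sum_k\norm{v_k}^2=1$. Then $u^\dagger B v=\sum_{j,k}u_j^\dagger B_{jk}v_k$, and by the triangle inequality together with $\abs{u_j^\dagger B_{jk}v_k}\leq\norm{u_j}\,\norm{B_{jk}v_k}\leq\norm{B_{jk}}\,\norm{u_j}\,\norm{v_k}$ I obtain
\[
\abs{u^\dagger B v}\leq\sum_{j,k}\norm{B_{jk}}\,\norm{u_j}\,\norm{v_k}.
\]
Note that nothing here uses the individual dimensions $d_j,d_k$, so blocks of differing sizes are handled uniformly; in effect this step dominates $\norm{B}$ by the spectral norm of the $n\times n$ scalar matrix whose entries are the numbers $\norm{B_{jk}}$, so one could alternatively finish by invoking the classical scalar bound.

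Next I would factor $\norm{B_{jk}}=\sqrt{\norm{B_{jk}}}\cdot\sqrt{\norm{B_{jk}}}$, regroup the summand as $\bigl(\sqrt{\norm{B_{jk}}}\,\norm{u_j}\bigr)\bigl(\sqrt{\norm{B_{jk}}}\,\norm{v_k}\bigr)$, and apply Cauchy--Schwarz over the index pair $(j,k)$:
\[
\sum_{j,k}\norm{B_{jk}}\,\norm{u_j}\,\norm{v_k}\leq\sqrt{\sum_{j,k}\norm{B_{jk}}\,\norm{u_j}^2}\;\sqrt{\sum_{j,k}\norm{B_{jk}}\,\norm{v_k}^2}.
\]
Finally I would bound each factor by rearranging the double sum and using $\sum_j\norm{u_j}^2=\sum_k\norm{v_k}^2=1$: namely $\sum_{j,k}\norm{B_{jk}}\,\norm{u_j}^2=\sum_j\norm{u_j}^2\bigl(\sum_k\norm{B_{jk}}\bigr)\leq\max_{1\leq j\leq n}\sum_{k=1}^n\norm{B_{jk}}$, and symmetrically $\sum_{j,k}\norm{B_{jk}}\,\norm{v_k}^2\leq\max_{1\leq k\leq n}\sum_{j=1}^n\norm{B_{jk}}$. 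Chaining the three displays and taking the supremum over $u,v$ yields the claimed inequality.

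I do not expect a genuine obstacle here, as the argument is routine; the only point needing care is the index bookkeeping—pairing the factor that comes from the $u$-vector with the maximum over rows ($\max_j\sum_k$) and the factor from the $v$-vector with the maximum over columns ($\max_k\sum_j$)—and remembering to observe that the heterogeneous block sizes never enter the estimate, which is exactly what lets the same proof cover the stated generality.
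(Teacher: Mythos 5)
Your proof is correct, and it takes a genuinely different route from the paper. The paper's proof is a two-line citation: it invokes~\cite[5.6.P21]{horn2012matrix} (a general result to the effect that the spectral norm is bounded by the geometric mean of a pair of suitably dual submultiplicative matrix norms) and then asserts that the two quantities $\max_k\sum_j\norm{B_{jk}}$ and $\max_j\sum_k\norm{B_{jk}}$ are valid matrix norms with respect to the block partition. Your argument is instead self-contained: you bound the bilinear form $\abs{u^\dagger B v}$ block-by-block, split $\norm{B_{jk}}=\sqrt{\norm{B_{jk}}}\cdot\sqrt{\norm{B_{jk}}}$, and apply Cauchy--Schwarz over the index pair $(j,k)$. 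This is essentially a block-matrix transcription of the standard proof of $\norm{M}_2\leq\sqrt{\norm{M}_1\norm{M}_\infty}$, and—as you observe—amounts to first dominating $\norm{B}$ by the spectral norm of the nonnegative $n\times n$ matrix $(\norm{B_{jk}})$ and then applying the scalar bound. What your route buys is transparency and no external dependence; what the paper's route buys is brevity and the implicit generality that the same mechanism applies to any conjugate pair of matrix norms, not just the $\ell_1/\ell_\infty$ pair. Both are sound, and the index bookkeeping in your final two displays correctly pairs the $u$-factor with $\max_j\sum_k$ and the $v$-factor with $\max_k\sum_j$, matching the order in the statement.
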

\begin{proof}
    The claimed bound follows from~\cite[5.6.P21]{horn2012matrix} and a direct verification that
    \begin{equation}
        \max_{1\leq k\leq n}\sum_{j=1}^n\norm{B_{jk}},\qquad
        \max_{1\leq j\leq n}\sum_{k=1}^n\norm{B_{jk}}
    \end{equation}
    are valid matrix norms for matrices with the underlying block structure~\cite[5.6.P55]{horn2012matrix}.
\end{proof}

We use boldface symbols to denote functions and operations having specific meanings. For instance, we write $\mathbf{T}_j(x)$ for \emph{Chebyshev polynomials of the first kind} ($\widetilde{\mathbf{T}}_j(x)$ for their rescaled version), $\mathbf{U}_j(x)$ for \emph{Chebyshev polynomials of the second kind}, $\mathbf{J}_j(t)$ for \emph{Bessel functions of the first kind}, $\mathbf{I}_j(k)$ for \emph{modified Bessel functions of the first kind}, $\mathbf{Erf}(x)=\frac{2}{\sqrt{\pi}}\int_{0}^{x}\mathrm{d}y\ e^{-y^2}$ for the \emph{error function}, and $\mathbf{F}_j(z)$ for \emph{Faber polynomials}, to be introduced in \sec{prelim_cheby_fourier} and \sec{faber_prelim} respectively.
In analyzing the Fourier expansion, we write $\mathbf{H}(\cdot)$ for the \emph{Hilbert transform}, 
$\mathbf{S}_{(-n,n)}(\cdot)$ for the $n$th partial sum of the Fourier expansion, and $\mathbf{S}_{*}(\cdot)$ for the \emph{Fourier maximal function}, to be defined in \sec{prelim_cheby_fourier}.
We denote the rank of an operator by $\mathbf{Rank}(\cdot)$, kernel of an operator by $\mathbf{Ker}(\cdot)$, and image of an operator by $\mathbf{Im}(\cdot)$, to be used in \sec{prelim_matrix} and \sec{prelim_block}.
We use $\mathbf{Floor}(\cdot)$ to denote the largest integer not exceeding a real number, and $\mathbf{CMod}_q(x)=x-q\ \mathbf{Floor}\left(\frac{x+\frac{q}{2}}{q}\right)\in[-\frac{q}{2},\frac{q}{2})$ to denote the centered modulus of $x$ modulo $q$ (see \sec{est_cmod}), reserving $\mathbf{Mod}_q(\cdot)\in[0,q)$ for the regular modulo-$q$ operation.
We write $\mathbf{O}(\cdot)$ or $\lesssim$ to mean asymptotically less than, $\mathbf{\Omega}(\cdot)$ or $\gtrsim$ to mean asymptotically more than, and $\mathbf{\Theta}(\cdot)$ or $\sim$ to represent quantities having the same asymptotic scaling, using $\mathbf{o}(1)$ to denote a positive number that approaches zero as some parameter grows.

Finally, we use calligraphic uppercase letters to represent sets. For instance, we denote the set of $q$th-power integrable functions over $[-1,1]$ by $\mathcal{L}_q([-1,1])$ (\sec{prelim_cheby_fourier}), a finite-dimensional Hilbert space by $\mathcal{G}$ or $\mathcal{H}$ (\sec{prelim_block}), the ball centered at $x$ with radius $\delta$ by $\mathcal{D}(x,\delta)$ (the unit disk being $\mathcal{D}$), the target region enclosing eigenvalues of the input operator by $\mathcal{E}$ (\sec{faber_prelim}), the numerical range of an operator by $\mathcal{W}(\cdot)$ (\append{analysis_faber_crouzeix}), and the $\delta$-pseudospectrum by $\mathcal{S}_\delta(\cdot)$ (\append{analysis_faber_pseudospectrum}). We denote the number of elements in a finite set $\mathcal{S}$ by $\#\mathcal{S}$, introduce the indicator function $\mathbf{Ind}_{\mathcal{S}}(\cdot)$ for a given set $\mathcal{S}$, and define the distance of subsets $\mathcal{E}_1$ and $\mathcal{E}_2$ in a Hilbert space by $\mathbf{Dist}(\mathcal{E}_1,\mathcal{E}_2)=\inf_{x_1\in\mathcal{E}_1,x_2\in\mathcal{E}_2}\norm{x_1-x_2}$. We adopt standard notations for number systems, writing $\mathbb Z$ for integers, $\mathbb{Q}$ for rational numbers, $\mathbb{R}$ for real numbers, and $\mathbb{C}$ for complex numbers, and we use superscripts like $\mathbb{C}^d$ and $\mathbb{C}^{d\times d}$ to denote sets of vectors and matrices of the given dimensions.

\subsection{Chebyshev and Fourier expansions}
\label{sec:prelim_cheby_fourier}

We begin this subsection by motivating the use of Chebyshev and Fourier expansions. Let $A$ be a diagonalizable matrix with only real eigenvalues, i.e., $A=S\Lambda S^{-1}$ where $\Lambda$ is a real diagonal matrix. Given a real analytic function $\widetilde{f}:\mathbb{R}\rightarrow\mathbb{C}$ and quantum state $\ket{\psi}$, the goal of QEVT is to produce a state that approximates $\frac{\widetilde{f}(A)\ket{\psi}}{\norm{\widetilde{f}(A)\ket{\psi}}}$.
For the purpose of developing quantum algorithms, however, the input matrix $A$ needs to be properly normalized as $A/\alpha_A$ with $\alpha_A\geq\norm{A}$, before it can be accessed by a quantum computer (we will make this point clearer in \sec{prelim_block}). Therefore, the problem of eigenvalue transformation should be reformulated as applying the rescaled function $f(\cdot)=\widetilde{f}(\alpha_A(\cdot))$ to the block encoded operator $A/\alpha_A$, approximately producing the state
\begin{equation}
    \frac{f\left(\frac{A}{\alpha_A}\right)\ket{\psi}}{\norm{f\left(\frac{A}{\alpha_A}\right)\ket{\psi}}}.
\end{equation}

For most problems of interest, we cannot directly implement the target function on a quantum computer. Instead, we aim to implement a polynomial $p(x)$ that approximates the rescaled function $f(\cdot)=\widetilde{f}(\alpha_A(\cdot))$. The above discussion suggests that this approximation should be made over the real interval $[-1,1]$. In this case, Chebyshev polynomials provide a nearly optimal solution to the uniform polynomial approximation problem, which we briefly review in the following.

We define the \emph{Chebyshev polynomials of the first kind} over the real interval $[-1,1]$ as
\begin{equation}
\label{eq:cheby}
    \mathbf{T}_j(x):=\cos(j\arccos(x)).
\end{equation}
Note that by setting $\theta=\arccos(x)$, $\cos(j\theta)=\sum_{\substack{0\leq k\leq j\\k\text{ is even}}}\binom{j}{k}i^k\cos^{j-k}(\theta)\left(1-\cos^2(\theta)\right)^{k/2}$, which implies $\mathbf{T}_j(x)=\sum_{\substack{0\leq k\leq j\\k\text{ is even}}}\binom{j}{k}i^kx^{j-k}\left(1-x^2\right)^{k/2}$.
Thus Chebyshev polynomials defined above are indeed polynomial functions, and the definition can be extended naturally to all $\mathbb{R}$.
Now consider the power series $\sum_{j=0}^\infty \mathbf{T}_j(x)y^j$ generated by the Chebyshev polynomials. Assuming $\abs{y}<1$, 
we use the substitution $\theta=\arccos(x)$ again to get the generating function
\begin{equation}
\begin{aligned}
    \sum_{j=0}^\infty \mathbf{T}_j(x)y^j
    =\sum_{j=0}^\infty \cos(j\theta)y^j
    =\sum_{j=0}^\infty \frac{e^{ij\theta}+e^{-ij\theta}}{2}y^j
    =\frac{\frac{1}{1-ye^{i\theta}}+\frac{1}{1-ye^{-i\theta}}}{2}
    =\frac{1-yx}{1+y^2-2yx}.
\end{aligned}
\end{equation}
It is sometimes convenient to rescale the first Chebyshev polynomial by a factor of $\frac{1}{2}$:
\begin{equation}
\label{eq:cheby_scale}
    \widetilde{\mathbf{T}}_j(x)=
    \begin{cases}
        \mathbf{T}_j(x),\quad &j\geq1,\\
        \frac{1}{2}\mathbf{T}_0(x),&j=0.
    \end{cases}
\end{equation}
In this case, we have the alternative generating function
\begin{equation}
\label{eq:cheby_gen}
    \sum_{j=0}^\infty \widetilde{\mathbf{T}}_j(x)y^j
    =\frac{1}{2}+\sum_{j=1}^\infty \mathbf{T}_j(x)y^j
    =\frac{1-y^2}{2(1+y^2-2yx)}.
\end{equation}
We also define \emph{Chebyshev polynomials of the second kind} as
\begin{equation}
    \mathbf{U}_j(x):=\frac{\sin((j+1)\arccos(x))}{\sin(\arccos(x))},
\end{equation}
which has the generating function for $\abs{y}<1$
\begin{equation}
\label{eq:cheby_gen2}
    \sum_{j=0}^\infty\mathbf{U}_j(x)y^j=\frac{1}{1+y^2-2yx}.
\end{equation}
The two kinds of Chebyshev polynomials are related as the solutions of the Pell equation $\mathbf{T}_j^2(x)-(x^2-1)\mathbf{U}_{j-1}^2(x)=1$.

Chebyshev polynomials are orthogonal in the sense that
\begin{equation}
    \int_{-1}^{1}\mathbf{T}_j(x)\mathbf{T}_k(x)\frac{\mathrm{d}x}{\sqrt{1-x^2}}=
    \begin{cases}
        0,\quad& j\neq k,\\
        \frac{\pi}{2},& j=k\neq 0,\\
        \pi,&j=k=0.
    \end{cases}
\end{equation}
Thus, if a function has a Chebyshev expansion, the expansion coefficients must be uniquely determined:
\begin{equation}
    f(x)=\sum_{j=0}^\infty\beta_j\mathbf{T}_j(x)\quad\Rightarrow\quad
    \beta_j=
    \begin{cases}
        \frac{2}{\pi}\int_{-1}^{1}\mathbf{T}_j(x)f(x)\frac{\mathrm{d}x}{\sqrt{1-x^2}},\quad&j\geq1,\\
        \frac{1}{\pi}\int_{-1}^{1}\mathbf{T}_0(x)f(x)\frac{\mathrm{d}x}{\sqrt{1-x^2}},\quad&j=0.\\
    \end{cases}
\end{equation}
For notational convenience, we sometimes rescale the first coefficient and define
\begin{equation}
    \widetilde{\beta}_j=
    \begin{cases}
        \beta_j,\quad &j\geq1,\\
        2\beta_0,&j=0,
    \end{cases}
\end{equation}
so that we can rewrite the expansion as
\begin{equation}
    f(x)=\sum_{j=0}^\infty\beta_j\mathbf{T}_j(x)=\sum_{j=0}^\infty\widetilde\beta_j\widetilde{\mathbf{T}}_j(x).
\end{equation}

As aforementioned, Chebyshev expansion provides a nearly best uniform polynomial approximation of functions over the real interval $[-1,1]$. Specifically, for a continuous function with the Chebyshev expansion $f(x)=\sum_{j=0}^\infty\beta_j\mathbf{T}_j(x)$, the maximum truncation error $\norm{f-\sum_{j=0}^{n-1}\beta_j\mathbf{T}_j}_{\max,[-1,1]}$ is larger by a factor at most $4+\frac{4}{\pi^2}\ln(n-1)$ than the error achieved by the (unique) best degree-($n-1$) polynomial~\cite{Trefethen2013approximation}.
Moreover, the error of approximating a Lipschitz function $f$ by the first $n$ terms of its Chebyshev expansion decreases rapidly with $n$. For differentiable functions where $f,f',\ldots,f^{(\nu-1)}$ are absolutely continuous with a uniformly bounded variation, the error decays polynomially like $\norm{f-\sum_{j=0}^{n-1}\beta_j\mathbf{T}_j}=\mathbf{O}(n^{-\nu})$. For functions $f$ analytic in $[-1, 1]$ that are analytically continuable to the  ellipse with foci $\pm1$ and major and minor semiaxis lengths summing to $\rho>1$, the approximation error decays geometrically like $\norm{f-\sum_{j=0}^{n-1}\beta_j\mathbf{T}_j}=\mathbf{O}(\rho^{-n})$. Entire analytic functions may converge even faster. Here, we bound the error of truncating Chebyshev expansion of the exponential function and the error function, which will be useful in analyzing the quantum differential equation algorithm and the ground state preparation algorithm respectively.

\begin{proposition}[Chebyshev expansion of exponential function {\cite{Low2016HamSim}}]
\label{prop:trunc_exp}
    Given $\tau>0$, the complex exponential function $e^{-i\tau x}$ has the Chebyshev expansion $e^{-i\tau x}=\sum_{j=0}^\infty\beta_j\mathbf{T}_j(x)$, where
    \begin{equation}
        \beta_j=
        \begin{cases}
            2i^j\mathbf{J}_j(\tau),\quad&j\geq 1,\\
            \mathbf{J}_0(\tau),&j=0,
        \end{cases}
    \end{equation}
    with $\mathbf{J}_j(t)$ Bessel functions of the first kind. Truncated at order $n$,
    \begin{equation}
        \norm{e^{-i\tau (\cdot)}-\sum_{j=0}^{n-1}\beta_j\mathbf{T}_j(\cdot)}_{\max,[-1,1]}
        =\mathbf{O}\left(\left(\frac{e\tau}{2n}\right)^n\right).
    \end{equation}
\end{proposition}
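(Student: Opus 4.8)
The plan is to prove the two assertions separately: first that the stated $\beta_j$ are indeed the Chebyshev coefficients of $e^{-i\tau x}$, and then the tail estimate. For the coefficients, I would use that the orthogonality relations recalled above force the Chebyshev coefficients of a function to be unique, so it suffices to exhibit \emph{one} Chebyshev expansion with these coefficients. Substituting $x=\cos\theta$ (so $\mathbf{T}_j(x)=\cos(j\theta)$), I would invoke the Jacobi--Anger expansion, which follows from the Laurent-series generating function of Bessel functions $e^{\frac{z}{2}(t-1/t)}=\sum_{j\in\mathbb{Z}}\mathbf{J}_j(z)t^j$ by choosing $z$ and $t=\pm i e^{i\theta}$ so that $\frac{z}{2}(t-1/t)=-i\tau\cos\theta$, then pairing the $j$ and $-j$ terms via $\mathbf{J}_{-j}=(-1)^j\mathbf{J}_j$ and $e^{ij\theta}+e^{-ij\theta}=2\cos(j\theta)$. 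Since the left-hand side is entire in $t$ on $\mathbb{C}\setminus\{0\}$, the series $\sum_j\mathbf{J}_j(z)t^j$ converges absolutely and locally uniformly, so this rearrangement into a cosine series is legitimate; matching against $\sum_j\widetilde\beta_j\widetilde{\mathbf{T}}_j(x)$ then yields the claimed $\beta_j$. Equivalently, and perhaps more cleanly, one computes $\beta_j=\frac{2}{\pi}\int_0^\pi\cos(j\theta)e^{-i\tau\cos\theta}\,\mathrm{d}\theta$ directly from the standard integral representation of $\mathbf{J}_j$.

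For the truncation bound, since $\norm{\mathbf{T}_j}_{\max,[-1,1]}=1$ the triangle inequality gives
\[
  \norm{e^{-i\tau(\cdot)}-\sum_{j=0}^{n-1}\beta_j\mathbf{T}_j(\cdot)}_{\max,[-1,1]}
  \le\sum_{j=n}^\infty\abs{\beta_j}=2\sum_{j=n}^\infty\abs{\mathbf{J}_j(\tau)}.
\]
I would then use the elementary bound $\abs{\mathbf{J}_j(\tau)}\le\frac{(\tau/2)^j}{j!}$ for $\tau>0$ (from the power series of $\mathbf{J}_j$, or from Poisson's integral representation using $\abs{e^{i\tau t}}\le 1$), factor $\frac{(\tau/2)^n}{n!}$ out of the tail, and bound the remaining factor by a geometric series of ratio at most $\frac{\tau}{2(n+1)}$, using $\frac{n!}{(n+k)!}\le(n+1)^{-k}$. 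Once $n\gtrsim\tau$ this ratio is bounded away from $1$, so the tail is $\mathbf{O}\!\left(\frac{(\tau/2)^n}{n!}\right)$, and the elementary estimate $n!\ge(n/e)^n$ (which follows from $e^n\ge n^n/n!$) converts this into $\mathbf{O}\!\left((e\tau/(2n))^n\right)$, as claimed.

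None of these steps are deep; the only points that need care are justifying the term-by-term rearrangement in the Jacobi--Anger step (handled by absolute convergence of the Bessel generating function) and recording that the geometric-tail estimate, hence the stated $\mathbf{O}(\cdot)$, is meaningful precisely in the regime $n\gtrsim\tau$ — which is exactly the regime in which \prop{trunc_exp} is applied in the differential equation algorithm. If a fully self-contained write-up is desired, I would prove $\abs{\mathbf{J}_j(\tau)}\le(\tau/2)^j/j!$ from the Poisson representation $\mathbf{J}_j(\tau)=\frac{(\tau/2)^j}{\sqrt\pi\,\Gamma(j+1/2)}\int_{-1}^1(1-t^2)^{j-1/2}\cos(\tau t)\,\mathrm{d}t$ rather than quoting it, since that is the only external ingredient beyond the orthogonality and norm facts already collected in the excerpt.
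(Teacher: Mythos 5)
Your approach is the standard one and is essentially correct: Jacobi--Anger for the coefficients, then the bound $\abs{\mathbf{J}_j(\tau)}\leq(\tau/2)^j/j!$ together with a geometric-tail estimate and $n!\geq(n/e)^n$ for the truncation error. The paper supplies no proof of \prop{trunc_exp} --- it only cites the reference --- so there is nothing internal to compare against; your sketch is a sound self-contained substitute, and your caveat that the $\mathbf{O}\left((e\tau/(2n))^n\right)$ bound is meaningful only once $n\gtrsim\tau$ is exactly the right qualification.

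One point does not come out as you claim and is worth flagging, since it exposes what appears to be a sign typo in the proposition's stated coefficients. Carrying the Jacobi--Anger step through for $e^{-i\tau\cos\theta}$ carefully (for instance $z=\tau$, $t=-ie^{i\theta}$ in $e^{\frac{z}{2}(t-1/t)}=\sum_{k}\mathbf{J}_k(z)t^k$, or $z=-\tau$, $t=ie^{i\theta}$ followed by $\mathbf{J}_k(-\tau)=(-1)^k\mathbf{J}_k(\tau)$) yields $e^{-i\tau\cos\theta}=\sum_{k\in\mathbb{Z}}(-i)^k\mathbf{J}_k(\tau)e^{ik\theta}$, and hence $\beta_j=2(-i)^j\mathbf{J}_j(\tau)$ for $j\geq1$, \emph{not} $2i^j\mathbf{J}_j(\tau)$. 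A numerical sanity check at $x=1$, $\tau=1$ confirms this: with $i^j$ the series evaluates to $\approx0.540+0.841i\approx e^{+i}$, while with $(-i)^j$ it gives $\approx0.540-0.841i\approx e^{-i}$ as required. So you should not assert that the Jacobi--Anger manipulation ``yields the claimed $\beta_j$'' without actually matching signs; but the discrepancy is a complex conjugation that is almost certainly a transcription error in the proposition (equivalently, the statement holds verbatim for $e^{+i\tau x}$). This does not affect the tail estimate, which depends only on $\abs{\beta_j}=2\abs{\mathbf{J}_j(\tau)}$, so the remainder of your argument goes through unchanged.
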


\begin{proposition}[Chebyshev expansion of error function {\cite{Low17,Wan22}}]
\label{prop:trunc_erf}
    Given $c>0$, the error function $\mathbf{Erf}(cx)
    =\frac{2}{\sqrt{\pi}}\int_{0}^{cx}\mathrm{d}u\ e^{-u^2}$ has the Chebyshev expansion $\mathbf{Erf}(cx)=\sum_{j=0}^\infty\beta_j\mathbf{T}_j(x)$, where
    \begin{equation}
        \beta_j=
        \begin{cases}
            \frac{2ce^{-\frac{c^2}{2}}}{j\sqrt{\pi}}(-1)^{\frac{j-1}{2}}\left(\mathbf{I}_{\frac{j-1}{2}}\left(\frac{c^2}{2}\right)+\mathbf{I}_{\frac{j+1}{2}}\left(\frac{c^2}{2}\right)\right),\quad&j\text{ is odd},\\
            0,&j\text{ is even},
        \end{cases}
    \end{equation}
    with $\mathbf{I}_j(c)$ modified Bessel functions of the first kind. Truncated at order $n$,
    \begin{equation}
        \norm{\mathbf{Erf}(c(\cdot))-\sum_{j=0}^{n-1}\beta_j\mathbf{T}_j(\cdot)}_{\max,[-1,1]}
        =\mathbf{O}\left(\frac{c}{n}\left(e^{-\frac{n^2}{2m}}+e^{-\frac{c^2}{2}}\left(\frac{ec^2}{2m}\right)^m\right)\right)
    \end{equation}
    for all $m=\mathbf{\Omega}(c^2)$ sufficiently large.
\end{proposition}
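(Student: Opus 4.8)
The plan is to prove the two assertions—the closed form of the Chebyshev coefficients and the truncation-error estimate—separately, following the line of \cite{Low17,Wan22}.

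\emph{Coefficients.} I would first differentiate, $\frac{\mathrm{d}}{\mathrm{d}x}\mathbf{Erf}(cx)=\frac{2c}{\sqrt{\pi}}e^{-c^2x^2}$, and obtain the Chebyshev expansion of the Gaussian integrand via the Jacobi--Anger/Bessel generating identity. Writing $x=\cos\theta$, so that $c^2x^2=\frac{c^2}{2}(1+\cos 2\theta)$, the identity $e^{w\cos\phi}=\mathbf{I}_0(w)+2\sum_{k\geq1}\mathbf{I}_k(w)\cos(k\phi)$ with $\phi=2\theta$ and $w=-\frac{c^2}{2}$ gives $e^{-c^2x^2}=e^{-c^2/2}\big(\mathbf{I}_0(\tfrac{c^2}{2})+2\sum_{k\geq1}(-1)^k\mathbf{I}_k(\tfrac{c^2}{2})\mathbf{T}_{2k}(x)\big)$, the sign $(-1)^k$ coming from the parity of $\mathbf{I}_k$. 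Integrating term by term, using $\int_0^x\mathbf{T}_0=\mathbf{T}_1(x)$ and $\int_0^x\mathbf{T}_{2k}(t)\,\mathrm{d}t=\frac{1}{2}\big(\frac{\mathbf{T}_{2k+1}(x)}{2k+1}-\frac{\mathbf{T}_{2k-1}(x)}{2k-1}\big)$ (the lower-endpoint terms drop out since $\mathbf{T}_{\text{odd}}(0)=0$), and then collecting the coefficient of each $\mathbf{T}_{2k+1}$, telescopes the two adjacent Bessel terms into $\mathbf{I}_{(j-1)/2}(\tfrac{c^2}{2})+\mathbf{I}_{(j+1)/2}(\tfrac{c^2}{2})$ for $j=2k+1$, which is the stated $\beta_j$; all even coefficients vanish because $\mathbf{Erf}$ is odd. (Equivalently one computes $\beta_j=\frac{2}{\pi}\int_{-1}^1\mathbf{T}_j(x)\mathbf{Erf}(cx)\frac{\mathrm{d}x}{\sqrt{1-x^2}}$, integrates by parts once to land the derivative on $\mathbf{Erf}$, substitutes $x=\cos\theta$, and applies the same identity to $e^{-c^2\cos^2\theta}$.)

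\emph{Error bound.} Since $\abs{\mathbf{T}_j(x)}\leq1$ on $[-1,1]$, the truncation error is at most $\sum_{j\geq n}\abs{\beta_j}=\frac{2ce^{-c^2/2}}{\sqrt{\pi}}\sum_{2k+1\geq n}\frac{\mathbf{I}_k(c^2/2)+\mathbf{I}_{k+1}(c^2/2)}{2k+1}$, so the task reduces to estimating a tail of $\sum_k\frac{1}{k}e^{-c^2/2}\mathbf{I}_k(\tfrac{c^2}{2})$. I would split this at index $\sim m$. For the ``bulk'' $k\lesssim m$ (near the peak of the Bessel profile) I would use a Gaussian-type bound $e^{-z}\mathbf{I}_k(z)\lesssim z^{-1/2}e^{-k^2/(2z)}$ with $z=\tfrac{c^2}{2}$—derived from $e^{-z}\mathbf{I}_k(z)=\frac{2}{\pi}\int_0^{\pi/2}e^{-2z\sin^2 u}\cos(2ku)\,\mathrm{d}u$ by completing the square after a quadratic lower bound on $\sin^2 u$—which, summed from $k\sim n/2$ and using $m=\mathbf{\Omega}(c^2)$ to weaken $\tfrac{c^2}{2}$ in the exponent's denominator to $m$, produces the $\frac{c}{n}e^{-n^2/(2m)}$ contribution. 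For the far tail $k\gtrsim m$ (where $k$ already exceeds $\tfrac{c^2}{2}$) I would instead use the series bound $\mathbf{I}_k(z)\leq\frac{(z/2)^k}{k!}\mathbf{I}_0(z)$ together with $\frac{1}{k!}\leq(e/k)^k$; the resulting sum is geometric, dominated by its term at $k\approx m$, and after estimating $e^{-c^2/2}\mathbf{I}_0(\tfrac{c^2}{2})$ this yields the $\frac{c}{n}e^{-c^2/2}(\frac{ec^2}{2m})^m$ contribution. Adding the two pieces gives the claimed estimate, and the hypothesis that $m=\mathbf{\Omega}(c^2)$ be ``sufficiently large'' is exactly what places the split past the Bessel peak and makes $\frac{ec^2}{2m}<1$.

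\emph{Main obstacle.} The conceptual content is light; the difficulty is in the second part, namely producing bounds on $e^{-c^2/2}\mathbf{I}_k(c^2/2)$ that are simultaneously sharp enough near the transition $k\sim c^2$ to exhibit the Gaussian decay and crude enough to sum cleanly in the far tail, and then combining them with the $\frac{1}{2k+1}$ weights so that all constants, the prefactor $\frac{c}{n}$, and the precise form of both exponential terms come out as stated. These are standard but delicate modified-Bessel-function estimates, already carried out in \cite{Low17,Wan22}, so what remains is bookkeeping rather than new ideas.
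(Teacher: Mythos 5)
The paper does not prove this proposition; it is stated as a preliminary with a citation to~\cite{Low17,Wan22}, so there is no in-paper proof to compare against. Your reconstruction of the argument from those sources is correct in its essentials: differentiating $\mathbf{Erf}(cx)$, expanding the Gaussian via the modified-Bessel generating identity $e^{w\cos\phi}=\mathbf{I}_0(w)+2\sum_{k\geq1}\mathbf{I}_k(w)\cos(k\phi)$ (with the sign $(-1)^k$ from $\mathbf{I}_k(-z)=(-1)^k\mathbf{I}_k(z)$), then integrating Chebyshev polynomials term-by-term and telescoping adjacent Bessel coefficients, recovers the stated $\beta_j$ exactly, including the factor $1/j$, the $(-1)^{(j-1)/2}$, and the vanishing of even-index coefficients; and bounding $\sum_{j\geq n}\abs{\beta_j}$ by splitting $e^{-z}\mathbf{I}_k(z)$ (with $z=\frac{c^2}{2}$) at $k\sim m$ past the uniform-asymptotic transition $k\sim z$ — a Gaussian-profile bound near the peak, the factorial/series bound $\mathbf{I}_k(z)\leq\frac{(z/2)^k}{k!}\mathbf{I}_0(z)$ in the far tail — is precisely the mechanism that yields the two exponential terms and makes the hypothesis $m=\mathbf{\Omega}(c^2)$ natural.

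The delicacies you flag at the end are the genuine ones. In particular, when filling in details you should be sure the Gaussian bound on $e^{-z}\mathbf{I}_k(z)$ is established non-asymptotically and uniformly up to $k\sim m$ (the split sits right at the transition, so an ``$k\ll z$'' asymptotic is insufficient; the integral representation you invoke does give a uniform bound once the quadratic lower bound for $\sin^2 u$ and the Gaussian-tail truncation are handled carefully), and that the $\tfrac{c}{n}$ prefactor is preserved through both regimes by retaining the weight $\frac{1}{2k+1}\lesssim\frac{1}{n}$ over the whole tail rather than only estimating the leading term. With those points tracked, your plan matches the derivation in the cited references.
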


Now let us apply the substitution $\theta=\arccos(x)$ and re-express the Chebyshev expansion as
\begin{equation}
\label{eq:cheby_trig}
    f(\cos(\theta))=\sum_{j=0}^\infty\beta_j\cos(j\theta),
    \qquad
    \beta_j=
    \begin{cases}
        \frac{2}{\pi}\int_{0}^{\pi}\cos(j\theta)f(cos(\theta))\mathrm{d}\theta,\quad&j\geq1,\\
        \frac{1}{\pi}\int_{0}^{\pi}f(cos(\theta))\mathrm{d}\theta,\quad&j=0.\\
    \end{cases}
\end{equation}
We thus see that the Chebyshev expansion of $f(x)$ is closely related to the Fourier expansion of the even function $f(\cos(\theta))$, the latter of which has been extensively studied in previous literature. In what follows, we review a collection of results from Fourier analysis that are most relevant to our work, referring the reader to~\cite{allen2004signal,vetterli2014foundations} for a comprehensive treatment of this subject in the context of signal processing. We focus on the exponential form of the Fourier expansion for convenience, but it is straightforward to reformulate all the results in the trigonometric form.

Consider a function $g:[-\pi,\pi]\rightarrow\mathbb{C}$ and periodically extend its domain to the entire $\mathbb R$. Due to the orthogonality of $e^{-ij\omega}$ over the interval $[-\pi,\pi]$, if $g$ has a Fourier expansion, the expansion coefficients must be uniquely determined as:
\begin{equation}
    g(\omega)=\sum_{j=-\infty}^\infty\xi_je^{-ij\omega}
    \quad\Rightarrow\quad
    \xi_j=\frac{1}{2\pi}\int_{-\pi}^{\pi}\mathrm{d}\omega\ g(\omega)e^{ij\omega}.
\end{equation}
In engineering, the coefficients $\xi_j$ are often thought of as a signal in the time domain, and $g(\omega)$ is then the \emph{discrete time Fourier transform} of $\xi_j$ in the frequency domain. We do not explicitly use this terminology hereafter, but it is useful to have this time-frequency correspondence in mind when analyzing properties of the Fourier expansion. By the unitarity of the expansion, we have the Parseval-Plancherel identity:
\begin{equation}
    \sum_{j=-\infty}^\infty\abs{\xi_j}^2
    =\norm{\xi}^2
    =\frac{1}{2\pi}\norm{g}_{2,[-\pi,\pi]}^2
    =\frac{1}{2\pi}\int_{-\pi}^{\pi}\mathrm{d}\omega \abs{g(\omega)}^2.
\end{equation}

Supposing that $g(\omega)$ can be Fourier expanded with coefficients $\{\xi_j\}_{j=-\infty}^\infty$, we would like to find the function to which the shifted coefficients $\{\xi_{j-k}\}_{j=-\infty}^\infty$ correspond. A simple calculation yields that
\begin{equation}
    \sum_{j=-\infty}^\infty\xi_{j-k}e^{-ij\omega}
    =\sum_{j=-\infty}^\infty\xi_{j}e^{-i(j+k)\omega}
    =e^{-ik\omega}g(\omega).
\end{equation}
Thus a shift in the time domain corresponds to a phase shift in the frequency domain. Additionally, we may take the derivative in the frequency domain when appropriate, which yields the correspondence
\begin{equation}
    \sum_{j=-\infty}^\infty j\xi_{j}e^{-ij\omega}
    =i\frac{\mathrm{d}}{\mathrm{d}\omega}g(\omega).
\end{equation}
Now suppose that functions $g(\omega)$ and $h(\omega)$ can be Fourier expanded as $g(\omega)=\sum_{j=-\infty}^\infty \xi_je^{-ij\omega}$ and $h(\omega)=\sum_{j=-\infty}^\infty \zeta_je^{-ij\omega}$, respectively. Then the function corresponding to the pointwise product coefficients $\{\xi_j\zeta_j\}_{j=-\infty}^\infty$ can be obtained from the \emph{frequency domain convolution}:
\begin{equation}
    \sum_{j=-\infty}^\infty \xi_j\zeta_je^{-ij\omega}
    =\frac{1}{2\pi}\int_{-\pi}^{\pi}\mathrm{d}u\ g(u)h(\omega-u).
\end{equation}
The right-hand side of the above equation is sometimes known as the \emph{cyclic convolution} of periodic functions $g$ and $h$.

As an application of the frequency domain convolution, let us consider the one-sided Fourier expansion
\begin{equation}
    \widetilde{\mathbf{H}}(g)(\omega)=\sum_{j=0}^\infty \xi_je^{-ij\omega}
\end{equation}
and its relation to the two-sided Fourier expansion $g(\omega)=\sum_{j=-\infty}^\infty \xi_je^{-ij\omega}$. Note that the one-sided expansion can be obtained by multiplying the discrete Heaviside step function in the time domain, which corresponds to
\begin{equation}
    \sum_{j=0}^\infty e^{-ij\omega}
    =\frac{1}{1-e^{-i\omega}}
    +\pi\sum_{k=-\infty}^{\infty}\pmb{\delta}(\omega-2\pi k)
\end{equation}
in the frequency domain, where $\pmb{\delta}(\cdot)$ is the \emph{Dirac delta function}. Thus, by the frequency domain convolution theorem,
\begin{equation}
\begin{aligned}
    \widetilde{\mathbf{H}}(g)(\omega)
    &=\frac{1}{2\pi}\int_{-\pi}^{\pi}\mathrm{d}u\
    g(\omega-u)\left(
    \frac{1}{1-e^{-iu}}
    +\pi\sum_{k=-\infty}^{\infty}\pmb{\delta}(u-2\pi k)
    \right)\\
    &=\frac{1}{2\pi}\int_{-\pi}^{\pi}\mathrm{d}u\
    \frac{g(\omega-u)}{1-e^{-iu}}
    +\frac{1}{2}\int_{-\pi}^{\pi}\mathrm{d}u\
    g(\omega-u)\sum_{k=-\infty}^{\infty}\pmb{\delta}(u-2\pi k).
\end{aligned}
\end{equation}
For the second term, we have
\begin{equation}
\begin{aligned}
    \frac{1}{2}\int_{-\pi}^{\pi}\mathrm{d}u\
    g(\omega-u)\sum_{k=-\infty}^{\infty}\pmb{\delta}(u-2\pi k)
    &=\frac{1}{2}\sum_{k=-\infty}^{\infty}\int_{-\pi-2\pi k}^{\pi-2\pi k}\mathrm{d}u\
    g(\omega-2\pi k-u)\pmb{\delta}(u)\\
    &=\frac{1}{2}\int_{-\pi}^{\pi}\mathrm{d}u\
    g(\omega-u)\pmb{\delta}(u)
    =\frac{1}{2}g(\omega),
\end{aligned}
\end{equation}
whereas
\begin{equation}
    \frac{1}{1-e^{-iu}}
    =\frac{1}{2}-\frac{i}{2}\cot{\left(\frac{u}{2}\right)}
\end{equation}
for the first term.
So altogether, the one-sided expansion has the spectrum
\begin{equation}
    \widetilde{\mathbf{H}}(g)(\omega)
    =-\frac{i}{2}\mathbf{H}(g)(\omega)
    +\frac{1}{4\pi}\int_{-\pi}^{\pi}\mathrm{d}u\
    g(u)
    +\frac{1}{2}g(\omega)
\end{equation}
in the frequency domain, where
\begin{equation}
    \mathbf{H}(g)(\omega)=
    \frac{1}{2\pi}\int_{-\pi}^{\pi}\mathrm{d}u\
    g(u)\cot{\left(\frac{\omega-u}{2}\right)}
\end{equation}
is the \emph{Hilbert transform} of the $2\pi$-periodic function $g$.

The following Riesz inequality asserts that $\mathcal{L}_2$-norm does not increase under the Hilbert transform.
\begin{lemma}[Riesz inequality {\cite[(6.167)]{king_2009}}]
\label{lem:riesz}
    For $g\in \mathcal{L}_2([-\pi,\pi])$, its Hilbert transform
    \begin{equation}
        \mathbf{H}(g)(\omega)=
    \frac{1}{2\pi}\int_{-\pi}^{\pi}\mathrm{d}u\
    g(u)\cot{\left(\frac{\omega-u}{2}\right)}
    \end{equation}
    satisfies
    \begin{equation}
        \norm{\mathbf{H}(g)}_{2,[-\pi,\pi]}\leq\norm{g}_{2,[-\pi,\pi]},
    \end{equation}
    where $\norm{g}_{2,[-\pi,\pi]}=\sqrt{\int_{-\pi}^{\pi}\mathrm{d}\omega \abs{g(\omega)}^2}$.
\end{lemma}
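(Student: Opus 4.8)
The plan is to use the fact, essentially already extracted in the computation of $\widetilde{\mathbf{H}}(g)$ above, that the Hilbert transform acts on the Fourier side as a multiplier whose symbol has modulus at most $1$; the $\mathcal{L}_2$ bound is then immediate from Parseval--Plancherel. Write the Fourier expansion $g(\omega)=\sum_{j=-\infty}^{\infty}\xi_j e^{-ij\omega}$, so that the one-sided expansion appearing above is $\widetilde{\mathbf{H}}(g)(\omega)=\sum_{j=0}^{\infty}\xi_j e^{-ij\omega}$ and $\frac{1}{4\pi}\int_{-\pi}^{\pi}g(u)\,\mathrm{d}u=\frac{1}{2}\xi_0$.

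First I would solve the identity $\widetilde{\mathbf{H}}(g)(\omega)=-\frac{i}{2}\mathbf{H}(g)(\omega)+\frac{1}{4\pi}\int_{-\pi}^{\pi}g(u)\,\mathrm{d}u+\frac{1}{2}g(\omega)$ for $\mathbf{H}(g)$, which gives $\mathbf{H}(g)(\omega)=2i\left(\widetilde{\mathbf{H}}(g)(\omega)-\frac{1}{2}\xi_0-\frac{1}{2}g(\omega)\right)$. Substituting the Fourier series, the constant term and the nonnegative-frequency part cancel against half of the full sum, leaving $\mathbf{H}(g)(\omega)=i\sum_{j\geq1}\xi_j e^{-ij\omega}-i\sum_{j\leq-1}\xi_j e^{-ij\omega}$: the zero mode is annihilated and every other coefficient is merely multiplied by a number of modulus $1$. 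Applying the Parseval--Plancherel identity recorded above then yields $\norm{\mathbf{H}(g)}_{2,[-\pi,\pi]}^2=2\pi\sum_{j\neq0}\abs{\xi_j}^2\leq 2\pi\sum_{j=-\infty}^{\infty}\abs{\xi_j}^2=\norm{g}_{2,[-\pi,\pi]}^2$, which is the assertion.

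I expect the only real obstacle to be rigour rather than algebra: the derivation of the formula for $\widetilde{\mathbf{H}}(g)$ manipulated the divergent series $\sum_{j\geq0}e^{-ij\omega}$ and a Dirac comb formally, and the principal-value integral defining $\mathbf{H}(g)$ must be handled with care for a general $g\in\mathcal{L}_2([-\pi,\pi])$. The clean fix is a density argument: for a trigonometric polynomial the kernel identities are legitimate (the principal value converges absolutely away from $u=0$, and the contribution of the singularity reproduces exactly the $-\frac{i}{2}$ and $\frac{1}{2}$ terms), so the multiplier identity and hence the norm bound hold on that dense subspace; since $\mathbf{H}$ is thereby bounded with constant $1$ on a dense set, it extends continuously to all of $\mathcal{L}_2([-\pi,\pi])$, the extension agreeing almost everywhere with the principal-value integral. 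Alternatively one simply invokes the classical theory of the conjugate function on the circle, as in \cite[(6.167)]{king_2009}.
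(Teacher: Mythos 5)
The paper states this lemma without proof, citing King's textbook; there is no ``paper's proof'' to compare against. Your Fourier multiplier argument is correct and is in fact the standard proof of the circle-case Riesz inequality, and it assembles exactly the ingredients the paper already develops in the surrounding discussion (the identity $\widetilde{\mathbf{H}}(g)=-\tfrac{i}{2}\mathbf{H}(g)+\tfrac{1}{4\pi}\int g+\tfrac{1}{2}g$ and the Parseval--Plancherel formula). Solving for $\mathbf{H}(g)$ and substituting the Fourier series gives $\mathbf{H}(g)(\omega)=i\sum_{j\ge1}\xi_j e^{-ij\omega}-i\sum_{j\le-1}\xi_j e^{-ij\omega}$, i.e.\ the zero mode is killed and every other coefficient is multiplied by a unimodular constant, so Parseval yields $\norm{\mathbf{H}(g)}_{2,[-\pi,\pi]}^2=2\pi\sum_{j\ne0}\abs{\xi_j}^2\le 2\pi\sum_j\abs{\xi_j}^2=\norm{g}_{2,[-\pi,\pi]}^2$. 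You are also right that the paper's formal derivation of the $\widetilde{\mathbf{H}}$ identity (via the divergent geometric series plus a Dirac comb) is only heuristic; your fix---verify the kernel identity pointwise on trigonometric polynomials, where the principal value is unproblematic, deduce boundedness with constant $1$ on that dense subspace, and extend by continuity---is the standard way to make it rigorous. The one thing worth flagging is that the $(6.167)$ citation in King is to the $\mathcal{L}_p$ Riesz inequality for general $1<p<\infty$, which requires genuine harmonic-analysis machinery; the $p=2$ case you need is elementary exactly because of the multiplier picture, so your self-contained argument is simpler than what the reference proves and entirely adequate here.
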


Finally, we consider the growth of Fourier partial sum as a function of the truncate order. Specifically, given a Fourier expansion $g(\omega)=\sum_{j=-\infty}^\infty \xi_je^{-ij\omega}$ and $\mathcal{J}$ a collection of indices, we define $\mathbf{S}_{\mathcal{J}}(\omega)=\sum_{j\in\mathcal{J}}\xi_je^{-ij\omega}$. So for instance,
\begin{equation}
    \mathbf{S}_{(-n,n)}(g)(\omega)=\sum_{j=-n+1}^{n-1}\xi_je^{-ij\omega},\qquad
    \mathbf{S}_{[0,n)}(g)(\omega)=\sum_{j=0}^{n-1}\xi_je^{-ij\omega}.
\end{equation}
Note that these partial sums correspond to the pointwise multiplication of $\xi_j$ with rectangular window functions. Thus, by the frequency domain convolution theorem,
\begin{equation}
\begin{aligned}
    \mathbf{S}_{[0,n)}(g)(\omega)
    &=\frac{1}{2\pi}\int_{-\pi}^{\pi}\mathrm{d}u\
    g(\omega-u)
    \frac{1-e^{-in\omega}}{1-e^{-i\omega}}
    =\frac{1}{2\pi}\int_{-\pi}^{\pi}\mathrm{d}u\
    g(\omega-u)
    e^{-i\frac{(n-1)\omega}{2}}
    \frac{\sin\left(\frac{n\omega}{2}\right)}{\sin\left(\frac{\omega}{2}\right)},\\
    \mathbf{S}_{(-n,n)}(g)(\omega)
    &=\frac{1}{2\pi}\int_{-\pi}^{\pi}\mathrm{d}u\
    g(\omega-u)
    e^{i(n-1)\omega}\frac{1-e^{-i(2n-1)\omega}}{1-e^{-i\omega}}
    =\frac{1}{2\pi}\int_{-\pi}^{\pi}\mathrm{d}u\
    g(\omega-u)
    \frac{\sin\left(\frac{(2n-1)\omega}{2}\right)}{\sin\left(\frac{\omega}{2}\right)}.
\end{aligned}
\end{equation}
Note that while $\mathcal{L}_\infty$-norm of the \emph{Dirichlet kernel}
\begin{equation}
    \mathbf{D}_{n}(\omega)
    =\sum_{j=-n}^{n}e^{-ij\omega}
    =\frac{\sin\left((n+\frac{1}{2})\omega\right)}{\sin\left(\frac{\omega}{2}\right)}
\end{equation}
is given by
\begin{equation*}
    \norm{\mathbf{D}_{n}}_{\max,[-\pi,\pi]}
    =2n+1,
\end{equation*}
its $\mathcal{L}_1$-norm scales like
\begin{equation}
    \norm{\mathbf{D}_{n}}_{1,[-\pi,\pi]}
    =\frac{4}{\pi^2}\log\left(n\right)+\mathbf{O}(1).
\end{equation}
We thus see that in the worst case, the Fourier partial sums grow like $\sim\log n$ as well: in particular, if $g\in \mathcal{L}_\infty([-\pi,\pi])$, we have
\begin{equation}
\label{eq:worst_partial}
    \norm{\mathbf{S}_{[0,n)}(g)}_{\max,[-\pi,\pi]}
    ,\norm{\mathbf{S}_{(-n,n)}(g)}_{\max,[-\pi,\pi]}
    =\left(\frac{4}{\pi^2}\log n+\mathbf{O}(1)\right)\norm{g}_{\max,[-\pi,\pi]}.
\end{equation}
However, the scaling of the partial sum can be much smaller on average. In fact, we have the following highly nontrivial result due to Carleson and Hunt.

\begin{lemma}[{Carleson-Hunt theorem} {\cite[Theorem 12.8]{Reyna2002} and \cite{TaoBlog}}]
\label{lem:carleson_hunt}
Given $g\in \mathcal{L}_2([-\pi,\pi])$, its \emph{Fourier maximal operator}
\begin{equation}
    \mathbf{S}_*(g)(\omega)=\sup_{n=0,1,\ldots}\abs{\mathbf{S}_{(-n,n)}(g)(\omega)}.
\end{equation}
satisfies
\begin{equation}
    \norm{\mathbf{S}_*(g)}_{2,[-\pi,\pi]}
    \leq c\norm{g}_{2,[-\pi,\pi]}
\end{equation}
for some universal constant $c$, where $\norm{g}_{2,[-\pi,\pi]}=\sqrt{\int_{-\pi}^{\pi}\mathrm{d}\omega \abs{g(\omega)}^2}$.
\end{lemma}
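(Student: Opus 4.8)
This Lemma is Carleson's theorem --- the $\mathcal{L}_2$ instance of the Carleson-Hunt theorem on pointwise convergence of Fourier series --- and is one of the deepest results in classical harmonic analysis. A self-contained proof is well beyond the scope of this paper, so the plan is to invoke it as a black box, citing \cite{Reyna2002,TaoBlog}. Below I sketch the structure of the argument one would follow and indicate where the real difficulty sits.

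First I would reduce the maximal partial-sum bound to an estimate for the \emph{Carleson operator}. Writing $\mathbf{S}_{(-n,n)}(g)$ as the cyclic convolution of $g$ against the Dirichlet kernel $\mathbf{D}_{n-1}$ and exploiting the partial-fraction structure of $\mathbf{D}_m$ (exactly as in the Dirichlet-kernel computations above, cf.\ the derivation of \eq{worst_partial}), the symmetric partial sum decomposes, modulo $\mathcal{L}_2$-bounded error terms, into a frequency-modulated copy of the one-sided projection $\widetilde{\mathbf{H}}$ and its conjugate. Recalling from the development above that $\widetilde{\mathbf{H}}$ equals $-\frac{i}{2}\mathbf{H}$ plus $\frac{1}{2}$ times the identity plus a rank-one averaging operator, this reduces the claim to the $\mathcal{L}_2\to\mathcal{L}_2$ bound $\norm{\mathcal{C}g}_{2,[-\pi,\pi]}\lesssim\norm{g}_{2,[-\pi,\pi]}$ for $\mathcal{C}g(\omega)=\sup_{n\in\mathbb{Z}}\abs{\mathbf{H}(e^{in(\cdot)}g)(\omega)}$, the supremum of the periodic Hilbert transform over all frequency modulations. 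Next I would linearize this supremum, replacing $n$ by an arbitrary measurable function $n(\omega)$ and proving the bound with a constant uniform in $n(\cdot)$; this standard device turns the maximal estimate into an $\mathcal{L}_2$ bound for a single operator, which is, however, no longer a convolution.

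The hard part --- indeed the entire substance of the theorem --- is the time-frequency (phase-plane) analysis required for that $n(\cdot)$-dependent operator. One decomposes the kernel into pieces adapted to dyadic ``tiles'' in phase space; on each ``tree'' of tiles sharing a common vertex the operator behaves like a maximally truncated Hilbert transform and is controlled by Calder\'{o}n-Zygmund theory together with a Carleson-measure / BMO input; and the trees are organized into ``forests'' through a stopping-time and counting selection so that the tree-by-tree estimates sum with a geometric gain. This Fefferman-Lacey-Thiele scheme is the main obstacle, but it is purely classical and entirely independent of the quantum-algorithmic content here, so rather than reproduce it I would simply state the result with the citations above. The only consequence used downstream is the clean $\mathcal{L}_2\to\mathcal{L}_2$ bound with a universal constant $c$, which feeds the average-case analysis of the shifted Chebyshev partial sums in \append{analysis_cheby_carleson}.
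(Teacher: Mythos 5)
Your proposal coincides with the paper's treatment: the Carleson--Hunt theorem is stated as a cited black box (\cite[Theorem 12.8]{Reyna2002} and \cite{TaoBlog}) with no proof given, since reproducing the Fefferman--Lacey--Thiele time-frequency argument is well outside the scope of the paper. Your sketch of the reduction to the Carleson operator via the Dirichlet-kernel decomposition and linearization of the supremum is accurate, but neither it nor the tile/tree/forest machinery appears in the paper, which simply invokes the $\mathcal{L}_2\to\mathcal{L}_2$ bound as you propose.
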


\subsection{Matrix decompositions and transformations}
\label{sec:prelim_matrix}

We now introduce common classes of matrices and their decompositions relevant to our work, based on which we define transformations of these matrices. We refer the reader to~\cite{axler2023linear,roman2013advanced,horn2012matrix} for a more comprehensive coverage of matrix analysis and linear algebra.

We say a square matrix $A$ is \emph{diagonalizable} if $A=S\Lambda S^{-1}$ for some invertible matrix $S$ and diagonal matrix $\Lambda$.
In this case, $A=S\Lambda S^{-1}$ is called the \emph{eigendecomposition} of $A$.
Diagonalizable matrices arise in a variety of practical applications~\cite{Ashida20,McArdle20}, and they are especially easy to handle because, up to a change of basis, their actions are described by scalar multiplications by eigenvalues. But not all matrices are diagonalizable. For instance, if we are restricted to $\mathbb Q$ or $\mathbb{R}$, then matrices such as $\begin{bmatrix}
    0 & 1\\
    -1 & 0
\end{bmatrix}$ cannot be diagonalized due to the lack of rational or real eigenvalues. Even complex matrices can be non-diagonalizable if they fail to satisfy the following criterion.

\begin{proposition}[Eigendecomposition]
    The following statements hold for a matrix $A\in\mathbb C^{d\times d}$:
    \begin{enumerate}
        \item (Uniqueness of eigenvalues): The diagonal matrix 
        \begin{equation}
            \Lambda=
            \begin{bmatrix}
                \lambda_0 & & &\\
                & \lambda_1 & &\\
                & & \ddots &\\
                & & & \lambda_{d-1}
            \end{bmatrix}\in\mathbb{C}^{d\times d}
        \end{equation}
        satisfies $A=S\Lambda S^{-1}$ for some invertible matrix $S\in\mathbb{C}^{d\times d}$,
        if and only if
        \begin{equation}
        \label{eq:eigen_rank}
            \#\{\lambda_l\ |\ \lambda_l=\lambda\}
            =d-\mathbf{Rank}(A-\lambda I)
        \end{equation}
        for all $\lambda\in\mathbb C$.
        \item (Equivalence of eigenbasis): Invertible matrices $S,T\in\mathbb{C}^{d\times d}$ satisfy $A=S\Lambda S^{-1}=T\Lambda T^{-1}$ for some diagonal matrix
        \begin{equation}
            \Lambda=
            \begin{bmatrix}
                \lambda_0I_{d_0} & & &\\
                & \lambda_1I_{d_1} & &\\
                & & \ddots &\\
                & & & \lambda_{s-1}I_{d_{s-1}}
            \end{bmatrix}\in\mathbb{C}^{d\times d}
        \end{equation}
        with the same eigenvalue $\lambda_l$ grouped together as $\lambda_lI_{d_l}\in\mathbb{C}^{d_l\times d_l}$ ($\lambda_l$ are pairwise distinct),
        if and only if $S$ and $T$ are related by
        \begin{equation}
        \label{eq:eigen_unique}
            T=S
            \begin{bmatrix}
                R_0 & & &\\
                & R_1 & &\\
                & & \ddots &\\
                & & & R_{s-1}
            \end{bmatrix}
        \end{equation}
        for some invertible matrices $R_l\in\mathbb{C}^{d_l\times d_l}$~\cite[Theorem 1.3.27]{horn2012matrix}.
    \end{enumerate}
\end{proposition}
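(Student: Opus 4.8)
The plan is to handle the two parts of the proposition separately, since each rests on a different piece of elementary linear algebra. For \emph{Part 1} (uniqueness of eigenvalues), the forward implication is a one-liner: if $A=S\Lambda S^{-1}$ then $A-\lambda I=S(\Lambda-\lambda I)S^{-1}$ is a similarity, so $\mathbf{Rank}(A-\lambda I)=\mathbf{Rank}(\Lambda-\lambda I)=d-\#\{l\mid\lambda_l=\lambda\}$, which is exactly \eqref{eq:eigen_rank}. For the converse, let $\mu_1,\dots,\mu_s$ be the distinct values among $\lambda_0,\dots,\lambda_{d-1}$ and set $m_r=\#\{l\mid\lambda_l=\mu_r\}$, so $\sum_r m_r=d$. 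Specializing \eqref{eq:eigen_rank} to $\lambda=\mu_r$ yields $\dim\mathbf{Ker}(A-\mu_r I)=d-\mathbf{Rank}(A-\mu_r I)=m_r$. Using the standard fact that eigenvectors for distinct eigenvalues are linearly independent, the sum $\bigoplus_{r=1}^s\mathbf{Ker}(A-\mu_r I)$ is direct, hence of dimension $\sum_r m_r=d$, so it is all of $\mathbb{C}^d$. Picking a basis of each $\mathbf{Ker}(A-\mu_r I)$ and assembling these vectors as the columns of $S$ in the order dictated by the diagonal of $\Lambda$ gives an invertible $S$ with $AS=S\Lambda$, i.e.\ $A=S\Lambda S^{-1}$. (Applying \eqref{eq:eigen_rank} to any $\lambda\notin\{\mu_r\}$ forces $\mathbf{Rank}(A-\lambda I)=d$, consistent with the decomposition just built.)

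For \emph{Part 2} (equivalence of eigenbasis), I would reproduce the short argument of \cite[Theorem 1.3.27]{horn2012matrix}. The ``if'' direction: the block-diagonal matrix $D=\mathrm{diag}(R_0,\dots,R_{s-1})$ commutes with $\Lambda$, because $\Lambda$ acts as the scalar $\lambda_l$ on its $l$-th block; hence $T\Lambda T^{-1}=SD\Lambda D^{-1}S^{-1}=S\Lambda S^{-1}=A$. The ``only if'' direction: set $D:=S^{-1}T$, so that $S\Lambda S^{-1}=T\Lambda T^{-1}$ becomes $\Lambda D=D\Lambda$. Partitioning $D=(D_{lk})$ conformally with the block structure of $\Lambda$, the relation $\Lambda D=D\Lambda$ reads $\lambda_l D_{lk}=\lambda_k D_{lk}$ for all $l,k$; since the $\lambda_l$ are pairwise distinct this forces $D_{lk}=0$ whenever $l\neq k$, i.e.\ $D$ is block-diagonal. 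Finally $D$ is invertible as a product of invertible matrices, so each diagonal block $R_l$ is invertible, and $T=SD$ is exactly the form in \eqref{eq:eigen_unique}.

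All the computations here are routine, so I do not anticipate a genuine obstacle; the only step that requires a little care is the dimension-counting in Part 1 --- namely, that the rank hypotheses at the distinct $\mu_r$ \emph{together} force $\mathbb{C}^d$ to split as the direct sum of the corresponding eigenspaces, so that in particular $A$ has no eigenvalues outside the prescribed list. This reduces to the linear independence of eigenvectors for distinct eigenvalues, which I would either cite or dispatch with the usual one-line induction on the number of eigenvalues.
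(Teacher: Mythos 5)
Your proof is correct and complete. The paper itself does not supply a proof for this proposition — it is stated as standard background, with Part~2 attributed to Horn and Johnson~\cite[Theorem~1.3.27]{horn2012matrix} and Part~1 left as folklore — so there is no paper argument to compare against, but your route is exactly the standard one that those references take: similarity invariance of rank and the nullity count for the forward direction of Part~1, direct-sum dimension counting over the distinct eigenvalues for the converse, and the commutant calculation $\Lambda D = D\Lambda$ with a conformal block partition for Part~2.
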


For a diagonalizable matrix $A$ with eigendecomposition $A=S\Lambda S^{-1}$, if a scalar function $f(x)$ is defined at all eigenvalues of $A$, we can simply let the \emph{eigenvalue transformation} be
\begin{equation}
\label{eq:eigenvalue_transformation}
    f(A)=Sf(\Lambda)S^{-1}=S
    \begin{bmatrix}
        f(\lambda_0) & & &\\
        & f(\lambda_1) & &\\
        & & \ddots &\\
        & & & f(\lambda_{d-1})
    \end{bmatrix}S^{-1}.
\end{equation}
This is indeed well defined. As eigenvalues and their multiplicities are determined by the rank condition \eq{eigen_rank}, we can simultaneously permute the diagonal entries of $\Lambda$ and $f(\Lambda)$ so that the same values are collected together. Then, diagonal blocks $\lambda_lI_{d_l}$ and $f(\lambda_l)I_{d_l}$ are all multiples of the identity matrix, which are invariant under the transformations $R_l(\cdot)R_l^{-1}$ from \eq{eigen_unique}.

In general, a complex square matrix admits the \emph{Jordan form decomposition} $A=SJS^{-1}$, where $S$ is an invertible matrix and $J$ is only block diagonal. Its existence and uniqueness are formally asserted by the following proposition.

\begin{proposition}[Jordan form decomposition]
    The following statements hold for a matrix $A\in\mathbb{C}^{d\times d}$:
    \begin{enumerate}
        \item (Existence): There exist an invertible matrix $S\in\mathbb{C}^{d\times d}$ and a block diagonal matrix 
        \begin{equation}
        \label{eq:jordan_block}
        \begin{aligned}
            J&=
            \begin{bmatrix}
                J(\lambda_0,d_0) & & &\\
                & J(\lambda_1,d_1) & &\\
                & &\ddots &\\
                & & &J(\lambda_{s-1},d_{s-1})
            \end{bmatrix}\in\mathbb{C}^{d\times d},\\
            J(\lambda_l,d_l)&=
            \begin{bmatrix}
                \lambda_l &  &  &  &  & \\
                1 & \lambda_l &  &  &  & \\
                0 & 1 & \lambda_l &  &  & \\
                \vdots & 0 & 1 & \ddots &  & \\
                \vdots & \ddots & \ddots & \ddots & \lambda_l & \\
                0 & \cdots & \cdots & 0 & 1 & \lambda_l\\
            \end{bmatrix}\in\mathbb{C}^{d_l\times d_l},
        \end{aligned}
        \end{equation}
        such that $A=SJS^{-1}$.
        \item (Uniqueness of Jordan form): The block diagonal matrix $J$ from \eq{jordan_block} satisfies $A=SJ S^{-1}$ for some invertible matrix $S\in\mathbb{C}^{d\times d}$, if and only if
        \begin{equation}
            \#\{J(\lambda_l,d_l)\ |\ \lambda_l=\lambda,d_l\geq k\}
            =\mathbf{Rank}\left((A-\lambda I)^{k-1}\right)
            -\mathbf{Rank}\left((A-\lambda I)^{k}\right)
        \end{equation}
        for all $\lambda\in\mathbb{C}$ and positive integer $k$~\cite[Lemma 3.1.18]{horn2012matrix}.
        \item (Equivalence of Jordan basis): Invertible matrices $S,T\in\mathbb{C}^{d\times d}$ satisfy $A=SJS^{-1}=TJT^{-1}$ for some Jordan form
        \begin{equation}
            J=
            \begin{bmatrix}
                \widetilde{J}_{d_{s_0}}(\lambda_0) & & &\\
                & \widetilde{J}_{d_{s_1}}(\lambda_1) & &\\
                & &\ddots &\\
                & & &\widetilde{J}_{d_{s_{g-1}}}(\lambda_{g-1})
            \end{bmatrix}\in\mathbb{C}^{d\times d}
        \end{equation}
        where blocks with the same eigenvalue $\lambda_l$ are grouped together as $\widetilde{J}_{d_{s_l}}(\lambda_l)\in\mathbb{C}^{d_{s_l}\times d_{s_l}}$ ($\lambda_l$ are pairwise distinct),
        if and only if $S$ and $T$ are related by
        \begin{equation}
        \label{eq:jordan_unique}
            T=S
            \begin{bmatrix}
                \widetilde{R}_0 & & &\\
                & \widetilde{R}_1 & &\\
                & & \ddots &\\
                & & & \widetilde{R}_{g-1}
            \end{bmatrix}
        \end{equation}
        for some invertible matrices $\widetilde{R}_l\in\mathbb{C}^{d_{s_l}\times d_{s_l}}$ with Toeplitz-type block structures~\cite[Theorem 3.2]{FERRER20133945}.
    \end{enumerate}
\end{proposition}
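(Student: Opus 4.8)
The plan is to reduce to the nilpotent case. Factoring the characteristic polynomial as $\prod_{l}(x-\mu_l)^{m_l}$ over the distinct eigenvalues $\mu_l$ and combining Cayley--Hamilton with the pairwise coprimality of the factors $(x-\mu_l)^{m_l}$, one gets the primary decomposition $\mathbb{C}^d=\bigoplus_l \mathbf{Ker}\big((A-\mu_lI)^{m_l}\big)$ into $A$-invariant subspaces. On the $l$th summand $(A-\mu_lI)$ acts nilpotently, so it suffices to show that every nilpotent operator $N$ on a finite-dimensional space is, in a suitable basis, a direct sum of shift matrices $J(0,\cdot)$; gluing these bases and adding back the shifts $\mu_lI$ then yields $A=SJS^{-1}$ with $J$ block diagonal as in \eqref{eq:jordan_block}.

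\textbf{Nilpotent case (the crux).} For $N$ with $N^{r}=0\neq N^{r-1}$ I would build Jordan chains top-down along the flag $0\subsetneq\mathbf{Ker}(N)\subsetneq\cdots\subsetneq\mathbf{Ker}(N^{r})=\mathbb{C}^d$: pick $v_1,\dots,v_{k_r}$ whose classes form a basis of $\mathbf{Ker}(N^{r})/\mathbf{Ker}(N^{r-1})$, note that $Nv_1,\dots,Nv_{k_r}$ remain independent modulo $\mathbf{Ker}(N^{r-2})$, complete them to such a basis at level $r-1$, and iterate down to level $1$; the resulting family $\{N^{j}v_i\}$ is a basis of $\mathbb{C}^d$ splitting it into $N$-cyclic subspaces, i.e.\ shift blocks. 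The step needing genuine care is the dimension count guaranteeing nothing collapses at any level, equivalently the concavity of $k\mapsto\dim\mathbf{Ker}(N^{k})$; this is where I expect the main effort to go, and it can alternatively be quoted wholesale from \cite[Chapter~3]{horn2012matrix}.

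\textbf{Uniqueness of Jordan form.} Given existence, the key is to compute $\mathbf{Rank}\big((J(\mu,d)-\lambda I)^{k}\big)$: the block is invertible, hence of rank $d$, when $\mu\neq\lambda$, and of rank $\max(d-k,0)$ when $\mu=\lambda$ since $J(\lambda,d)-\lambda I$ is a single nilpotent shift. Since rank is invariant under the similarity $A=SJS^{-1}$, summing the per-block contributions gives
\[
  \mathbf{Rank}\big((A-\lambda I)^{k-1}\big)-\mathbf{Rank}\big((A-\lambda I)^{k}\big)=\#\{\, l:\lambda_l=\lambda,\ d_l\geq k\,\}.
\]
The right-hand side depends only on the multiset $\{(\lambda_l,d_l)\}$, which both establishes the ``only if'' direction (the Jordan data is forced by $A$) and makes the ``if'' direction immediate.

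\textbf{Equivalence of Jordan basis.} From $SJS^{-1}=TJT^{-1}$ I would set $P:=S^{-1}T$, which is invertible and commutes with $J$. Partitioning $P$ conformally with the eigenvalue grouping of $J$, its $(l,m)$ block solves the Sylvester equation $\widetilde{J}_{d_{s_l}}(\lambda_l)\,P_{lm}=P_{lm}\,\widetilde{J}_{d_{s_m}}(\lambda_m)$; for $l\neq m$ the spectra are disjoint, forcing $P_{lm}=0$, so $P$ is block diagonal with invertible blocks $\widetilde{R}_l:=P_{ll}$. It then remains to identify the commutant of a single grouped block $\widetilde{J}_{d_{s_l}}(\lambda_l)$ (a direct sum of Jordan blocks sharing the eigenvalue $\lambda_l$) as the algebra of Toeplitz-type block matrices; this is a direct but bookkeeping-heavy computation with shift operators, and is exactly \cite[Theorem 3.2]{FERRER20133945}. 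Hence $T=SP$ has the asserted form, and this commutant description is the second place (alongside the nilpotent chain count) where I expect the argument to require real care.
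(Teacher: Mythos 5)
The paper does not prove this proposition; it is stated as background in the preliminaries and dispatched by citations to \cite[Lemma 3.1.18]{horn2012matrix} for part 2 and \cite[Theorem 3.2]{FERRER20133945} for part 3, with existence left as folklore. Your sketch correctly reconstructs exactly the argument those references give: primary decomposition plus the top-down nilpotent chain construction for existence (with the key technical lemma being the monotonicity of $k\mapsto\dim\mathbf{Ker}(N^{k+1})-\dim\mathbf{Ker}(N^{k})$, which you flag), the per-block rank computation and similarity-invariance of rank for uniqueness, and the Sylvester / commutant analysis of $P=S^{-1}T$ for the basis-equivalence statement. All three pieces are sound and there is no gap; you have simply filled in what the paper outsources to its references.
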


Given a matrix $A$ with Jordan form decomposition $A=SJS^{-1}$, if a scalar function $f(x)$ is analytic at all eigenvalues of $A$, we let the \emph{Jordan form transformation} be
\begin{equation}
\label{eq:jordan_form_transformation}
\begin{aligned}
    f(A)&=Sf(J)S^{-1}
    =S
    \begin{bmatrix}
        f(J(\lambda_0,d_0)) & & &\\
        & f(J(\lambda_1,d_1)) & &\\
        & &\ddots &\\
        & & &f(J(\lambda_{s-1},d_{s-1}))     
    \end{bmatrix}S^{-1},\\
    f(J(\lambda_l,d_l))&=
    \begin{bmatrix}
        f(\lambda_l) &  &  &  &  & \\
        f'(\lambda_l) & f(\lambda_l) &  &  &  & \\
        \frac{f^{(2)}(\lambda_l)}{2!} & f'(\lambda_l) & f(\lambda_l) &  &  & \\
        \vdots & \frac{f^{(2)}(\lambda_l)}{2!} & f'(\lambda_l) & \ddots &\\
        \vdots & \ddots & \ddots & \ddots & \ddots & \\
        \frac{f^{(d_l-1)}(\lambda_l)}{(d_l-1)!} & \cdots & \cdots & \frac{f^{(2)}(\lambda_l)}{2!} & f'(\lambda_l) & f(\lambda_l)\\
    \end{bmatrix}
    =\sum_{r=0}^{d_l-1}\frac{f^{(r)}(\lambda_l)}{r!}L_{\scriptscriptstyle d_l}^{r}.
\end{aligned}
\end{equation}
In other words, we need not only the function $f$ itself, but also its higher derivatives: at least, $f$ should be sufficiently smooth at $\lambda_l$ to match largest size of the $\lambda_l$-Jordan blocks. 
The validity of this matrix function definition can be seen as follows. We permute the Jordan blocks so that blocks with the same eigenvalue $\lambda_l$ are grouped together as $\widetilde{J}_{d_{s_l}}(\lambda_l)\in\mathbb{C}^{d_{s_l}\times d_{s_l}}$. By the above equivalence result, we can without loss of generality restrict ourselves within each $\widetilde{J}_{d_{s_l}}(\lambda_l)$, since this is the only place where one has freedom to choose the basis transformation. We thus have $\widetilde{J}_{d_{s_l}}(\lambda_l)=\widetilde{R}_l\widetilde{J}_{d_{s_l}}(\lambda_l)\widetilde{R}_l^{-1}$ from \eq{jordan_unique}, which implies $\widetilde{L}_{d_{s_l}}=\widetilde{R}_l\widetilde{L}_{d_{s_l}}\widetilde{R}_l^{-1}$ for the blocked lower shift matrix $\widetilde{L}_{d_{s_l}}=\widetilde{J}_{d_{s_l}}(\lambda_l)-\lambda_lI_{d_{s_l}}$. This means that $\sum_{r=0}^{d_{s_l}-1}\frac{f^{(r)}(\lambda_l)}{r!}\widetilde L_{\scriptscriptstyle d_{s_l}}^{r}$ is also invariant under the action of $\widetilde{R}_l(\cdot)\widetilde{R}_l^{-1}$, from which validity of the definition is justified.

If a matrix $A\in\mathbb{C}^{d\times d}$ has only real spectra, we order its eigenvalues increasingly and write
\begin{equation}
    \lambda_{\min}(A)=\lambda_0(A)\leq\lambda_{1}(A)\leq\cdots\leq\lambda_{d-1}(A)=\lambda_{\max}(A).
\end{equation}
Otherwise, we assign an arbitrary ordering to the eigenvalues $\lambda_l(A)$ if $A$ has complex spectra.
We drop the dependence on $A$ when the underlying matrix is clear from the context.
We call an eigenvalue $\lambda$ \emph{nonderogatory} if there is only one $\lambda$-Jordan block, and \emph{nondefective} if all $\lambda$-Jordan blocks have size one. When both conditions are satisfied, we have the Jordan form decomposition $A=S\begin{bmatrix}
    \lambda & 0\\
    0 & \widetilde J
\end{bmatrix}S^{-1}$, where the bottom right $\widetilde J$ has no eigenvalue $\lambda$. In solving the ground state preparation problem, we will assume that the ground state energy is both nondefective and nonderogatory to simplify our analysis of the algorithm.
We will use a similar notation for singular values (to be introduced below):
\begin{equation*}
    0\leq\sigma_{\min}(A)=\sigma_0(A)\leq\sigma_{1}(A)\leq\cdots\leq\sigma_{d-1}(A)=\sigma_{\max}(A)=\norm{A}.
\end{equation*}

Now, we consider matrix decompositions involving unitary transformations, i.e., we consider $A=U\Lambda U^\dagger$ with $\Lambda$ diagonal and $U$ unitary. It is well known in linear algebra that such a \emph{spectral decomposition} exists when and only when $A$ is normal.
\begin{proposition}[Spectral decomposition]
    The following statements hold for a normal matrix $A\in\mathbb{C}^{d\times d}$ ($AA^\dagger=A^\dagger A$):
    \begin{enumerate}
        \item (Existence): There exist a unitary matrix $U\in\mathbb{C}^{d\times d}$ and a diagonal matrix 
        \begin{equation}
        \label{eq:spectral_eigen}
            \Lambda=
            \begin{bmatrix}
                \lambda_0 & & &\\
                & \lambda_1 & &\\
                & & \ddots &\\
                & & & \lambda_{d-1}
            \end{bmatrix}\in\mathbb{C}^{d\times d}
        \end{equation}
        such that $A=U\Lambda U^{\dagger}$.
        \item (Uniqueness of eigenvalues): The diagonal matrix $\Lambda$ from \eq{spectral_eigen} satisfies $A=U\Lambda U^{\dagger}$ for some unitary matrix $U\in\mathbb{C}^{d\times d}$, if and only if
        \begin{equation}
            \#\{\lambda_l\ |\ \lambda_l=\lambda\}
            =d-\mathbf{Rank}(A-\lambda I)
        \end{equation}
        for all $\lambda\in\mathbb C$.
        \item (Equivalence of orthonormal eigenbasis): Unitary matrices $U,V\in\mathbb{C}^{d\times d}$ satisfy $A=U\Lambda U^{\dagger}=V\Lambda V^{\dagger}$ for some diagonal matrix
        \begin{equation}
            \Lambda=
            \begin{bmatrix}
                \lambda_0I_{d_0} & & &\\
                & \lambda_1I_{d_1} & &\\
                & & \ddots &\\
                & & & \lambda_{s-1}I_{d_{s-1}}
            \end{bmatrix}\in\mathbb{C}^{d\times d}
        \end{equation}
        with the same eigenvalue $\lambda_l$ grouped together as $\lambda_lI_{d_l}\in\mathbb{C}^{d_l\times d_l}$ ($\lambda_l$ are pairwise distinct),
        if and only if $U$ and $V$ are related by
        \begin{equation}
        \label{eq:spectral_unique}
            V=U
            \begin{bmatrix}
                W_0 & & &\\
                & W_1 & &\\
                & & \ddots &\\
                & & & W_{s-1}
            \end{bmatrix}
        \end{equation}
        for some unitary matrices $W_l\in\mathbb{C}^{d_l\times d_l}$~\cite[Theorem 2.5.4]{horn2012matrix}. 
    \end{enumerate}
\end{proposition}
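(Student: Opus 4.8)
The plan is to prove the three parts in order, reducing each to the Schur triangularization together with elementary manipulations of unitary similarity; most of the work is bookkeeping, and where a standard matrix-analysis fact suffices I will simply invoke~\cite{horn2012matrix}. For \emph{existence}, I would start from the Schur decomposition $A=URU^\dagger$ with $U$ unitary and $R$ upper triangular. Because unitary similarity preserves normality, $RR^\dagger=R^\dagger R$. Comparing the $(1,1)$ entries of the two sides gives $\sum_{k\geq 1}\abs{R_{1k}}^2=\abs{R_{11}}^2$, which forces $R_{1k}=0$ for $k\geq 2$; an induction on the row index (at step $j$, the entries above the diagonal in rows $1,\ldots,j-1$ already vanish, so the $(j,j)$-entry comparison of $RR^\dagger$ and $R^\dagger R$ forces $R_{jk}=0$ for $k>j$) then shows $R$ is diagonal, and setting $\Lambda:=R$ gives $A=U\Lambda U^\dagger$.

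For \emph{uniqueness of eigenvalues}, the forward direction is immediate: $A-\lambda I=U(\Lambda-\lambda I)U^\dagger$ is unitarily similar to the diagonal matrix $\Lambda-\lambda I$, hence $\mathbf{Rank}(A-\lambda I)=\mathbf{Rank}(\Lambda-\lambda I)=d-\#\{l:\lambda_l=\lambda\}$. For the converse I would apply the existence part to produce some spectral decomposition $A=V\Lambda'V^\dagger$; the forward direction applied to $\Lambda'$ shows $\#\{l:\lambda'_l=\lambda\}=d-\mathbf{Rank}(A-\lambda I)$, which by hypothesis equals $\#\{l:\lambda_l=\lambda\}$ for every $\lambda$, so $\Lambda$ and $\Lambda'$ agree up to a permutation of diagonal entries, i.e.\ $\Lambda=P\Lambda'P^{-1}$ for a permutation matrix $P$. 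Since permutation matrices are unitary, $A=V\Lambda'V^\dagger=(VP^{\dagger})\Lambda(VP^{\dagger})^\dagger$ with $VP^{\dagger}$ unitary, as required.

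For \emph{equivalence of orthonormal eigenbasis}, the ``if'' direction uses that each diagonal block of $\Lambda$ is the scalar matrix $\lambda_l I_{d_l}$, so $D:=\mathrm{diag}(W_0,\ldots,W_{s-1})$ commutes with $\Lambda$; then $V\Lambda V^\dagger=UD\Lambda D^\dagger U^\dagger=U\Lambda U^\dagger=A$ using $DD^\dagger=I$. For the ``only if'' direction, from $U\Lambda U^\dagger=V\Lambda V^\dagger$ we get $\Lambda=D\Lambda D^\dagger$ with $D:=U^\dagger V$ unitary, i.e.\ $D\Lambda=\Lambda D$. Writing $D=(D_{lm})$ in the block structure induced by the grouping of $\Lambda$, the commutation relation reads $\lambda_l D_{lm}=\lambda_m D_{lm}$, so $D_{lm}=0$ for $l\neq m$ because the $\lambda_l$ are pairwise distinct; hence $D=\mathrm{diag}(W_0,\ldots,W_{s-1})$ with $W_l:=D_{ll}$, each $W_l$ unitary since $D$ is block diagonal and unitary, and $V=UD$ has the stated form~\cite[Theorem 2.5.4]{horn2012matrix}.

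I do not expect any of these steps to be a genuine obstacle; the only piece needing a short argument rather than a one-line reduction is the induction in the existence part showing that a normal upper-triangular matrix must be diagonal, and even that can be replaced by a direct citation if a more compact exposition is preferred.
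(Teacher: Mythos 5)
The paper states this proposition without proof, treating it as standard linear-algebra background (citing Horn and Johnson only for part 3), so there is no "paper approach" to compare against. Your proof is correct and self-contained: the Schur-plus-induction argument for existence is the standard one, the forward direction of uniqueness via invariance of rank under unitary similarity is right, the converse correctly reduces to a permutation conjugation (using $P^\dagger=P^{-1}$), and the block-commutation argument for the equivalence of eigenbases is the standard proof of Horn and Johnson's Theorem 2.5.4. One trivial typographical remark: in the "only if" direction of part 3, the commutation relation in block $(l,m)$ is $\lambda_l D_{lm}=\lambda_m D_{lm}$ only after noting $(\Lambda D)_{lm}=\lambda_l D_{lm}$ and $(D\Lambda)_{lm}=\lambda_m D_{lm}$; you state the correct equation but it would be cleaner to display both sides so the reader sees which $\lambda$ comes from which factor. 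No substantive gaps.
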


Spectral decomposition offers a useful perspective that differentiates a normal matrix from a general one. This can be understood as follows. 
For a square matrix $A$, we introduce its \emph{Jordan condition number} as $\inf_{J,S}\norm{S}\norm{S^{-1}}$, where the minimization is over all possible Jordan form decompositions $A=SJS^{-1}$. Since the factor $J$ is unique up to a permutation of diagonal blocks, it suffices to only minimize over the basis transformations $\inf_{S}\norm{S}\norm{S^{-1}}$. We then have that Jordan condition number of a matrix is always $\geq1$, and the matrix is normal if and only if it is diagonalizable with Jordan conditon number $=1$~\cite[Condition 72]{Elsner1998}.
In fact, Jordan condition number serves as a commonly used measure of nonnormality in numerical linear algebra~\cite[Page 444]{Trefethen05}, and it may be nearly optimally bounded by separating the spectrum of input matrix into a disjoint union of eigenvalues~\cite[Section 5.3]{JordanCondition}.
This perspective can be useful in analyzing our eigenvalue algorithms. For instance, we bound the runtime of our Chebyshev-based algorithms in \append{analysis_cheby} using known upper bounds $\kappa_S$ on the Jordan condition number, with the understanding that this condition number approaches $1$ when the target matrix is close to normal. This treatment is similar to previous analysis of differential equations~\cite[Section 3.2]{Krovi2023improvedquantum} based on the \emph{spectral abscissa}.
However, we also show in \append{analysis_faber} that complexity of the Faber-based algorithms can be independent of the Jordan condition number, which avoids the issue with a potentially ill-conditioned Jordan basis similar to previous work~\cite[Section 3.1]{Krovi2023improvedquantum} based on the \emph{numerical abscissa}.
See \tab{nonnormal} for more details.

Below we list subclasses of normal matrices that we will commonly refer to in this work. They all admit the spectral decomposition, with eigenvalues belonging to different subsets of the complex plane.
\begin{enumerate}
    \item Normal matrices: $NN^\dagger=N^\dagger N$, if and only if $\norm{N\ket{\psi}}=\norm{N^\dagger\ket{\psi}}$ for all $\norm{\ket{\psi}}=1$, if and only $N$ has a spectral decomposition.
    \item Unitary matrices: $UU^\dagger=U^\dagger U=I$, if and only if $\norm{U\ket{\psi}}=\norm{U^\dagger\ket{\psi}}=1$ for all $\norm{\ket{\psi}}=1$, if and only if $U$ has a spectral decomposition with all eigenvalues $\abs{\lambda_l}=1$.
    \item Hermitian matrices: $HH^\dagger=H^2$, if and only if $H^\dagger=H$, if and only if $\bra{\psi}H\ket{\psi}\in\mathbb R$ for all $\norm{\ket{\psi}}=1$, if and only if $H$ has a spectral decomposition with all eigenvalues $\lambda_l\in\mathbb R$. The first characterization can be proved by showing the trace $\mathbf{Tr}\left(\left(H-H^\dagger\right)\left(H-H^\dagger\right)^\dagger\right)=0$.
    \item Anti-Hermitian matrices: $KK^\dagger=-K^2$, if and only if $K^\dagger=-K$, if and only if $\bra{\psi}K\ket{\psi}\in i\mathbb R$ for all $\norm{\ket{\psi}}=1$, if and only if $K$ has a spectral decomposition with all eigenvalues $\lambda_l\in i\mathbb R$ purely imaginary. The first characterization follows from $\mathbf{Tr}\left(\left(K+K^\dagger\right)\left(K+K^\dagger\right)^\dagger\right)=0$.
    \item Positiv semidefinite matrices: $P=CC^\dagger$ for some matrix $C$, if and only if $\bra{\psi}P\ket{\psi}\geq0$ for all $\norm{\ket{\psi}}=1$, if and only if $P$ has a spectral decomposition with all eigenvalues $\lambda_l\geq0$.
    \item Orthogonal projection matrices: $\Pi=GG^\dagger$ for some isometry $G^\dagger G=I$, if and only if $\Pi^2=\Pi$ and $\norm{\Pi\ket{\psi}}\leq1$ for all $\norm{\ket{\psi}}=1$, if and only if $\Pi$ has a spectral decomposition with all eigenvalues $\lambda_l=0,1$. See~\cite[Theorem 10.5]{roman2013advanced} or~\cite[Corollary 3.4.3.3]{horn2012matrix} for a proof of the second characterization.
\end{enumerate}

Finally, we introduce the \emph{singular value decomposition}, which simplifies the action of a matrix with two orthonormal bases. We will only introduce this decomposition for a square matrix, to facilitate a direct comparison with the eigendecomposition. Nevertheless, it is fairly straightforward to extend the following discussion to an arbitrary non-square matrix.

\begin{proposition}[Singular value decomposition]
    The following statements hold for a matrix $A\in\mathbb{C}^{d\times d}$:
    \begin{enumerate}
        \item (Existence): There exist unitary matrices $U,V\in\mathbb{C}^{d\times d}$ and a diagonal matrix 
        \begin{equation}
        \label{eq:singular_val}
            \Sigma=
            \begin{bmatrix}
                \sigma_0 & & &\\
                & \sigma_1 & &\\
                & & \ddots &\\
                & & & \sigma_{d-1}
            \end{bmatrix}\in\mathbb{C}^{d\times d}
        \end{equation}
        with nonnegative entries $\sigma_l\geq0$, such that $A=V\Sigma U^{\dagger}$.
        \item (Uniqueness of singular values): The diagonal matrix $\Sigma$ from \eq{singular_val} with nonnegative entries satisfies $A=V\Sigma U^{\dagger}$ for some unitary matrices $U,V\in\mathbb{C}^{d\times d}$, if and only if
        \begin{equation}
        \label{eq:singular_rank}
            \#\{\sigma_l\ |\ \sigma_l=\sigma\}
            =d-\mathbf{Rank}\left(A^\dagger A-\sigma^2 I\right)
        \end{equation}
        for all $\sigma\geq0$.
        \item (Equivalence of singular vector basis): Unitary matrices $U,V,\widetilde U,\widetilde V\in\mathbb{C}^{d\times d}$ satisfy $A=V\Sigma U^{\dagger}=\widetilde V\Sigma \widetilde U^{\dagger}$ for some diagonal matrix
        \begin{equation}
            \Sigma=
            \begin{bmatrix}
                0_{d_0} & & &\\
                & \sigma_1I_{d_1} & &\\
                & & \ddots &\\
                & & & \sigma_{s-1}I_{d_{s-1}}
            \end{bmatrix}\in\mathbb{C}^{d\times d}
        \end{equation}
        with the same singular value $\sigma_l$ grouped together as $\sigma_lI_{d_l}\in\mathbb{C}^{d_l\times d_l}$ ($\sigma_l$ are pairwise distinct),
        if and only if $V$ and $\widetilde V$, $U$ and $\widetilde U$ are related by
        \begin{equation}
        \label{eq:singular_unique}
            \widetilde V=V
            \begin{bmatrix}
                W_{0,\text{left}} & & &\\
                & W_1 & &\\
                & & \ddots &\\
                & & & W_{s-1}
            \end{bmatrix},\qquad
            \widetilde U=U
            \begin{bmatrix}
                W_{0,\text{right}} & & &\\
                & W_1 & &\\
                & & \ddots &\\
                & & & W_{s-1}
            \end{bmatrix}
        \end{equation}
        for some unitary matrices $W_{0,\text{left}},W_{0,\text{right}}\in\mathbb{C}^{d_0\times d_0},W_l\in\mathbb{C}^{d_l\times d_l}$~\cite[Theorem 2.6.5]{horn2012matrix}.
    \end{enumerate}
\end{proposition}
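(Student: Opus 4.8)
The strategy I would follow is to reduce all three claims to the spectral decomposition proposition already established above, applied to the matrix $A^\dagger A$, which is Hermitian (hence normal) and has nonnegative eigenvalues since $\bra{\psi}A^\dagger A\ket{\psi}=\norm{A\ket{\psi}}^2\geq 0$. The bridge between a spectral decomposition of $A^\dagger A$ and a singular value decomposition of $A$ is one explicit construction that I would isolate first: given any spectral decomposition $A^\dagger A=U\Sigma^2U^\dagger$ with $U$ unitary and $\Sigma=\mathrm{diag}(\sigma_0,\ldots,\sigma_{d-1})$ the nonnegative square roots of the eigenvalues, define $v_l:=Au_l/\sigma_l$ for those columns $u_l$ of $U$ with $\sigma_l>0$; a one-line computation $v_l^\dagger v_m=(\sigma_l\sigma_m)^{-1}u_l^\dagger A^\dagger A u_m=\delta_{lm}$ shows these are orthonormal, so they extend to an orthonormal basis whose matrix $V$ then satisfies $AU=V\Sigma$ columnwise (for $\sigma_l=0$ one uses $\norm{Au_l}^2=u_l^\dagger A^\dagger A u_l=0$). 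This already gives the existence claim $A=V\Sigma U^\dagger$.

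For the uniqueness of the singular values, I would note that $A=V\Sigma U^\dagger$ forces $A^\dagger A=U\Sigma^2U^\dagger$, while conversely the construction above produces a compatible $V$ from any spectral decomposition of $A^\dagger A$; hence a nonnegative diagonal $\Sigma$ is realizable in an SVD of $A$ iff $\Sigma^2$ is realizable in a spectral decomposition of $A^\dagger A$. Invoking the uniqueness-of-eigenvalues clause for $A^\dagger A$, this amounts to $\#\{l:\sigma_l^2=\mu\}=d-\mathbf{Rank}(A^\dagger A-\mu I)$ for all $\mu\in\mathbb{C}$, which is vacuous for $\mu\notin[0,\infty)$ because $A^\dagger A$ has nonnegative spectrum, and, under the substitution $\mu=\sigma^2$ using bijectivity of squaring on $[0,\infty)$, is exactly \eq{singular_rank}.

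For the equivalence of singular vector bases, I would start from $A=V\Sigma U^\dagger=\widetilde V\Sigma\widetilde U^\dagger$ with $\Sigma$ in the grouped form $\mathrm{diag}(0_{d_0},\sigma_1 I_{d_1},\ldots,\sigma_{s-1}I_{d_{s-1}})$, pass to $A^\dagger A=U\Sigma^2U^\dagger=\widetilde U\Sigma^2\widetilde U^\dagger$, and apply the equivalence-of-orthonormal-eigenbasis clause to obtain $\widetilde U=U\,\mathrm{diag}(W_{0,\mathrm{right}},W_1,\ldots,W_{s-1})$. The key point is that the off-kernel blocks of $V$ are pinned down by $U$: reading $AU=V\Sigma$ blockwise gives $V_{[l]}=\sigma_l^{-1}AU_{[l]}$ for $l\geq 1$, so $\widetilde V_{[l]}=\sigma_l^{-1}A\widetilde U_{[l]}=V_{[l]}W_l$ with the very same $W_l$ appearing in $\widetilde U$. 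On the kernel block $l=0$, however, $AU=V\Sigma$ imposes nothing, and since the columns $V_{[l]}$ ($l\geq 1$) span $\mathbf{Im}(A)$ --- of dimension $d-d_0$ by $\mathbf{Rank}(A)=\mathbf{Rank}(A^\dagger A)=d-d_0$ --- both $V_{[0]}$ and $\widetilde V_{[0]}$ are orthonormal bases of the fixed space $\mathbf{Im}(A)^\perp$, related by an independent unitary $W_{0,\mathrm{left}}$; collecting the pieces yields \eq{singular_unique}. For the converse I would substitute \eq{singular_unique} back and observe that the block-diagonal conjugation collapses, $W_V\Sigma W_U^\dagger=\mathrm{diag}(W_{0,\mathrm{left}}\,0_{d_0}\,W_{0,\mathrm{right}}^\dagger,\sigma_1 W_1W_1^\dagger,\ldots)=\Sigma$, so $\widetilde V\Sigma\widetilde U^\dagger=VW_V\Sigma W_U^\dagger U^\dagger=V\Sigma U^\dagger=A$.

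I expect the main obstacle to be precisely the bookkeeping around the zero-singular-value (kernel) block in the last part: one must make rigorous that the vanishing of that block of $\Sigma$ genuinely decouples the left and right unitary freedoms, using $\mathbf{Rank}(A)=\mathbf{Rank}(A^\dagger A)=d-d_0$ to identify $\mathbf{Im}(A)^\perp$ (equivalently $\mathbf{Ker}(A)$ on the right) as a single space over which $V_{[0]}$ (resp. $\widetilde V_{[0]}$, $U_{[0]}$, $\widetilde U_{[0]}$) ranges freely and independently of the other factor. Everything else is a fairly mechanical transcription of the spectral decomposition proposition through the correspondence $A\leftrightarrow A^\dagger A$.
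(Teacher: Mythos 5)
The paper states this proposition without proof, citing Horn and Johnson~\cite[Theorem 2.6.5]{horn2012matrix}, so there is no internal argument to compare against. Your reduction to the spectral decomposition of the positive-semidefinite Hermitian matrix $A^\dagger A$ (stated a few lines earlier in the paper) is a correct, complete, and self-contained proof, and it is the standard textbook route to the SVD: the explicit construction $v_l=Au_l/\sigma_l$ for $\sigma_l>0$ gives existence, the correspondence $A=V\Sigma U^\dagger\Leftrightarrow A^\dagger A=U\Sigma^2 U^\dagger$ together with injectivity of squaring on $[0,\infty)$ transfers the rank criterion, and the equivalence clause transfers as you describe. In particular your handling of the zero-singular-value block is sound: since $\mathbf{Rank}(A)=\mathbf{Rank}(A^\dagger A)=d-d_0$, the columns $V_{[l]}$ with $l\geq 1$ span $\mathbf{Im}(A)$, so $V_{[0]}$ and $\widetilde V_{[0]}$ are both orthonormal bases of $\mathbf{Im}(A)^\perp$ and are related by a free $d_0\times d_0$ unitary $W_{0,\text{left}}$ bearing no relation to $W_{0,\text{right}}$, while the substitution check $W_V\Sigma W_U^\dagger=\Sigma$ closes the converse direction.
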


Let us now define the \emph{singular value transformation} of matrices, which is central to the QSVT algorithm to be introduced in the next subsection. 
Specifically, suppose we are given a function $f(x)$ satisfying a parity constraint, i.e., we have either $f(x)=-f(-x)$ for all $x$ (odd parity) or $f(x)=f(-x)$ for all $x$ (even parity). Our goal is to apply $f$ to the singular values of the target matrix $A$. Assuming $A$ is a square matrix and has the singular value decomposition $A=V\Sigma U^\dagger$, our singular value transformation is then defined as
\begin{equation}
\label{eq:singular_value_transformation}
    f_{\sv}(A)=
    \begin{cases}
        Vf(\Sigma)U^\dagger,\qquad&f\text{ is odd},\\[0.5em]
        Uf(\Sigma)U^\dagger,\qquad&f\text{ is even},\\
    \end{cases}
    \qquad
    f(\Sigma)=
            \begin{bmatrix}
                f(\sigma_0) & & &\\
                & f(\sigma_1) & &\\
                & & \ddots &\\
                & & & f(\sigma_{d-1})
            \end{bmatrix}.
\end{equation}
We claim that the above transformation is mathematically well defined. Indeed, as singular values and their multiplicities are determined by the rank condition \eq{singular_rank}, we can simultaneously permute the diagonal entries of $\Sigma$ and $f(\Sigma)$ so that the same values are collected together. Then, diagonal blocks $\sigma_lI_{d_l}$ and $f(\sigma_l)I_{d_l}$ are all multiples of the identity matrix, which are invariant under the unitary conjugation $W_l(\cdot)W_l^{\dagger}$ from \eq{singular_unique}. In the case where $f$ is an odd function, we have an additional transformation of the form $W_{0,\text{left}}(\cdot)W_{0,\text{right}}^\dagger$. But this corresponds to the block with the zero singular value, and so the result is always the zero matrix when mapped by an odd function: $W_{0,\text{left}}f(0)W_{0,\text{right}}^\dagger=W_{0,\text{left}}0W_{0,\text{right}}^\dagger=0$. We thus conclude that the singular value transformation $f_{\sv}(\cdot)$ is indeed well defined in all the cases.

\subsection{Block encoding}
\label{sec:prelim_block}

In this subsection, we review basic facts about block encoding and quantum algorithms developed within this framework, on which our eigenvalue processing results are based.

We will define the block encoding in its full generality using unitaries and isometries. 
We say an operator $G:\mathcal{G}\rightarrow\mathcal{H}$ is an \emph{isometry} if $G^\dagger G=I$. By definition, $G$ is injective and $G^\dagger$ is surjective, whereas $GG^\dagger$ is an orthogonal projection on $\mathcal{H}$ with kernel $\mathbf{Ker}(GG^\dagger)=\mathbf{Ker}(G^\dagger)$ and image $\mathbf{Im}(GG^\dagger)=\mathbf{Im}(G)$. We thus have the Hilbert space embedding
\begin{equation}
    \mathcal{G}
    \xrightleftharpoons[G^\dagger]{G}
    \mathbf{Im}(GG^\dagger)\subseteq\mathcal{H}.
\end{equation}
Choosing any two orthonormal bases that respect this embedding, we have the matrix representation
\begin{equation}
    G=
    \begin{bmatrix}
        I\\
        0
    \end{bmatrix},\qquad
    G^\dagger=
    \begin{bmatrix}
        I & 0
    \end{bmatrix}.
\end{equation}
Examples of isometries include: (i) unitaries $U$; (ii) quantum states $\ket{\psi}$; (iii) tensor product $G_1\otimes G_2$, if $G_1$ and $G_2$ are both isometries; and (iv) composition $G_2G_1$, if $G_1$ and $G_2$ are both isometries and the composition makes sense.

Now given Hilbert spaces $\mathcal{G}_0$, $\mathcal{G}_1$ and $\mathcal{H}$, we say an operator $A:\mathcal{G}_0\rightarrow\mathcal{G}_1$ is \emph{block encoded} by isometries $G_0:\mathcal{G}_0\rightarrow\mathcal{H}$, $G_1:\mathcal{G}_1\rightarrow\mathcal{H}$, and a unitary $U:\mathcal{H}\rightarrow\mathcal{H}$, if
\begin{equation}
    A=G_1^\dagger UG_0.
\end{equation}
Choosing bases with respect to the orthogonal decompositions $\mathcal{H}=\mathbf{Im}(G_0G_0^\dagger)\obot\mathbf{Ker}(G_0G_0^\dagger)=\mathbf{Im}(G_1G_1^\dagger)\obot\mathbf{Ker}(G_1G_1^\dagger)$, we have the matrix representation
\begin{equation}
    U=
    \begin{bmatrix}
        A & *\\
        * & *
    \end{bmatrix},
\end{equation}
where $A$ is exhibited as the top-left block of $U$; hence the name \emph{block encoding}.
Note that this is essentially a unitary dilation problem, and it is mathematically feasible if and only if $\norm{A}\leq1$~\cite[2.7.P2]{horn2012matrix}: if the norm condition is satisfied, we can simply let~\cite{halmos1950normal}
\begin{equation}
    U
    =\begin{bmatrix}
        A & -\sqrt{I-AA^\dagger}\\
        \sqrt{I-A^\dagger A} & A^\dagger
    \end{bmatrix}.
\end{equation}
However, when a block encoding is realized by quantum circuits, additional normalization factors will likely be introduced.

Specifically, to block encode a square matrix on a system register using quantum circuits, we can prepare states $\ket{G_0}$, $\ket{G_1}$ on an ancilla register and perform a unitary $U$ acting jointly on the ancilla and the system register. 
Then, the operator block encoded by this circuit is given by $\left(\bra{G_1}\otimes I\right)U\left(\ket{G_0}\otimes I\right)$.
For a given operator $A$ on the system register, we need to introduce a normalization factor $\alpha_A$ so that $A/\alpha_A$ can be properly encoded. Our above discussion suggests that $\alpha_A\geq\norm{A}$ is a prerequisite for the existence of a block encoding. But the corresponding construction typically requires an explicit computation of the singular value decomposition of $A$, which is prohibitive for high-dimensional input matrices. So the strict inequality $\alpha_A>\norm{A}$ often holds in practice. 

In developing our eigenvalue algorithms, we sometimes assume that the input matrix can be block encoded with a normalization factor $\alpha\geq2\norm{A}$.
We note that the block-encoding-based model covers a host of input matrices with sparsity constraints or linear-combination-of-unitary expansions~\cite{Low2019hamiltonian,Gilyen2018singular}. If a normalization factor satisfies $2\norm{A}>\alpha_A\geq\norm{A}$, 
we can block encode a rescaling constant
\begin{equation}
\label{eq:block_rescaling}
    \frac{\sqrt{\alpha+\alpha_A}\bra{0}+\sqrt{\alpha-\alpha_A}\bra{1}}{\sqrt{2\alpha}}\left(\ketbra{0}{0}-\ketbra{1}{1}\right)\frac{\sqrt{\alpha+\alpha_A}\ket{0}+\sqrt{\alpha-\alpha_A}\ket{1}}{\sqrt{2\alpha}}=\frac{\alpha_A}{\alpha}
\end{equation}
with $\alpha>\alpha_A$ and artificially increase the normalization factor to meet the desired assumption~\cite[Appendix A.2]{FractionalQuery14}.
This rescaling of normalization factor from $\alpha_A$ to $\alpha>\alpha_A$ has no query overhead.

The block-encoding framework allows quantum computers to efficiently perform arithmetic operations on the input matrices, including linear combinations, multiplications, and tensor products. Moreover, there exists the QSVT technique to apply polynomial functions to the singular values of a block encoded matrix~\cite{Gilyen2018singular}. We briefly explain the idea of taking linear combination of block encodings as well as the algorithm of QSVT, which we compare with our main result QEVT on transforming eigenvalues.

Suppose we have $A_j/\alpha_{A_j}$ block encoded by $\ket{0}\otimes I$ and $U_j$ for $j=0,1,\ldots,n-1$, and we want to implement the linear combination $\sum_{j=0}^{n-1}\beta_jA_j$ for some coefficients $\beta_j\geq0$. Here, the choice of the reference state $\ket{0}$ is without loss of generality, as any state preparation can be absorbed into the definition of $U_j$. Then, we define
\begin{equation}
    \ket{G}=\frac{1}{\sqrt{\sum_{k=0}^{n-1}\beta_k\alpha_{A_k}}}\sum_{j=0}^{n-1}\sqrt{\beta_j\alpha_{A_j}}\ket{j}\ket{0},\qquad
    U=\sum_{j=0}^{n-1}\ketbra{j}{j}\otimes U_j,
\end{equation}
so that 
\begin{equation}
    \left(\bra{G}\otimes I\right)U\left(\ket{G}\otimes I\right)=\frac{\sum_{j=0}^{n-1}\beta_jA_j}{\sum_{k=0}^{n-1}\beta_k\alpha_{A_k}}.
\end{equation}
This block encoding consists of a \emph{state preparation} subroutine that can be implemented using $\mathbf{\Theta}(n)$ gates~\cite{ShendeBullockMarkov06}, as well as an \emph{operator selection} subroutine that can be realized by generating all binary strings of length $\sim\log(n)$ with gate complexity $\mathbf{\Theta}(n)$~\cite[Appendix G.4]{CMNRS18}. The normalization factor
\begin{equation}
    \sum_{j=0}^{n-1}\beta_j\alpha_{A_j}\geq\sum_{j=0}^{n-1}\beta_j\norm{A_j}\geq\norm{\sum_{j=0}^{n-1}\beta_jA_j}
\end{equation}
is larger than the spectral norm $\norm{\sum_{j=0}^{n-1}\beta_jA_j}$ as expected. But in fact, this normalization factor can be exponentially large if the above construction is directly applied to implement the polynomial expansion \eq{poly_expansion} for eigenvalue processing. We overcome this catastrophe by developing efficient methods for generating polynomial basis, including the Chebyshev basis for matrices with real eigenvalues, and the Faber basis for matrices with complex eigenvalues.

As for the singular value transformation, consider a polynomial $p$ with max-norm $\norm{p}_{\max,[-1,1]}=\max_{x\in[-1,1]}\abs{p(x)}\leq1$. Then the QSVT algorithm implements a block encoding of the transformed matrix $p_{\sv}\left(\frac{A}{\alpha_A}\right)$ with a query complexity proportional to the degree $n$ of $p$. When performed on an input state $\ket{\psi}$, QSVT outputs the state $p_{\sv}\left(\frac{A}{\alpha_A}\right)\ket{\psi}$ with success probability
\begin{equation}
    \norm{p_{\sv}\left(\frac{A}{\alpha_A}\right)\ket{\psi}}^2.
\end{equation}
Obtaining the normalized state 
$\frac{p_{\sv}\left(\frac{A}{\alpha_A}\right)\ket{\psi}}{\norm{p_{\sv}\left(\frac{A}{\alpha_A}\right)\ket{\psi}}}$
with a success probability $1-\pfail$ would cost an additional factor of
$\mathbf{O}\left(\frac{1}{\norm{p_{\sv}\left(\frac{A}{\alpha_A}\right)\ket{\psi}}}\log\left(\frac{1}{\pfail}\right)\right)$
queries to the block encoding.

In contrast, our goal is to apply polynomial functions to the eigenvalues like $p\left(\frac{A}{\alpha_A}\right)$. If the input matrix is not normal, this problem is out of reach of the QSVT algorithm and its descendants. But otherwise, we achieve a complexity comparable to that of QSVT for Hermitian matrices. Suppose that the input matrix $A$ is diagonalizable with a condition number $\kappa_S$ of the basis transformation. Then in the worst case, our QEVT algorithm $\epsilon$-approximates a Chebyshev history state using $\mathbf{O}(\kappa_Sn\log(1/\epsilon))$ queries to the block encoding of $A/\alpha_A$, which if measured will produce $p\left(\frac{A}{\alpha_A}\right)\ket{\psi}$ with probability
\begin{equation}
    \mathbf{\Theta}\left(\frac{\norm{p\left(\frac{A}{\alpha_A}\right)\ket{\psi}}^2}{\kappa_S^2\log^2 (n)}\right).
\end{equation}
So obtaining the normalized state
$\frac{p\left(\frac{A}{\alpha_A}\right)\ket{\psi}}{\norm{p\left(\frac{A}{\alpha_A}\right)\ket{\psi}}}$
would require a fixed-point amplitude amplification with an additional query complexity of
$\mathbf{O}\left(\frac{\kappa_S\log(n)}{\norm{p\left(\frac{A}{\alpha_A}\right)\ket{\psi}}}\log\left(\frac{1}{\pfail}\right)\right)$.
Furthermore, we show that QEVT can achieve a better performance on average: we can shave off the additional $\log(n)$ factor from the above expressions when eigenvalues of the input matrix are randomly chosen. Therefore, under the common assumption where the input matrix is Hermitian (which holds if and only if it is diagonalizable with real eigenvalues and $\kappa_S=1$~\cite[Condition 72]{Elsner1998}), our QEVT naturally recovers QSVT. In particular, this yields a quantum algorithm for solving linear differential equations on the imaginary axis with a strictly linear scaling in the evolution time, improving over the best previous result under the same setting.

Finally, we introduce several quantum algorithms within the block encoding framework which will be used as subroutines in eigenvalue processing. This includes an optimal scaling quantum linear system solver (to be used in \sec{history} and \sec{faber} to generate the Chebyshev and Faber history states), a block encoded quantum linear system solver (to be used in \sec{transform_block_summary} to realize the block encoded version of QEVT), and a block encoding amplifier (to be used in \sec{faber} to block encode the matrix Faber generating function).

\begin{lemma}[Optimal scaling quantum linear system algorithm {\cite{Costa22,OptInit,Dalzell2024shortcut}}]
\label{lem:opt_lin}
    Let $C$ be a matrix such that $C/\alpha_C$ is block encoded by $O_C$ with some normalization factor $\alpha_C\geq\norm{C}$.
    Let $O_b$ be the oracle preparing the initial state $\ket{b}$. Then the quantum state
    \begin{equation}
        \frac{C^{-1}\ket{b}}{\norm{C^{-1}\ket{b}}}
    \end{equation}
    can be prepared with accuracy $\epsilon$ and probability $1-\pfail$ using
    \begin{equation}
    \label{eq:opt_lin_cost}
        \mathbf{O}\left(\alpha_C\alpha_{C^{-1}}\log\left(\frac{1}{\epsilon}\right)\log\left(\frac{1}{\pfail}\right)\right)
    \end{equation}
    queries to controlled-$O_C$, controlled-$O_b$, and their inverses, where $\alpha_{C^{-1}}\geq\norm{C^{-1}}$ is an upper bound on norm of the inverse matrix.
\end{lemma}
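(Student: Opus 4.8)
The plan is to reduce the linear-system problem to the preparation of the $0$-eigenstate of a slowly varying family of block-encoded Hermitian operators, run a discrete adiabatic evolution whose number of steps is linear in the condition number, and finish with an eigenstate-filtering step to reach accuracy $\epsilon$ without a superlinear overhead. Write $\kappa_C := \alpha_C\alpha_{C^{-1}}$, and note that since $\norm{C/\alpha_C}\le1$ and $\sigma_{\min}(C/\alpha_C)\ge 1/\kappa_C$, the matrix $C/\alpha_C$ has condition number at most $\kappa_C$, which is the quantity that will drive the cost.

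First I would build, following the standard reformulation of the quantum linear system problem as an eigenvalue problem (in the tradition of adiabatic QLSP solvers), an interpolating family $H(s)$ for $s\in[0,1]$ --- a Hermitian operator on the system register together with $\mathbf{O}(1)$ ancilla qubits, assembled from $C/\alpha_C$ and the projector $I-\ketbra{b}{b}$ --- with the properties that (i) a known easy-to-prepare state (essentially $\ket{0}\ket{b}$) lies in $\mathbf{Ker}(H(0))$; (ii) the vector $C^{-1}\ket{b}/\norm{C^{-1}\ket{b}}$, suitably padded, spans $\mathbf{Ker}(H(1))$; and (iii) after a suitable reparametrization the spectral gap of $H(s)$ about $0$ is $\mathbf{\Omega}\!\left(1/\big((1-s)+s/\kappa_C\big)\right)$. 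Next I would qubitize: using controlled-$O_C$, controlled-$O_b$ and their inverses $\mathbf{O}(1)$ times, construct for each $s$ a quantum walk operator $W(s)$ block-encoding $H(s)$ up to a fixed normalization, so that on each two-dimensional invariant subspace the eigenphases of $W(s)$ are $\pm\arccos(\cdot)$ of the corresponding eigenvalue of the normalized $H(s)$; in particular the gap of $H(s)$ around $0$ becomes a phase gap of $W(s)$ of the same order.

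Then I would evolve the discretized sequence $W(s_0),W(s_1),\dots,W(s_T)$ with $s_t=t/T$ --- or, better, with a schedule $s(t)$ whose low-order derivatives vanish at the endpoints --- starting from $\ket{0}\ket{b}$, and invoke the discrete adiabatic theorem of \cite{Costa22}: the accumulated diabatic error after $T$ steps is bounded by a sum of first- and second-order finite differences of $W(s)$ weighted by powers of the inverse phase gap, which with the chosen schedule evaluates to $\mathbf{O}(\kappa_C/T)$ (the endpoint-cancelling schedule removes the logarithmic factors that a naive linear schedule would incur); hence $T=\mathbf{O}(\kappa_C)$ suffices to obtain a constant-fidelity approximation to the kernel vector of $H(1)$. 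Finally, to upgrade constant fidelity to accuracy $\epsilon$ with only a $\log(1/\epsilon)$ overhead, I would append an eigenstate-filtering step \cite{Lin2020nearoptimalground}: apply to $W(1)$ a minimax polynomial of degree $\mathbf{O}(\kappa_C\log(1/\epsilon))$ that is $\approx1$ on the phases of $\mathbf{Ker}(H(1))$ and $\mathbf{O}(\epsilon)$ outside the gap, followed by $\mathbf{O}(1)$ rounds of amplitude amplification (the post-adiabatic overlap is $\mathbf{\Omega}(1)$). Altogether this prepares $C^{-1}\ket{b}/\norm{C^{-1}\ket{b}}$ to accuracy $\epsilon$ using $\mathbf{O}(\kappa_C)+\mathbf{O}(\kappa_C\log(1/\epsilon))=\mathbf{O}(\alpha_C\alpha_{C^{-1}}\log(1/\epsilon))$ queries to controlled-$O_C$, controlled-$O_b$, and their inverses.

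The main obstacle is the discrete adiabatic theorem itself: in contrast with the continuous-time Schr\"odinger setting, one must control the accumulated diabatic transitions across a product of finitely many unitaries, which requires carefully tracking commutators of the instantaneous spectral projectors with the finite-difference increments $W(s_{t+1})-W(s_t)$, performing a discrete summation by parts, and then tuning the gap-dependent schedule so that the resulting bound is genuinely linear --- not merely $\polylog$-times-linear --- in $\kappa_C$. A secondary technical point is establishing the $\mathbf{\Omega}\!\left(1/\big((1-s)+s/\kappa_C\big)\right)$ gap for the chosen $H(s)$ and confirming that qubitization transports it faithfully to a phase gap of $W(s)$; this is where the dependence on $\kappa_C=\alpha_C\alpha_{C^{-1}}$, rather than on $\alpha_C$ or $\alpha_{C^{-1}}$ alone, is pinned down.
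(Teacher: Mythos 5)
The paper does not prove this lemma; it is imported verbatim from \cite{Costa22}, so there is no in-paper argument to compare against. That said, your sketch does faithfully reconstruct the key ingredients of that reference: reformulating QLSP as preparation of the kernel vector of an interpolating Hermitian family $H(s)$, qubitizing $H(s)$ into a walk operator $W(s)$ via $\mathbf{O}(1)$ uses of the oracles per step, invoking the discrete adiabatic theorem with an endpoint-flattened schedule to reach constant fidelity in $T=\mathbf{O}(\kappa_C)$ steps without spurious polylogarithmic factors, and appending a degree-$\mathbf{O}(\kappa_C\log(1/\epsilon))$ eigenstate filter plus $\mathbf{O}(1)$ rounds of amplitude amplification to reach accuracy $\epsilon$. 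You also correctly identify the two genuine technical bottlenecks---the discrete summation-by-parts in the adiabatic theorem and the gap analysis pinning down $\kappa_C=\alpha_C\alpha_{C^{-1}}$---so as a high-level reconstruction of the cited proof this is on target.

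One slip worth flagging: you write that the spectral gap of $H(s)$ about $0$ is $\mathbf{\Omega}\bigl(1/((1-s)+s/\kappa_C)\bigr)$, but that expression grows from $1$ to $\kappa_C$ as $s$ increases, which is backwards---the gap should shrink toward the end of the schedule. What you mean is that the gap is $\mathbf{\Omega}\bigl((1-s)+s/\kappa_C\bigr)$ (equivalently, the inverse gap is $\mathbf{O}\bigl(1/((1-s)+s/\kappa_C)\bigr)$); this is what feeds the $\int_0^1 \mathrm{d}s\,\Delta(s)^{-1}=\mathbf{O}(\kappa_C)$ bookkeeping that makes $T=\mathbf{O}(\kappa_C)$ work. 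Since the rest of your argument uses the correct scaling, this reads as a transcription error rather than a conceptual gap, but it should be corrected.
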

\begin{remark}
    Unless otherwise stated, we will choose a sufficiently large (but constant) success probability when invoking the quantum linear system solver as a subroutine, and boost the probability at the very end of the entire quantum algorithm.
\end{remark}

\begin{lemma}[Block encoding inversion {\cite[Corollary 69]{Gilyen2018singular}}]
\label{lem:inv_block}
     Let $C$ be a matrix such that $C/\alpha_C$ is block encoded by $O_C$ with some normalization factor $\alpha_C\geq\norm{C}$.
     Then the operator
     \begin{equation}
         \frac{C^{-1}}{2\alpha_{C^{-1}}}
     \end{equation}
     can be block encoded with accuracy $\epsilon$ using
     \begin{equation}
        \mathbf{O}\left(\alpha_C\alpha_{C^{-1}}\log\left(\frac{1}{\epsilon}\right)\right)
    \end{equation}
    queries to the controlled-$O_C$ and its inverse, where $\alpha_{C^{-1}}\geq\norm{C^{-1}}$ is an upper bound on norm of the inverse matrix.
\end{lemma}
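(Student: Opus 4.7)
The plan is to obtain the block encoding of $C^{-1}$ by applying the quantum singular value transformation to an odd polynomial approximation of $1/x$. Since $\norm{C/\alpha_C}\leq 1$ and $\norm{(C/\alpha_C)^{-1}}\leq \alpha_C\alpha_{C^{-1}}$, all singular values of $C/\alpha_C$ lie in the interval $[\delta,1]$ with $\delta := 1/(\alpha_C\alpha_{C^{-1}})$. Writing $C/\alpha_C = V\Sigma U^\dagger$, an odd polynomial $p$ satisfying $p(x)\approx 1/(2\alpha_C\alpha_{C^{-1}}x)$ on $[\delta,1]\cup[-1,-\delta]$ yields $V p(\Sigma)U^\dagger \approx (C^{-1})^\dagger/(2\alpha_{C^{-1}})$ under QSVT, and by applying QSVT instead to the daggered block encoding of $C^\dagger/\alpha_C$ one obtains $C^{-1}/(2\alpha_{C^{-1}})$ directly.

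First, I would construct the approximating polynomial. There is a standard odd polynomial of degree $n = \mathbf{O}(\kappa\log(1/\epsilon))$ with $\kappa := \alpha_C\alpha_{C^{-1}}$ that is uniformly $\epsilon$-close to $1/(2\kappa x)$ on $[\delta,1]\cup[-1,-\delta]$ and satisfies $\norm{p}_{\max,[-1,1]}\leq 1/2$. One way to obtain it is to truncate the Chebyshev series of $1/x$ on $[\delta,1]$ at order $\mathbf{O}(\kappa\log(1/\epsilon))$ and multiply by an odd smooth cutoff, for instance a polynomial approximation of the sign function, which suppresses the value inside the forbidden interval $(-\delta,\delta)$ where $1/x$ diverges. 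Standard Chebyshev decay estimates for analytic functions, together with the polynomial sign-function construction, give the claimed bounds.

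Second, I would feed this polynomial into the QSVT compilation for odd polynomials: this turns $n$ controlled queries to $O_C$ and $O_C^\dagger$, together with single-qubit phase rotations classically precomputed from the Chebyshev coefficients of $p$, into a block encoding of $p_{\sv}(C/\alpha_C)$ with spectral-norm error matching the uniform error of $p$. The total query cost is thus $\mathbf{O}(\alpha_C\alpha_{C^{-1}}\log(1/\epsilon))$, matching the stated bound. The main technical obstacle lies entirely in the polynomial step: one must simultaneously achieve the optimal $\mathbf{O}(\kappa\log(1/\epsilon))$ degree, the odd parity demanded by SVT, and the supremum-norm bound $\norm{p}_{\max,[-1,1]}\leq 1/2$ required by QSVT, all while controlling the behavior in the small forbidden window $(-\delta,\delta)$ near the singularity. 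Once this polynomial is in hand, the remainder of the argument is a black-box invocation of QSVT.
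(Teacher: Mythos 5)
The paper does not prove this lemma; it cites it directly from~\cite[Corollary~69]{Gilyen2018singular}, and your proposal correctly reconstructs that cited argument: an odd polynomial approximation of $1/x$ of degree $\mathbf{O}(\alpha_C\alpha_{C^{-1}}\log(1/\epsilon))$ with bounded sup-norm, fed into QSVT applied to the block encoding of $C^\dagger/\alpha_C$ so that the output is $C^{-1}/(2\alpha_{C^{-1}})$ rather than its adjoint. The normalization bookkeeping is consistent ($\delta = 1/(\alpha_C\alpha_{C^{-1}})$ indeed lower-bounds the singular values of $C/\alpha_C$, and applying QSVT to the daggered encoding recovers $C^{-1}$ with the correct prefactor), so this is the same route as the cited result rather than an alternative one.
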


\begin{lemma}[Block encoding amplification {\cite[Theorem 2]{Low17}}]
\label{lem:amp_block}
    Let $C$ be a matrix such that $C/\alpha$ is block encoded by $O_C$ with some normalization factor $\alpha$.
    Then given any $\alpha_C\geq\norm{C}$, the operator
    \begin{equation}
        \frac{C}{2\alpha_C}
    \end{equation}
    can be block encoded with accuracy $\epsilon$ using
    \begin{equation}
        \mathbf{O}\left(\frac{\alpha}{\alpha_C}\log\left(\frac{1}{\epsilon}\right)\right)
    \end{equation}
    queries to the controlled-$O_C$ and its inverse.
\end{lemma}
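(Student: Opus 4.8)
The plan is to realize $\frac{C}{2\alpha_C}$ as a singular value transformation of the block-encoded matrix $\frac{C}{\alpha}$ and then invoke the QSVT algorithm~\cite{Gilyen2018singular}. Since $\frac{C}{2\alpha_C}=\frac{\alpha}{2\alpha_C}\cdot\frac{C}{\alpha}$, the target is obtained by scaling the singular values of $\frac{C}{\alpha}$ by the constant $\lambda:=\frac{\alpha}{2\alpha_C}$. As $\frac{C}{\alpha}$ is block encoded we have $\norm{C}\le\alpha$, so all singular values of $\frac{C}{\alpha}$ lie in $[0,\norm{C}/\alpha]\subseteq[0,\alpha_C/\alpha]$; and $\lambda\cdot\frac{\alpha_C}{\alpha}=\frac12$, so on this interval the scaled map $y\mapsto\lambda y$ stays bounded by $\frac12$, leaving headroom for the QSVT constraint $\norm{p}_{\max,[-1,1]}\le1$. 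The case $\alpha_C\ge\alpha$ is trivial---one attenuates by the constant $\lambda\le\frac12$ with a single-qubit rotation, an $\mathbf{O}(1)$-query operation consistent with the stated bound---so I assume $\alpha_C<\alpha$ and that genuine amplification by the factor $\frac{\alpha}{\alpha_C}>1$ is required.

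Next I would construct an odd real polynomial $p$ with $\norm{p}_{\max,[-1,1]}\le1$ that $\epsilon$-approximates $y\mapsto\lambda y$ on $[-\alpha_C/\alpha,\alpha_C/\alpha]$ and has degree $d=\mathbf{O}\!\left(\frac{\alpha}{\alpha_C}\log\!\left(\frac{\alpha}{\alpha_C\epsilon}\right)\right)$. I would take $p(x)=\lambda x\,R(x)$ with $R$ an even polynomial that is $\mathbf{O}(\epsilon\alpha/\alpha_C)$-close to $1$ on $[-\alpha_C/\alpha,\alpha_C/\alpha]$ and satisfies $\abs{R(x)}\le\frac{2\alpha_C}{\alpha\abs{x}}$ for $\alpha_C/\alpha\le\abs{x}\le1$; such an $R$ is a smoothed rectangular window of transition width $\mathbf{\Theta}(\alpha_C/\alpha)$, obtainable from the standard error-function-based polynomial approximations to indicator functions (of the kind used for ground-state and amplification constructions), whose degree scales as $\mathbf{O}\!\left(\frac{\alpha}{\alpha_C}\log\!\left(\frac{\alpha}{\alpha_C\epsilon}\right)\right)$. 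A direct check then shows $\abs{p(x)}\le\lambda\abs{x}(1+\mathbf{o}(1))\le\frac12+\mathbf{o}(1)<1$ on the window and $\abs{p(x)}\le\lambda\abs{x}\cdot\frac{2\alpha_C}{\alpha\abs{x}}=\frac12$ on the remainder of $[-1,1]$, so the max-norm constraint holds, while $\abs{p(x)-\lambda x}=\lambda\abs{x}\,\abs{R(x)-1}\le\epsilon$ on the window.

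Finally, apply QSVT~\cite{Gilyen2018singular} to $p$ and $O_C$: since $p$ has odd parity and $\norm{p}_{\max,[-1,1]}\le1$, this yields a block encoding of $p_{\sv}\!\left(\frac{C}{\alpha}\right)=V\,p\!\left(\frac{\Sigma}{\alpha}\right)U^\dagger$, writing the singular value decomposition $\frac{C}{\alpha}=V\frac{\Sigma}{\alpha}U^\dagger$, using $\mathbf{O}(d)$ queries to controlled-$O_C$ and its inverse. Since every singular value $\sigma_i/\alpha$ of $\frac{C}{\alpha}$ lies in $[0,\alpha_C/\alpha]$,
\begin{equation}
    \norm{p_{\sv}\!\left(\frac{C}{\alpha}\right)-\frac{C}{2\alpha_C}}=\max_i\abs{p\!\left(\frac{\sigma_i}{\alpha}\right)-\lambda\frac{\sigma_i}{\alpha}}\le\epsilon,
\end{equation}
so the constructed circuit is an $\epsilon$-accurate block encoding of $\frac{C}{2\alpha_C}$ with the claimed query complexity; if $O_C$ is itself only an $\epsilon'$-accurate block encoding, this contributes an additional $\mathbf{O}(d\epsilon')$ term by the robustness of QSVT, which is absorbed by rescaling $\epsilon'$.

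I expect the main obstacle to be the polynomial-approximation step: one must simultaneously keep the degree linear in $\frac{\alpha}{\alpha_C}$ (not quadratic or worse), enforce the global bound $\norm{p}_{\max,[-1,1]}\le1$ that QSVT demands, and preserve odd parity. The degree bound hinges on taking a rectangular-window approximation whose transition width is $\mathbf{\Theta}(\alpha_C/\alpha)$ rather than anything narrower; once such a window polynomial is available, the rest is routine bookkeeping with QSVT and its robustness.
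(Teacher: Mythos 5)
The paper does not supply its own proof of this lemma; it is stated as a citation of {\cite[Theorem 2]{Low17}}, so there is no in-paper argument to compare against. Your reconstruction is correct in its essentials and captures the same polynomial-tapering idea that underlies Low's uniform spectral amplification, phrased through the more modern QSVT formalism: take $p(x)=\lambda x\,R(x)$ with $\lambda=\alpha/(2\alpha_C)$ and $R$ an even window polynomial that is close to $1$ on $\left[-\alpha_C/\alpha,\alpha_C/\alpha\right]$ and decays fast enough outside it to keep $\abs{p}\le 1$ globally, then apply QSVT with the odd polynomial $p$. Two small corrections are worth making. First, in the tail estimate you have $\abs{p(x)}\le\lambda\abs{x}\cdot\frac{2\alpha_C}{\alpha\abs{x}}=2\lambda\alpha_C/\alpha=1$, not $\tfrac12$; the constraint $\norm{p}_{\max,[-1,1]}\le1$ is met, but exactly at the boundary, so a slightly stricter envelope (or a constant-factor pre-attenuation) is needed to leave genuine headroom against approximation error and against the possibility that $R$ overshoots $1$ in the transition. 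Second, on the window one has $\abs{p(x)-\lambda x}=\lambda\abs{x}\abs{R(x)-1}\le\tfrac12\abs{R(x)-1}$, so requiring $\abs{R-1}=\mathbf{O}(\epsilon)$ already suffices for the stated accuracy; demanding $\mathbf{O}(\epsilon\alpha/\alpha_C)$ as you do is harmless but wasteful, and is not needed to obtain the claimed degree $\mathbf{O}\!\left(\frac{\alpha}{\alpha_C}\log\!\left(\frac{\alpha}{\alpha_C\epsilon}\right)\right)$. You should also verify, rather than assert, that the smoothed-window polynomial actually respects the $\mathbf{O}\!\left(\alpha_C/(\alpha\abs{x})\right)$ decay envelope over the middle range $\alpha_C/\alpha\lesssim\abs{x}\lesssim 1$: this does work out for the error-function construction, because $R$ falls off exponentially on the scale $\alpha_C/\alpha$ whereas the envelope only demands a $1/\abs{x}$ decay, but it is one of the places where the details actually have to be checked.
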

\begin{remark}
In the interesting regime where $\alpha\gg2\alpha_C$, this rescaling decreases the normalization from $\alpha$ to $2\alpha_C$ and necessarily incurs a query overhead as above.
    Note also that this is lower than the complexity $\mathbf{O}\left(\frac{\alpha}{\alpha_C}\log\left(\frac{\alpha}{\alpha_C\epsilon}\right)\right)$ stated in~\cite[Theorem 30]{Gilyen2018singular} by a logarithmic factor, which results from a looser error analysis. In the proof of~\cite[Theorem 30]{Gilyen2018singular}, the constructed odd polynomial $p_{\Re}$ has degree $\mathbf{O}\left(\frac{\alpha}{\alpha_C}\log\left(\frac{\alpha}{\alpha_C\epsilon}\right)\right)$ and actually approximates the linear function $f(x)=\frac{\alpha}{\alpha_C} x$ on the domain $[-\frac{\alpha_C}{2\alpha},\frac{\alpha_C}{2\alpha}]$ with additive error
    \begin{equation*}
        \norm{p_{\Re}-f}_{\max,[-\frac{\alpha_C}{2\alpha},\frac{\alpha_C}{2\alpha}]}=\mathbf{O}\left(\frac{\alpha_C\epsilon}{\alpha}\right).
    \end{equation*}
    This is tighter than what was claimed in~\cite{Gilyen2018singular}:
    \begin{equation*}
        \norm{p_{\Re}-f}_{\max,[-\frac{\alpha_C}{2\alpha},\frac{\alpha_C}{2\alpha}]}=\mathbf{O}(\epsilon).
    \end{equation*}
\end{remark}

\section{Chebyshev history state generation}
\label{sec:history}
In this section, we present our main technique---a quantum algorithm that efficiently generates Chebyshev history states for non-normal input matrices. This is achieved using a matrix version of the Chebyshev generating function, which is overviewed in \sec{history_generate}. We then present a quantum circuit implementing the matrix Chebyshev generating function in \sec{history_block}, together with an explicit normalization factor for the block encoding. With the help of a quantum linear system solver, we obtain the desired quantum algorithm for creating the Chebyshev history state, which we summarize as \thm{generate_history} and analyze in \sec{history_summary}.

\subsection{Matrix Chebyshev generating function}
\label{sec:history_generate}

As is discussed in \sec{intro_history}, the main insight of our approach is to efficiently create a history state, encoding a polynomial basis of the input matrix in quantum superposition, even when the input matrix is not Hermitian or normal. For Chebyshev polynomials, this is accomplished by using the following matrix version of the generating function:
\begin{equation}
\label{eq:matrix_cheby_gen}
\begin{aligned}
    \sum_{j=0}^{n-1}L^j\otimes \widetilde{\mathbf{T}}_j\left(\frac{A}{\alpha_A}\right)
    &=\begin{bmatrix}
        \frac{I}{2} & 0 & 0 & \cdots & 0\\
        {\mathbf{T}}_1\left(\frac{A}{\alpha_A}\right) & \frac{I}{2} & 0 & \ddots & \vdots\\
        {\mathbf{T}}_2\left(\frac{A}{\alpha_A}\right) & {\mathbf{T}}_1\left(\frac{A}{\alpha_A}\right) & \frac{I}{2} & \ddots & \vdots\\
        \vdots & \ddots & \ddots & \ddots & \vdots\\
        {\mathbf{T}}_{n-1}\left(\frac{A}{\alpha_A}\right) & \cdots & {\mathbf{T}}_2\left(\frac{A}{\alpha_A}\right) & {\mathbf{T}}_1\left(\frac{A}{\alpha_A}\right) & \frac{I}{2}
    \end{bmatrix}\\
    &=\sum_{j=0}^{\infty}L^j\otimes \widetilde{\mathbf{T}}_j\left(\frac{A}{\alpha_A}\right)
    =\frac{I\otimes I-L^2\otimes I}{2(I\otimes I+L^2\otimes I-2L\otimes \frac{A}{\alpha_A})},
\end{aligned}
\end{equation}
where the input matrix is block encoded as $A/\alpha_A$ and $L=\sum_{k=0}^{n-2}\ketbra{k+1}{k}$ is the $n$-by-$n$ lower shift matrix such that $L^n=0$.
\eq{matrix_cheby_gen} follows from \eq{cheby_gen} by substituting $x=I\otimes\frac{A}{\alpha_A}$ and $y=L\otimes I$. This substitution is valid because $L$ has zero eigenvalues only, whereas both sides of \eq{cheby_gen} have the same derivatives at $y=0$ of any order. See~\cite[Chapter 6]{roger1994topics} or~\cite[Chapter 2]{henrici1974applied} for a complete mathematical justification.

Now consider the problem of eigenvalue processing. Given a target Chebyshev expansion $\sum_{k=0}^{n-1}{\beta}_k\mathbf{T}_{k}$, we apply the matrix generating function to the initial state
\begin{equation}
    \frac{\sum_{k=0}^{n-1}\widetilde{\beta}_k\ket{n-1-k}}{\norm{\widetilde \beta}}\ket{\psi},\qquad
    \widetilde\beta_k=
    \begin{cases}
        \beta_k,\quad&k\neq0,\\
        2\beta_0,&k=0.
    \end{cases}
\end{equation}
Up to a normalization factor, we obtain 
\begin{equation}
\begin{aligned}
    \left(\sum_{j=0}^{n-1}L^j\otimes \widetilde{\mathbf{T}}_j\left(\frac{A}{\alpha_A}\right)\right)
    \left(\sum_{k=0}^{n-1}\widetilde{\beta}_k\ket{n-1-k}\ket{\psi}\right)
    &=\sum_{k=0}^{n-1}\widetilde{\beta}_k\sum_{j=0}^{k}\ket{n-1-k+j}\widetilde{\mathbf{T}}_j\left(\frac{A}{\alpha_A}\right)\ket{\psi}\\
    &=\sum_{k=0}^{n-1}\widetilde{\beta}_k\sum_{l=n-1-k}^{n-1}\ket{l}\widetilde{\mathbf{T}}_{l+k-n+1}\left(\frac{A}{\alpha_A}\right)\ket{\psi}\\
    &=\sum_{l=0}^{n-1}\ket{l}
    \sum_{k=n-1-l}^{n-1}\widetilde{\beta}_k\widetilde{\mathbf{T}}_{k+l-n+1}\left(\frac{A}{\alpha_A}\right)\ket{\psi}.
\end{aligned}
\end{equation}
If we now measure the first register and get the outcome $l=n-1$, the second register will have the desired state proportional to
\begin{equation}
    \sum_{k=0}^{n-1}\widetilde{\beta}_k\widetilde{\mathbf{T}}_{k}\left(\frac{A}{\alpha_A}\right)\ket{\psi}
    =\sum_{k=0}^{n-1}{\beta}_k\mathbf{T}_{k}\left(\frac{A}{\alpha_A}\right)\ket{\psi}.
\end{equation}
However, we will also get unwanted components for $l=0,\ldots,n-2$, leading to a failure of the algorithm.

To boost the success probability, we use the runaway padding trick~\cite{Berry2017Differential} to repeat the desired state $\eta n$ times. This can be understood via the following formula for inverting lower block matrices.

\begin{lemma}[Lower block matrix inversion]
\label{lem:lower_block_inv}
For invertible square matrices $A_{11}$ and $A_{22}$,
\begin{equation}
    \begin{bmatrix}
        A_{11} & 0\\
        A_{21} & A_{22}
    \end{bmatrix}^{-1}=
    \begin{bmatrix}
        A_{11}^{-1} & 0\\
        -A_{22}^{-1}A_{21}A_{11}^{-1} & A_{22}^{-1}
    \end{bmatrix}.
\end{equation}
\end{lemma}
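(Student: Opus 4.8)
The plan is to verify the claimed formula by direct block multiplication. Write $B=\begin{bmatrix} A_{11} & 0\\ A_{21} & A_{22}\end{bmatrix}$ and let $M=\begin{bmatrix} A_{11}^{-1} & 0\\ -A_{22}^{-1}A_{21}A_{11}^{-1} & A_{22}^{-1}\end{bmatrix}$ be the proposed inverse; $M$ is well defined precisely because $A_{11}$ and $A_{22}$ are invertible. Since $B$ is square, it suffices to check the two identities $BM=I$ and $MB=I$, from which it follows that $B$ is invertible with $B^{-1}=M$.

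First I would expand $BM$ into its four blocks. The $(1,1)$ block is $A_{11}A_{11}^{-1}=I$; the $(1,2)$ block is $A_{11}\cdot 0+0\cdot A_{22}^{-1}=0$, owing to the zero in the upper-right of $B$; the $(2,1)$ block is $A_{21}A_{11}^{-1}+A_{22}\bigl(-A_{22}^{-1}A_{21}A_{11}^{-1}\bigr)=A_{21}A_{11}^{-1}-A_{21}A_{11}^{-1}=0$; and the $(2,2)$ block is $0\cdot 0+A_{22}A_{22}^{-1}=I$. Hence $BM=I$. The verification of $MB=I$ is entirely symmetric, with the cancellation again occurring in the $(2,1)$ block: $\bigl(-A_{22}^{-1}A_{21}A_{11}^{-1}\bigr)A_{11}+A_{22}^{-1}A_{21}=-A_{22}^{-1}A_{21}+A_{22}^{-1}A_{21}=0$, while the other three blocks give $I$, $0$, $I$ as before.

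There is essentially no obstacle here; the only point requiring the slightest attention is the non-commutativity of matrix products, which dictates the exact ordering $-A_{22}^{-1}A_{21}A_{11}^{-1}$ of the lower-left entry, as this is what produces the needed cancellation. Alternatively, since this lemma is used in the proof of \thm{generate_history} only for finite-dimensional matrices, establishing $BM=I$ alone already forces $MB=I$, so a single block computation would suffice.
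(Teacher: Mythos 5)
Your verification by direct block multiplication is correct, and it is the standard (essentially the only) proof of this elementary fact; the paper states the lemma without proof. Your final remark that $BM=I$ alone suffices in the finite-dimensional case is also accurate and a nice observation.
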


In our case, we let
\begin{equation}
    A_{11}=I_n\otimes I+L_n^2\otimes I-2L_n\otimes \frac{A}{\alpha_A}
    =\begin{bmatrix}
        I & 0 & \cdots & \cdots & \cdots &0\\
        -\frac{2A}{\alpha_A} & I & \ddots & \ddots & \ddots & \vdots\\
        I & -\frac{2A}{\alpha_A} & I & \ddots &\ddots & \vdots\\
        0 & I & \frac{-2A}{\alpha_A} & I & \ddots & \vdots\\
        \vdots & \ddots & \ddots & \ddots & \ddots & 0\\
        0 & \cdots & 0 & I & -\frac{2A}{\alpha_A} & I \\
    \end{bmatrix}
\end{equation}
corresponding to the denominator of the matrix Chebyshev generating function, so that from \eq{cheby_gen2}, 
\begin{equation}
    A_{11}^{-1}=\frac{1}{I_n\otimes I+L_n^2\otimes I-2L_n\otimes \frac{A}{\alpha_A}}=\begin{bmatrix}
        \mathbf{U}_{0}\left(\frac{A}{\alpha_A}\right) & 0 & \cdots & 0 \\
        \mathbf{U}_{1}\left(\frac{A}{\alpha_A}\right) & \mathbf{U}_{0}\left(\frac{A}{\alpha_A}\right) &\ddots & \vdots \\[.2cm]
        \vdots & \ddots & \ddots & 0\\[.2cm]
        \mathbf{U}_{n-1}\left(\frac{A}{\alpha_A}\right) & \cdots &\mathbf{U}_{1}\left(\frac{A}{\alpha_A}\right) & \mathbf{U}_{0}\left(\frac{A}{\alpha_A}\right) \\
    \end{bmatrix}.
\end{equation} 
Here, we have used subscripts to explicitly represent dimensions of the identity and the lower shift matrix on the ancilla register.
Now we take the $\eta n$-by-$n$ block matrix
\begin{equation}
    A_{21}=\ketbra{0}{n-1}\otimes(-I)=
    \begin{bmatrix}
        0 & 0 & \cdots & -I\\
        0 & 0 & \cdots & 0\\
        \vdots & \vdots & \vdots & \vdots\\
        0 & 0 & \cdots & 0\\
    \end{bmatrix}.
\end{equation}
Thus the action of $-A_{21}A_{11}^{-1}$ is to copy the last row of $A_{11}^{-1}$ to the first row while padding the remaining rows with zeros. Now we simply need to copy the first row of $-A_{21}A_{11}^{-1}$ to the remaining rows, which can be achieved by setting
\begin{equation}
    A_{22}=\left(I_{\eta n}-L_{\eta n}\right)\otimes I=
    \begin{bmatrix}
        I & 0 & 0 & \cdots & 0\\
        -I & I & 0 & \cdots & 0\\
        0 & -I & I & \ddots & 0\\
        \vdots & \ddots & \ddots & \ddots & \vdots\\
        0 & \cdots & \ddots & -I & I\\
    \end{bmatrix}\
    \Rightarrow\
    A_{22}^{-1}=\begin{bmatrix}
        I & 0 & 0 & \cdots & 0\\
        I & I & 0 & \cdots & 0\\
        I & I & I & \ddots & \vdots\\
        \vdots & \ddots & \ddots & \ddots & \vdots\\
        I & \cdots & \ddots & \ddots & I
    \end{bmatrix}.
\end{equation}
We will bundle the numerator of the Chebyshev generating function
\begin{equation}
    B_{11}=\frac{I_n\otimes I-L_n^2\otimes I}{2}
\end{equation}
with the state preparation subroutine and discuss it later in \sec{fourier}.

To summarize, after the padding, the denominator of the matrix Chebyshev generating function becomes
\begin{equation}
\NiceMatrixOptions
{
    custom-line = 
    {
        letter = I , 
        command = hdashedline , 
        tikz = {dashed,dash phase=3pt} ,
        width = \pgflinewidth
    }
}
\begin{aligned}
\label{eq:pad_a}
    \mathbf{Pad}(A)&=\ketbra{0}{0}\otimes A_{11}+\ketbra{1}{0}\otimes A_{21}+\ketbra{1}{1}\otimes A_{22}\\
    &=\ketbra{0}{0}\otimes\left(I_n\otimes I+L_n^2\otimes I-2L_n\otimes \frac{A}{\alpha_A}\right)\\
    &\quad+\ketbra{1}{0}\otimes\ketbra{0}{n-1}\otimes (-I)
    +\ketbra{1}{1}\otimes(I_{\eta n}-L_{\eta n})\otimes I\\
    &=\begin{bNiceArray}{ccccccIccc}
        I & 0 & \cdots & \cdots & \cdots &0   &0 &\cdots &\cdots\\
        -\frac{2A}{\alpha_A} & I & \ddots & \ddots & \ddots & \vdots   &\vdots &\vdots &\vdots \\
        I & -\frac{2A}{\alpha_A} & I & \ddots &\ddots & \vdots   &\vdots &\vdots &\vdots \\
        0 & I & \frac{-2A}{\alpha_A} & I & \ddots & \vdots   &\vdots &\vdots &\vdots \\
        \vdots & \ddots & \ddots & \ddots & \ddots & 0   &\vdots &\vdots &\vdots \\
        0 & \cdots & 0 & I & -\frac{2A}{\alpha_A} & I   &0 &\cdots &\cdots \\
        \hdashedline
        0 & \cdots &\cdots &0 & 0 & -I   &I &0 &\cdots \\
        0 & \cdots & \cdots &\cdots & 0 & 0   &-I &I &\ddots \\
        \vdots & \vdots & \vdots & \vdots & \vdots & \vdots   &\ddots &\ddots &\ddots \\
    \end{bNiceArray},\\
\end{aligned}
\end{equation}
which has inverse
\begin{equation}
\label{eq:pad_a_inv}
\NiceMatrixOptions
{
    custom-line = 
    {
        letter = I , 
        command = hdashedline , 
        tikz = {dashed,dash phase=3pt} ,
        width = \pgflinewidth
    }
}
    \mathbf{Pad}(A)^{-1}=
    \begin{bNiceArray}{ccccIccc}
        \mathbf{U}_{0}\left(\frac{A}{\alpha_A}\right) & 0 & \cdots & 0 & 0 & \cdots & \cdots \\
        \mathbf{U}_{1}\left(\frac{A}{\alpha_A}\right) & \mathbf{U}_{0}\left(\frac{A}{\alpha_A}\right) & \ddots & \vdots & \vdots & \vdots & \vdots \\
        \vdots & \ddots & \ddots & 0 & \vdots & \vdots & \vdots \\
        \mathbf{U}_{n-1}\left(\frac{A}{\alpha_A}\right) & \cdots & \mathbf{U}_{1}\left(\frac{A}{\alpha_A}\right) & \mathbf{U}_{0}\left(\frac{A}{\alpha_A}\right) & 0 & \cdots & \cdots \\
        \hdashedline
        \mathbf{U}_{n-1}\left(\frac{A}{\alpha_A}\right) & \cdots & \mathbf{U}_{1}\left(\frac{A}{\alpha_A}\right) & \mathbf{U}_{0}\left(\frac{A}{\alpha_A}\right) & I & 0 & \cdots \\
        \mathbf{U}_{n-1}\left(\frac{A}{\alpha_A}\right) & \cdots & \mathbf{U}_{1}\left(\frac{A}{\alpha_A}\right) & \mathbf{U}_{0}\left(\frac{A}{\alpha_A}\right) & I & I & \ddots \\
        \vdots & \vdots & \vdots & \vdots & \vdots & \vdots & \ddots 
    \end{bNiceArray}.
\end{equation}
The numerator
\begin{equation}
\begin{aligned}
\label{eq:pad_b}
    \mathbf{Pad}(B)&=\ketbra{0}{0}\otimes B_{11}+\ketbra{1}{1}\otimes I_{\eta n}\otimes I\\
    &=\ketbra{0}{0}\otimes\frac{I_n-L_n^2}{2}\otimes I
    +\ketbra{1}{1}\otimes I_{\eta n}\otimes I
\end{aligned}
\end{equation}
will be bundled with the initial state, which is now augmented with an additional ancilla $\ket{0}$:
\begin{equation}
\label{eq:pad_psi}
    \ket{0}\frac{\sum_{k=0}^{n-1}\widetilde{\beta}_k\ket{n-1-k}}{\norm{\widetilde \beta}}\ket{\psi}.
\end{equation}
Here, we have slightly abused the notation and used $\ketbra{0}{0}$, $\ketbra{0}{1}$, and $\ketbra{1}{1}$ to index matrix blocks with different sizes.

Ignoring the normalization factor, we obtain
\begin{equation}
\label{eq:pad_output}
\begin{aligned}
    &\ \mathbf{Pad}(A)^{-1}\mathbf{Pad}(B)
    \left(\ket{0}\sum_{k=0}^{n-1}\widetilde{\beta}_k\ket{n-1-k}\ket{\psi}\right)\\
    =&\ \left(\ketbra{0}{0}\otimes A_{11}^{-1}+\ketbra{1}{0}\otimes\left(-A_{22}^{-1}A_{21}A_{11}^{-1}\right)+\ketbra{1}{1}\otimes A_{22}^{-1}\right)
    \left(\ket{0}B_{11}\sum_{k=0}^{n-1}\widetilde{\beta}_k\ket{n-1-k}\ket{\psi}\right)\\
    =&\ \ket{0}\sum_{l=0}^{n-1}\ket{l}
        \sum_{k=n-1-l}^{n-1}\widetilde{\beta}_k\widetilde{\mathbf{T}}_{k+l-n+1}\left(\frac{A}{\alpha_A}\right)\ket{\psi}
        +\sum_{s=1}^{\eta}\ket{s}\sum_{l=0}^{n-1}\ket{l}
        \sum_{k=0}^{n-1}\widetilde{\beta}_k\widetilde{\mathbf{T}}_{k}\left(\frac{A}{\alpha_A}\right)\ket{\psi},
\end{aligned}
\end{equation}
which pads the original state
\begin{equation}
\begin{aligned}
    A_{11}^{-1}B_{11}\left(\sum_{k=0}^{n-1}\widetilde{\beta}_k\ket{n-1-k}\ket{\psi}\right)
    &=\frac{I_n\otimes I-L_n^2\otimes I}{2(I_n\otimes I+L_n^2\otimes I-2L_n\otimes \frac{A}{\alpha_A})}\left(\sum_{k=0}^{n-1}\widetilde{\beta}_k\ket{n-1-k}\ket{\psi}\right)\\
    &=\left(\sum_{j=0}^{n-1}L_n^j\otimes \widetilde{\mathbf{T}}_j\left(\frac{A}{\alpha_A}\right)\right)
    \left(\sum_{k=0}^{n-1}\widetilde{\beta}_k\ket{n-1-k}\ket{\psi}\right)\\
    &=\sum_{l=0}^{n-1}\ket{l}
        \sum_{k=n-1-l}^{n-1}\widetilde{\beta}_k\widetilde{\mathbf{T}}_{k+l-n+1}\left(\frac{A}{\alpha_A}\right)\ket{\psi}
\end{aligned}
\end{equation}
as desired. This can alternatively be understood using the fact that 
\begin{equation}
\NiceMatrixOptions
{
    custom-line = 
    {
        letter = I , 
        command = hdashedline , 
        tikz = {dashed,dash phase=3pt} ,
        width = \pgflinewidth
    }
}
    \mathbf{Pad}(A)^{-1}\mathbf{Pad}(B)=
    \begin{bNiceArray}{ccccIccc}
        \frac{I}{2} & 0 & \cdots & 0 & 0 & \cdots & \cdots \\
        \mathbf{T}_{1}\left(\frac{A}{\alpha_A}\right) & \frac{I}{2} & \ddots & \vdots & \vdots & \vdots & \vdots \\
        \vdots & \ddots & \ddots & 0 & \vdots & \vdots & \vdots \\
        \mathbf{T}_{n-1}\left(\frac{A}{\alpha_A}\right) & \cdots & \mathbf{T}_{1}\left(\frac{A}{\alpha_A}\right) & \frac{I}{2} & 0 & \cdots & \cdots \\
        \hdashedline
        \mathbf{T}_{n-1}\left(\frac{A}{\alpha_A}\right) & \cdots & \mathbf{T}_{1}\left(\frac{A}{\alpha_A}\right) & \frac{I}{2} & I & 0 & \cdots \\
        \mathbf{T}_{n-1}\left(\frac{A}{\alpha_A}\right) & \cdots & \mathbf{T}_{1}\left(\frac{A}{\alpha_A}\right) & \frac{I}{2} & I & I & \ddots \\
        \vdots & \vdots & \vdots & \vdots & \vdots & \vdots & \ddots 
    \end{bNiceArray}
\end{equation}
pads the original matrix Chebyshev generating function \eq{matrix_cheby_gen}.

Note that there is no need to worry about the normalization factor of the initial or the output state, simply because such a factor is automatically canceled out by the quantum linear system solver~\lem{opt_lin} which outputs a normalized quantum state. However, we must consider the normalization factor for block encoding $\mathbf{Pad}(A)$ as that will affect the condition number of the linear system. We discuss this issue in the next subsection.

\subsection{Block encoding implementation}
\label{sec:history_block}

Our goal is to invert the denominator of the padded matrix Chebyshev generating function $\mathbf{Pad}(A)$ in \eq{pad_a} using the optimal quantum linear system algorithm~\lem{opt_lin}. The complexity of this inversion further depends on the specific way in which we perform the block encoding. So in this subsection, we describe a block encoding of the padded matrix $\mathbf{Pad}(A)$ with an efficient circuit implementation.

Let us first simplify $\mathbf{Pad}(A)$ as
\begin{equation}
\label{eq:pad_a_simplify}
\begin{aligned}
    \mathbf{Pad}(A)
    &=\ketbra{0}{0}\otimes\left(I_n\otimes I+L_n^2\otimes I-2L_n\otimes \frac{A}{\alpha_A}\right)\\
    &\quad-\ketbra{1}{0}\otimes\ketbra{0}{n-1}\otimes I
    +\ketbra{1}{1}\otimes(I_{\eta n}-L_{\eta n})\otimes I\\
    &=\left(\ketbra{0}{0}\otimes I_n\otimes I
    +\ketbra{1}{1}\otimes I_{\eta n}\otimes I\right)
    +\ketbra{0}{0}\otimes L_n^2\otimes I\\
    &\quad-\left(\ketbra{0}{0}\otimes L_n\otimes \frac{2A}{\alpha_A}
    +\ketbra{1}{0}\otimes\ketbra{0}{n-1}\otimes I
    +\ketbra{1}{1}\otimes L_{\eta n}\otimes I\right)\\
    &=I_{\eta+1}\otimes I_n\otimes I
    +\ketbra{0}{0}\otimes L_n^2\otimes I\\
    &\quad-\left(\ketbra{0}{0}\otimes I_n\otimes\frac{2A}{\alpha_A}+\ketbra{1}{1}\otimes I_{\eta n}\otimes I\right)\\
    &\qquad\cdot\left(\ketbra{0}{0}\otimes L_n\otimes I+\ketbra{1}{0}\otimes\ketbra{0}{n-1}\otimes I+\ketbra{1}{1}\otimes L_{\eta n}\otimes I\right)\\
    &=I_{\eta+1}\otimes I_n\otimes I
    +\ketbra{0}{0}\otimes L_n^2\otimes I\\
    &\quad-\left(\ketbra{0}{0}\otimes I_n\otimes \frac{2A}{\alpha_A}+\sum_{s=1}^\eta\ketbra{s}{s}\otimes I_n\otimes I\right)
    \left(L_{(\eta+1)n}\otimes I\right),\\
\end{aligned}
\end{equation}
where we have slightly abused the Dirac notation to represent block matrices of different sizes and applied the matrix equality $\ketbra{1}{1}\otimes I_{\eta n}
    =\sum_{s=1}^\eta\ketbra{s}{s}\otimes I_n$.
Alternatively, we can understand the above simplification through a direct matrix computation. For instance, if $n=3$ and $\eta=1$,
\begin{equation}
\NiceMatrixOptions
{
    custom-line = 
    {
        letter = I , 
        command = hdashedline , 
        tikz = {dashed,dash phase=3pt} ,
        width = \pgflinewidth
    }
}
\begin{aligned}
    \begin{bNiceArray}{cccIccc}
        I & & & & &\\
        -\frac{2A}{\alpha_A} & I & & & &\\
        I & -\frac{2A}{\alpha_A} & I & & &\\
        \hdashedline
        & & -I & I & &\\
        & & & -I & I &\\
        & & & & -I & I\\
    \end{bNiceArray}
    &=
    \begin{bNiceArray}{cccIccc}
        I & & & & &\\
        & I & & & &\\
        & & I & & &\\
        \hdashedline
        & & & I & &\\
        & & & & I &\\
        & & & & & I\\
    \end{bNiceArray}
    +
    \begin{bNiceArray}{cccIccc}
        0 & & & & &\\
        0 & 0 & & & &\\
        I & 0 & 0 & & &\\
        \hdashedline
        & & & 0 & &\\
        & & & & 0 &\\
        & & & & & 0\\
    \end{bNiceArray}\\
    &\quad-
    \begin{bNiceArray}{cccIccc}
        \frac{2A}{\alpha_A} & & & & &\\
        & \frac{2A}{\alpha_A} & & & &\\
        & & \frac{2A}{\alpha_A} & & &\\
        \hdashedline
        & & & I & &\\
        & & & & I &\\
        & & & & & I\\
    \end{bNiceArray}
    \begin{bNiceArray}{cccIccc}
        0 & & & & &\\
        I & 0 & & & &\\
        & I & 0 & & &\\
        \hdashedline
        & & I & 0 & &\\
        & & & I & 0 &\\
        & & & & I & 0\\
    \end{bNiceArray}.
\end{aligned}
\end{equation}
Then we need to block encode the lower shift matrix $L$, and further combine it with $A/\alpha_A$ by applying matrix arithmetics within the block encoding framework.

Without loss of generality, let us consider an $n$-by-$n$ lower shift matrix $L_n$. To block encode $L_n$ and its $j$th power (for $j=0,1,\ldots,n-1$)
\begin{equation}
    L_n=\sum_{k=0}^{n-2}\ketbra{k+1}{k},\qquad
    L_n^j=\sum_{k=0}^{n-1-j}\ketbra{k+j}{k},
\end{equation}
we enlarge the Hilbert space and consider the cyclic shift operator
\begin{equation}
\begin{aligned}
    X_{2n}&=\sum_{k=0}^{2n-2}\ketbra{k+1}{k}+\ketbra{0}{2n-1},\\
    X_{2n}^j&=\sum_{k=0}^{2n-1-j}\ketbra{k+j}{k}+\sum_{k=2n-j}^{2n-1}\ketbra{k+j-2n}{k}
    =\sum_{k=0}^{2n-1}\ketbra{\mathbf{Mod}_{2n}(k+j)}{k}.
\end{aligned}
\end{equation}
The operator $X_{2n}$ is unitary and can be used to block encode $L_{n}$, as long as we ``zero out'' some of its redundant entries.
This is achieved by the following comparing operation
\begin{equation}
    \mathrm{CMP}_{2n,2}=\sum_{k=0}^{n-1}\ketbra{k}{k}\otimes I_2+\sum_{k=n}^{2n-1}\ketbra{k}{k}\otimes X_2,
\end{equation}
whose action is to flag an ancilla qubit (using the regular Pauli gate $X$) when the given register overflows. Note that when $n$ is a power of $2$, this is simply a CNOT gate with the most significant bit of $k$ as the control and the ancilla qubit as the target. Our observation is that the cyclic shift operator $X_{2n}$ and the lower shift operator $L_{n}$ are equivalent, up to the action of the comparator.

\begin{lemma}[Lower shift matrix block encoding]
\label{lem:shift}
For integers $j=0,1,\ldots,n-1$, it holds
\begin{equation}
    L_{n}^j=\left(I_{2n}\otimes\bra{0}\right)\mathrm{CMP}_{2n,2}\left(I_{2n}\otimes\ket{0}\right)X_{2n}^j\left(I_{2n}\otimes\bra{0}\right)\mathrm{CMP}_{2n,2}\left(I_{2n}\otimes\ket{0}\right).
\end{equation}
\end{lemma}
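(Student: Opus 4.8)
The plan is to evaluate both sides as operators on the $n$-dimensional subspace $\mathrm{span}\{\ket{0},\dots,\ket{n-1}\}$ of $\mathbb{C}^{2n}$ and check that they agree. The first step is to simplify the ``comparator sandwich'' $\left(I_{2n}\otimes\bra{0}\right)\mathrm{CMP}_{2n,2}\left(I_{2n}\otimes\ket{0}\right)$. Substituting the definition of $\mathrm{CMP}_{2n,2}$, one sees that for $k<n$ the comparator acts as the identity on the ancilla, so the term $\bra{0}I_2\ket{0}=1$ survives, whereas for $k\geq n$ it applies $X_2$ and $\bra{0}X_2\ket{0}=0$ kills the term. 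Hence this sandwich collapses to the diagonal projector $\Pi_n:=\sum_{k=0}^{n-1}\ketbra{k}{k}$ onto the first $n$ computational basis states, and the right-hand side of the claimed identity becomes $\Pi_n X_{2n}^j\Pi_n$.

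Next I would compute $\Pi_n X_{2n}^j\Pi_n$ directly from the closed form $X_{2n}^j=\sum_{k=0}^{2n-1}\ketbra{\mathbf{Mod}_{2n}(k+j)}{k}$. Sandwiching by $\Pi_n$ keeps only those $k$ with $0\le k\le n-1$ and $0\le\mathbf{Mod}_{2n}(k+j)\le n-1$. The key elementary observation is that, since $0\le j\le n-1$, every such $k$ satisfies $0\le k+j\le 2n-2<2n$, so no wraparound occurs and $\mathbf{Mod}_{2n}(k+j)=k+j$; the surviving range is then exactly $0\le k\le n-1-j$. This gives $\Pi_n X_{2n}^j\Pi_n=\sum_{k=0}^{n-1-j}\ketbra{k+j}{k}$, which is precisely $L_n^j$ by the stated formula $L_n^j=\sum_{k=0}^{n-1-j}\ketbra{k+j}{k}$ (the case $j=0$ degenerates to $\Pi_n=I_n=L_n^0$), completing the argument.

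There is no serious obstacle; the only point needing care is the bookkeeping of the modular arithmetic, namely that although $X_{2n}$ can wrap indices around, the outer projectors $\Pi_n$ annihilate every branch where wraparound would have mattered -- either the image index lands in $\{n,\dots,2n-1\}$ and is killed on the left, or the preimage index lies there and is killed on the right -- so that on the relevant subspace $X_{2n}^j$ acts exactly like the non-cyclic shift $L_n^j$. One should also keep in mind that the identity is read as an equality of operators on $\mathrm{span}\{\ket{0},\dots,\ket{n-1}\}$, consistent with the convention in this section that $L_n$ denotes the $n$-by-$n$ lower shift matrix.
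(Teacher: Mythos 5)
Your proposal is correct and takes essentially the same approach as the paper's proof: both collapse the comparator sandwich to the projector $\sum_{k=0}^{n-1}\ketbra{k}{k}$, then sandwich $X_{2n}^j$ and observe that for $j,k\in\{0,\ldots,n-1\}$ the sum $k+j\leq 2n-2$ never wraps around, so the surviving terms are exactly those of $L_n^j$.
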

\begin{proof}
    Note that
    \begin{equation}
        \left(I_{2n}\otimes\bra{0}\right)\mathrm{CMP}_{2n,2}\left(I_{2n}\otimes\ket{0}\right)
        =\sum_{k=0}^{n-1}\ketbra{k}{k}.
    \end{equation}
    Therefore,
    \begin{equation}
    \begin{aligned}
        &\ \left(I_{2n}\otimes\bra{0}\right)\mathrm{CMP}_{2n,2}\left(I_{2n}\otimes\ket{0}\right)X_{2n}^j\left(I_{2n}\otimes\bra{0}\right)\mathrm{CMP}_{2n,2}\left(I_{2n}\otimes\ket{0}\right)\\
        =&\ \left(\sum_{k_1=0}^{n-1}\ketbra{k_1}{k_1}\right)
        \left(\sum_{k_2=0}^{2n-1}\ketbra{\mathbf{Mod}_{2n}(k_2+j)}{k_2}\right)
        \left(\sum_{k_3=0}^{n-1}\ketbra{k_3}{k_3}\right)\\
        =&\ \sum_{\substack{k_2=0,\ldots,n-1\\\mathbf{Mod}_{2n}(k_2+j)=0,\ldots,n-1}}\ketbra{\mathbf{Mod}_{2n}(k_2+j)}{k_2}.\\
    \end{aligned}
    \end{equation}
    For $j,k_2=0,1,\ldots,n-1$, it always holds $k_2+j\leq 2n-2$, which then implies
    \begin{equation}
        \sum_{\substack{k_2=0,\ldots,n-1\\\mathbf{Mod}_{2n}(k_2+j)=0,\ldots,n-1}}\ketbra{\mathbf{Mod}_{2n}(k_2+j)}{k_2}
        =\sum_{\substack{k_2=0,\ldots,n-1\\k_2+j=0,\ldots,n-1}}\ketbra{k_2+j}{k_2}
        =\sum_{k_2=0}^{n-1-j}\ketbra{k_2+j}{k_2}
        =L_n^j,
    \end{equation}
    as claimed.
\end{proof}

The above construction can be understood through a direct matrix calculation. For instance, if $n=3$ and $j=1$, we have
\begin{equation}
\NiceMatrixOptions
{
    custom-line = 
    {
        letter = I , 
        command = hdashedline , 
        tikz = {dashed,dash phase=3pt} ,
        width = \pgflinewidth
    }
}
    \begin{bmatrix}
        0 & 0 & 0\\
        1 & 0 & 0\\
        0 & 1 & 0\\
    \end{bmatrix}
    =
    \begin{bNiceArray}{cccIccc}
        1 & & & 0 & &\\
        & 1 & & & 0 &\\
        & & 1 & & & 0\\
    \end{bNiceArray}
    \begin{bNiceArray}{cccIccc}
        0 & & & & & 1\\
        1 & 0 & & & &\\
        & 1 & 0 & & &\\
        \hdashedline
        & & 1 & 0 & &\\
        & & & 1 & 0 &\\
        & & & & 1 & 0\\
    \end{bNiceArray}
    \begin{bNiceArray}{ccc}
        1 & & \\
        & 1 & \\
        & & 1 \\
        \hdashedline
        0 & &\\
        & 0 &\\
        & & 0\\
    \end{bNiceArray}.
\end{equation}
Note that the result not only works for $L_n$ but also its powers $I_n,L_n,\ldots,L_n^{n-1}$. This general version will be used in \sec{fourier} to construct an efficient circuit that creates $n$ Fourier coefficients with only $\sim\polylog(n)$ gates. Alternatively, we can also realize the block encoding as the linear combination
\begin{equation}
\begin{aligned}
    L_n^j&=\sum_{k=0}^{n-1-j}\ketbra{k+j}{k}
    =\frac{1+1}{2}\sum_{k=0}^{n-1-j}\ketbra{k+j}{k}+\frac{1-1}{2}\sum_{k=n-j}^{n-1}\ketbra{k+j-n}{k}\\
    &=\frac{1}{2}\left(\sum_{k=0}^{n-1}\ketbra{\mathbf{Mod}_{n}(k+j)}{k}
    +\sum_{k=0}^{n-1}(-1)^{\mathbf{Ind}_{n-j\leq k\leq n-1}(k)}\ketbra{\mathbf{Mod}_{n}(k+j)}{k}\right),
\end{aligned}
\end{equation}
with the indicator function $\mathbf{Ind}_{n-j\leq k\leq n-1}(k)$ implemented by inequality testings~\cite{IneqTest19}, exemplified by the matrix computation
\begin{equation}
    \begin{bmatrix}
        0 & 0 & 0\\
        1 & 0 & 0\\
        0 & 1 & 0\\
    \end{bmatrix}
    =\frac{1}{2}\left(
    \begin{bmatrix}
        0 & 0 & 1\\
        1 & 0 & 0\\
        0 & 1 & 0\\
    \end{bmatrix}
    +
    \begin{bmatrix}
        0 & 0 & -1\\
        1 & 0 & 0\\
        0 & 1 & 0\\
    \end{bmatrix}
    \right).
\end{equation}
In any case, let us focus on $j=1$ for now and conclude that the $n$-by-$n$ lower shift matrix $L_n$ can be block encoded with normalization factor $1$.

To complete the block encoding of $\mathbf{Pad}(A)$ in \eq{pad_a_simplify}, we consider the rewriting
\begin{equation}
\begin{aligned}
    \ketbra{0}{0}\otimes I_n\otimes \frac{2A}{\alpha_A}+\sum_{s=1}^\eta\ketbra{s}{s}\otimes I_n\otimes I
    &=\frac{3}{2}\left(\ketbra{0}{0}\otimes I_n\otimes \frac{A}{\alpha_A}+\sum_{s=1}^\eta\ketbra{s}{s}\otimes I_n\otimes I\right)\\
    &\quad+\frac{1}{2}\left(\ketbra{0}{0}\otimes I_n\otimes \frac{A}{\alpha_A}-\sum_{s=1}^\eta\ketbra{s}{s}\otimes I_n\otimes I\right).
\end{aligned}
\end{equation}
This means we can implement the block encoding of $\left(\ketbra{0}{0}\otimes I_n\otimes \frac{2A}{\alpha_A}+\sum_{s=1}^\eta\ketbra{s}{s}\otimes I_n\otimes I\right)/2$ with normalization factor $2$. Putting it altogether, we conclude that $\mathbf{Pad}(A)/4$ can be block encoded with a normalization factor of $4$, using one query to the controlled block encoding of $A/\alpha_A$.

\subsection{Summary of Chebyshev history state generation}
\label{sec:history_summary}

We now summarize the quantum algorithm for generating the Chebyshev history state.

\begin{enumerate}
    \item Construct a block encoding of the lower shift matrix $L$ using \lem{shift}.
    \item Combine $L$ and the input matrix $A$ to construct $\mathbf{Pad}(A)/4$ of \eq{pad_a_simplify} within the block encoding framework.
    \item Invoke the quantum linear system algorithm~\lem{opt_lin} with $\mathbf{Pad}(A)/4$ as the coefficient matrix, and $\ket{0}\frac{1}{\alpha_{\widetilde\beta}}\sum_{k=0}^{n-1}(\widetilde\beta_k-\widetilde\beta_{k+2})\ket{n-1-k}\ket{\psi}$ as the initial state $\left(\alpha_{\widetilde\beta}=\sqrt{\sum_{k=0}^{n-1}|\widetilde{\beta}_k-\widetilde{\beta}_{k+2}|^2}\right)$.
\end{enumerate}

\begin{theorem}[Chebyshev history state generation]
\label{thm:generate_history}
    Let $A$ be a square matrix with only real eigenvalues, such that $A/\alpha_A$ is block encoded by $O_A$ with some normalization factor $\alpha_A\geq\norm{A}$. 
    Let $O_\psi\ket{0}=\ket{\psi}$ be the oracle preparing the initial state, 
    and $O_{\widetilde\beta}\ket{0}=\frac{1}{\alpha_{\widetilde\beta}}\sum_{k=0}^{n-1}(\widetilde\beta_k-\widetilde\beta_{k+2})\ket{n-1-k}$ be the oracle preparing the shifting of coefficients $\widetilde\beta_k$ ($k=0,\ldots,n-1$) with  $\alpha_{\widetilde\beta}=\sqrt{\sum_{k=0}^{n-1}|\widetilde{\beta}_k-\widetilde{\beta}_{k+2}|^2}$. 
    Then, the quantum state
    \begin{equation}
        \frac{\ket{0}\sum_{l=0}^{n-1}\ket{l}
        \sum_{k=n-1-l}^{n-1}\widetilde{\beta}_k\widetilde{\mathbf{T}}_{k+l-n+1}\left(\frac{A}{\alpha_A}\right)\ket{\psi}
        +\sum_{s=1}^{\eta}\ket{s}\sum_{l=0}^{n-1}\ket{l}
        \sum_{k=0}^{n-1}\widetilde{\beta}_k\widetilde{\mathbf{T}}_{k}\left(\frac{A}{\alpha_A}\right)\ket{\psi}}
        {\sqrt{\sum_{l=0}^{n-1}\norm{\sum_{k=n-1-l}^{n-1}\widetilde{\beta}_k\widetilde{\mathbf{T}}_{k+l-n+1}\left(\frac{A}{\alpha_A}\right)\ket{\psi}}^2
        +\eta n\norm{\sum_{k=0}^{n-1}\widetilde{\beta}_k\widetilde{\mathbf{T}}_{k}\left(\frac{A}{\alpha_A}\right)\ket{\psi}}^2}},
    \end{equation}
    can be prepared with accuracy $\epsilon$ and probability $1-\pfail$ using
    \begin{equation}
        \mathbf{O}\left(\alphaU n(\eta+1)\log\left(\frac{1}{\epsilon}\right)\log\left(\frac{1}{\pfail}\right)\right),
    \end{equation}
    queries to controlled-$O_A$, controlled-$O_\psi$, controlled-$O_{\widetilde\beta}$, and their inverses, where 
    \begin{equation}
    \label{eq:alphaU2}
        \alphaU\geq\max_{j=0,1,\ldots,n-1}\norm{\mathbf{U}_{j}\left(\frac{A}{\alpha_A}\right)},
    \end{equation}
    is an upper bound on Chebyshev polynomials of the second kind $\mathbf{U}_j(x)$.
\end{theorem}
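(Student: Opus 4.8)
The plan is to realize the claimed state as the normalized solution of the linear system $\mathbf{Pad}(A)\,x=b$ and to invoke the optimal-scaling quantum linear system algorithm \lem{opt_lin}, following the three-step recipe listed above the theorem. First I would build a block encoding of the coefficient matrix. Block-encode the $n$-by-$n$ lower shift matrix $L$ with normalization factor $1$ via \lem{shift}, then assemble $\mathbf{Pad}(A)$ from the decomposition \eq{pad_a_simplify}: it is a linear combination of $I_{\eta+1}\otimes I_n\otimes I$ (coefficient $1$), $\ketbra{0}{0}\otimes L_n^2\otimes I$ (coefficient $1$, norm $\le 1$), and the product of $L_{(\eta+1)n}\otimes I$ with the obstructive term $\ketbra{0}{0}\otimes I_n\otimes\frac{2A}{\alpha_A}+\sum_{s=1}^\eta\ketbra{s}{s}\otimes I_n\otimes I$, which is written as $\frac32 P_+ + \frac12 P_-$ with $\norm{P_\pm}\le 1$ (each $P_\pm$ block-diagonal with blocks $\frac{A}{\alpha_A}$ or $\pm I$). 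Since each unitary factor touches $A$ only through $O_A$, this yields a block encoding of $\mathbf{Pad}(A)/4$ with normalization factor $\alpha_C = 1+1+\frac32+\frac12 = 4$, using $\mathbf{O}(1)$ queries to controlled-$O_A$ and its inverse.

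Next I would pin down the target vector and verify the output. Since $\mathbf{Pad}(B)$ in \eq{pad_b} acts as $\ketbra{0}{0}\otimes B_{11}$ with $B_{11}=\frac12(I_n-L_n^2)\otimes I$, applying it to the padded input $\ket{0}\bigl(\sum_k\widetilde\beta_k\ket{n-1-k}\bigr)\ket{\psi}$ gives, after a one-line index shift and the convention $\widetilde\beta_n=\widetilde\beta_{n+1}=0$, the vector $\ket{0}\cdot\frac12\sum_{k=0}^{n-1}(\widetilde\beta_k-\widetilde\beta_{k+2})\ket{n-1-k}\ket{\psi}$, whose normalization is exactly the state produced by $O_{\widetilde\beta}$ together with $O_\psi$; hence $b$ is preparable with $\mathbf{O}(1)$ queries to these oracles and their inverses. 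Invertibility of $\mathbf{Pad}(A)$ follows because $A_{11}=I_n\otimes I+L_n^2\otimes I-2L_n\otimes\frac{A}{\alpha_A}$ and $A_{22}=(I_{\eta n}-L_{\eta n})\otimes I$ are both unipotent lower block-triangular; applying \lem{lower_block_inv}, the generating-function identity \eq{matrix_cheby_gen}, and the copy-and-pad interpretation of $-A_{22}^{-1}A_{21}A_{11}^{-1}$ shows $\mathbf{Pad}(A)^{-1}\mathbf{Pad}(B)$ on the padded input is precisely the unnormalized numerator recorded in \eq{pad_output}. Thus \lem{opt_lin} with $C=\mathbf{Pad}(A)$ and this $b$ outputs the claimed normalized history state, whose squared norm is $\sum_{l=0}^{n-1}\norm{\cdots}^2+\eta n\norm{\cdots}^2$ (the factor $\eta n$ arising from $\eta$ copies of the register $\ket{s}$, each carrying $n$ index states all equal to the $l=n-1$ component).

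Finally, for the cost: \lem{opt_lin} gives $\mathbf{O}(\alpha_C\,\alpha_{C^{-1}}\log(1/\epsilon))$ queries with $\alpha_C=4$, so it remains to bound $\norm{\mathbf{Pad}(A)^{-1}}$. Reading off the explicit block form \eq{pad_a_inv}, every block entry of $\mathbf{Pad}(A)^{-1}$ is $0$, $I$, or $\mathbf{U}_j(A/\alpha_A)$ for some $0\le j\le n-1$, and each block row and each block column has at most $n$ nonzero entries of the form $\mathbf{U}_j(A/\alpha_A)$ and at most $\eta n$ entries equal to $I$; using $\norm{\mathbf{U}_j(A/\alpha_A)}\le\alphaU$ and $\alphaU\ge\norm{\mathbf{U}_0}=1$, every block-row sum and block-column sum of operator norms is at most $n\alphaU+\eta n\le n\alphaU(\eta+1)$, so \lem{block_norm} gives $\norm{\mathbf{Pad}(A)^{-1}}\le n\alphaU(\eta+1)=:\alpha_{C^{-1}}$. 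Substituting, the query count is $\mathbf{O}(\alpha_C\alpha_{C^{-1}}\log(1/\epsilon))=\mathbf{O}(\alphaU n(\eta+1)\log(1/\epsilon))$ to controlled-$O_C$ and controlled-$O_b$, each of which expands into $\mathbf{O}(1)$ queries to controlled-$O_A$, controlled-$O_\psi$, controlled-$O_{\widetilde\beta}$ and their inverses, as claimed. The step I expect to be most delicate is the block bookkeeping: on one hand verifying that the padded linear system reproduces the state in the theorem with exactly the stated normalization (tracking the index shifts induced by $\mathbf{Pad}(B)$ and the row-copying action of the off-diagonal block of $\mathbf{Pad}(A)^{-1}$), and on the other hand carrying out the counting that simultaneously bounds both the row- and column-block-sum norms of $\mathbf{Pad}(A)^{-1}$ by $n\alphaU(\eta+1)$.
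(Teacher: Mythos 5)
Your proof reproduces the paper's argument exactly: block encode $\mathbf{Pad}(A)/4$ through the four-term decomposition of \eq{pad_a_simplify}, apply \lem{opt_lin} with the target vector matching $\mathbf{Pad}(B)$ on the padded input (justified by \eq{shift_coeff} and \eq{pad_output}), and bound $\norm{\mathbf{Pad}(A)^{-1}}$ via \lem{block_norm} from the explicit inverse \eq{pad_a_inv}. One minor slip in the final bookkeeping: the $\eta n$ repeated blocks in the lower-left panel of $\mathbf{Pad}(A)^{-1}$ are copies of $\mathbf{U}_{n-1-c}(A/\alpha_A)$, not $I$ (the $I$'s live only in the last $\eta n$ block columns), so the block-column sums for the first $n$ columns are bounded by $(\eta+1)n\alphaU$ rather than $n\alphaU+\eta n$ — but since $\alphaU\geq\norm{\mathbf{U}_0(A/\alpha_A)}=1$ your final estimate $\norm{\mathbf{Pad}(A)^{-1}}=\mathbf{O}\left((\eta+1)n\alphaU\right)$ is still correct.
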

\begin{proof}
    The analysis of \sec{history_block} shows that $\mathbf{Pad}(A)/4$ can be block encoded with $1$ query to the controlled block encoding of $A/\alpha_A$ (i.e., the controlled-$O_A$).
    The quantum linear system algorithm of \lem{opt_lin} outputs a state $\epsilon$-close to
    \begin{equation}
        \frac{\frac{4}{\mathbf{Pad}(A)}\left(\ket{0}\frac{\sum_{k=0}^{n-1}(\widetilde\beta_k-\widetilde\beta_{k+2})\ket{n-1-k}}{\alpha_{\widetilde\beta}}\ket{\psi}\right)}
        {\norm{\frac{4}{\mathbf{Pad}(A)}\left(\ket{0}\frac{\sum_{k=0}^{n-1}(\widetilde\beta_k-\widetilde\beta_{k+2})\ket{n-1-k}}{\alpha_{\widetilde\beta}}\ket{\psi}\right)}}
        =\frac{\mathbf{Pad}(A)^{-1}\mathbf{Pad}(B)\left(\ket{0}\sum_{k=0}^{n-1}\widetilde{\beta}_k\ket{n-1-k}\ket{\psi}\right)}
        {\norm{\mathbf{Pad}(A)^{-1}\mathbf{Pad}(B)\left(\ket{0}\sum_{k=0}^{n-1}\widetilde{\beta}_k\ket{n-1-k}\ket{\psi}\right)}},
    \end{equation}
    which follows from the fact that
    \begin{equation}
    \label{eq:shift_coeff}
        (I_n-L_n^2)\sum_{k=0}^{n-1}\widetilde{\beta}_k\ket{n-1-k}
        =\sum_{k=0}^{n-1}\widetilde\beta_k\ket{n-1-k}
    -\sum_{k=2}^{n-1}\widetilde\beta_k\ket{n+1-k}
        =\sum_{k=0}^{n-1}(\widetilde\beta_k-\widetilde\beta_{k+2})\ket{n-1-k},
    \end{equation}
    under the convention $\widetilde\beta_{n}=\widetilde{\beta}_{n+1}=\cdots=0$.
    This is exactly the padded Chebyshev history state because of \eq{pad_output}.
    
    We have the norm bound on the inverse padded matrix
    \begin{equation}
        \norm{\mathbf{Pad}(A)^{-1}}=\mathbf{O}\left((\eta+1)n\alphaU\right),
    \end{equation}
    for any upper bound $\alphaU\geq\max_{j=0,1,\ldots,n-1}\norm{\mathbf{U}_{j}\left(\frac{A}{\alpha_A}\right)}$ on Chebyshev polynomials of the second kind, which follows from \lem{block_norm} and the matrix representation of $\mathbf{Pad}(A)^{-1}$ in \eq{pad_a_inv}.
    The claimed complexity now follows from \eq{opt_lin_cost}.
\end{proof}
\begin{remark}
    For the purpose of generality, we have expressed the complexity of our algorithm in terms of $\alphaU$, which shares a similar spirit with recent results on solving linear differential equations~\cite{Krovi2023improvedquantum}. This analysis can be further refined when the algorithm is applied in a concrete setting. For instance, if the input matrix has the Jordan form decomposition $A/\alpha_A=SJS^{-1}$ with upper bound $\kappa_S\geq\norm{S}\norm{S}^{-1}$ on the Jordan condition number and size $d_{\max}$ of the largest Jordan block, then it holds $\alphaU=\mathbf{O}\left(n^{d_{\max-1}}\kappa_S\right)$ provided that $\alpha_A\geq2\norm{A}$, which is always achievable per the rescaling technique of \eq{block_rescaling}. In particular, we have $\alphaU=\mathbf{O}\left(\kappa_S\right)$ for diagonalizable matrices and the complexity becomes
    \begin{equation}
        \mathbf{O}\left(\kappa_Sn(\eta+1)\log\left(\frac{1}{\epsilon}\right)\right).
    \end{equation}
    See \append{analysis_cheby_bernstein} for more details.

    The generation of Chebyshev history state relies on preparation of the shifted coefficients $\frac{1}{\alpha_{\widetilde\beta}}\sum_{k=0}^{n-1}(\widetilde\beta_k-\widetilde\beta_{k+2})\ket{n-1-k}$ with $\alpha_{\widetilde\beta}=\sqrt{\sum_{k=0}^{n-1}|\widetilde{\beta}_k-\widetilde{\beta}_{k+2}|^2}$. This is an $n$-dimensional quantum state, and can thus be prepared using the conventional approach~\cite{ShendeBullockMarkov06} with gate complexity $\mathbf{\Theta}(n)$. However, our $n$ typically scales polynomially with the input parameters, leading to a considerable overhead. Fortunately, this overhead is avoidable for both QEVE and QEVT. For the eigenvalue estimation, we set $\widetilde\beta_k=0$ if $k\neq n-1$, so this is actually a $2$-dimensional state which can be prepared with $\mathbf{O}(1)$ gates. For the eigenvalue transformation, we can treat $\widetilde\beta_k$ as the coefficients from a truncated Fourier expansion and coherently implement a convolution in the frequency domain, which has a cost of $\mathbf{O}\left(\polylog(n)\right)$. See \sec{est} and \sec{fourier} for further details.

    Finally, note that the query complexity of initial state preparation can be improved using the block preconditioning technique of~\cite{OptInit}.
\end{remark}

\section{Quantum eigenvalue estimation}
\label{sec:est}
With the Chebyshev history state at our disposal, we now present a solution to the quantum eigenvalue estimation problem. We start by introducing the \emph{centered modulus} together with its properties in \sec{est_cmod} which is useful for our algorithmic analysis. We develop a variant of phase estimation in \sec{est_cheby_phase} to extract the phase information from a Chebyshev state, establishing \thm{chebyshev_qpe}. This essentially solves the eigenvalue estimation problem when the input state is an exact eigenstate. We then consider the general case of imperfect eigenstate in \sec{est_imperfect}. Finally, we state and analyze the quantum eigenvalue estimation algorithm in \sec{est_summary}, obtaining \thm{qeve}.

\subsection{Centered modulus and its properties}
\label{sec:est_cmod}

In our analysis of the eigenvalue estimation algorithm, we will make extensive use of the \emph{centered modulus} which we now introduce. Fixing a positive number $q>0$, every real number $x\in\mathbb{R}$ can be uniquely written as
\begin{equation}
    x=kq+r,\qquad
    k\in\mathbb Z,\
    -\frac{q}{2}\leq r<\frac{q}{2}.
\end{equation}
Indeed, the fact that such an expansion exists follows by taking
\begin{equation}
    k=\mathbf{Floor}\left(\frac{x+\frac{q}{2}}{q}\right),\qquad
    r=x-q\mathbf{Floor}\left(\frac{x+\frac{q}{2}}{q}\right)
\end{equation}
with $\mathbf{Floor}(\cdot)$ the largest integer not exceeding the input.
As for the uniqueness, assume that $k_1q+r_1=k_2q+r_2$, which implies $(k_1-k_2)q=r_2-r_1$. Now the requirement $-q<r_2-r_1=(k_1-k_2)q<q$ forces that $k_1=k_2$ and $r_1=r_2$. We can thus well define
\begin{equation}
    \mathbf{CMod}_{q}(x)=r=x-q\mathbf{Floor}\left(\frac{x+\frac{q}{2}}{q}\right)\in\left[-\frac{q}{2},\frac{q}{2}\right)
\end{equation}
as the \emph{centered modulus} of $x$ modulo $q$, which is basically $x$ modulo $q$ with offset $-\frac{q}{2}$.

In the following, we include a list of properties of the centered modulus which are useful for our analysis.

\begin{lemma}[Properties of centered modulus]
The following properties hold for the centered modulus:
\begin{enumerate}
    \item Periodicity: $\mathbf{CMod}_q(x+lq)=\mathbf{CMod}_q(x)$ for $l\in\mathbb Z$.
    \item Boundedness: $\abs{\mathbf{CMod}_q(x)}\leq\abs{x}$.
    \item Negation: $\abs{\mathbf{CMod}_q(-x)}=\abs{\mathbf{CMod}_q(x)}$.
    \item Positive scaling: $\mathbf{CMod}_q(cx)=c\mathbf{CMod}_{q/c}(x)$ for $c>0$.
    \item Triangle inequality: $\abs{\mathbf{CMod}_q(x+y)}\leq\abs{\mathbf{CMod}_q(x)}+\abs{\mathbf{CMod}_q(y)}$.
\end{enumerate}
\end{lemma}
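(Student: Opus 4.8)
The plan is to derive all five properties from two basic facts already in hand: the explicit formula $\mathbf{CMod}_q(x)=x-q\,\mathbf{Floor}\!\left(\frac{x+q/2}{q}\right)$, and the characterization (established just above the lemma) that $\mathbf{CMod}_q(x)$ is the \emph{unique} $r\in[-q/2,q/2)$ with $x-r\in q\mathbb{Z}$. Property 1 (periodicity) is immediate from the formula, since $\mathbf{Floor}\!\left(\frac{x+lq+q/2}{q}\right)=\mathbf{Floor}\!\left(\frac{x+q/2}{q}\right)+l$ for integer $l$, so the shift by $lq$ cancels. Property 2 (boundedness) splits into two cases: if $\abs{x}<q/2$ then $x\in[-q/2,q/2)$, so by uniqueness $\mathbf{CMod}_q(x)=x$ and equality holds; if $\abs{x}\ge q/2$ then $\abs{\mathbf{CMod}_q(x)}<q/2\le\abs{x}$ because the value always lies in $[-q/2,q/2)$.

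For Property 3 (negation) I write $x=kq+r$ with $r=\mathbf{CMod}_q(x)$, so $-x=-kq-r$, and apply the uniqueness characterization to $-x$. When $r\in(-q/2,q/2)$ we have $-r\in[-q/2,q/2)$, hence $\mathbf{CMod}_q(-x)=-r$; in the lone boundary case $r=-q/2$, rewriting $-x=-(k-1)q-q/2$ shows $\mathbf{CMod}_q(-x)=-q/2=r$. In both cases the absolute values agree. For Property 4 (positive scaling) I substitute directly into the formula: using $\frac{x+q/(2c)}{q/c}=\frac{cx+q/2}{q}$, one gets $c\,\mathbf{CMod}_{q/c}(x)=cx-q\,\mathbf{Floor}\!\left(\frac{cx+q/2}{q}\right)=\mathbf{CMod}_q(cx)$; equivalently, if $r\in[-q/(2c),q/(2c))$ realizes $\mathbf{CMod}_{q/c}(x)$ then $cr\in[-q/2,q/2)$ realizes $\mathbf{CMod}_q(cx)$ by uniqueness.

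Finally, Property 5 (triangle inequality) combines periodicity, boundedness, and the ordinary triangle inequality on reals: with $r_1=\mathbf{CMod}_q(x)$ and $r_2=\mathbf{CMod}_q(y)$ we have $x+y-(r_1+r_2)\in q\mathbb{Z}$, so Property 1 gives $\mathbf{CMod}_q(x+y)=\mathbf{CMod}_q(r_1+r_2)$, and then Property 2 yields $\abs{\mathbf{CMod}_q(x+y)}\le\abs{r_1+r_2}\le\abs{r_1}+\abs{r_2}$. I expect no real obstacle; the only place needing care is the half-open endpoint $-q/2$ in the negation step, where one must avoid asserting the identity $\mathbf{CMod}_q(-x)=-\mathbf{CMod}_q(x)$ (which fails precisely there) and claim only equality of absolute values.
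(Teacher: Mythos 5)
Your argument follows essentially the same route as the paper for all five properties: periodicity from the floor identity, boundedness by case split on $\abs{x}\gtrless q/2$, negation by passing to $-x=(-k)q+(-r)$ with separate treatment of the endpoint $r=-q/2$, positive scaling by rescaling $x=k(q/c)+r\Rightarrow cx=kq+cr$, and the triangle inequality by reducing $\mathbf{CMod}_q(x+y)$ to $\mathbf{CMod}_q(r_1+r_2)$ and then applying boundedness. One cosmetic slip: in the boundedness step you write $\abs{\mathbf{CMod}_q(x)}<q/2$, but since the range is the half-open interval $[-q/2,q/2)$ the value $-q/2$ is attained, so only $\abs{\mathbf{CMod}_q(x)}\le q/2$ is guaranteed; the intended conclusion $\abs{\mathbf{CMod}_q(x)}\le\abs{x}$ survives unchanged since you assumed $\abs{x}\ge q/2$.
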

\begin{proof}
Periodicity directly follows from definition of the centered modulus.
To see the second property, note that $\abs{\mathbf{CMod}_q(x)}\leq\frac{q}{2}$ always holds true. So if $\abs{x}\geq\frac{q}{2}$, there is nothing to prove. But the remaining case $\abs{x}<\frac{q}{2}$ means that $\mathbf{CMod}_q(x)=x$, so the claimed bound holds in both cases. 
For the third property, note that
\begin{equation}
    x=kq+r,\qquad
    k\in\mathbb Z,\
    -\frac{q}{2}\leq r<\frac{q}{2}
    \quad\Rightarrow\quad
    -x=(-k)q+(-r),\qquad
    -k\in\mathbb Z,\
    -\frac{q}{2}< -r\leq\frac{q}{2}.
\end{equation}
Thus if $x\neq(k-\frac{1}{2})q$, we have $\mathbf{CMod}_q(-x)=-\mathbf{CMod}_q(x)$. But on the other hand if $x=(k-\frac{1}{2})q$, we have $\mathbf{CMod}_q(-x)=-\frac{q}{2}=\mathbf{CMod}_q(x)$, so the claimed equality holds in both cases.
The positive scaling property follows from the observation that
\begin{equation}
    x=k\frac{q}{c}+r,\qquad
    k\in\mathbb Z,\
    -\frac{q}{2c}\leq r<\frac{q}{2c}
    \quad\Rightarrow\quad
    cx=kq+cr,\qquad
    k\in\mathbb Z,\
    -\frac{q}{2}\leq cr<\frac{q}{2}.
\end{equation}
To prove the triangle inequality, assume that
\begin{equation}
    x=jq+r,\
    y=kq+s,\qquad
    j,k\in\mathbb Z,\
    -\frac{q}{2}\leq r,s<\frac{q}{2}.
\end{equation}
Then, we have
\begin{equation}
    \abs{\mathbf{CMod}_q(x+y)}
    =\abs{\mathbf{CMod}_q(r+s)}
    \leq\abs{r+s}
    \leq\abs{r}+\abs{s}
    =\abs{\mathbf{CMod}_q(x)}+\abs{\mathbf{CMod}_q(y)},
\end{equation}
which completes the proof.
\end{proof}

Although absolute value of the centered modulus is not a valid norm due to the potential change of modulus in the rescaling, it does induce a natural distance metric $\abs{\mathbf{CMod}_q(x-y)}$ for the interval $\left[-\frac{q}{2},\frac{q}{2}\right)$ under the periodic boundary condition. In particular, it satisfies the following defining properties of a distance metric:
\begin{enumerate}
    \item Nonnegativity: $\abs{\mathbf{CMod}_q(x-y)}\geq0$.
    \item Positivity: $\abs{\mathbf{CMod}_q(x-y)}=0$ if and only if $\mathbf{CMod}_q(x)=\mathbf{CMod}_q(y)$.
    \item Symmetry: $\abs{\mathbf{CMod}_q(x-y)}=\abs{\mathbf{CMod}_q(y-x)}$.
    \item Triangle inequality: $\abs{\mathbf{CMod}_q(x-y)}\leq\abs{\mathbf{CMod}_q(x-z)}+\abs{\mathbf{CMod}_q(z-y)}$.
\end{enumerate}

Centered modulus can be used to significantly simplify reasonings about trigonometric functions. In particular, we will use the following two inequalities in analyzing the eigenvalue estimation algorithm:
\begin{lemma}[Trigonometric bounds with centered modulus]
The following bounds on trigonometric functions hold:
\begin{enumerate}
    \item $\abs{\sin(x)}\geq\frac{2}{\pi}\abs{\mathbf{CMod}_{\pi}(x)}$.
    \item $\abs{\cos(x)-\cos(y)}\leq\abs{\mathbf{CMod}_{2\pi}(x-y)}$.
\end{enumerate}
\end{lemma}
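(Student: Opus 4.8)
The plan is to reduce both inequalities to two elementary estimates for the sine function on the interval $[-\pi/2,\pi/2]$ --- the Jordan-type lower bound $|\sin x|\geq\frac{2}{\pi}|x|$ and the standard upper bound $|\sin x|\leq|x|$ --- and then transport these to arbitrary arguments using the periodicity and positive-scaling properties of the centered modulus recorded in the preceding lemma.

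For the first inequality, I would first observe that both $|\sin(x)|$ and $|\mathbf{CMod}_{\pi}(x)|$ are $\pi$-periodic: the former because $\sin(x+\pi)=-\sin(x)$, the latter by the periodicity property of the centered modulus. It therefore suffices to verify the bound for $x\in[-\pi/2,\pi/2)$, where $\mathbf{CMod}_{\pi}(x)=x$, so the claim becomes $|\sin x|\geq\frac{2}{\pi}|x|$ on that interval. Since $\sin$ is odd and $|\cdot|$ even, this reduces further to $\sin x\geq\frac{2}{\pi}x$ for $x\in[0,\pi/2]$, which follows from concavity of $\sin$ on $[0,\pi/2]$ (indeed on all of $[0,\pi]$, where $\sin''=-\sin\leq0$): the graph of $\sin$ lies above the chord joining $(0,0)$ and $(\pi/2,1)$, and that chord is precisely the line $y=\frac{2}{\pi}x$.

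For the second inequality, I would apply the product-to-sum identity $\cos x-\cos y=-2\sin\left(\frac{x+y}{2}\right)\sin\left(\frac{x-y}{2}\right)$ and bound $\left|\sin\left(\frac{x+y}{2}\right)\right|\leq1$, giving $|\cos x-\cos y|\leq2\left|\sin\left(\frac{x-y}{2}\right)\right|$. Setting $u=\frac{x-y}{2}$, the positive-scaling property yields $\mathbf{CMod}_{2\pi}(x-y)=\mathbf{CMod}_{2\pi}(2u)=2\,\mathbf{CMod}_{\pi}(u)$, so it remains to show $|\sin u|\leq|\mathbf{CMod}_{\pi}(u)|$. As before both sides are $\pi$-periodic, so I may assume $u\in[-\pi/2,\pi/2)$, where $\mathbf{CMod}_{\pi}(u)=u$; then $|\sin u|\leq|u|$ is the familiar bound, obtainable for instance from $\sin u=\int_{0}^{u}\cos t\,\mathrm{d}t$ together with $|\cos t|\leq1$.

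Every step here is routine; the one point I would take care to present cleanly is the order in which periodicity and the scaling identity $\mathbf{CMod}_{2\pi}(2u)=2\,\mathbf{CMod}_{\pi}(u)$ are applied, so that the final reduction lands exactly on the fundamental domain $[-\pi/2,\pi/2)$ where the elementary sine inequalities are available. I do not expect a genuine obstacle.
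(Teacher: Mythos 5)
Your argument is correct throughout. For the first inequality your reduction and the paper's are essentially identical: write $x=k\pi+r$ with $r=\mathbf{CMod}_\pi(x)\in[-\pi/2,\pi/2)$, use $\pi$-periodicity of $|\sin|$ to pass to $|\sin r|$, and invoke the Jordan bound $|\sin r|\geq\frac{2}{\pi}|r|$ (the paper states it directly; you derive it from concavity, which is a perfectly standard justification). For the second inequality, however, you take a genuinely different route. The paper passes through $\cos x-\cos y=\Re(e^{ix}-e^{iy})$, bounds this by the chord length $|e^{ix}-e^{iy}|=|e^{i(x-y)}-1|$, replaces $x-y$ by its centered modulus using $2\pi$-periodicity, and then estimates the chord by the arc via $\bigl|\int_0^{\mathbf{CMod}_{2\pi}(x-y)}ie^{i\theta}\,\mathrm{d}\theta\bigr|\leq|\mathbf{CMod}_{2\pi}(x-y)|$. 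You instead apply the product-to-sum identity $\cos x-\cos y=-2\sin\bigl(\tfrac{x+y}{2}\bigr)\sin\bigl(\tfrac{x-y}{2}\bigr)$, drop the first factor, use the scaling identity $\mathbf{CMod}_{2\pi}(2u)=2\,\mathbf{CMod}_\pi(u)$ from the preceding lemma, and finish with $|\sin u|\leq|u|$ on the fundamental domain. Both are elementary and correct; the paper's version stays within the complex-exponential/unit-circle picture (which is thematically consistent with the Fourier-state phase estimation that follows), while yours stays in real trigonometry and leans a bit more explicitly on the scaling property of $\mathbf{CMod}$ that the paper established just above. No gap, and the proof structure you outline would compile cleanly into the paper's framework.
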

\begin{proof}
To prove the first inequality, we assume that $x=k\pi+r$ for $k\in\mathbb Z$ and $-\frac{\pi}{2}\leq r<\frac{\pi}{2}$.
Then,
\begin{equation}
    \abs{\sin{x}}
    =\abs{\sin(k\pi+r)}
    =\abs{\sin(r)}
    \geq\frac{2}{\pi}\abs{r}
    =\frac{2}{\pi}\abs{\mathbf{CMod}_{\pi}(x)}.
\end{equation}
The second inequality follows from the fact that
\begin{equation}
\begin{aligned}
    \abs{\cos(x)-\cos(y)}
    &=\abs{\Re(e^{ix}-e^{iy})}
    \leq\abs{e^{ix}-e^{iy}}
    =\abs{e^{i(x-y)}-1}\\
    &=\abs{e^{i\mathbf{CMod}_{2\pi}(x-y)}-1}
    =\abs{\int_{0}^{\mathbf{CMod}_{2\pi}(x-y)}\mathrm{d}\theta\
    ie^{i\theta}}
    \leq\abs{\mathbf{CMod}_{2\pi}(x-y)},
\end{aligned}
\end{equation}
where $\Re$ is the real part of a complex number.
\end{proof}

\subsection{Chebyshev state phase estimation}
\label{sec:est_cheby_phase}

Recall that in the eigenvalue estimation problem, we are given a matrix $A$ with only real eigenvalues, which has an eigenstate $\ket{\psi_\lambda}$ corresponding to the eigenvalue $\lambda$, such that $A\ket{\psi_\lambda}=\lambda\ket{\psi_\lambda}$. Our goal is to estimate $\lambda$ assuming that the input state $\ket{\psi}$ is sufficiently close to $\ket{\psi_\lambda}$.

We start by preparing a Chebyshev state encoding the target eigenvalue in the phase of the coefficients. Specifically, we invoke \thm{generate_history} with $\eta=0$ and
\begin{equation}
\label{eq:coeff_est}
    \widetilde{\beta}_k=
    \begin{cases}
        1,\quad&k=n-1,\\
        0,\quad&k\neq n-1.
    \end{cases}
\end{equation}
This allows us to prepare the following version of the Chebyshev history state
    \begin{equation}
        \frac{\sum_{l=0}^{n-1}\ket{l}\widetilde{\mathbf{T}}_l\left(\frac{A}{\alpha_A}\right)\ket{\psi}}{\norm{\sum_{l=0}^{n-1}\ket{l}\widetilde{\mathbf{T}}_l\left(\frac{A}{\alpha_A}\right)\ket{\psi}}}
    \end{equation}
with accuracy $\delta$ using 
\begin{equation}
    \mathbf{O}\left(\alphaU n\log\left(\frac{1}{\delta}\right)\right)
\end{equation}
queries to the oracle $O_A$ that block encodes $A/\alpha_A$ and the oracle $O_\psi$ that prepares the initial state $\ket{\psi}$. The complexity reduces to $\mathbf{O}\left(\kappa_Sn\log\left(\frac{1}{\delta}\right)\right)$ for diagonalizable matrices with upper bound $\kappa_S$ on the Jordan condition number.

The state we prepare is close to
\begin{equation}
    \frac{\sum_{j=0}^{n-1}\mathbf{T}_j\left(\frac{\lambda}{\alpha_A}\right)\ket{j}}{\norm{\sum_{j=0}^{n-1}\mathbf{T}_j\left(\frac{\lambda}{\alpha_A}\right)\ket{j}}}
    =\frac{1}{\sqrt{\alpha_\phi}}\sum_{j=0}^{n-1}\cos(2\pi j\phi)\ket{j}
    =\frac{1}{\sqrt{\alpha_\phi}}\sum_{j=0}^{n-1}\frac{e^{i2\pi j\phi}+e^{-i2\pi j\phi}}{2}\ket{j},
\end{equation}
where $\phi=\frac{1}{2\pi}\arccos{\frac{\lambda}{\alpha_A}}$ and $\alpha_\phi$ is the corresponding normalization factor. In light of the rescaling trick in \eq{block_rescaling}, we may assume $\alpha_A\geq2\norm{A}$ without loss of generality, which implies $\frac{\lambda}{\alpha_A}\in\left[-\frac{1}{2},\frac{1}{2}\right]$ and
\begin{equation}
    \phi\in\left[\frac{1}{6},\frac{1}{3}\right].
\end{equation}
In the following, we develop a variant of the quantum phase estimation algorithm for estimating the unknown $\phi$ given a Chebyshev state as above.
For presentational purpose, we will first assume that the input state is exactly the eigenstate and the Chebyshev history state can be prepared perfectly. The general case will be handled in the next subsection.

Let us first analyze the normalization factor $\alpha_\phi$. We have
\begin{equation}
    \alpha_\phi=\sum_{j=0}^{n-1}\cos^2(2\pi j\phi)
    =\sum_{j=0}^{n-1}\frac{\cos(4\pi j\phi)+1}{2}
    =\frac{n}{2}+\frac{1}{4}\sum_{j=0}^{n-1}\left(e^{i4\pi j\phi}+e^{-i4\pi j\phi}\right).
\end{equation}
Since $e^{i4\pi\phi}\neq1$ by our assumption,
\begin{equation}
\begin{aligned}
    \alpha_\phi&=\frac{n}{2}+\frac{1}{4}\left(\frac{1-e^{i4\pi n\phi}}{1-e^{i4\pi\phi}}+\frac{1-e^{-i4\pi n\phi}}{1-e^{-i4\pi\phi}}\right)\\
    &=\frac{n}{2}+\frac{1}{4}\left(\frac{e^{-i2\pi n\phi}-e^{i2\pi n\phi}}{e^{-i2\pi\phi}-e^{i2\pi\phi}}e^{i2\pi(n-1)\phi}+\frac{e^{i2\pi n\phi}-e^{-i2\pi n\phi}}{e^{i2\pi\phi}-e^{-i2\pi\phi}}e^{-i2\pi(n-1)\phi}\right)\\
    &=\frac{n}{2}+\frac{1}{2}\frac{\sin(2\pi n\phi)}{\sin(2\pi\phi)}\cos((n-1)2\pi\phi).\\
\end{aligned}
\end{equation}
Note that because $\phi\in\left[\frac{1}{6},\frac{1}{3}\right]$, we have $\frac{\sqrt{3}}{2}\leq\sin(2\pi\phi)\leq1$, which implies $\abs{\alpha_\phi-\frac{n}{2}}\leq\frac{1}{2\frac{\sqrt{3}}{2}}
    =\frac{\sqrt{3}}{3}$ and
\begin{equation}
    \alpha_\phi\in\left[\frac{n}{2}-\frac{\sqrt{3}}{3},\frac{n}{2}+\frac{\sqrt{3}}{3}\right].
\end{equation}
We formulate this observation in terms of properties of Chebyshev polynomials as follows.
\begin{lemma}[$\ell_2$-norm bounds for Chebyshev polynomials]
\label{lem:cheby_l2}
    For any $x\in\left[-\frac{1}{2},\frac{1}{2}\right]$, it holds that
    \begin{equation}
    \begin{aligned}
        \frac{n}{2}-\frac{\sqrt{3}}{3}
        &\leq\sum_{j=0}^{n-1}\mathbf{T}_j^2(x)
        \leq\frac{n}{2}+\frac{\sqrt{3}}{3},\\
        \frac{n}{2}-\frac{\sqrt{3}}{3}-\frac{3}{4}
        &\leq\sum_{j=0}^{n-1}\widetilde{\mathbf{T}}_j^2(x)
        \leq\frac{n}{2}+\frac{\sqrt{3}}{3}-\frac{3}{4}.
    \end{aligned}
    \end{equation}
\end{lemma}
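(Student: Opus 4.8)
The plan is to pass to the trigonometric parametrization that already appears in the computation of $\alpha_\phi$ preceding the lemma. Writing an arbitrary $x\in\left[-\frac12,\frac12\right]$ as $x=\cos\theta$ with $\theta\in\left[\frac{\pi}{3},\frac{2\pi}{3}\right]$, I would first invoke $\mathbf{T}_j(\cos\theta)=\cos(j\theta)$ and the power-reduction identity $\cos^2 u=\tfrac12(1+\cos 2u)$ to get $\sum_{j=0}^{n-1}\mathbf{T}_j^2(x)=\tfrac{n}{2}+\tfrac12\sum_{j=0}^{n-1}\cos(2j\theta)$. The remaining sum is $\Re\sum_{j=0}^{n-1}e^{2ij\theta}$, and since $2\theta\in\left[\frac{2\pi}{3},\frac{4\pi}{3}\right]$ we have $e^{2i\theta}\neq 1$, so the geometric series evaluates to $e^{i(n-1)\theta}\sin(n\theta)/\sin\theta$; taking real parts yields $\sum_{j=0}^{n-1}\mathbf{T}_j^2(x)=\tfrac{n}{2}+\tfrac12\cos((n-1)\theta)\,\frac{\sin(n\theta)}{\sin\theta}$, which is exactly the closed form already derived for $\alpha_\phi$, rewritten in the variable $\theta$.

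The bound then comes for free: on $\theta\in\left[\frac{\pi}{3},\frac{2\pi}{3}\right]$ one has $\sin\theta\geq\sin\frac{\pi}{3}=\frac{\sqrt3}{2}$, while $\abs{\cos((n-1)\theta)}\leq1$ and $\abs{\sin(n\theta)}\leq1$, so the correction term has absolute value at most $\tfrac12\cdot\frac{2}{\sqrt3}=\frac{\sqrt3}{3}$, giving the first pair of inequalities. For the rescaled version I would use $\widetilde{\mathbf{T}}_0=\tfrac12\mathbf{T}_0=\tfrac12$ and $\widetilde{\mathbf{T}}_j=\mathbf{T}_j$ for $j\geq1$, so $\sum_{j=0}^{n-1}\widetilde{\mathbf{T}}_j^2(x)=\sum_{j=0}^{n-1}\mathbf{T}_j^2(x)-\mathbf{T}_0^2(x)+\tfrac14=\sum_{j=0}^{n-1}\mathbf{T}_j^2(x)-\tfrac34$; subtracting $\tfrac34$ from the first pair of inequalities produces the second pair.

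There is essentially no obstacle in this argument; the only point needing a moment's care is verifying that the parametrization range $\theta\in\left[\frac{\pi}{3},\frac{2\pi}{3}\right]$ keeps $\sin\theta$ uniformly bounded away from $0$ (so that both the geometric-sum manipulation and the final division are legitimate), together with noting that the endpoint cases $x=\pm\frac12$, i.e.\ $\theta\in\left\{\frac{\pi}{3},\frac{2\pi}{3}\right\}$, where $\sin\theta=\frac{\sqrt3}{2}$ is attained, are covered by the same estimate with no change.
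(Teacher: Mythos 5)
Your proof is correct and is essentially the same computation as the paper's, differing only in notation: the paper works with $\phi=\frac{1}{2\pi}\arccos(x)$ (so $\mathbf{T}_j^2(x)=\cos^2(2\pi j\phi)$) and arrives at the identical closed form $\frac{n}{2}+\frac{1}{2}\frac{\sin(2\pi n\phi)}{\sin(2\pi\phi)}\cos((n-1)2\pi\phi)$ before applying the same bound $\sin(2\pi\phi)\geq\frac{\sqrt{3}}{2}$. Your $\theta=2\pi\phi$ substitution, the double-angle expansion, the geometric-sum evaluation, and the passage from $\mathbf{T}_j$ to $\widetilde{\mathbf{T}}_j$ by subtracting $\frac{3}{4}$ all match the paper's argument line for line.
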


Applying the quantum Fourier transform $F\ket{j}=\frac{1}{\sqrt{n}}\sum_{l=0}^{n-1}e^{-i\frac{2\pi jl}{n}}\ket{l}$, we obtain
\begin{equation}
\begin{aligned}
    F\frac{1}{\sqrt{\alpha_\phi}}\sum_{j=0}^{n-1}\cos(2\pi j\phi)\ket{j}
    &=\frac{1}{\sqrt{\alpha_\phi}}\sum_{j=0}^{n-1}\frac{e^{i2\pi j\phi}+e^{-i2\pi j\phi}}{2}
    \frac{1}{\sqrt{n}}\sum_{l=0}^{n-1}e^{-i\frac{2\pi jl}{n}}\ket{l}\\
    &=\frac{1}{2\sqrt{\alpha_\phi n}}\sum_{l=0}^{n-1}\sum_{j=0}^{n-1}\left(e^{i2\pi j(\phi-\frac{l}{n})}+e^{-i2\pi j(\phi+\frac{l}{n})}\right)\ket{l}.
\end{aligned}
\end{equation}
Because $\phi\in\left[\frac{1}{6},\frac{1}{3}\right]$ and $\frac{l}{n}\in[0,1)$, the first summation is degenerate when $\phi=\frac{l}{n}$, whereas the second summation is degenerate when $\phi=\frac{n-l}{n}$.
For the case where $\phi=\frac{l}{n}$,
\begin{equation}
    F\frac{1}{\sqrt{\alpha_\phi}}\sum_{j=0}^{n-1}\cos(2\pi j\phi)\ket{j}
    =\frac{1}{2\sqrt{\alpha_\phi n}}\left(n\ket{l}+n\ket{n-l}\right).
\end{equation}
Here, the normalization factor is
\begin{equation}
    \alpha_\phi=\frac{n}{2}+\frac{1}{2}\frac{\sin\left(2\pi n\frac{l}{n}\right)}{\sin\left(2\pi\frac{l}{n}\right)}\cos\left((n-1)2\frac{l}{n}\phi\right)
    =\frac{n}{2}.
\end{equation}
So we actually obtain the state
\begin{equation}
    \frac{1}{\sqrt{2}}\left(\ket{l}+\ket{n-l}\right),
\end{equation}
from which the phase/eigenvalue can be recovered deterministically. A similar analysis applies to $\phi=\frac{n-l}{n}$.

Assuming this is not the case hereafter, we have
\begin{equation}
\begin{aligned}
    F\frac{1}{\sqrt{\alpha_\phi}}\sum_{j=0}^{n-1}\cos(2\pi j\phi)\ket{j}
    &=\frac{1}{2\sqrt{\alpha_\phi n}}\sum_{l=0}^{n-1}\sum_{j=0}^{n-1}\left(e^{i2\pi j(\phi-\frac{l}{n})}+e^{-i2\pi j(\phi+\frac{l}{n})}\right)\ket{l}\\
    &=\frac{1}{2\sqrt{\alpha_\phi n}}\sum_{l=0}^{n-1}\left(\frac{1-e^{i2\pi(n\phi-l)}}{1-e^{i2\pi(\phi-\frac{l}{n})}}+\frac{1-e^{-i2\pi(n\phi+l)}}{1-e^{-i2\pi(\phi+\frac{l}{n})}}\right)\ket{l}\\
    &=\frac{1}{2\sqrt{\alpha_\phi n}}\sum_{l=0}^{n-1}\bigg(\frac{\sin\pi(n\phi-l)}{\sin\pi(\phi-\frac{l}{n})}e^{i\pi(n-1)(\phi-\frac{l}{n})}\\
    &\qquad\qquad\qquad\quad+\frac{\sin\pi(n\phi+l)}{\sin\pi(\phi+\frac{l}{n})}e^{-i\pi(n-1)(\phi+\frac{l}{n})}\bigg)\ket{l}.\\
\end{aligned}
\end{equation}
In other words,
\begin{equation}
    F\frac{1}{\sqrt{\alpha_\phi}}\sum_{j=0}^{n-1}\cos(2\pi j\phi)\ket{j}
    =\frac{1}{2\sqrt{\alpha_\phi n}}\sum_{l=0}^{n-1}\alpha_l\ket{l},
\end{equation}
where
\begin{equation}
    \abs{\alpha_l}^2
    \leq2\left(\frac{\sin^2\pi(n\phi-l)}{\sin^2\pi(\phi-\frac{l}{n})}+\frac{\sin^2\pi(n\phi+l)}{\sin^2\pi(\phi+\frac{l}{n})}\right)
    \leq2\bigg(\underbrace{\frac{1}{\sin^2\pi(\phi-\frac{l}{n})}}_{\alpha_{1,l}^2}+\underbrace{\frac{1}{\sin^2\pi(\phi+\frac{l}{n})}}_{\alpha_{2,l}^2}\bigg).
\end{equation}

Let us see that we can approximately recover the eigenvalue if $l$ is close to $\pm\mathbf{Floor}(n\phi)$ in centered modulus. Indeed,
\begin{equation}
\begin{aligned}
    &\quad \abs{\mathbf{CMod}_{n}(l\pm\mathbf{Floor}(n\phi))}\leq n_1-1
    \quad \Leftrightarrow\quad \frac{1}{n}\abs{\mathbf{CMod}_{n}(l\pm\mathbf{Floor}(n\phi))}\leq\frac{n_1-1}{n}\\
    \Leftrightarrow&\quad \abs{\mathbf{CMod}_{1}\left(\frac{l\pm\mathbf{Floor}(n\phi)}{n}\right)}\leq\frac{n_1-1}{n}
    \quad \Rightarrow\quad \abs{\mathbf{CMod}_{1}\left(\frac{l}{n}\pm\phi\right)}\leq\frac{n_1}{n}\\
    \Rightarrow&\quad \abs{\cos\left(2\pi\frac{l}{n}\right)-\cos(2\pi\phi)}
    \leq\abs{\mathbf{CMod}_{2\pi}\left(2\pi\frac{l}{n}\pm2\pi\phi\right)}
    \leq2\pi\abs{\mathbf{CMod}_{1}\left(\frac{l}{n}\pm\phi\right)}
    \leq\frac{2\pi n_1}{n}\\
    \Rightarrow&\quad \abs{\alpha_A\cos\left(2\pi\frac{l}{n}\right)-\lambda}\leq\frac{2\pi\alpha_A n_1}{n},
\end{aligned}
\end{equation}
where in the second line we have used the fact that
\begin{equation}
    \abs{\mathbf{CMod}_1\left(\frac{\pm n\phi\mp\mathbf{Floor}(n\phi)}{n}\right)}
    \leq\abs{\frac{n\phi-\mathbf{Floor}(n\phi)}{n}}\leq\frac{1}{n}.
\end{equation}
We now set 
\begin{equation}
    n=n_0n_1.
\end{equation}
To ensure that the eigenvalue is estimated to accuracy $\epsilon$, it suffices to take
\begin{equation}
    \frac{2\pi\alpha_A}{n_0}\leq\epsilon\quad\Rightarrow\quad
    n_0\geq\frac{2\pi\alpha_A}{\epsilon}.
\end{equation}

In the following, we will derive tail bounds for $\alpha_{1,l}^2$ and $\alpha_{2,l}^2$, which we then use to analyze the success probability of the algorithm. To this end, we consider those $l$ whose centered modulus to $\pm\mathbf{Floor}(n\phi)$ is larger than some threshold value $n_1-1$. We have
\begin{equation}
\begin{aligned}
    &\ \frac{1}{2\alpha_\phi n}\sum_{\abs{\mathbf{CMod}_{n}(l-\mathbf{Floor}(n\phi))}>n_1-1}\alpha_{1,l}^2\\
    =&\ \frac{1}{2\alpha_\phi n}\sum_{\abs{\mathbf{CMod}_{n}(l-\mathbf{Floor}(n\phi))}>n_1-1}\frac{1}{\sin^2\pi(\phi-\frac{l}{n})}\\
    \leq&\ \frac{1}{2\alpha_\phi n}\sum_{\abs{\mathbf{CMod}_{n}(l-\mathbf{Floor}(n\phi))}>n_1-1}\frac{1}{\frac{4}{\pi^2}\abs{\mathbf{CMod}_\pi(\pi(\phi-\frac{l}{n}))}^2}\\
    =&\ \frac{1}{8\alpha_\phi n}\sum_{\abs{\mathbf{CMod}_{n}(l-\mathbf{Floor}(n\phi))}>n_1-1}\frac{1}{\abs{\mathbf{CMod}_1(\phi-\frac{l}{n})}^2}\\
    \leq&\ \frac{1}{8\alpha_\phi n}\sum_{\abs{\mathbf{CMod}_{n}(l-\mathbf{Floor}(n\phi))}>n_1-1}\frac{1}{\left(
    \abs{\mathbf{CMod}_1\left(\frac{\mathbf{Floor}(n\phi)-l}{n}\right)}
    -\abs{\mathbf{CMod}_1\left(\frac{n\phi-\mathbf{Floor}(n\phi)}{n}\right)}
    \right)^2}.
\end{aligned}
\end{equation}
Since
\begin{equation}
\begin{aligned}
    \abs{\mathbf{CMod}_1\left(\frac{n\phi-\mathbf{Floor}(n\phi)}{n}\right)}
    &\leq\abs{\frac{n\phi-\mathbf{Floor}(n\phi)}{n}}
    \leq\frac{1}{n},\\
    \abs{\mathbf{CMod}_1\left(\frac{\mathbf{Floor}(n\phi)-l}{n}\right)}
    &=\frac{1}{n}\abs{\mathbf{CMod}_n\left(\mathbf{Floor}(n\phi)-l\right)},
\end{aligned}
\end{equation}
this implies that
\begin{equation}
\begin{aligned}
    &\ \frac{1}{2\alpha_\phi n}\sum_{\abs{\mathbf{CMod}_{n}(l-\mathbf{Floor}(n\phi))}>n_1-1}\alpha_{1,l}^2\\
    \leq&\ \frac{n}{8\alpha_\phi}\sum_{\abs{\mathbf{CMod}_{n}(l-\mathbf{Floor}(n\phi))}>n_1-1}\frac{1}{\left(\abs{\mathbf{CMod}_n\left(\mathbf{Floor}(n\phi)-l\right)}-1\right)^2}\\
    \leq&\ \frac{n}{4\alpha_\phi}\sum_{j=n_1-1}^{\infty}\frac{1}{j^2}
    \leq\frac{n}{4\alpha_\phi}\int_{n_1-2}^{\infty}\mathrm{d}x\ \frac{1}{x^2}
    =\frac{n}{4\alpha_\phi(n_1-2)}.
\end{aligned}
\end{equation}
Similarly,
\begin{equation}
    \frac{1}{2\alpha_\phi n}\sum_{\abs{\mathbf{CMod}_{n}(l+\mathbf{Floor}(n\phi))}>n_1-1}\alpha_{2,l}^2
    \leq\frac{n}{4\alpha_\phi(n_1-2)}.
\end{equation}

Now, our success probability can be lower bounded as
\begin{equation}
\begin{aligned}
    &\ \mathbf{P}\left(\abs{\alpha_A\cos\left(2\pi\frac{l}{n}\right)-\lambda}\leq\epsilon\right)
    \geq\mathbf{P}\left(\abs{\alpha_A\cos\left(2\pi\frac{l}{n}\right)-\lambda}\leq\frac{2\pi\alpha_An_1}{n}\right)\\
    \geq&\ \mathbf{P}\Big(\abs{\mathbf{CMod}_{n}(l-\mathbf{Floor}(n\phi))}\leq n_1-1\text{ OR }\abs{\mathbf{CMod}_{n}(l+\mathbf{Floor}(n\phi))}\leq n_1-1\Big)\\
    =&\ 1-\mathbf{P}\Big(\abs{\mathbf{CMod}_{n}(l-\mathbf{Floor}(n\phi))}>n_1-1\text{ AND }\abs{\mathbf{CMod}_{n}(l+\mathbf{Floor}(n\phi))}>n_1-1\Big)\\
    =&\ 1-\frac{1}{4\alpha_\phi n}\sum_{\substack{\abs{\mathbf{CMod}_{n}(l-\mathbf{Floor}(n\phi))}>n_1-1\\
    \text{AND}\\
    \abs{\mathbf{CMod}_{n}(l+\mathbf{Floor}(n\phi))}>n_1-1}}\abs{\alpha_{l}}^2
    \geq1-\frac{1}{2\alpha_\phi n}\sum_{\substack{\abs{\mathbf{CMod}_{n}(l-\mathbf{Floor}(n\phi))}>n_1-1\\
    \text{AND}\\
    \abs{\mathbf{CMod}_{n}(l+\mathbf{Floor}(n\phi))}>n_1-1}}\left(\alpha_{1,l}^2+\alpha_{2,l}^2\right)\\
    \geq&\ 1-\frac{1}{2\alpha_\phi n}\sum_{\substack{\abs{\mathbf{CMod}_{n}(l-\mathbf{Floor}(n\phi))}>n_1-1}}\alpha_{1,l}^2
    -\frac{1}{2\alpha_\phi n}\sum_{\substack{\abs{\mathbf{CMod}_{n}(l+\mathbf{Floor}(n\phi))}>n_1-1}}\alpha_{2,l}^2\\
    \geq&\ 1-\frac{n}{2\alpha_\phi(n_1-2)},
\end{aligned}
\end{equation}
where the failure probability is further upper bounded as
\begin{equation}
\begin{aligned}
    \frac{n}{2\alpha_\phi(n_1-2)}
    \leq\frac{n}{\left(n-\frac{2\sqrt{3}}{3}\right)(n_1-2)}
    =\frac{n_1}{\left(n_1-\frac{2\sqrt{3}}{3n_0}\right)(n_1-2)}
    \leq\frac{n_1}{\left(n_1-\frac{2\sqrt{3}}{3}\right)(n_1-2)}.\\
\end{aligned}
\end{equation}
This is $\leq0.433$ for $n_1\geq5$. In other words, we can choose $n_1$ sufficiently large to succeed with a probability strictly greater than $\frac{1}{2}$. The success probability can then be boosted to at least $1-\pfail$ using the median amplification by repeating the algorithm $\mathbf{O}\left(\log(1/\pfail)\right)$ times. We summarize the core idea of this analysis as follows:

\begin{theorem}[Chebyshev state phase estimation]
\label{thm:chebyshev_qpe}
    Given Chebyshev state $\frac{1}{\sqrt{\alpha_\phi}}\sum_{l=0}^{n-1}\cos\left(2\pi l\phi\right)\ket{l}$ with $\phi\in\left[\frac{1}{6},\frac{1}{3}\right]$, there exists a quantum algorithm that uses one copy of the state and outputs a value $l\in\{0,\ldots,n-1\}$ satisfying
    \begin{equation}
        \abs{\mathbf{CMod}_{1}\left(\frac{l}{n}\pm\phi\right)}\leq\frac{n_1}{n},
    \end{equation}
    with probability at least
    \begin{equation}
        1-\frac{n_1}{\left(n_1-\frac{2\sqrt{3}}{3}\right)(n_1-2)}.
    \end{equation}
    The algorithm performs a quantum Fourier transform followed by a measurement in the computational basis.
    For $n_1\geq5$, the success probability is at least $1-\frac{5}{15-2\sqrt{3}}\approx0.566$ strictly larger than $\frac{1}{2}$.
\end{theorem}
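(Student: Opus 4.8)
The plan is to analyze the algorithm that applies the quantum Fourier transform $F\ket{j}=\frac{1}{\sqrt n}\sum_{l=0}^{n-1}e^{-i2\pi jl/n}\ket{l}$ to the Chebyshev state and then measures in the computational basis, and to compute the probability that the outcome $l$ lands within $n_1-1$ of $\pm\mathbf{Floor}(n\phi)$ in the cyclic distance $\abs{\mathbf{CMod}_n(\cdot)}$ on $\{0,\ldots,n-1\}$. First I would split $\cos(2\pi j\phi)=\tfrac12(e^{i2\pi j\phi}+e^{-i2\pi j\phi})$, so that $F$ applied to the state produces, up to the prefactor $\frac{1}{2\sqrt{\alpha_\phi n}}$, the vector $\sum_l\alpha_l\ket{l}$ where $\alpha_l$ is a sum of two Dirichlet-type geometric sums $\frac{1-e^{i2\pi(n\phi-l)}}{1-e^{i2\pi(\phi-l/n)}}$ and $\frac{1-e^{-i2\pi(n\phi+l)}}{1-e^{-i2\pi(\phi+l/n)}}$, peaked respectively at $l\approx n\phi$ and $l\approx -n\phi\bmod n$. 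Using $\abs{1-e^{i\theta}}=\abs{2\sin(\theta/2)}$ and $(a+b)^2\le 2(a^2+b^2)$, this yields $\abs{\alpha_l}^2\le 2(\alpha_{1,l}^2+\alpha_{2,l}^2)$ with $\alpha_{1,l}^2=\sin^{-2}\pi(\phi-l/n)$ and $\alpha_{2,l}^2=\sin^{-2}\pi(\phi+l/n)$. The degenerate cases $\phi=l/n$ or $\phi=(n-l)/n$ are treated separately: there the post-QFT amplitude collapses onto two basis states and the outcome is exact.

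Next I would control the normalization factor $\alpha_\phi=\sum_{j=0}^{n-1}\cos^2(2\pi j\phi)$. Summing the geometric series and using that $\phi\in[1/6,1/3]$ forces $\sin(2\pi\phi)\ge\sqrt3/2$ gives $\alpha_\phi=\frac n2+\frac12\frac{\sin(2\pi n\phi)}{\sin(2\pi\phi)}\cos((n-1)2\pi\phi)$ and hence $\alpha_\phi\in[\frac n2-\frac{\sqrt3}{3},\frac n2+\frac{\sqrt3}{3}]$; this is exactly \lem{cheby_l2}. I would then record the accuracy-to-combinatorics reduction: if $\abs{\mathbf{CMod}_n(l\mp\mathbf{Floor}(n\phi))}\le n_1-1$, then dividing by $n$, using $\abs{\mathbf{CMod}_1((n\phi-\mathbf{Floor}(n\phi))/n)}\le 1/n$, and applying the triangle inequality for the $\mathbf{CMod}_1$ metric yields $\abs{\mathbf{CMod}_1(l/n\pm\phi)}\le n_1/n$, which is the desired conclusion. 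So it suffices to lower bound the probability that $l$ is within $n_1-1$ of $+\mathbf{Floor}(n\phi)$ or of $-\mathbf{Floor}(n\phi)$.

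The crux is the tail estimate. The failure event is contained in the set where $\abs{\mathbf{CMod}_n(l-\mathbf{Floor}(n\phi))}>n_1-1$ and simultaneously $\abs{\mathbf{CMod}_n(l+\mathbf{Floor}(n\phi))}>n_1-1$, whose probability is at most $\frac{1}{4\alpha_\phi n}\sum\abs{\alpha_l}^2\le\frac{1}{2\alpha_\phi n}\sum(\alpha_{1,l}^2+\alpha_{2,l}^2)$ over that set, which I would bound by the two one-sided tails. For the $\alpha_{1,l}$-tail I would use $\abs{\sin x}\ge\frac2\pi\abs{\mathbf{CMod}_\pi(x)}$ to get $\sin^{-2}\pi(\phi-l/n)\le\frac14\abs{\mathbf{CMod}_1(\phi-l/n)}^{-2}$, then peel off $\mathbf{Floor}(n\phi)$ via $\abs{\mathbf{CMod}_1((\mathbf{Floor}(n\phi)-l)/n)}=\frac1n\abs{\mathbf{CMod}_n(\mathbf{Floor}(n\phi)-l)}$ together with the $1/n$ error, arriving at $\frac{n}{8\alpha_\phi}\sum_{j\ge n_1-1}(j-1)^{-2}$, and finish by an integral comparison to get $\le\frac{n}{4\alpha_\phi(n_1-2)}$ (the extra factor $2$ accounts for $l$ on either side of $\mathbf{Floor}(n\phi)$). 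The $\alpha_{2,l}$-tail is symmetric, so the failure probability is at most $\frac{n}{2\alpha_\phi(n_1-2)}$, and inserting $\alpha_\phi\ge\frac n2-\frac{\sqrt3}{3}$ together with $n=n_0n_1$, $n_0\ge1$ gives $\frac{n}{2\alpha_\phi(n_1-2)}\le\frac{n_1}{(n_1-\frac{2\sqrt3}{3})(n_1-2)}$, the claimed bound. Finally, plugging $n_1=5$ (so $n_1-2=3$ and $(n_1-\frac{2\sqrt3}{3})\cdot3=15-2\sqrt3$) yields failure probability $\le\frac{5}{15-2\sqrt3}\approx0.433<\frac12$, i.e.\ success probability $\ge1-\frac{5}{15-2\sqrt3}\approx0.566$. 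I expect the most delicate step to be the bookkeeping in this tail estimate: correctly relating the $\mathbf{CMod}_1$ and $\mathbf{CMod}_n$ metrics through the non-integer shift $n\phi$, tracking the $\pm1$ offsets when passing to the integral comparison, and ensuring the union over the two peaks at $\pm\mathbf{Floor}(n\phi)$ loses only a constant factor.
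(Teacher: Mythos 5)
Your proposal reproduces the paper's proof essentially step by step: the cosine-to-exponential split under the QFT, the bound $\abs{\alpha_l}^2\le 2(\alpha_{1,l}^2+\alpha_{2,l}^2)$, the separate handling of the degenerate case $\phi=l/n$, the normalization estimate $\alpha_\phi\in[\tfrac n2-\tfrac{\sqrt3}{3},\tfrac n2+\tfrac{\sqrt3}{3}]$ via the geometric series and $\sin(2\pi\phi)\ge\sqrt3/2$, the $\mathbf{CMod}_n$-to-$\mathbf{CMod}_1$ reduction with the $1/n$ fractional-part error, the tail estimate via $\abs{\sin x}\ge\tfrac2\pi\abs{\mathbf{CMod}_\pi(x)}$ and integral comparison, and the final plug-in $n_1=5$. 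This is the same argument as in Section 5.2, correctly executed.
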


\subsection{Analysis of imperfect eigenstate}
\label{sec:est_imperfect}

In describing the Chebyshev state phase estimation algorithm, we have assumed that a perfect eigenstate is given a prior. In this subsection, we discuss how this assumption can be relaxed to allow for imperfect eigenstates, which is more common in practical applications.

Specifically, the error comes from the following three sources:
\begin{enumerate}
    \item the quantum linear system solver we use only outputs an approximate solution state;
    \item the initial state $\ket{\psi}$ only approximates an eigenstate $\ket{\psi_\lambda}$; and
    \item the output state from the linear system solver corresponds to the rescaled Chebyshev polynomials $\widetilde{\mathbf{T}}_j$, which approximates that of the regular Chebyshev polynomials $\mathbf{T}_j$.
\end{enumerate}
The first error is easy to analyze. If $\ket{\psi}$ is the input and $\ket{\varphi}$ is the output state of the quantum linear system solver, then we have
\begin{equation}
    \norm{\ket{\varphi}
    -\frac{\sum_{j=0}^{n-1}\ket{j}\widetilde{\mathbf{T}}_j\left(\frac{A}{\alpha_A}\right)\ket{\psi}}{\norm{\sum_{j=0}^{n-1}\ket{j}\widetilde{\mathbf{T}}_j\left(\frac{A}{\alpha_A}\right)\ket{\psi}}}}
    \leq\epsilon_{\text{lin}}
\end{equation}
where $\epsilon_{\text{lin}}$ is the accuracy of the linear system algorithm.

The second error is essentially the error of solving linear equations with an imperfect initial state. This can be analyzed as follows.
\begin{lemma}
	Let $C$ and $\widetilde{C}$ be invertible matrices of the same size. It holds that
	\begin{equation}
		\widetilde{C}^{-1}-C^{-1}=-\widetilde{C}^{-1}(\widetilde{C}-C)C^{-1}
		=-C^{-1}(\widetilde{C}-C)\widetilde{C}^{-1}.
	\end{equation}
\end{lemma}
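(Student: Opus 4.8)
The identity $\widetilde{C}^{-1}-C^{-1}=-\widetilde{C}^{-1}(\widetilde{C}-C)C^{-1}=-C^{-1}(\widetilde{C}-C)\widetilde{C}^{-1}$ is the standard "resolvent-type" identity for matrix inverses, and the plan is simply to verify it by direct algebraic manipulation. The key observation is that $\widetilde{C}^{-1}\widetilde{C}=I=CC^{-1}$, so one can factor out $\widetilde{C}^{-1}$ on the left and $C^{-1}$ on the right of any expression and insert the factors $\widetilde{C}$ and $C$ where convenient.

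Concretely, first I would write $\widetilde{C}^{-1}-C^{-1}=\widetilde{C}^{-1}(C-\widetilde{C})C^{-1}$: expanding the right-hand side gives $\widetilde{C}^{-1}CC^{-1}-\widetilde{C}^{-1}\widetilde{C}C^{-1}=\widetilde{C}^{-1}-C^{-1}$, since $CC^{-1}=I$ and $\widetilde{C}^{-1}\widetilde{C}=I$. This establishes the first equality after rewriting $C-\widetilde{C}=-(\widetilde{C}-C)$. For the second equality, the symmetric computation $C^{-1}(C-\widetilde{C})\widetilde{C}^{-1}=C^{-1}C\widetilde{C}^{-1}-C^{-1}\widetilde{C}\widetilde{C}^{-1}=\widetilde{C}^{-1}-C^{-1}$ works the same way, again using $C^{-1}C=I$ and $\widetilde{C}\widetilde{C}^{-1}=I$, and then negating $C-\widetilde{C}$.

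There is essentially no obstacle here: the statement holds for any two invertible matrices of the same size, both inverses exist by hypothesis, and the verification is a two-line matrix computation requiring nothing beyond associativity of matrix multiplication and the defining property of the inverse. The only thing to be slightly careful about is the sign convention, ensuring that $(\widetilde{C}-C)$ (rather than $(C-\widetilde{C})$) appears in the final form, which is handled by the single negation at the end of each computation.
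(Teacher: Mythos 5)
Your proof is correct, and it is the standard one-line verification; the paper states this lemma without proof precisely because the direct expansion you give (insert $CC^{-1}=I$ and $\widetilde{C}^{-1}\widetilde{C}=I$, then negate) is immediate. Nothing further to add.
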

\begin{corollary}[Quantum linear system with perturbation]
\label{cor:lin_sys_perturb}
	Let $C$ and $\widetilde C$ be invertible matrices of the same size, acting on (normalized) quantum states $\ket{\psi}$ and $\ket{\widetilde\psi}$. We have
	\begin{equation}
		\norm{\frac{\widetilde{C}^{-1}\ket{\widetilde\psi}}{\norm{\widetilde{C}^{-1}\ket{\widetilde\psi}}}-\frac{C^{-1}\ket{\psi}}{\norm{C^{-1}\ket{\psi}}}}
		\leq
		\frac{2\norm{C^{-1}}\norm{\ket{\widetilde\psi}-\ket{\psi}}}{\norm{C^{-1}\ket{\psi}}}
		+\frac{2\norm{\widetilde C^{-1}}\norm{C^{-1}}\norm{\widetilde C-C}}{\norm{C^{-1}\ket{\psi}}}.
	\end{equation}
\end{corollary}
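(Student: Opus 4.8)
The plan is to bound the difference of the two normalized states by splitting it into two contributions: one coming from the perturbation of the input state $\ket{\psi}\to\ket{\widetilde\psi}$ (with the matrix held fixed at $C$), and one coming from the perturbation of the matrix $C\to\widetilde C$ (with the input held fixed at $\ket{\widetilde\psi}$). Concretely, I would insert the intermediate vector $C^{-1}\ket{\widetilde\psi}$ and its normalization, writing
\begin{equation}
    \frac{\widetilde{C}^{-1}\ket{\widetilde\psi}}{\norm{\widetilde{C}^{-1}\ket{\widetilde\psi}}}-\frac{C^{-1}\ket{\psi}}{\norm{C^{-1}\ket{\psi}}}
    =\left(\frac{\widetilde{C}^{-1}\ket{\widetilde\psi}}{\norm{\widetilde{C}^{-1}\ket{\widetilde\psi}}}-\frac{C^{-1}\ket{\widetilde\psi}}{\norm{C^{-1}\ket{\widetilde\psi}}}\right)
    +\left(\frac{C^{-1}\ket{\widetilde\psi}}{\norm{C^{-1}\ket{\widetilde\psi}}}-\frac{C^{-1}\ket{\psi}}{\norm{C^{-1}\ket{\psi}}}\right),
\end{equation}
and apply the triangle inequality.

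For each of the two terms I would use the elementary fact that for any two nonzero vectors $u,v$ one has $\norm{u/\norm{u}-v/\norm{v}}\leq 2\norm{u-v}/\norm{v}$ (and symmetrically with $\norm{u}$ in the denominator); this is the standard ``normalization is $2$-Lipschitz relative to the shorter vector'' bound, proved by writing $u/\norm{u}-v/\norm{v}=(u-v)/\norm{u}+v(\norm{v}-\norm{u})/(\norm{u}\norm{v})$ and using $\bigl|\norm{v}-\norm{u}\bigr|\leq\norm{u-v}$. Applying this to the second term with $u=C^{-1}\ket{\widetilde\psi}$, $v=C^{-1}\ket{\psi}$ gives a bound of $2\norm{C^{-1}(\ket{\widetilde\psi}-\ket{\psi})}/\norm{C^{-1}\ket{\psi}}\leq 2\norm{C^{-1}}\norm{\ket{\widetilde\psi}-\ket{\psi}}/\norm{C^{-1}\ket{\psi}}$, which is exactly the first term on the right-hand side of the claim. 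For the first term, apply the bound with $u=\widetilde C^{-1}\ket{\widetilde\psi}$, $v=C^{-1}\ket{\widetilde\psi}$, using the denominator $\norm{v}=\norm{C^{-1}\ket{\widetilde\psi}}$; then invoke the preceding lemma $\widetilde C^{-1}-C^{-1}=-\widetilde C^{-1}(\widetilde C-C)C^{-1}$ to estimate the numerator $\norm{(\widetilde C^{-1}-C^{-1})\ket{\widetilde\psi}}\leq\norm{\widetilde C^{-1}}\norm{\widetilde C-C}\norm{C^{-1}}$ (using $\norm{\ket{\widetilde\psi}}=1$).

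The only mildly delicate point is that this last estimate is naturally expressed with the denominator $\norm{C^{-1}\ket{\widetilde\psi}}$ rather than the $\norm{C^{-1}\ket{\psi}}$ appearing in the statement, so I would need to replace one by the other. The cleanest route is to observe that the $2$-Lipschitz bound can equally be stated with the \emph{longer} vector's norm in the denominator (i.e.\ $\norm{u/\norm{u}-v/\norm{v}}\le 2\norm{u-v}/\norm{u}$ as well), so for the first term I can instead use the denominator $\norm{u}=\norm{\widetilde C^{-1}\ket{\widetilde\psi}}$; but this still is not literally $\norm{C^{-1}\ket{\psi}}$. Alternatively, and more simply, I would bound the first term by choosing whichever of $\norm{\widetilde C^{-1}\ket{\widetilde\psi}}$, $\norm{C^{-1}\ket{\widetilde\psi}}$ is convenient and then note $\norm{C^{-1}\ket{\widetilde\psi}}\ge \norm{C^{-1}\ket{\psi}}-\norm{C^{-1}}\norm{\ket{\widetilde\psi}-\ket{\psi}}$; however this introduces a second-order term. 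Given the target inequality has clean first-order form, I expect the intended argument simply keeps the denominator $\norm{C^{-1}\ket{\widetilde\psi}}$ for the matrix-perturbation term and then, in the regime where the corollary is applied (the perturbations are small enough that $\norm{C^{-1}\ket{\widetilde\psi}}\gtrsim\norm{C^{-1}\ket{\psi}}$), absorbs the constant—so I would either state the bound with $\min$ of the two denominators or, to match the display exactly, carry the $\norm{C^{-1}\ket{\widetilde\psi}}\ge\tfrac12\norm{C^{-1}\ket{\psi}}$ type reduction and adjust constants. This bookkeeping of which denominator appears is the main (very minor) obstacle; everything else is the two-line triangle-inequality decomposition plus the preceding lemma.
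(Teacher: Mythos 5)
Your decomposition is a natural one, but it is not the decomposition the paper uses, and — as you correctly diagnose — it does not deliver the stated inequality cleanly. The gap is exactly the denominator mismatch you identify: once you insert the intermediate \emph{normalized} state $\frac{C^{-1}\ket{\widetilde\psi}}{\norm{C^{-1}\ket{\widetilde\psi}}}$ and apply the ``normalization is $2$-Lipschitz'' bound to each half, the matrix-perturbation half naturally carries $\norm{C^{-1}\ket{\widetilde\psi}}$ (or $\norm{\widetilde C^{-1}\ket{\widetilde\psi}}$) in its denominator. Converting that to $\norm{C^{-1}\ket{\psi}}$ requires $\norm{C^{-1}\ket{\widetilde\psi}}\gtrsim\norm{C^{-1}\ket{\psi}}$, which either generates a second-order error term or a regime assumption. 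Since the stated corollary is an exact, assumption-free, first-order bound, your route as written proves a slightly weaker or conditional statement.

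The fix is to postpone the decomposition. Apply your own $2$-Lipschitz lemma \emph{once}, directly to the full pair $u=\widetilde C^{-1}\ket{\widetilde\psi}$, $v=C^{-1}\ket{\psi}$, in the form with $\norm v$ in the denominator. Equivalently (and this is exactly how the paper does it), insert the intermediate vector $\frac{\widetilde C^{-1}\ket{\widetilde\psi}}{\norm{C^{-1}\ket{\psi}}}$ — note the mixed numerator/denominator — rather than an intermediate normalized state. Then
\begin{equation}
    \norm{\frac{\widetilde C^{-1}\ket{\widetilde\psi}}{\norm{\widetilde C^{-1}\ket{\widetilde\psi}}}-\frac{C^{-1}\ket{\psi}}{\norm{C^{-1}\ket{\psi}}}}
    \leq\frac{2\,\norm{\widetilde C^{-1}\ket{\widetilde\psi}-C^{-1}\ket{\psi}}}{\norm{C^{-1}\ket{\psi}}},
\end{equation}
with the target denominator $\norm{C^{-1}\ket{\psi}}$ appearing in \emph{both} pieces from the start. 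Only then split the numerator by the triangle inequality,
\begin{equation}
    \norm{\widetilde C^{-1}\ket{\widetilde\psi}-C^{-1}\ket{\psi}}
    \leq\norm{C^{-1}}\,\norm{\ket{\widetilde\psi}-\ket{\psi}}+\norm{\widetilde C^{-1}-C^{-1}},
\end{equation}
and finish with the preceding resolvent-identity lemma $\norm{\widetilde C^{-1}-C^{-1}}\leq\norm{\widetilde C^{-1}}\norm{C^{-1}}\norm{\widetilde C-C}$. The moral is a general one: for these state-preparation perturbation bounds, normalize last, so that the reference denominator is fixed before any triangle inequalities act on the raw (unnormalized) vectors.
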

\begin{proof}
    We use the triangle inequality to upper bound the left-hand side as
    \begin{equation}
    \begin{aligned}
        \norm{\frac{C^{-1}\ket{\psi}}{\norm{C^{-1}\ket{\psi}}}-\frac{\widetilde{C}^{-1}\ket{\widetilde\psi}}{\norm{\widetilde{C}^{-1}\ket{\widetilde\psi}}}}
		\leq\norm{\frac{C^{-1}\ket{\psi}}{\norm{C^{-1}\ket{\psi}}}-\frac{\widetilde{C}^{-1}\ket{\widetilde\psi}}{\norm{C^{-1}\ket{\psi}}}}
		+\norm{\frac{\widetilde{C}^{-1}\ket{\widetilde\psi}}{\norm{C^{-1}\ket{\psi}}}-\frac{\widetilde{C}^{-1}\ket{\widetilde\psi}}{\norm{\widetilde{C}^{-1}\ket{\widetilde\psi}}}}.
    \end{aligned}
    \end{equation}
    For the first term, we have
    \begin{equation}
        \norm{\frac{C^{-1}\ket{\psi}}{\norm{C^{-1}\ket{\psi}}}-\frac{\widetilde{C}^{-1}\ket{\widetilde\psi}}{\norm{C^{-1}\ket{\psi}}}}
        =\frac{\norm{C^{-1}\ket{\psi}-\widetilde{C}^{-1}\ket{\widetilde\psi}}}{\norm{C^{-1}\ket{\psi}}},
    \end{equation}
    whereas the second term can be further bounded similarly as
    \begin{equation}
        \norm{\frac{\widetilde{C}^{-1}\ket{\widetilde\psi}}{\norm{C^{-1}\ket{\psi}}}-\frac{\widetilde{C}^{-1}\ket{\widetilde\psi}}{\norm{\widetilde{C}^{-1}\ket{\widetilde\psi}}}}
        =\norm{\widetilde{C}^{-1}\ket{\widetilde\psi}}\abs{\frac{1}{\norm{C^{-1}\ket{\psi}}}-\frac{1}{\norm{\widetilde{C}^{-1}\ket{\widetilde\psi}}}}
        \leq\frac{\norm{C^{-1}\ket{\psi}-\widetilde{C}^{-1}\ket{\widetilde\psi}}}{\norm{C^{-1}\ket{\psi}}}.
    \end{equation}
    Thus, it remains to analyze $\frac{\norm{C^{-1}\ket{\psi}-\widetilde{C}^{-1}\ket{\widetilde\psi}}}{\norm{C^{-1}\ket{\psi}}}$.

    For the denominator, one can further bound
    \begin{equation}
    \label{eq:denominator_bnd}
        1=\norm{CC^{-1}\ket{\psi}}\leq\norm{C}\norm{C^{-1}\ket{\psi}}
        \quad\Rightarrow\quad
        \frac{1}{\norm{C^{-1}\ket{\psi}}}\leq\norm{C}.
    \end{equation}
    But we will keep it for the time being, as $\norm{C^{-1}\ket{\psi}}$ represents size of the solution vector, which we may have direct knowledge about in applications.
    As for the numerator,
    \begin{equation}
    \begin{aligned}
        \norm{C^{-1}\ket{\psi}-\widetilde{C}^{-1}\ket{\widetilde\psi}}
        &\leq\norm{C^{-1}\ket{\psi}-C^{-1}\ket{\widetilde\psi}}
        +\norm{C^{-1}\ket{\widetilde\psi}-\widetilde{C}^{-1}\ket{\widetilde\psi}}\\
        &\leq\norm{C^{-1}}\norm{\ket{\psi}-\ket{\widetilde\psi}}+\norm{C^{-1}-\widetilde{C}^{-1}}.
    \end{aligned}
    \end{equation}
    The claimed bound now follows from the proceeding lemma.
\end{proof}
In applications where $\norm{\widetilde{C}^{-1}}$ is unknown, we may remove its dependence using a strategy similar to that for proving~\cite[Proposition 5.7.7]{humpherys2017foundations}.
Note that our above perturbation analysis is more general than is needed here, as it bounds the error of quantum linear system where the coefficient matrix and the input state can both be imperfect. This general bound will be used later in \sec{faber_history} to analyze complexity of generating the Faber history state.
For the time being, let us assume that the input state has distance $\norm{\ket{\psi}-\ket{\psi_\lambda}}\leq\epsilon_{\text{init}}$ to a true eigenstate.
By~\lem{cheby_l2}, we have $\norm{\sum_{j=0}^{n-1}\ket{j}\widetilde{\mathbf{T}}_j\left(\frac{A}{\alpha_A}\right)\ket{\psi_\lambda}}=\mathbf{\Theta}(\sqrt{n})$, which implies
\begin{equation}
    \norm{\frac{\sum_{j=0}^{n-1}\ket{j}\widetilde{\mathbf{T}}_j\left(\frac{A}{\alpha_A}\right)\ket{\psi}}{\norm{\sum_{j=0}^{n-1}\ket{j}\widetilde{\mathbf{T}}_j\left(\frac{A}{\alpha_A}\right)\ket{\psi}}}
    -\frac{\sum_{j=0}^{n-1}\ket{j}\widetilde{\mathbf{T}}_j\left(\frac{A}{\alpha_A}\right)\ket{\psi_\lambda}}{\norm{\sum_{j=0}^{n-1}\ket{j}\widetilde{\mathbf{T}}_j\left(\frac{A}{\alpha_A}\right)\ket{\psi_\lambda}}}}
    =\mathbf{O}\left(\sqrt{n}\alphaU\epsilon_{\text{init}}\right),
\end{equation}
where the second term further simplifies to
\begin{equation}
    \frac{\sum_{j=0}^{n-1}\ket{j}\widetilde{\mathbf{T}}_j\left(\frac{A}{\alpha_A}\right)\ket{\psi_\lambda}}{\norm{\sum_{j=0}^{n-1}\ket{j}\widetilde{\mathbf{T}}_j\left(\frac{A}{\alpha_A}\right)\ket{\psi_\lambda}}}
    =
    \frac{\sum_{j=0}^{n-1}\widetilde{\mathbf{T}}_j\left(\frac{\lambda}{\alpha_A}\right)\ket{j}\ket{\psi_\lambda}}{\norm{\sum_{j=0}^{n-1}\widetilde{\mathbf{T}}_j\left(\frac{\lambda}{\alpha_A}\right)\ket{j}\ket{\psi_\lambda}}}.
\end{equation}

Finally, we analyze the error of performing eigenvalue estimation on the rescaled Chebyshev state as opposed to the regular Chebyshev state. This is handled by the following bound.
\begin{corollary}[Distance between rescaled and regular Chebyshev states]
    For any $x\in\left[-\frac{1}{2},\frac{1}{2}\right]$, it holds that
    \begin{equation}
        \norm{\frac{\sum_{j=0}^{n-1}\widetilde{\mathbf{T}}_j(x)\ket{j}}{\sqrt{\sum_{k=0}^{n-1}\widetilde{\mathbf{T}}_k^2(x)}}
        -\frac{\sum_{j=0}^{n-1}\mathbf{T}_j(x)\ket{j}}{\sqrt{\sum_{k=0}^{n-1}\mathbf{T}_k^2(x)}}}
        \leq\frac{\frac{3}{8}}{\sqrt{\frac{n}{2}-\frac{\sqrt{3}}{3}}\sqrt{\frac{n}{2}-\frac{\sqrt{3}}{3}-\frac{3}{4}}}
        +\frac{\frac{1}{2}}{\sqrt{\frac{n}{2}-\frac{\sqrt{3}}{3}}}
        =\mathbf{O}\left(\frac{1}{\sqrt{n}}\right).
    \end{equation}
\end{corollary}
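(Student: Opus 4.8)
The plan is to reduce the whole statement to the single observation that the two unnormalized vectors
$v(x):=\sum_{j=0}^{n-1}\widetilde{\mathbf{T}}_j(x)\ket{j}$ and $w(x):=\sum_{j=0}^{n-1}\mathbf{T}_j(x)\ket{j}$ differ only in their zeroth coordinate. By the rescaling convention \eq{cheby_scale} we have $w(x)-v(x)=\tfrac12\mathbf{T}_0(x)\ket{0}=\tfrac12\ket{0}$, so $\norm{v(x)-w(x)}=\tfrac12$; and taking the inner products coordinatewise gives $\norm{w(x)}^2-\norm{v(x)}^2=\mathbf{T}_0^2(x)-\widetilde{\mathbf{T}}_0^2(x)=\tfrac34>0$, in particular $\norm{w(x)}\ge\norm{v(x)}$. (Here I will write $x$ for a fixed point in $[-\tfrac12,\tfrac12]$ and drop it from the notation.)

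Next I would carry out the standard normalization-perturbation estimate, but with a carefully chosen split. Writing
\[
\frac{v}{\norm{v}}-\frac{w}{\norm{w}}=\frac{v-w}{\norm{w}}+v\left(\frac{1}{\norm{v}}-\frac{1}{\norm{w}}\right),
\]
the triangle inequality together with $\norm{w}\ge\norm{v}$ (so that $\norm{v}\bigl|\tfrac1{\norm v}-\tfrac1{\norm w}\bigr|=\tfrac{\norm{w}-\norm{v}}{\norm{w}}$) yields the bound $\tfrac{\norm{v-w}}{\norm{w}}+\tfrac{\norm{w}-\norm{v}}{\norm{w}}$. The first term is exactly $\tfrac{1/2}{\norm{w}}$. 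For the second term I would rationalize: $\norm{w}-\norm{v}=\tfrac{\norm{w}^2-\norm{v}^2}{\norm{w}+\norm{v}}=\tfrac{3/4}{\norm{w}+\norm{v}}\le\tfrac{3/4}{2\norm{v}}=\tfrac{3/8}{\norm{v}}$, again using $\norm{w}\ge\norm{v}$, hence $\tfrac{\norm{w}-\norm{v}}{\norm{w}}\le\tfrac{3/8}{\norm{v}\norm{w}}$.

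Finally I would invoke \lem{cheby_l2}, which applies since $x\in[-\tfrac12,\tfrac12]$, to lower bound $\norm{w}^2=\sum_{k=0}^{n-1}\mathbf{T}_k^2(x)\ge\tfrac n2-\tfrac{\sqrt3}{3}$ and $\norm{v}^2=\sum_{k=0}^{n-1}\widetilde{\mathbf{T}}_k^2(x)\ge\tfrac n2-\tfrac{\sqrt3}{3}-\tfrac34$. Substituting these lower bounds into $\tfrac{3/8}{\norm{v}\norm{w}}+\tfrac{1/2}{\norm{w}}$ gives exactly the claimed inequality, and since both denominators grow like $\sqrt n$ the whole expression is $\mathbf{O}(1/\sqrt n)$.

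The argument is entirely elementary, so there is no real obstacle; the one point that requires a little care is the choice of split in the second paragraph. The more obvious decomposition $\tfrac{v-w}{\norm{v}}+w\bigl(\tfrac1{\norm v}-\tfrac1{\norm w}\bigr)$ produces denominators $\norm{v}$ instead of $\norm{w}$ and a looser constant in front of the $\tfrac34$, so one must route the perturbation through $\norm{w}$ (not $\norm{v}$) to land precisely on the stated bound.
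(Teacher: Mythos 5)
Your proof is correct and follows essentially the same route as the paper's: the same split of $\frac{v}{\norm{v}}-\frac{w}{\norm{w}}$ into a perturbation-of-vector term (giving $\tfrac{1/2}{\norm{w}}$) and a perturbation-of-normalization term (giving $\tfrac{\norm{w}-\norm{v}}{\norm{w}}\le\tfrac{3/8}{\norm{v}\norm{w}}$), followed by the same invocation of \lem{cheby_l2}. The only cosmetic difference is that the paper bounds $\norm{w}-\norm{v}$ via the integral $\int \tfrac{du}{2\sqrt u}$ while you rationalize with $\tfrac{\norm{w}^2-\norm{v}^2}{\norm{w}+\norm{v}}$, which is an equivalent one-step argument.
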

\begin{proof}
    We use the triangle inequality to get
    \begin{equation}
    \begin{aligned}
        &\ \norm{\frac{\sum_{j=0}^{n-1}\widetilde{\mathbf{T}}_j(x)\ket{j}}{\sqrt{\sum_{k=0}^{n-1}\widetilde{\mathbf{T}}_k^2(x)}}
        -\frac{\sum_{j=0}^{n-1}\mathbf{T}_j(x)\ket{j}}{\sqrt{\sum_{k=0}^{n-1}\mathbf{T}_k^2(x)}}}\\
        \leq&\ \norm{\frac{\sum_{j=0}^{n-1}\widetilde{\mathbf{T}}_j(x)\ket{j}}{\sqrt{\sum_{k=0}^{n-1}\widetilde{\mathbf{T}}_k^2(x)}}
        -\frac{\sum_{j=0}^{n-1}\widetilde{\mathbf{T}}_j(x)\ket{j}}{\sqrt{\sum_{k=0}^{n-1}\mathbf{T}_k^2(x)}}}
        +\norm{\frac{\sum_{j=0}^{n-1}\widetilde{\mathbf{T}}_j(x)\ket{j}}{\sqrt{\sum_{k=0}^{n-1}\mathbf{T}_k^2(x)}}
        -\frac{\sum_{j=0}^{n-1}\mathbf{T}_j(x)\ket{j}}{\sqrt{\sum_{k=0}^{n-1}\mathbf{T}_k^2(x)}}}\\
        =&\ \frac{\abs{\sqrt{\sum_{k=0}^{n-1}\mathbf{T}_k^2(x)}
        -\sqrt{\sum_{k=0}^{n-1}\widetilde{\mathbf{T}}_k^2(x)}}}{\sqrt{\sum_{k=0}^{n-1}\mathbf{T}_k^2(x)}}
        +\frac{\frac{1}{2}}{\sqrt{\sum_{k=0}^{n-1}\mathbf{T}_k^2(x)}}.
    \end{aligned}
    \end{equation}
    Here, the numerator of the first term can be further bounded by
    \begin{equation}
        \abs{\sqrt{\sum_{k=0}^{n-1}\mathbf{T}_k^2(x)}
        -\sqrt{\sum_{k=0}^{n-1}\widetilde{\mathbf{T}}_k^2(x)}}
        =\abs{\int_{\sum_{k=0}^{n-1}\widetilde{\mathbf{T}}_k^2(x)}^{\sum_{k=0}^{n-1}\mathbf{T}_k^2(x)}
        \mathrm{d}u\ \frac{1}{2\sqrt{u}}}
        \leq\frac{\frac{3}{8}}{\sqrt{\sum_{k=0}^{n-1}\widetilde{\mathbf{T}}_k^2(x)}}.
    \end{equation}
    The claimed bound now follows from \lem{cheby_l2}.
\end{proof}

Putting it altogether, we finally obtain that output state of the quantum linear system algorithm $\ket{\varphi}$ has error at most
\begin{equation}
\label{eq:est_imperfect}
    \norm{\ket{\varphi}
    -\frac{\sum_{j=0}^{n-1} \mathbf{T}_j\left(\frac{\lambda}{\alpha_A}\right)\ket{j}\ket{\psi_\lambda}}{\norm{\sum_{j=0}^{n-1} \mathbf{T}_j\left(\frac{\lambda}{\alpha_A}\right)\ket{j}\ket{\psi_\lambda}}}}
    =\epsilon_{\text{lin}}+\mathbf{O}\left(\sqrt{n}\alphaU\epsilon_{\text{init}}\right)+\mathbf{O}\left(\frac{1}{\sqrt{n}}\right).
\end{equation}

\subsection{Summary of quantum eigenvalue estimation}
\label{sec:est_summary}

We now summarize the quantum algorithm for estimating eigenvalues.
\begin{enumerate}
    \item If necessary, rescale the input block encoding using \eq{block_rescaling}, so that $\alpha_A\geq2\norm{A}$.
    \item Invoke the Chebyshev state generation algorithm \thm{generate_history} with $\eta=0$, coefficients $\widetilde{\beta}_k$ from \eq{coeff_est}, and a state $\ket{\psi}$ close to the target eigenstate.
    \item Perform the quantum Fourier transform and measure in the computational basis.
    \item Perform the median amplification to boost the success probability.
\end{enumerate}

\begin{theorem}[Quantum eigenvalue estimation]
\label{thm:qeve}
    Let $A$ be a square matrix with only real eigenvalues, such that $A/\alpha_A$ is block encoded by $O_A$ with some normalization factor $\alpha_A\geq\norm{A}$. 
    Suppose that oracle $O_\psi\ket{0}=\ket{\psi}$ prepares an initial state within distance $\norm{\ket{\psi}-\ket{\psi_{\lambda_j}}}=\mathbf{O}(\sqrt{\epsilon/\alpha_A}/\alphaU)$ from an eigenstate $\ket{\psi_{\lambda_j}}$ such that $A\ket{\psi_{\lambda_j}}=\lambda_j\ket{\psi_{\lambda_j}}$, where $\alphaU$ satisfies \eq{alphaU2} with
    \begin{equation}
        n=\mathbf{O}\left(\frac{\alpha_A}{\epsilon}\right).
    \end{equation}
    Then, the eigenvalue $\lambda_j$ can be estimated with accuracy $\epsilon$ and probability $1-\pfail$ using
    \begin{equation}
        \mathbf{O}\left(\frac{\alpha_A}{\epsilon}\alphaU\log\left(\frac{1}{\pfail}\right)\right)
    \end{equation}
    queries to controlled-$O_A$, controlled-$O_\psi$, and their inverses.
\end{theorem}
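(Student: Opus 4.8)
The plan is to assemble the pieces developed in \sec{est_cmod}--\sec{est_imperfect} in the order sketched at the start of \sec{est_summary}. First I would invoke the rescaling gadget \eq{block_rescaling} (via \cite[Appendix A.2]{FractionalQuery14}) to assume without loss of generality that $\alpha_A\geq 2\norm{A}$, which costs only $\mathbf{O}(1)$ overhead and guarantees $\lambda_j/\alpha_A\in[-\tfrac12,\tfrac12]$, hence $\phi_j:=\tfrac{1}{2\pi}\arccos(\lambda_j/\alpha_A)\in[\tfrac16,\tfrac13]$---exactly the window required by \thm{chebyshev_qpe}. Next I would apply \thm{generate_history} with $\eta=0$ and the coefficients $\widetilde\beta_k$ of \eq{coeff_est}, so that $\alpha_{\widetilde\beta}$ and $O_{\widetilde\beta}$ become trivial (the shifted-coefficient state is supported on two basis vectors), and the only oracles invoked are controlled-$O_A$, controlled-$O_\psi$ and their inverses. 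For any target accuracy $\delta$ this produces a state $\ket{\varphi}$ that is $\delta$-close to $\frac{\sum_{l=0}^{n-1}\ket{l}\widetilde{\mathbf{T}}_l(A/\alpha_A)\ket{\psi}}{\norm{\sum_{l=0}^{n-1}\ket{l}\widetilde{\mathbf{T}}_l(A/\alpha_A)\ket{\psi}}}$ using $\mathbf{O}\left(\alphaU n\log(1/\delta)\right)$ queries.

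Then I would bound the distance from $\ket{\varphi}$ to the ideal \emph{unrescaled} Chebyshev state of the exact eigenstate. Writing $\epsilon_{\mathrm{lin}}=\delta$ for the linear-solver error and $\epsilon_{\mathrm{init}}=\norm{\ket{\psi}-\ket{\psi_{\lambda_j}}}$, the perturbation bound \cor{lin_sys_perturb}, the simplification $\widetilde{\mathbf{T}}_j(A/\alpha_A)\ket{\psi_{\lambda_j}}=\widetilde{\mathbf{T}}_j(\lambda_j/\alpha_A)\ket{\psi_{\lambda_j}}$, and the rescaled-versus-regular bound of the corollary in \sec{est_imperfect} together give, as in \eq{est_imperfect},
\begin{equation}
    \norm{\ket{\varphi}-\frac{\sum_{j=0}^{n-1}\mathbf{T}_j(\lambda_j/\alpha_A)\ket{j}\ket{\psi_{\lambda_j}}}{\norm{\sum_{j=0}^{n-1}\mathbf{T}_j(\lambda_j/\alpha_A)\ket{j}\ket{\psi_{\lambda_j}}}}}=\epsilon_{\mathrm{lin}}+\mathbf{O}\left(n\alphaU\epsilon_{\mathrm{init}}\right)+\mathbf{O}\left(1/\sqrt{n}\right).
\end{equation}
The key numerology is that with $n=\mathbf{O}(\alpha_A/\epsilon)$ and the hypothesis $\epsilon_{\mathrm{init}}=\mathbf{O}\left(\epsilon/(\alpha_A\alphaU)\right)$ one has $n\alphaU\epsilon_{\mathrm{init}}=\mathbf{O}(1)$ with a constant that can be driven as small as desired by adjusting the implied constants; choosing $\delta$ a sufficiently small constant makes $\epsilon_{\mathrm{lin}}$ small with only $\mathbf{O}(1)$ extra queries; and $\mathbf{O}(1/\sqrt{n})$ is small for $n$ large. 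Hence $\ket{\varphi}$ lies within a fixed small constant of the ideal Chebyshev state.

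Next I would run the Chebyshev-state phase estimation of \thm{chebyshev_qpe}: apply the quantum Fourier transform to $\ket{\varphi}$, measure in the computational basis, obtain $l$, and output $\alpha_A\cos(2\pi l/n)$ as the estimate of $\lambda_j$. On the ideal state this returns an $l$ with $\abs{\mathbf{CMod}_1(l/n\pm\phi_j)}\leq n_1/n$ with probability $\geq 1-\tfrac{n_1}{(n_1-2\sqrt{3}/3)(n_1-2)}$; since every outcome probability changes by at most twice the Euclidean distance between $\ket{\varphi}$ and the ideal state, the same bound holds on $\ket{\varphi}$ up to the fixed small constant above, so for $n_1$ a suitable constant ($n_1\geq 5$ suffices) the success probability is strictly above $\tfrac12$. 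Setting $n=n_0n_1$ with $n_0\geq 2\pi\alpha_A/\epsilon$, the $\pm$-ambiguity is harmless because $\cos$ is even, the degenerate cases $\phi_j=l/n$ or $(n-l)/n$ only help (exact recovery), and the trigonometric bounds of \sec{est_cmod} give $\abs{\alpha_A\cos(2\pi l/n)-\lambda_j}\leq 2\pi\alpha_A n_1/n=2\pi\alpha_A/n_0\leq\epsilon$ on success. Finally, repeating $\mathbf{O}(\log(1/\pfail))$ times and taking the median of the estimates boosts the success probability to $1-\pfail$ by a Chernoff argument; since each run costs $\mathbf{O}\left(\alphaU n\log(1/\delta)\right)=\mathbf{O}\left(\alphaU\alpha_A/\epsilon\right)$ queries, the total is $\mathbf{O}\left((\alpha_A/\epsilon)\alphaU\log(1/\pfail)\right)$, as claimed.

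I expect the main obstacle to be the bookkeeping in the second paragraph: one must verify that the three error contributions in \eq{est_imperfect} can be made \emph{simultaneously} smaller than the slack $\tfrac12-\tfrac{n_1}{(n_1-2\sqrt{3}/3)(n_1-2)}$ available from \thm{chebyshev_qpe} while keeping $n=\mathbf{O}(\alpha_A/\epsilon)$ and $\delta=\mathbf{O}(1)$, which hinges on the precise scaling $\epsilon_{\mathrm{init}}=\mathbf{O}\left(\epsilon/(\alpha_A\alphaU)\right)$ being exactly the one that cancels the $n\alphaU$ amplification of the eigenstate error. Everything else---the rescaling step, the trivial coefficient oracle, the trigonometric estimates, and the median amplification---is routine given the lemmas already established.
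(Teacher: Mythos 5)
Your proposal is correct and follows essentially the same route as the paper's proof: rescale via \eq{block_rescaling} so $\alpha_A\geq 2\norm{A}$, call \thm{generate_history} with $\eta=0$ and \eq{coeff_est}, control the three error terms of \eq{est_imperfect} (noting that $n\alphaU\epsilon_{\mathrm{init}}=\mathbf{O}(1)$ exactly when $\epsilon_{\mathrm{init}}=\mathbf{O}(\epsilon/(\alpha_A\alphaU))$ and $n=\mathbf{O}(\alpha_A/\epsilon)$), translate state error into a $2\delta$ probability shift, invoke \thm{chebyshev_qpe} with $n=n_0 n_1$ and $n_0\geq 2\pi\alpha_A/\epsilon$, and median-amplify. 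The one stylistic difference is that you carry the full imperfect-state error bound from the outset while the paper first handles the idealized case and then perturbs, but the logical content and all the estimates are identical.
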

\begin{proof}
    Assume that the input matrix $A/\alpha_A$ is block encoded with normalization factor $\alpha_A\geq2\norm{A}$, that the input state $\ket{\psi}=\ket{\psi_{\lambda_j}}$ is the exact eigenstate, and that the quantum linear system solver makes no error, while ignoring the distinction between $\mathbf{T}_0(x)=1$ and $\widetilde{\mathbf{T}}_0(x)=\frac{1}{2}$. Then the Chebyshev state phase estimation of \sec{est_cheby_phase} (in particular, \thm{chebyshev_qpe}) shows that we can get a measurement outcome $l\in\{0,\ldots,n-1\}$ satisfying 
    \begin{equation}
        \abs{\alpha_A\cos\left(2\pi\frac{l}{n}\right)-{\lambda_j}}\leq\frac{2\pi\alpha_A n_1}{n}
    \end{equation}
    with probability at least
    \begin{equation}
        1-\frac{n_1}{\left(n_1-\frac{2\sqrt{3}}{3}\right)(n_1-2)}.
    \end{equation}
    For $n_1\geq5$, the success probability is at least $1-\frac{5}{15-2\sqrt{3}}\approx0.566$ strictly larger than $\frac{1}{2}$.
    We can then choose 
    \begin{equation}
        n=\mathbf{O}\left(\frac{\alpha_A}{\epsilon}\right)
    \end{equation}
    so that the target eigenvalue ${\lambda_j}$ is estimated with accuracy $\epsilon$.

    Now consider the general case. The analysis of \sec{est_imperfect} shows that output of the quantum linear system solver is close to the ideal Chebyshev state with Euclidean distance at most
    \begin{equation}
        \epsilon_{\text{lin}}+\mathbf{O}\left(\sqrt{n}\alphaU\epsilon_{\text{init}}\right)+\mathbf{O}\left(\frac{1}{\sqrt{n}}\right).
    \end{equation}
    But for two quantum states with distance $\norm{\ket{\varphi_1}-\ket{\varphi_2}}\leq\delta$, if we apply a unitary $U$ followed by an orthogonal projection $\Pi$, the amplitudes differ at most
\begin{equation}
    \abs{\norm{\Pi U\ket{\varphi_1}}-\norm{\Pi U\ket{\varphi_2}}}
    \leq\norm{\Pi U\ket{\varphi_1}-\Pi U\ket{\varphi_2}}\leq\delta,
\end{equation}
which implies difference of the probabilities 
\begin{equation}
    \abs{\norm{\Pi U\ket{\varphi_1}}^2-\norm{\Pi U\ket{\varphi_2}}^2}
    \leq2\abs{\norm{\Pi U\ket{\varphi_1}}-\norm{\Pi U\ket{\varphi_2}}}
    \leq2\delta.
\end{equation}
Thus if we have $\delta=\mathbf{O}(1)$ sufficiently small, we still guarantee a success probability strictly larger than $\frac{1}{2}$ in the Chebyshev state phase estimation, even after accounting for the success probability of the quantum linear system solver (\lem{opt_lin}), which can then be boosted using the median amplification. To achieve this, we let $\epsilon_{\text{lin}}=\mathbf{O}(1)$ sufficiently small, $\epsilon_{\text{init}}=\mathbf{O}(1/(\sqrt{n}\alphaU))$ sufficiently small, and $n=\mathbf{\Omega}(1)$ sufficiently large. However, we already have the stronger requirement $n=\mathbf{\Theta}\left(\alpha_A/\epsilon\right)$ to achieve the desired accuracy in the Chebyshev state phase estimation, so $\epsilon_{\text{init}}=\mathbf{O}(\sqrt{\epsilon/\alpha_A}/\alphaU)$.
Our proof is now complete with the claimed complexity from \thm{generate_history}.
\end{proof}
\begin{remark}
    The complexity of our algorithm depends on size $\alphaU$ of the input matrix under Chebyshev polynomials of the second kind. See the remark succeeding \thm{generate_history} for more discussions about this parameter. When the input matrix is diagonalizable with a known upper bound $\kappa_S$ on its Jordan condition number, we have $\alphaU=\mathbf{O}(\kappa_S)$ and the complexity becomes
    \begin{equation}
        \mathbf{O}\left(\frac{\alpha_A\kappa_S}{\epsilon}\log\left(\frac{1}{\pfail}\right)\right).
    \end{equation}
    This achieves the so-called \emph{Heisenberg scaling}~\cite{Giovannetti06,Zwierz10} in quantum metrology and is provably optimal for the eigenvalue estimation.

    In the actual circuit implementation, we need to prepare a quantum state $O_{\widetilde\beta}\ket{0}=\frac{1}{\alpha_{\widetilde\beta}}\sum_{k=0}^{n-1}(\widetilde\beta_k-\widetilde\beta_{k+2})\ket{n-1-k}$ encoding the shifted coefficients with normalization $\alpha_{\widetilde\beta}=\sqrt{\sum_{k=0}^{n-1}|\widetilde{\beta}_k-\widetilde{\beta}_{k+2}|^2}$. For eigenvalue estimation, we have all $\widetilde{\beta}_k=0$ except $\widetilde{\beta}_{n-1}=1$. The resulting state $\frac{\ket{0}-\ket{2}}{\sqrt{2}}$ is $2$-dimensional, and can be prepared with $\mathbf{O}(1)$ cost.

    Finally, note that the query complexity of initial state preparation can be improved using the block preconditioning technique of~\cite{OptInit}.
\end{remark}

\section{Quantum eigenvalue transformation}
\label{sec:transform}
In this section, we use the Chebyshev history state to construct a quantum algorithm that transforms eigenvalues of a high-dimensional input matrix. 
With all the technical preliminaries already in place, we describe this algorithm and analyze its complexity in \thm{qevt} of \sec{transform_summary}. We also describe a variant of the algorithm in \thm{qevt_block} of \sec{transform_block_summary} based on a block encoded quantum linear system solver, which can be useful when QEVT is used as a subroutine in desigining other quantum algorithms.

\subsection{Summary of quantum eigenvalue transformation}
\label{sec:transform_summary}

We now summarize the quantum algorithm for transforming eigenvalues.
\begin{enumerate}
    \item Invoke the Chebyshev state generation algorithm \thm{generate_history} with $\eta=1$, coefficients $\widetilde{\beta}_k$ from the expansion $p(x)=\sum_{k=0}^{n-1}\widetilde\beta_k\widetilde{\mathbf{T}}_{k}(x)$, and input state $\ket{\psi}$.
    \item Perform a fixed-point amplitude amplification on part of the state flagged by the ancilla $\ket{1}$.
\end{enumerate}

\begin{theorem}[Quantum eigenvalue transformation]
\label{thm:qevt}
    Let $A$ be a square matrix with only real eigenvalues, such that $A/\alpha_A$ is block encoded by $O_A$ with some normalization factor $\alpha_A\geq\norm{A}$.
    Let $p(x)=\sum_{k=0}^{n-1}\widetilde\beta_k\widetilde{\mathbf{T}}_{k}(x)=\sum_{k=0}^{n-1}\beta_k{\mathbf{T}}_{k}(x)$ be the Chebyshev expansion of a degree-($n-1$) polynomial $p$.
    Let $O_\psi\ket{0}=\ket{\psi}$ be the oracle preparing the initial state, 
    and $O_{\widetilde\beta}\ket{0}=\frac{1}{\alpha_{\widetilde\beta}}\sum_{k=0}^{n-1}(\widetilde\beta_k-\widetilde\beta_{k+2})\ket{n-1-k}$ be the oracle preparing the shifting of coefficients $\widetilde\beta_k$ ($k=0,\ldots,n-1$) with  $\alpha_{\widetilde\beta}=\sqrt{\sum_{k=0}^{n-1}|\widetilde{\beta}_k-\widetilde{\beta}_{k+2}|^2}$. 
    Then, the quantum state
    \begin{equation}
        \frac{p\left(\frac{A}{\alpha_A}\right)\ket{\psi}}{\norm{p\left(\frac{A}{\alpha_A}\right)\ket{\psi}}}
    \end{equation}
    can be prepared with accuracy $\epsilon$ and probability $1-\pfail$ using
    \begin{equation}
        \mathbf{O}\left(\frac{\alphaT}{\alphaPPsi}\alphaU n\log\left(\frac{\alphaT}{\alphaPPsi\epsilon}\right)\log\left(\frac{1}{\pfail}\right)\right)
    \end{equation}
    queries to controlled-$O_A$, controlled-$O_\psi$, controlled-$O_{\widetilde\beta}$, and their inverses,
    where $\alphaU$ is defined in \eq{alphaU2} and
    \begin{equation}
    \label{eq:alphaT_alphaPPsi}
        \alphaT\geq
        \max_{l=0,1,\ldots,n-1}\norm{\sum_{k=l}^{n-1}\widetilde\beta_k\widetilde{\mathbf{T}}_{k-l}\left(\frac{A}{\alpha_A}\right)\ket{\psi}},\qquad
        \alphaPPsi\leq\norm{p\left(\frac{A}{\alpha_A}\right)\ket{\psi}}
    \end{equation}
    are upper bound on the maximum shifted partial sum of the Chebyshev expansion and lower bound on the transformed state.
\end{theorem}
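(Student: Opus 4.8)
The plan is to prepare a Chebyshev history state with \thm{generate_history} at $\eta=1$ (feeding in the coefficients $\widetilde\beta_k$ of $p$), observe that the branch flagged by $\ket1$ in the runaway register is, up to a separable factor, the target state, and then drive that branch up to weight $1-\pfail$ with fixed-point amplitude amplification. The final query count is the product of the number of amplification rounds and the cost of one invocation of \thm{generate_history}.

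First I would apply \thm{generate_history} with $\eta=1$. In its output \eq{cheby_history}, the $s=1$ summand is the unnormalized vector $\ket1\bigl(\sum_{l=0}^{n-1}\ket l\bigr)\sum_{k=0}^{n-1}\widetilde\beta_k\widetilde{\mathbf{T}}_k\!\bigl(\tfrac{A}{\alpha_A}\bigr)\ket\psi=\ket1\bigl(\sum_{l=0}^{n-1}\ket l\bigr)\,p\!\bigl(\tfrac{A}{\alpha_A}\bigr)\ket\psi$, in which the $l$-register and the system factor separate, so the normalized $s=1$ branch is exactly $\ket1\otimes\tfrac{1}{\sqrt n}\sum_l\ket l\otimes\tfrac{p(A/\alpha_A)\ket\psi}{\norm{p(A/\alpha_A)\ket\psi}}$; amplifying it and discarding the first two registers (after uncomputing the ancillas introduced by the linear-system solver inside \thm{generate_history}) leaves the desired state. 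The quantitative core is a lower bound on the amplitude $a$ of this branch. Writing $N$ for the squared normalization of \eq{cheby_history}, the reindexing $l'=n-1-l$ turns $\sum_{k=n-1-l}^{n-1}\widetilde\beta_k\widetilde{\mathbf{T}}_{k+l-n+1}(\tfrac{A}{\alpha_A})\ket\psi$ into the shifted partial sum $\sum_{k=l'}^{n-1}\widetilde\beta_k\widetilde{\mathbf{T}}_{k-l'}(\tfrac{A}{\alpha_A})\ket\psi$, whose norm is at most $\alphaT$ by \eq{alphaT_alphaPPsi}, and the $l=0$ case of the same inequality gives $\norm{p(A/\alpha_A)\ket\psi}\le\alphaT$; hence $N\le n\alphaT^2+n\alphaT^2=2n\alphaT^2$. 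Together with $\norm{p(A/\alpha_A)\ket\psi}\ge\alphaPPsi$ this yields
\[
a=\frac{\sqrt n\,\norm{p(A/\alpha_A)\ket\psi}}{\sqrt N}\ \ge\ \frac{\alphaPPsi}{\sqrt2\,\alphaT}.
\]

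Next I would run fixed-point amplitude amplification with the $\ket1$-flag as the marked subspace, using the known lower bound $a\ge\alphaPPsi/(\sqrt2\,\alphaT)$ to fix the schedule: $\mathbf{O}\bigl(\tfrac{\alphaT}{\alphaPPsi}\log(1/\pfail)\bigr)$ rounds then push the overlap with the marked subspace above $1-\pfail$, and each round uses $\mathbf{O}(1)$ applications of the \thm{generate_history} circuit and its inverse (the two reflections cost $\mathbf{O}(1)$ themselves). For accuracy I would prepare each history state to error $\delta$; since the preparation error propagates at most linearly through the $\mathbf{O}(\alphaT/\alphaPPsi)$ rounds and fixed-point amplification is robust to an $\mathbf{O}(\delta)$ perturbation of $a$, taking $\delta=\mathbf{\Theta}(\epsilon\,\alphaPPsi/\alphaT)$, i.e.\ $\log(1/\delta)=\mathbf{O}(\log(\alphaT/(\alphaPPsi\epsilon)))$, makes the output $\epsilon$-close to $p(A/\alpha_A)\ket\psi/\norm{p(A/\alpha_A)\ket\psi}$. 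By the $\eta=1$ case of \thm{generate_history} one invocation uses $\mathbf{O}(\alphaU n\log(1/\delta))$ queries, so the product is $\mathbf{O}\bigl(\tfrac{\alphaT}{\alphaPPsi}\alphaU n\log(\tfrac{\alphaT}{\alphaPPsi\epsilon})\log(\tfrac1\pfail)\bigr)$; the $\mathbf{O}(\polylog n)$ gate cost of $O_{\widetilde\beta}$ treated in \sec{fourier} does not affect the query count.

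The main obstacle is the amplitude-amplification bookkeeping: pinning down the marked subspace so that its normalized component is \emph{exactly} a tensor product whose system factor is the target state — in particular uncomputing every ancilla produced by the quantum linear system solver inside \thm{generate_history} — and invoking a robust fixed-point amplification statement that works from a lower bound on $a$ alone while tracking the linear amplification of the preparation error $\delta$. The only other step needing care is the estimate $N\le 2n\alphaT^2$, which is precisely where the parameter $\alphaT$ (rather than a looser bound) enters the complexity; everything else is routine triangle-inequality error propagation.
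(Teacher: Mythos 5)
Your proposal matches the paper's proof: invoke \thm{generate_history} with $\eta=1$, identify the $\ket{1}$-flagged branch as $\ket{1}\otimes\frac{1}{\sqrt n}\sum_l\ket l\otimes p(A/\alpha_A)\ket\psi/\norm{p(A/\alpha_A)\ket\psi}$, lower-bound its amplitude by $\alphaPPsi/(\sqrt2\,\alphaT)$ via the reindexing $l'=n-1-l$ and the bound $N\le 2n\alphaT^2$, and multiply the $\mathbf{O}(\alphaT/\alphaPPsi\log(1/\pfail))$ fixed-point amplification rounds by the $\mathbf{O}(\alphaU n\log(1/\delta))$ cost per preparation with $\delta=\mathbf{\Theta}(\alphaPPsi\epsilon/\alphaT)$. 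This is precisely the argument in the paper, just with the $N\le 2n\alphaT^2$ step spelled out slightly more explicitly.
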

\begin{proof}
    With $\eta=1$, the Chebyshev history state from \thm{generate_history} reads
    \begin{equation}
        \frac{\ket{0}\sum_{l=0}^{n-1}\ket{l}
        \sum_{k=n-1-l}^{n-1}\widetilde{\beta}_k\widetilde{\mathbf{T}}_{k+l-n+1}\left(\frac{A}{\alpha_A}\right)\ket{\psi}
        +\ket{1}\sum_{l=0}^{n-1}\ket{l}
        \sum_{k=0}^{n-1}\widetilde{\beta}_k\widetilde{\mathbf{T}}_{k}\left(\frac{A}{\alpha_A}\right)\ket{\psi}}
        {\sqrt{\sum_{l=0}^{n-1}\norm{\sum_{k=n-1-l}^{n-1}\widetilde{\beta}_k\widetilde{\mathbf{T}}_{k+l-n+1}\left(\frac{A}{\alpha_A}\right)\ket{\psi}}^2
        +n\norm{\sum_{k=0}^{n-1}\widetilde{\beta}_k\widetilde{\mathbf{T}}_{k}\left(\frac{A}{\alpha_A}\right)\ket{\psi}}^2}}.
    \end{equation}
    Applying the fixed-point amplitude amplification on the ancilla state $\ket{1}$ then produces a quantum state close to
    \begin{equation}
        \frac{\sum_{k=0}^{n-1}\widetilde{\beta}_k\widetilde{\mathbf{T}}_{k}\left(\frac{A}{\alpha_A}\right)\ket{\psi}}{\norm{\sum_{k=0}^{n-1}\widetilde{\beta}_k\widetilde{\mathbf{T}}_{k}\left(\frac{A}{\alpha_A}\right)\ket{\psi}}}
        =\frac{p\left(\frac{A}{\alpha_A}\right)\ket{\psi}}{\norm{p\left(\frac{A}{\alpha_A}\right)\ket{\psi}}}
    \end{equation}
    as desired.

    To achieve a success probability at least $1-\pfail$, we take a number of amplification steps scaling like
    \begin{equation}
    \begin{aligned}
        &\mathbf{O}\left(\frac{\sqrt{\sum_{l=0}^{n-1}\norm{\sum_{k=n-1-l}^{n-1}\widetilde{\beta}_k\widetilde{\mathbf{T}}_{k+l-n+1}\left(\frac{A}{\alpha_A}\right)\ket{\psi}}^2
        +n\norm{\sum_{k=0}^{n-1}\widetilde{\beta}_k\widetilde{\mathbf{T}}_{k}\left(\frac{A}{\alpha_A}\right)\ket{\psi}}^2}}{\sqrt{n\norm{\sum_{k=0}^{n-1}\widetilde{\beta}_k\widetilde{\mathbf{T}}_{k}\left(\frac{A}{\alpha_A}\right)\ket{\psi}}^2}}\log\left(\frac{1}{\pfail}\right)\right)\\
        &=\mathbf{O}\left(\frac{\alphaT}{\alphaPPsi}\log\left(\frac{1}{\pfail}\right)\right),
    \end{aligned}
    \end{equation}
    where $\alphaT$ defined in \eq{alphaT_alphaPPsi} is an upper bound on the shifted partial sum of the Chebyshev expansion. To achieve a total accuracy $\epsilon$, we require each preparation of the Chebyshev history state to have error at most $\mathbf{O}\left(\alphaPPsi\epsilon/\alphaT\right)$. 
    The claimed complexity now follows from \thm{generate_history} and the above analysis.
\end{proof}
\begin{remark}
    The complexity of our algorithm depends on parameters such as $\alphaU$ and $\alphaT$, the former of which has already been discussed in the remark succeeding \thm{generate_history}. The parameter $\alphaT$ denotes maximum size of the shifted Chebyshev expansion, and it can be further upper bounded in terms of the Jordan condition number. For instance, if $A/\alpha_A=SJS^{-1}$ is the Jordan form decomposition of the input matrix, with an upper bound $\kappa_S\geq\norm{S}\norm{S}^{-1}$ on the Jordan condition number and $d_{\max}$ size of the largest Jordan block, then we show in \append{analysis_cheby_bernstein} that $\alphaT=\mathbf{O}\left(\kappa_Sn^{d_{\max}-1}\log(n)\norm{p}_{\max,[-1,1]}\right)$. In particular, we have $\alphaT=\mathbf{O}\left(\kappa_S\log(n)\norm{p}_{\max,[-1,1]}\right)$ for diagonalizable matrices, leading to the complexity
    \begin{equation}
        \mathbf{O}\left(\frac{\norm{p}_{\max,[-1,1]}\kappa_S^2n}{\norm{p\left(\frac{A}{\alpha_A}\right)\ket{\psi}}} \log\left(\frac{\norm{p}_{\max,[-1,1]}\kappa_S\log(n)}{\norm{p\left(\frac{A}{\alpha_A}\right)\ket{\psi}}\epsilon}\right)\log(n)\log\left(\frac{1}{\pfail}\right)\right)
    \end{equation}
    for a worst-case input. However, we further show that the $\log(n)$ factors can be shaved off
    \begin{equation}
        \mathbf{O}\left(\frac{\norm{p}_{\max,[-1,1]}\kappa_S^2n}{\norm{p\left(\frac{A}{\alpha_A}\right)\ket{\psi}}} \log\left(\frac{\norm{p}_{\max,[-1,1]}\kappa_S}{\norm{p\left(\frac{A}{\alpha_A}\right)\ket{\psi}}\epsilon}\right)\log\left(\frac{1}{\pfail}\right)\right)
    \end{equation}
    when running the algorithm on an average input matrix (\append{analysis_cheby_carleson}). 
    Specifically, if the distribution of eigenvalues of $A$ does not depend on the target polynomial degree $n$, then the vector norm $\norm{\sum_{k=l}^{n-1}\widetilde\beta_k\widetilde{\mathbf{T}}_{k-l}\left(\frac{A}{\alpha_A}\right)\ket{\psi}}$ admits an upper bound independent of $n$.
    This is reminiscent of the fact that Fourier expansions can converge much faster on average, and is made rigorous by the Carleson-Hunt inequality from \lem{carleson_hunt}.

    The circuit implementation of QEVT requires preparing the state $O_{\widetilde\beta}\ket{0}=\frac{1}{\alpha_{\widetilde\beta}}\sum_{k=0}^{n-1}(\widetilde\beta_k-\widetilde\beta_{k+2})\ket{n-1-k}$ with  $\alpha_{\widetilde\beta}=\sqrt{\sum_{k=0}^{n-1}|\widetilde{\beta}_k-\widetilde{\beta}_{k+2}|^2}$ encoding shifted coefficients from the Chebyshev expansion. We show how such a state can be prepared using $\mathbf{O}\left(\polylog(n)\right)$ gates in \sec{fourier}, improving over the standard state preparation method with complexity $\mathbf{\Theta}(n)$.

    Finally, note that the query complexity of initial state preparation can be improved using the block preconditioning technique of~\cite{OptInit}.
\end{remark}

\subsection{Summary of quantum eigenvalue transformation, block encoded version}
\label{sec:transform_block_summary}

We now summarize a variant of the quantum eigenvalue transformation algorithm with a block encoding output.
\begin{enumerate}
    \item Construct a block encoding of $\mathbf{Pad}(A)/4$ as in \sec{history_block} with $\eta=1$.
    \item Invoke the block encoding version of quantum linear system solver \lem{inv_block} with $\mathbf{Pad}(A)/4$ as the input.
    \item Prepare the state $\ket{0}\frac{1}{\alpha_{\widetilde\beta}}\sum_{k=0}^{n-1}(\widetilde\beta_k-\widetilde\beta_{k+2})\ket{n-1-k}$ with $\alpha_{\widetilde\beta}=\sqrt{\sum_{k=0}^{n-1}|\widetilde{\beta}_k-\widetilde{\beta}_{k+2}|^2}$. Unprepare the state $\ket{1}\frac{1}{\sqrt{n}}\sum_{k=0}^{n-1}\ket{k}$.
    \item Amplify the block encoding using \lem{amp_block}.
\end{enumerate}

\begin{theorem}[Quantum eigenvalue transformation, block encoded version]
\label{thm:qevt_block}
    Let $A$ be a square matrix with only real eigenvalues, such that $A/\alpha_A$ is block encoded by $O_A$ with some normalization factor $\alpha_A\geq\norm{A}$.
    Let $p(x)=\sum_{k=0}^{n-1}\widetilde\beta_k\widetilde{\mathbf{T}}_{k}(x)=\sum_{k=0}^{n-1}\beta_k{\mathbf{T}}_{k}(x)$ be the Chebyshev expansion of a degree-$n$ polynomial $p$.
    Then for any $\alpha_p\geq\norm{p\left(\frac{A}{\alpha_A}\right)}$, the operator
    \begin{equation}
        \frac{p\left(\frac{A}{\alpha_A}\right)}{2\alpha_p}
    \end{equation}
    can be block encoded with accuracy $\epsilon$ using
    \begin{equation}
        \begin{aligned}
        {\mathbf{O}\left(\frac{\norm{p(\cos)\sin}_{2,[-\pi,\pi]}\sqrt{n}\alphaU}{\alpha_p}n\alphaU\log\left(\frac{\norm{p(\cos)\sin}_{2,[-\pi,\pi]}\sqrt{n}\alphaU}{\alpha_p\epsilon}\right)\log\left(\frac{1}{\epsilon}\right)\right)}
        \end{aligned}
    \end{equation}
    queries to controlled-$O_A$ and its inverse, where $\alphaU$ satisfies \eq{alphaU2}
\end{theorem}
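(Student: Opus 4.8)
The plan is to convert the state-preparation argument behind \thm{qevt} into a block-encoding argument, following the four-step recipe in the preceding summary: reuse the same padded matrix $\mathbf{Pad}(A)$ of \sec{history_block} (with $\eta=1$), but invert it with the block-encoding inverter \lem{inv_block} in place of the optimal-scaling solver \lem{opt_lin}, then attach the state-(un)preparation isometries and finish with the amplifier \lem{amp_block}.

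First I would recall from \sec{history_block} that, with $\eta=1$, the operator $\mathbf{Pad}(A)/4$ is block encoded with normalization factor $4$ using a single query to controlled-$O_A$, and that $\norm{\mathbf{Pad}(A)^{-1}}=\mathbf{O}(n\alphaU)$ follows from \lem{block_norm} applied to the matrix representation \eq{pad_a_inv} exactly as in the proof of \thm{generate_history}. Invoking \lem{inv_block} then yields a block encoding of $\mathbf{Pad}(A)^{-1}/(2\alpha_{\mathbf{Pad}(A)^{-1}})$ with $\alpha_{\mathbf{Pad}(A)^{-1}}=\mathbf{\Theta}(n\alphaU)$, to accuracy $\epsilon'$, using $\mathbf{O}\!\left(n\alphaU\log(1/\epsilon')\right)$ queries to controlled-$O_A$ and its inverse.

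Next I would compose this inverted block encoding on the right with the isometry $\ket{0}_s\otimes\frac{1}{\alpha_{\widetilde\beta}}\sum_{k}(\widetilde\beta_k-\widetilde\beta_{k+2})\ket{n-1-k}_l\otimes I$ and on the left with $\bra{1}_s\otimes\frac{1}{\sqrt n}\sum_l\bra{l}_l\otimes I$. Using $\sum_k(\widetilde\beta_k-\widetilde\beta_{k+2})\ket{n-1-k}=(I_n-L_n^2)\sum_k\widetilde\beta_k\ket{n-1-k}$ from \eq{shift_coeff} and the block formula \eq{pad_output}, a direct computation shows that the $\ket{1}$-flagged branch collapses to $\sqrt n\cdot p\!\left(\frac{A}{\alpha_A}\right)\ket{\phi}$ for an arbitrary reference state $\ket{\phi}$, so the composed circuit block encodes $\frac{\sqrt n}{\alpha_{\mathbf{Pad}(A)^{-1}}\alpha_{\widetilde\beta}}\,p\!\left(\frac{A}{\alpha_A}\right)$ to accuracy $\mathbf{O}(\epsilon')$ (composing with isometries increases neither the norm nor the error). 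The arithmetic crux is that $\alpha_{\widetilde\beta}$ is, up to a constant, the $\mathcal{L}_2$-norm $\norm{p(\cos)\sin}_{2,[-\pi,\pi]}$: writing $\mathbf{T}_k=\frac{1}{2}(\mathbf{U}_k-\mathbf{U}_{k-2})$ (with $\mathbf{U}_{-1}=0$, $\mathbf{U}_{-2}=-1$) gives $p=\frac{1}{2}\sum_{m=0}^{n-1}(\widetilde\beta_m-\widetilde\beta_{m+2})\mathbf{U}_m$, and the orthogonality relation $\int_{-1}^1\mathbf{U}_j(x)\mathbf{U}_k(x)\sqrt{1-x^2}\,\mathrm dx=\frac{\pi}{2}\delta_{jk}$ together with the substitution $x=\cos\theta$ yields $\alpha_{\widetilde\beta}=\sqrt{\sum_m(\widetilde\beta_m-\widetilde\beta_{m+2})^2}=\frac{2}{\sqrt\pi}\norm{p(\cos)\sin}_{2,[-\pi,\pi]}$.

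Finally I would feed this block encoding into the amplifier \lem{amp_block}: with the given $\alpha_p\geq\norm{p(A/\alpha_A)}$ and $\alpha:=\frac{\alpha_{\mathbf{Pad}(A)^{-1}}\alpha_{\widetilde\beta}}{\sqrt n}=\mathbf{\Theta}\!\left(\sqrt n\,\alphaU\norm{p(\cos)\sin}_{2,[-\pi,\pi]}\right)$, it produces a block encoding of $\frac{p(A/\alpha_A)}{2\alpha_p}$ to accuracy $\epsilon$ using $\mathbf{O}\!\left(\frac{\alpha}{\alpha_p}\log\frac{\alpha}{\alpha_p\epsilon}\right)$ queries to the block encoding from the previous step. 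Choosing the inner accuracy $\epsilon'=\mathbf{\Theta}(\alpha_p\epsilon/\alpha)$ so that the operator-norm error surviving the amplification (which scales the input error by a factor $\alpha/\alpha_p$) is $\mathbf{O}(\epsilon)$, and multiplying the two query counts, produces the stated complexity. I expect the main obstacle to be the bookkeeping in the composition step---verifying precisely which isometries realize the composed block encoding, confirming that only the $\ket{1}$ branch of \eq{pad_output} survives the projection for a generic $\ket{\phi}$, and tracking how the $\epsilon'$-error of \lem{inv_block} propagates through \lem{amp_block}; by contrast, the identification $\alpha_{\widetilde\beta}\sim\norm{p(\cos)\sin}_{2,[-\pi,\pi]}$, though it is the conceptual source of the $\sqrt n$ in the final bound, is a short computation once the $\mathbf{T}$-to-$\mathbf{U}$ conversion is in place.
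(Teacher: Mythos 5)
Your proposal is correct and follows essentially the same sequence of steps as the paper's proof: apply \lem{inv_block} to $\mathbf{Pad}(A)/4$, compose with the state preparation of the shifted coefficients on the right and the unpreparation of $\ket{1}\tfrac{1}{\sqrt n}\sum_l\ket{l}$ on the left via \eq{shift_coeff} and \eq{pad_output}, identify the normalization $\alpha_{p,\text{pre}}=\alpha_{\scriptscriptstyle\mathbf{Pad(A)}}\alpha_{\widetilde\beta}/\sqrt n$, and finish with \lem{amp_block}. The one place you deviate in presentation is the evaluation of $\alpha_{\widetilde\beta}$: the paper recognizes $\widetilde\beta_j-\widetilde\beta_{j+2}$ directly as the Fourier (sine) coefficients of $2\,p(\cos\theta)\sin\theta$ and invokes Parseval, whereas you convert $p$ into a second-kind Chebyshev series $p=\tfrac12\sum_m(\widetilde\beta_m-\widetilde\beta_{m+2})\mathbf{U}_m$ and use the $\mathbf{U}_j$-orthogonality under the weight $\sqrt{1-x^2}$; since $\mathbf{U}_j(\cos\theta)\sin\theta=\sin((j{+}1)\theta)$ these are the same calculation in two notations, and both give $\alpha_{\widetilde\beta}=\Theta(\norm{p(\cos)\sin}_{2,[-\pi,\pi]})$. (As a small note: for the identity $\widetilde{\mathbf{T}}_k=\tfrac12(\mathbf{U}_k-\mathbf{U}_{k-2})$ to hold uniformly in $k\geq0$ one should set $\mathbf{U}_{-1}=\mathbf{U}_{-2}=0$; your stated convention $\mathbf{U}_{-2}=-1$ corresponds instead to $\mathbf{T}_k=\tfrac12(\mathbf{U}_k-\mathbf{U}_{k-2})$, but after converting $\beta_0\mapsto\tfrac12\widetilde\beta_0$ the extra term cancels and you land on the same final expression, so the slip is harmless.)
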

\begin{proof}
    Applying \lem{inv_block} to $\mathbf{Pad}(A)/4$, we get a block encoding of
    \begin{equation}
        \frac{\mathbf{Pad}(A)^{-1}}{2\alpha_{\scriptscriptstyle \mathbf{Pad(A)}^{-1}}}
    \end{equation}
    with $\alpha_{\scriptscriptstyle \mathbf{Pad(A)}^{-1}}=\mathbf{O}\left(\alphaU n\right)$. 
    Together with the preparation of $\ket{0}\frac{1}{\alpha_{\widetilde\beta}}\sum_{k=0}^{n-1}(\widetilde\beta_k-\widetilde\beta_{k+2})\ket{n-1-k}$ and unpreparation of $\ket{1}\frac{1}{\sqrt{n}}\sum_{k=0}^{n-1}\ket{k}$, we obtain the block encoding
    \begin{equation}
    \begin{aligned}
        &\left(\bra{1}\frac{1}{\sqrt{n}}\sum_{k=0}^{n-1}\bra{k}\otimes I\right)
        \frac{\mathbf{Pad}(A)^{-1}}{2\alpha_{\scriptscriptstyle \mathbf{Pad(A)}^{-1}}}
        \left(\ket{0}\frac{1}{\alpha_{\widetilde\beta}}\sum_{k=0}^{n-1}(\widetilde\beta_k-\widetilde\beta_{k+2})\ket{n-1-k}\otimes I\right)\\
        &=\frac{1}{\sqrt{n}\alpha_{\scriptscriptstyle \mathbf{Pad(A)}^{-1}}\alpha_{\widetilde\beta}}
        \left(\bra{1}\sum_{k=0}^{n-1}\bra{k}\otimes I\right)
        \mathbf{Pad}(A)^{-1}
        \mathbf{Pad}(B)
        \left(\ket{0}\sum_{k=0}^{n-1}\widetilde{\beta}_k\ket{n-1-k}\otimes I\right)\\
        &=\frac{1}{\sqrt{n}\alpha_{\scriptscriptstyle \mathbf{Pad(A)}^{-1}}\alpha_{\widetilde\beta}}
        \left(\sum_{k=0}^{n-1}\bra{k}\otimes I\right)
        \left(\sum_{l=0}^{n-1}\ket{l}
        \sum_{k=0}^{n-1}\widetilde{\beta}_k\widetilde{\mathbf{T}}_{k}\left(\frac{A}{\alpha_A}\right)\otimes I\right)
        =\frac{p\left(\frac{A}{\alpha_A}\right)}{\alpha_{p,\text{pre}}}
    \end{aligned}
    \end{equation}
    with
    \begin{equation}
        \alpha_{p,\text{pre}}=\frac{\alpha_{\scriptscriptstyle \mathbf{Pad(A)}^{-1}}\alpha_{\widetilde\beta}}{\sqrt{n}}
        =\mathbf{O}\left(\alphaU \sqrt{n}\alpha_{\widetilde\beta}\right),
    \end{equation}
    where the first equality follows from \eq{shift_coeff} and \eq{pad_b}, and the second equality follows from \eq{pad_output}.

    We now claim that Euclidean norm of the shifted coefficients scales like
    \begin{equation}
        \alpha_{\widetilde\beta}=\sqrt{\sum_{k=0}^{n-1}|\widetilde{\beta}_k-\widetilde{\beta}_{k+2}|^2}
        =\mathbf{O}\left(\norm{p(\cos)\sin}_{2,[-\pi,\pi]}\right).
    \end{equation}
    Here the Chebyshev coefficients are shifted in the time domain, so the target function will have a phase shift in the frequency domain. Indeed, a direct calculation shows that
\begin{equation}
\begin{aligned}
    \widetilde\beta_{j}-\widetilde\beta_{j+2}
    &=\frac{2}{\pi}\int_{0}^{\pi}\mathrm{d}\theta\
    p(\cos(\theta))\left(\cos(j\theta)-\cos((j+2)\theta)\right)\\
    &=\frac{4}{\pi}\int_{0}^{\pi}\mathrm{d}\theta\
    p(\cos(\theta))\sin(\theta)\sin((j+1)\theta)\\
    &=\frac{2}{\pi}\int_{-\pi}^{\pi}\mathrm{d}\theta\
    p(\cos(\theta))\sin(\theta)\sin((j+1)\theta).
\end{aligned}
\end{equation}
Thus, $\widetilde\beta_{j}-\widetilde\beta_{j+2}$ can be seen as the Fourier coefficients of the odd function $2p(\cos{\theta})\sin{\theta}$. Invoking Parseval's theorem with the convention that $\widetilde{\beta}_{n}=\widetilde{\beta}_{n+1}=\cdots=0$, we have
\begin{equation}
    \alpha_{\widetilde\beta}
    =\sqrt{\sum_{k=0}^{\infty}|\widetilde{\beta}_k-\widetilde{\beta}_{k+2}|^2}
    =\sqrt{\frac{1}{\pi}\int_{-\pi}^{\pi}\mathrm{d}\theta\
    4\abs{p(\cos{\theta})\sin{\theta}}^2}
    \sim\norm{p(\cos)\sin}_{2,[-\pi,\pi]}.
\end{equation}
The theorem now follows from \lem{amp_block}.
\end{proof}
\begin{remark}
    For a discussion about the scaling of $\alphaU$, see the remark succeeding \thm{generate_history}. This block encoding version of the eigenvalue transformation algorithm (\thm{qevt_block}) underperforms the state version (\thm{qevt}) for the differential equation problem and the ground state preparation problem to be considered in \sec{app}. This is because both versions have a similar gate complexity to generate the Chebyshev history state. But the block encoding algorithm introduces an additional normalization factor $\alpha_{p,\text{pre}}$, which needs to be further amplified. It is for this reason that \thm{qevt_block} will not be used in the remainder of the paper. However, the block-encoded version can become useful when QEVT is invoked as a subroutine in designing other quantum algorithms.
    The scaling $\mathbf{O}(n^{1.5})$ with degree of the target polynomial is improved to $\mathbf{O}(n)$ by recent work~\cite{OptInit} through block preconditioning.

    To implement this algorithm with quantum circuits, we need to prepare the state $\frac{1}{\alpha_{\widetilde\beta}}\sum_{k=0}^{n-1}(\widetilde\beta_k-\widetilde\beta_{k+2})\ket{n-1-k}$ encoding the shifted Chebyshev coefficients. As discussed in the remark succeeding \thm{qevt}, this state can be prepared using the technique from \sec{fourier}. However, unlike \thm{qevt}, we only need to prepare this state once to get the preamplified block encoding. Hence, the cost of the state preparation is much less than that for inverting $\mathbf{Pad}(A)/4$, so it may also be acceptable to prepare this state using the conventional approach~\cite{ShendeBullockMarkov06}.

    Finally, the normalization factor $\norm{p(\cos)\sin}_{2,[-\pi,\pi]}$ corresponds to the $\mathcal{L}_2$-norm of the phase shifted function $p(\cos(\theta))\sin(\theta)$ evaluated in the frequency domain. We assume that this norm can be efficiently computed to arbitrary precision on a classical computer as is often the case; otherwise, we can replace it with a known upper bound $\alpha_{p(\cos)\sin}\geq\norm{p(\cos)\sin}_{2,[-\pi,\pi]}$. Asymptotically, this is better than the more familiar $\norm{p}_{\max,[-1,1]}$ since
    \begin{equation}
    \begin{aligned}
        \norm{p(\cos)\sin}_{2,[-\pi,\pi]}
        &=\sqrt{\int_{-\pi}^{\pi}\mathrm{d}\theta\
    \abs{p(\cos{\theta})\sin{\theta}}^2}
        =\sqrt{2\int_{0}^{\pi}\mathrm{d}\theta\
    \abs{p(\cos{\theta})}^2\sin^2{\theta}}\\
        &=\sqrt{-2\int_{1}^{-1}\mathrm{d}x\
    \abs{p(x)}^2\sqrt{1-x^2}}
        =\mathbf{O}\left(\norm{p}_{2,[-1,1]}\right)
        =\mathbf{O}\left(\norm{p}_{\max,[-1,1]}\right).
    \end{aligned}
    \end{equation}
\end{remark}

\section{Fourier coefficients generation}
\label{sec:fourier}
In this section, we describe an efficient quantum circuit for generating $n$ Fourier coefficients, encoded by the amplitudes of a quantum state. 
This state can be prepared using standard circuit techniques with a gate complexity of $\mathbf{\Theta}(n)$. 
However, our truncate order $n$ in general scales polynomially with the input parameters (such as the evolution time and the inverse spectral gap), and can lead to a significant overhead. 
Our new result has gate complexity $\mathbf{O}(\polylog(n))$.

We begin by introducing the problem in \sec{fourier_conv}, explaining how the generation of Fourier coefficients can be achieved with a frequency domain convolution. Such a convolution is given by a Riemann integral, which we implement using the circuit described in \sec{fourier_block}. However, due to presence of the Dirichlet kernel, integrand of the convolution changes dramatically throughout the entire domain, which can be costly to implement directly. We apply a rescaling principle for Riemann integrals to significantly reduce the implementation cost in \sec{fourier_rescale}. Finally, we summarize the quantum circuit for generating the Fourier coefficients as \thm{fourier_coeff} and analyze it in \sec{fourier_summary}. The generation of Chebyshev coefficients follows immediately as Chebyshev expansions can be reformulated as Fourier expansions through a change of variables.

\subsection{Fourier coefficients generation with frequency domain convolution}
\label{sec:fourier_conv}

Let $g$ be a $2\pi$-periodic function with the Fourier expansion
\begin{equation}
\label{eq:full_fourier}
	g(\omega)=\sum_{j=-\infty}^\infty\xi_je^{-ij\omega}
	=\cdots+\xi_{-2}e^{2i\omega}+\xi_{-1}e^{i\omega}+\xi_{0}
	+\xi_{1}e^{-i\omega}+\xi_{2}e^{-2i\omega}+\cdots
\end{equation}
Then the problem of generating Fourier coefficients is to construct a block encoding for the operator
\begin{equation}
\label{eq:gen_fourier}
    \sum_{j=0}^{n-1}\xi_jL^j=\begin{bmatrix}
    \xi_0 & & &\\
    \xi_1 & \xi_0 & &\\
    \vdots & \ddots & \ddots &\\
    \xi_{n-1} & \cdots & \xi_1 & \xi_0\\
    \end{bmatrix},
\end{equation}
where $\{\xi_j\}_{j=0}^{n-1}$ are the first $n$ Fourier coefficients with nonnegative indices, and $L$ is the $n$-by-$n$ lower shift matrix.

There are a number of places in the paper where we need an efficient quantum circuit for generating Fourier coefficients. For instance, in the Chebyshev eigenvalue transformation algorithm, we need to prepare an initial state of the following form encoding the shifted Chebyshev coefficients
\begin{equation}
    O_{\widetilde\beta}\ket{0}=\frac{1}{\alpha_{\widetilde\beta}}\sum_{k=0}^{n-1}(\widetilde\beta_k-\widetilde\beta_{k+2})\ket{n-1-k},\qquad \alpha_{\widetilde\beta}=\sqrt{\sum_{k=0}^{n-1}|\widetilde{\beta}_k-\widetilde{\beta}_{k+2}|^2}.
\end{equation}
This can be achieved by treating the Chebyshev expansion as a Fourier expansion, and applying the above block encoding to $\ket{0}$. This generates the state $\frac{\sum_{j=0}^{n-1}\widetilde{\beta}_j\ket{j}}{\norm{\widetilde\beta}}$, which can be shifted to the desired form by applying $(I-L)/2$ followed by a further amplitude amplification.
On the other hand, for the Faber-based algorithm to be discussed in \sec{faber_history}, we need to block encode operators such as
\begin{equation}
	\mathbf{\Psi}'(L^{-1}),\qquad
	L\mathbf{\Psi}(L^{-1}),
\end{equation}
where both functions have Laurent expansions with only nonnegative powers (they are purely power series expansions):
\begin{equation}
	\mathbf{\Psi}'(w^{-1})=\varsigma -\varsigma_1w^2-2\varsigma_2w^3+\cdots\qquad
	w\mathbf{\Psi}(w^{-1})=\varsigma +\varsigma_0w+\varsigma_1w^2+\varsigma_2w^3+\cdots
\end{equation}
To handle this, we take $w\rightarrow e^{-i\omega}$ and turn the above Laurent expansions into Fourier expansions
\begin{equation}
	\mathbf{\Psi}'(e^{i\omega})=\varsigma -\varsigma_1e^{-2i\omega}-2\varsigma_2e^{-3i\omega}+\cdots\qquad
	e^{-i\omega}\mathbf{\Psi}(e^{i\omega})=\varsigma +\varsigma_0e^{-i\omega}+\varsigma_1e^{-2i\omega}+\varsigma_2e^{-3i\omega}+\cdots
\end{equation}
Thus we can invoke our circuit with respect to the Fourier expansions of $\mathbf{\Psi}'(e^{i\omega})$ and $e^{-i\omega}\mathbf{\Psi}(e^{i\omega})$ and obtain the desired block encoding.

The key idea behind our approach is to generate the Fourier coefficients using a frequency domain convolution. To be more specific, recall from \lem{shift} that for $j=0,1,\ldots,n-1$,
\begin{equation}
    L_{n}^j=\left(I_{2n}\otimes\bra{0}\right)\mathrm{CMP}_{2n,2}\left(I_{2n}\otimes\ket{0}\right)X_{2n}^j\left(I_{2n}\otimes\bra{0}\right)\mathrm{CMP}_{2n,2}\left(I_{2n}\otimes\ket{0}\right).
\end{equation}
Here, the cyclic shift operator $X_{2n}$ can be diagonalized by the quantum Fourier transform $F_{2n}=\frac{1}{\sqrt{2n}}\sum_{l,m=0}^{2n-1}e^{-i\frac{2\pi}{2n}lm}\ketbra{l}{m}$ as $F_{2n} X_{2n}F_{2n}^\dagger=\sum_{m=0}^{2n-1}e^{-i\frac{2\pi}{2n}m}\ketbra{m}{m}=Z_{2n}$, resulting in
\begin{equation}
\label{eq:block_L_Z}
    L_{n}^j=\left(\left(I_{2n}\otimes\bra{0}\right)\mathrm{CMP}_{2n,2}\left(I_{2n}\otimes\ket{0}\right)F_{2n}^\dagger\right)
    Z_{2n}^j
    \left(F_{2n}\left(I_{2n}\otimes\bra{0}\right)\mathrm{CMP}_{2n,2}\left(I_{2n}\otimes\ket{0}\right)\right).
\end{equation}
Thus for each diagonal element of $Z_{2n}$ labeled by $m=0,1,\ldots,2n-1$, we need to implement
\begin{equation}
	\label{eq:trunc_fourier}
	\xi_0+\xi_1e^{-i\frac{2\pi}{2n}m}+\xi_2e^{-2i\frac{2\pi}{2n}m}+\cdots
	+\xi_{n-1}e^{-(n-1)i\frac{2\pi}{2n}m}.
\end{equation}
Comparing \eq{full_fourier} and \eq{trunc_fourier}, we arrive at the integral representation
\begin{equation}
\label{eq:fourier_coeff_int}
	\xi_0+\xi_1e^{-i\frac{2\pi}{2n}m}+\xi_2e^{-2i\frac{2\pi}{2n}m}+\cdots
	+\xi_{n-1}e^{-(n-1)i\frac{2\pi}{2n}m}
	=\frac{1}{2\pi}\int_{-\pi}^{\pi}\mathrm{d}u\ g\left(\frac{m\pi}{n}-u\right)\frac{1-e^{-niu}}{1-e^{-iu}}
\end{equation}
from the frequency domain convolution theorem. The circuit implementation of such a Riemann integral will be discussed in the next subsection.

\subsection{Block encoding Riemann integrals}
\label{sec:fourier_block}

Suppose we have a function $h:[a,b]\rightarrow\mathbb C$ with 
max-norm $\norm{h}_{\max,[a,b]}$
that is $\mu_h$-Lipschitz continuous such that $\abs{h(z)-h(w)}\leq\mu_h\abs{z-w}$ for all $z,w\in[a,b]$. Our goal here is to construct a block encoding of the Riemann integral
\begin{equation}
    \frac{\int_{a}^{b}\mathrm{d}x\ h(x)}{(b-a)\norm{h}_{\max,[a,b]}}.
\end{equation}

This is achieved by approximating the integral with a Riemann sum which can then be implemented using standard  techniques. Here we include a description of this block encoding for completeness.
To this end, let us first describe the oracular access model for the input function $h$. We assume that the input register takes $n_{\text{in}}$ values, and two output registers take ($n_{\text{Abs}}+1$) and $n_{\text{Arg}}$ values, holding the absolute value and argument of $h$ respectively. Then we introduce the oracles
    \begin{equation}
    \begin{aligned}
        O_{\text{Abs}}\ket{s,0}
        &=\bigg|s,\mathbf{Floor}\left(n_{\text{Abs}}\frac{\abs{h\left(a+\frac{b-a}{n_{\text{in}}}s\right)}}{\norm{h}_{\max,[a,b]}}\right)\bigg\rangle,\\
        O_{\text{Arg}}\ket{s,0}
        &=\bigg|s,\mathbf{Floor}\left(n_{\text{Arg}}\frac{\mathbf{Arg}\left(h\left(a+\frac{b-a}{n_{\text{in}}}s\right)\right)}{2\pi}\right)\bigg\rangle.
    \end{aligned}
    \end{equation}
Here, we segment $[a,b]$ into $n_{\text{in}}$ subintervals labeled by $s=0,1,\ldots,n_{\text{in}}-1$, so that $a+\frac{b-a}{n_{\text{in}}}s$ approximately represents a general number from $[a,b]$. The absolute value $\abs{h\left(a+\frac{b-a}{n_{\text{in}}}s\right)}$ is normalized by the max-norm $\norm{h}_{\max,[a,b]}$, so that $\mathbf{Floor}\left(n_{\text{Abs}}\frac{\abs{h\left(a+\frac{b-a}{n_{\text{in}}}s\right)}}{\norm{h}_{\max,[a,b]}}\right)=0,1,\ldots,n_{\text{Abs}}$ can be held by the first output register. Similarly, the argument $\mathbf{Arg}\left(h\left(a+\frac{b-a}{n_{\text{in}}}s\right)\right)\in[0,2\pi)$ is normalized by $2\pi$, so that $\mathbf{Floor}\left(n_{\text{Arg}}\frac{\mathbf{Arg}\left(h\left(a+\frac{b-a}{n_{\text{in}}}s\right)\right)}{2\pi}\right)=0,1,\ldots,n_{\text{Arg}}-1$ can be represented in the second output register.
Note that the value of max-norm $\norm{h}_{\max,[a,b]}$ can often be efficiently computed to arbitrary precision on a classical computer. But our method still works if the exact value of max-norm is replaced by its upper bound $\alpha_{h,\max}\geq\norm{h}_{\max,[a,b]}$.
We now implement the block encoding as follows.

\begin{lemma}[Block encoding Riemann integrals]
\label{lem:block_int}
    Let $h:[a,b]\rightarrow\mathbb C$ be a $\mu_h$-Lipschitz function with max-norm $\norm{h}_{\max,[a,b]}$. Suppose that its absolute value and argument are provided by the oracles
    \begin{equation}
    \begin{aligned}
        O_{\text{Abs}}\ket{s,0}
        &=\bigg|s,\mathbf{Floor}\left(n_{\text{Abs}}\frac{\abs{h\left(a+\frac{b-a}{n_{\text{in}}}s\right)}}{\norm{h}_{\max,[a,b]}}\right)\bigg\rangle,\\
        O_{\text{Arg}}\ket{s,0}
        &=\bigg|s,\mathbf{Floor}\left(n_{\text{Arg}}\frac{\mathbf{Arg}\left(h\left(a+\frac{b-a}{n_{\text{in}}}s\right)\right)}{2\pi}\right)\bigg\rangle,
    \end{aligned}
    \end{equation}
    with an $n_{\text{in}}$-value input register, and ($n_{\text{Abs}}+1$)- and $n_{\text{Arg}}$-value output registers respectively.
    Then, the normalized integral
    \begin{equation}
        \frac{\int_{a}^{b}\mathrm{d}x\ h(x)}{(b-a)\norm{h}_{\max,[a,b]}}
    \end{equation}
    can be block encoded with accuracy $\delta$ by setting $n_{\text{in}}=\mathbf{O}(\mu_h(b-a)/(\norm{h}_{\max,[a,b]}\delta)),n_{\text{Abs}},n_{\text{Arg}}=\mathbf{O}(1/\delta)$, using $2$ queries to $O_{\text{Abs}}$ and $1$ query to $O_{\text{Arg}}$, together with
    \begin{equation}
        \mathbf{O}\left(\log\left(\frac{\mu_h(b-a)}{\norm{h}_{\max,[a,b]}\delta}\right)\right)
    \end{equation}
    two-qubit gates.
\end{lemma}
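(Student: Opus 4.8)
The plan is to realize the normalized integral as a partial trace / amplitude taken from a uniform superposition over the $n_{\text{in}}$ subintervals, after the oracles have written the absolute value and argument of $h$ onto ancilla registers. First I would prepare the uniform state $\frac{1}{\sqrt{n_{\text{in}}}}\sum_{s=0}^{n_{\text{in}}-1}\ket{s}$ on the input register using Hadamards (so $\mathbf{O}(\log n_{\text{in}})$ gates). Then query $O_{\text{Abs}}$ to load the discretized value $a_s := \mathbf{Floor}(n_{\text{Abs}}|h(a+\frac{b-a}{n_{\text{in}}}s)|/\alpha_{h,\max})$ into the $(n_{\text{Abs}}+1)$-value register, and use a controlled rotation to convert that integer into an amplitude $a_s/n_{\text{Abs}} \approx |h(\cdot)|/\alpha_{h,\max}$ on a flag qubit — this is the standard ``inequality test / arcsin-free'' amplitude-loading trick, costing $\mathbf{O}(\log n_{\text{Abs}})$ gates, and it uses the register value directly rather than computing a rotation angle. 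Uncompute $O_{\text{Abs}}$ with a second query so the absolute-value register is freed. Next query $O_{\text{Arg}}$ to load the discretized argument and apply a diagonal phase gate $\ket{s,\theta_s}\mapsto e^{2\pi i \theta_s/n_{\text{Arg}}}\ket{s,\theta_s}$ implemented by $\mathbf{O}(\log n_{\text{Arg}})$ single-qubit phase rotations on the bits of the register; I would leave this register uncomputed-or-not as convenient, noting it can be absorbed into the definition of the output isometry (a garbage register that is the same on both branches does not affect the block encoding). Projecting the flag qubit and the input register onto $\bra{0}$ and $\frac{1}{\sqrt{n_{\text{in}}}}\sum_s\bra{s}$ respectively then yields $\frac{1}{n_{\text{in}}}\sum_s \frac{a_s}{n_{\text{Abs}}}e^{2\pi i\theta_s/n_{\text{Arg}}}$, which is a Riemann sum for $\frac{1}{(b-a)\alpha_{h,\max}}\int_a^b h(x)\,dx$.

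The remaining work is the error analysis, which I would split into three independent contributions, each controlled to $\mathbf{O}(\delta)$. (i) \emph{Quantization of the magnitude:} replacing $|h(\cdot)|/\alpha_{h,\max}$ by $a_s/n_{\text{Abs}}$ incurs an error $\le 1/n_{\text{Abs}}$ per term, hence $\le 1/n_{\text{Abs}}$ after averaging, so $n_{\text{Abs}}=\mathbf{O}(1/\delta)$ suffices. (ii) \emph{Quantization of the phase:} replacing $e^{2\pi i\,\mathbf{Arg}(h)/2\pi}$ by $e^{2\pi i\theta_s/n_{\text{Arg}}}$ moves the phase by $\le 2\pi/n_{\text{Arg}}$, so $|e^{i\phi}-e^{i\phi'}|\le 2\pi/n_{\text{Arg}}$, again $n_{\text{Arg}}=\mathbf{O}(1/\delta)$. (iii) \emph{Discretization of the integral:} the Riemann sum $\frac{b-a}{n_{\text{in}}}\sum_s h(a+\frac{b-a}{n_{\text{in}}}s)$ approximates $\int_a^b h$ with error $\mathbf{O}(\mu_h (b-a)^2/n_{\text{in}})$ by Lipschitz continuity; dividing by $(b-a)\alpha_{h,\max}$ gives a normalized error $\mathbf{O}(\mu_h(b-a)/(\alpha_{h,\max} n_{\text{in}}))$, so $n_{\text{in}}=\mathbf{O}(\mu_h(b-a)/(\alpha_{h,\max}\delta))$ is the binding requirement. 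Summing the three and using $\log(1/\delta)\le\log(\mu_h(b-a)/(\alpha_{h,\max}\delta))$ (the $n_{\text{in}}$ register dominates) gives the claimed two-qubit gate count $\mathbf{O}(\log(\mu_h(b-a)/(\alpha_{h,\max}\delta)))$ and query count $2$ to $O_{\text{Abs}}$ plus $1$ to $O_{\text{Arg}}$.

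I would then assemble these into a genuine block encoding: the isometry $G_0$ maps the system Hilbert space into the enlarged space by appending $\frac{1}{\sqrt{n_{\text{in}}}}\sum_s\ket{s}\otimes\ket{0}\otimes\ket{0}\otimes\ket{0}$ (input, magnitude-flag, abs register, arg register) — a valid isometry since it is a tensor product of a unit state with the identity — and similarly for $G_1$, with the unitary $U$ being the composition of Hadamards, $O_{\text{Abs}}$, the controlled magnitude rotation, $O_{\text{Abs}}^\dagger$, $O_{\text{Arg}}$, and the diagonal phase gate. Verifying $G_1^\dagger U G_0$ equals the normalized integral up to $\delta$ is exactly the Riemann-sum computation above. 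One subtlety worth a sentence: since $a_s\le n_{\text{Abs}}$ always, the magnitude-loading rotation is well defined (amplitude $\le 1$), using $\alpha_{h,\max}\ge\|h\|_{\max}$; and if $h$ can vanish the rotation simply produces amplitude $0$, which is fine.

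The main obstacle I expect is not any single step but the bookkeeping of making the discretized phase truly implementable by $\mathbf{O}(\log n_{\text{Arg}})$ elementary gates while keeping the argument register either uncomputed or provably harmless — i.e., arguing cleanly that a shared garbage register on both the bra and ket sides of the block encoding does not spoil the encoded operator. The other delicate point is choosing the \emph{order} of the three error budgets so that the logarithm in the final gate count is exactly $\log(\mu_h(b-a)/(\alpha_{h,\max}\delta))$ and not, say, $\log(1/(\alpha_{h,\max}\delta))$ with a stray factor; this is handled by observing $n_{\text{in}}$ is the largest of the three register sizes (assuming, as is natural, $\mu_h(b-a)\gtrsim\alpha_{h,\max}$, and otherwise the integral is trivially well-approximated by a single point so the bound still holds).
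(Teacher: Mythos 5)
The overall structure of your proposal matches the paper's (uniform superposition over the $n_{\text{in}}$ sample points, quantized magnitude and phase loaded via the two oracles, and a three-part error budget from magnitude quantization, phase quantization, and Riemann discretization), but there is a concrete error in the amplitude-loading step. A single inequality test --- preparing a uniform reference register $\frac{1}{\sqrt{n_{\text{Abs}}}}\sum_{x}\ket{x}$ and writing the comparison bit of $a_s$ versus $x$ onto a flag --- produces amplitude $\sqrt{a_s/n_{\text{Abs}}}$ on the flag-$\ket{0}$ branch, not $a_s/n_{\text{Abs}}$: the \emph{probability} is linear in $a_s$, the amplitude is its square root. A one-qubit controlled rotation that produces amplitude $a_s/n_{\text{Abs}}$ directly would have to compute an arccosine of the $\log n_{\text{Abs}}$-bit integer $a_s$, which is not an inequality test and does not cost $\mathbf{O}(\log n_{\text{Abs}})$ gates; your description conflates the two. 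With $G_0=G_1$ both equal to the uniform $s$-superposition tensored with all-zero ancillas and only a single inequality test inside $U$, the matrix element $\bra{G_1}U\ket{G_0}$ therefore carries a factor $\sqrt{a_s/n_{\text{Abs}}}$ rather than $a_s/n_{\text{Abs}}$, so the Riemann sum you claim is not what the circuit block encodes.

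The paper resolves this by a prepare--select--unprepare structure hidden in its step 7 (``reverse the first four steps''): the output isometry $G_1$ is defined as the state obtained from $\ket{0}$ by applying the Hadamards, $O_{\text{Abs}}$, the inequality test, and $O_{\text{Arg}}$, so $G_1$ itself carries the factor $\sqrt{a_s/n_{\text{Abs}}}$ (together with the matching garbage $\ket{\phi_0}$ of the reference register), and the inner product with the ket squares this to $a_s/n_{\text{Abs}}$. Equally essential is the paper's step 6, which swaps the flag outcome onto a fresh ancilla: afterward the ket carries $\ket{0,0}$ and $\ket{0,1}$ on (flag, ancilla) while the bra carries $\bra{0,0}$ and $\bra{1,0}$, so only the flag-$0$ branch contributes and every cross term and the $\sqrt{1-a_s/n_{\text{Abs}}}$ branch vanish --- without this selection the overlap would not isolate the desired term. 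Your concern about garbage at the end is actually this same difficulty in disguise: the inequality-test reference register is left in a state that depends on $s$ through $a_s$, so it cannot be cleared by uncomputing $O_{\text{Abs}}$ early as you propose, and it must instead be reproduced inside $G_1$ --- which is exactly what the reversal achieves. With these two corrections (inequality test reflected on both sides of the block encoding and the ancilla swap), your construction becomes the paper's and the stated query and gate counts follow.
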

\begin{proof}
    We perform the block encoding as follows:
    \begin{enumerate}
        \item We prepare a uniform superposition state 
        \begin{equation}
            \frac{1}{\sqrt{n_{\text{in}}}}\sum_{s=0}^{n_{\text{in}}-1}\ket{s}
        \end{equation}
        for $n_{\text{in}}$ sufficiently large. This has gate complexity $\mathbf{O}(\log(n_{\text{in}}))$.
        \item We apply the oracle $O_{\text{Abs}}$. This has query complexity $1$.
        \item We introduce the uniform superposition state
        \begin{equation}
            \frac{1}{\sqrt{n_{\text{Abs}}}}\sum_{x=0}^{n_{\text{Abs}}-1}\ket{x}
        \end{equation}
        and test the inequality
        \begin{equation}
            \mathbf{Floor}\left(n_{\text{Abs}}\frac{\abs{h\left(a+\frac{b-a}{n_{\text{in}}}s\right)}}{\norm{h}_{\max,[a,b]}}\right)
            \leq x.
        \end{equation}
        The state of the entire system becomes
        \begin{footnotesize}
        \newmaketag
        \begin{equation}
            \frac{1}{\sqrt{n_{\text{in}}}}\sum_{s=0}^{n_{\text{in}}-1}\ket{s}
            \left(\sqrt{\frac{\mathbf{Floor}\left(n_{\text{Abs}}\frac{\abs{h\left(a+\frac{b-a}{n_{\text{in}}}s\right)}}{\norm{h}_{\max,[a,b]}}\right)}{n_{\text{Abs}}}}\ket{\phi_0}\ket{0}+\sqrt{1-\frac{\mathbf{Floor}\left(n_{\text{Abs}}\frac{\abs{h\left(a+\frac{b-a}{n_{\text{in}}}s\right)}}{\norm{h}_{\max,[a,b]}}\right)}{n_{\text{Abs}}}}\ket{\phi_1}\ket{1}\right)
        \end{equation}
        \end{footnotesize}
        for some auxiliary states $\ket{\phi_0}$ and $\ket{\phi_1}$. 
        This has gate complexity $\mathbf{O}(\log(n_{\text{Abs}}))$.
        \item We apply the oracle $O_{\text{Arg}}$. This has query complexity $1$.
        \item We prepare a phase gradient state over $n_{\text{Arg}}$ values, and add the argument to it with gate complexity $\mathbf{O}(\log(n_{\text{Arg}}))$~\cite[Appendix A]{SandersCompilation20}. Omitting unnecessary registers, we obtain
        \begin{footnotesize}
        \newmaketag
        \begin{equation}
        \begin{aligned}
            &\frac{1}{\sqrt{n_{\text{in}}}}\sum_{s=0}^{n_{\text{in}}-1}\exp\left(i\frac{2\pi}{n_{\text{Arg}}}\mathbf{Floor}\left(n_{\text{Arg}}\frac{\mathbf{Arg}\left(h\left(a+\frac{b-a}{n_{\text{in}}}s\right)\right)}{2\pi}\right)\right)\ket{s}\\
            &\qquad\otimes\left(\sqrt{\frac{\mathbf{Floor}\left(n_{\text{Abs}}\frac{\abs{h\left(a+\frac{b-a}{n_{\text{in}}}s\right)}}{\norm{h}_{\max,[a,b]}}\right)}{n_{\text{Abs}}}}\ket{\phi_0}\ket{0}+\sqrt{1-\frac{\mathbf{Floor}\left(n_{\text{Abs}}\frac{\abs{h\left(a+\frac{b-a}{n_{\text{in}}}s\right)}}{\norm{h}_{\max,[a,b]}}\right)}{n_{\text{Abs}}}}\ket{\phi_1}\ket{1}\right).
        \end{aligned}
        \end{equation}
        \end{footnotesize}
        \item We introduce an ancilla register in state $\ket{0}$ and swap out outcome of the inequality test
        \begin{footnotesize}
        \newmaketag
        \begin{equation}
        \begin{aligned}
            &\frac{1}{\sqrt{n_{\text{in}}}}\sum_{s=0}^{n_{\text{in}}-1}\exp\left(i\frac{2\pi}{n_{\text{Arg}}}\mathbf{Floor}\left(n_{\text{Arg}}\frac{\mathbf{Arg}\left(h\left(a+\frac{b-a}{n_{\text{in}}}s\right)\right)}{2\pi}\right)\right)\ket{s}\\
            &\qquad\otimes\left(\sqrt{\frac{\mathbf{Floor}\left(n_{\text{Abs}}\frac{\abs{h\left(a+\frac{b-a}{n_{\text{in}}}s\right)}}{\norm{h}_{\max,[a,b]}}\right)}{n_{\text{Abs}}}}\ket{\phi_0}\ket{0,0}+\sqrt{1-\frac{\mathbf{Floor}\left(n_{\text{Abs}}\frac{\abs{h\left(a+\frac{b-a}{n_{\text{in}}}s\right)}}{\norm{h}_{\max,[a,b]}}\right)}{n_{\text{Abs}}}}\ket{\phi_1}\ket{0,1}\right).
        \end{aligned}
        \end{equation}
        \end{footnotesize}
        \item Finally, we reverse the first four steps. This is described by the bra vector (omitting unnecessary registers)
        \begin{footnotesize}
        \newmaketag
        \begin{equation}
            \frac{1}{\sqrt{n_{\text{in}}}}\sum_{s=0}^{n_{\text{in}}-1}\bra{s}
            \left(\sqrt{\frac{\mathbf{Floor}\left(n_{\text{Abs}}\frac{\abs{h\left(a+\frac{b-a}{n_{\text{in}}}s\right)}}{\norm{h}_{\max,[a,b]}}\right)}{n_{\text{Abs}}}}\bra{\phi_0}\bra{0,0}+\sqrt{1-\frac{\mathbf{Floor}\left(n_{\text{Abs}}\frac{\abs{h\left(a+\frac{b-a}{n_{\text{in}}}s\right)}}{\norm{h}_{\max,[a,b]}}\right)}{n_{\text{Abs}}}}\bra{\phi_1}\bra{1,0}\right).
        \end{equation}
        \end{footnotesize}
    \end{enumerate}
    
    The above procedure allows us to block encode
    \begin{equation}
    \begin{aligned}
        &\ \frac{1}{n_{\text{in}}}\sum_{s=0}^{n_{\text{in}}-1}\exp\left(i\frac{2\pi}{n_{\text{Arg}}}\mathbf{Floor}\left(n_{\text{Arg}}\frac{\mathbf{Arg}\left(h\left(a+\frac{b-a}{n_{\text{in}}}s\right)\right)}{2\pi}\right)\right)
        \frac{\mathbf{Floor}\left(n_{\text{Abs}}\frac{\abs{h\left(a+\frac{b-a}{n_{\text{in}}}s\right)}}{\norm{h}_{\max,[a,b]}}\right)}{n_{\text{Abs}}}\\
        \overset{(1)}{\approx}&\
        \frac{1}{n_{\text{in}}}\sum_{s=0}^{n_{\text{in}}-1}e^{i\mathbf{Arg}\left(h\left(a+\frac{b-a}{n_{\text{in}}}s\right)\right)}
        \frac{\mathbf{Floor}\left(n_{\text{Abs}}\frac{\abs{h\left(a+\frac{b-a}{n_{\text{in}}}s\right)}}{\norm{h}_{\max,[a,b]}}\right)}{n_{\text{Abs}}}\\
        \overset{(2)}{\approx}&\
        \frac{1}{n_{\text{in}}}\sum_{s=0}^{n_{\text{in}}-1}e^{i\mathbf{Arg}\left(h\left(a+\frac{b-a}{n_{\text{in}}}s\right)\right)}
        \frac{\abs{h\left(a+\frac{b-a}{n_{\text{in}}}s\right)}}{\norm{h}_{\max,[a,b]}}\\
        =&\ \frac{1}{n_{\text{in}}}\sum_{s=0}^{n_{\text{in}}-1}
        \frac{h\left(a+\frac{b-a}{n_{\text{in}}}s\right)}{\norm{h}_{\max,[a,b]}}
        \overset{(3)}{\approx}
        \frac{\int_{a}^{b}\mathrm{d}x\ h(x)}{(b-a)\norm{h}_{\max,[a,b]}}.
    \end{aligned}
    \end{equation}
    It is easy to see that the first error is at most $\frac{2\pi}{n_{\text{Arg}}}$ and the second error is at most $\frac{1}{n_{\text{Abs}}}$. So our remaining task is to bound the third error, which comes from discretizing the Riemann integral as a Riemann sum.

    We discretize the integral as
    \begin{equation}
        \int_{a}^{b}\mathrm{d}x\ h(x)
        =\sum_{s=0}^{n_{\text{in}}-1}\int_{a+\frac{b-a}{n_{\text{in}}}s}^{a+\frac{b-a}{n_{\text{in}}}(s+1)}\mathrm{d}x\ h(x)
        \approx\sum_{s=0}^{n_{\text{in}}-1}\frac{b-a}{n_{\text{in}}}
        h\left(a+\frac{b-a}{n_{\text{in}}}s\right).
    \end{equation}
    Since $h$ is $\mu_h$-Lipschitz, within each subinterval the error contribution is at most
    \begin{equation}
        \abs{\int_{a+\frac{b-a}{n_{\text{in}}}s}^{a+\frac{b-a}{n_{\text{in}}}(s+1)}\mathrm{d}x\left(h(x)-h\left(a+\frac{b-a}{n_{\text{in}}}s\right)\right)}
        \leq\int_{a+\frac{b-a}{n_{\text{in}}}s}^{a+\frac{b-a}{n_{\text{in}}}(s+1)}\mathrm{d}x\ \mu_h\abs{x-a-\frac{b-a}{n_{\text{in}}}s}
        =\mu_h\frac{(b-a)^2}{2n_{\text{in}}^2}.
    \end{equation}
    So altogether,
    \begin{equation}
        \abs{
        \frac{1}{n_{\text{in}}}\sum_{s=0}^{n_{\text{in}}-1}
        \frac{h\left(a+\frac{b-a}{n_{\text{in}}}s\right)}{\norm{h}_{\max,[a,b]}}
        -
        \frac{\int_{a}^{b}\mathrm{d}x\ h(x)}{(b-a)\norm{h}_{\max,[a,b]}}
        }
        \leq\frac{\mu_h(b-a)}{2n_{\text{in}}\norm{h}_{\max,[a,b]}}.
    \end{equation}
    The proof is now complete by choosing $n_{\text{in}}, n_{\text{Abs}}, n_{\text{Arg}}$ such that $\frac{2\pi}{n_{\text{Arg}}},\frac{1}{n_{\text{Abs}}},\frac{\mu_h(b-a)}{2n_{\text{in}}\norm{h}_{\max,[a,b]}}=\mathbf{O}(\delta)$.
\end{proof}

\subsection{Rescaling principle for Riemann integrals}
\label{sec:fourier_rescale}

Now that we have the quantum circuit for implementing Riemann integrals, it is tempting to apply it directly to the frequency domain convolution in \eq{fourier_coeff_int}. 
However, this would result in a gate complexity scaling linear in $n$, same as that of the standard approach for state preparation. The underlying reason is that our convolution in \eq{fourier_coeff_int} is essentially defined with respect to the Dirichlet kernel, whose function value changes dramatically over the domain of interest. In particular, $\mathcal{L}_\infty$-norm of the Dirichlet kernel grows like $\mathbf{\Theta}(n)$ which determines the block encoding normalization factor as per \lem{block_int}, although its $\mathcal{L}_1$-norm only scales like $\mathbf{\Theta}(\log(n))$.

Our idea to overcome this catastrophe is to use the following rescaling principle for Riemann integrals. 
\begin{lemma}[Rescaling principle for Riemann integrals]
\label{lem:rescale}
    Let $h:[a,b]\rightarrow\mathbb C$ and $\alpha:[a,b]\rightarrow\mathbb R_{>0}$ be continuous. Then the primitive function
    \begin{equation}
        \alpha_{1}(t)=\int_{a}^{t}\mathrm{d}\tau\ \alpha(\tau)
    \end{equation}
    is positive, continuously differentiable, and monotonically increasing on $[a,b]$, such that
    \begin{equation}
        \int_{a}^{b}\mathrm{d}t\ h(t)
        =\int_{0}^{\alpha_{1}(b)}\mathrm{d}s\ \frac{h\left(\alpha_{1}^{-1}(s)\right)}{\alpha\left(\alpha_{1}^{-1}(s)\right)}
        =\int_{0}^{\norm{\alpha}_{1,[a,b]}}\mathrm{d}s\ \frac{h\left(\alpha_{1}^{-1}(s)\right)}{\alpha\left(\alpha_{1}^{-1}(s)\right)}.
    \end{equation}
\end{lemma}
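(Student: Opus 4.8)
The plan is to recognize the identity as the substitution rule for integrals under the change of variable $s=\alpha_{h,1}(t)$, and to check that the regularity hypotheses needed to invoke it are all present.

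First I would record the structural properties of $\alpha_{h,1}$. Since $\alpha_h$ is continuous on $[a,b]$, the fundamental theorem of calculus gives that $\alpha_{h,1}(t)=\int_{a}^{t}\alpha_h(\tau)\,\mathrm{d}\tau$ is continuously differentiable with $\alpha_{h,1}'(t)=\alpha_h(t)$. Because $\alpha_h(t)>0$ throughout, $\alpha_{h,1}$ is strictly increasing; combined with $\alpha_{h,1}(a)=0$ this shows $\alpha_{h,1}(t)>0$ on $(a,b]$ and that $\alpha_{h,1}$ is a $C^1$ bijection from $[a,b]$ onto $[0,\alpha_{h,1}(b)]$. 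The inverse function theorem (applicable since $\alpha_{h,1}'=\alpha_h$ never vanishes) then yields that $\alpha_{h,1}^{-1}:[0,\alpha_{h,1}(b)]\rightarrow[a,b]$ is itself $C^1$ with $(\alpha_{h,1}^{-1})'(s)=1/\alpha_h(\alpha_{h,1}^{-1}(s))$; in particular the integrand $h(\alpha_{h,1}^{-1}(s))/\alpha_h(\alpha_{h,1}^{-1}(s))$ on the right-hand side is continuous, so all the stated integrals are well defined. The upper-limit identity is then immediate: since $\alpha_h>0$ on $[a,b]$ we have $\alpha_{h,1}(b)=\int_{a}^{b}\alpha_h(\tau)\,\mathrm{d}\tau=\int_{a}^{b}\abs{\alpha_h(\tau)}\,\mathrm{d}\tau=\norm{\alpha_h}_{1,[a,b]}$, which gives the second equality.

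For the first equality I would apply the change-of-variables formula. Writing $h=\Re h+i\,\Im h$ and treating the real and imaginary parts separately (so that the classical real-valued substitution theorem applies), the substitution $t=\alpha_{h,1}^{-1}(s)$, with $\mathrm{d}t=(\alpha_{h,1}^{-1})'(s)\,\mathrm{d}s=\mathrm{d}s/\alpha_h(\alpha_{h,1}^{-1}(s))$ and matching limits $t=a\leftrightarrow s=0$, $t=b\leftrightarrow s=\alpha_{h,1}(b)$, gives
\[
\int_{a}^{b}\mathrm{d}t\ h(t)=\int_{0}^{\alpha_{h,1}(b)}\mathrm{d}s\ \frac{h\left(\alpha_{h,1}^{-1}(s)\right)}{\alpha_h\left(\alpha_{h,1}^{-1}(s)\right)},
\]
and recombining the real and imaginary parts yields the claim. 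The main obstacle is not a deep one: the only points that need care are (i) that $\alpha_{h,1}^{-1}$ inherits $C^1$ regularity so that the substitution theorem applies with a continuous Jacobian factor $1/\alpha_h(\alpha_{h,1}^{-1}(\cdot))$, which the inverse function theorem supplies because $\alpha_{h,1}'=\alpha_h$ is continuous and strictly positive, and (ii) reducing the complex-valued integrand to the real case by splitting into real and imaginary parts. Everything else is a routine invocation of the fundamental theorem of calculus and the substitution rule.
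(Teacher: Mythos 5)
Your proposal is correct and takes the same route as the paper, which simply notes that the identity is the change of variables $s=\alpha_{h,1}(t)$. You have spelled out the routine verifications (FTC, inverse function theorem, splitting into real and imaginary parts, identifying $\alpha_{h,1}(b)$ with $\norm{\alpha_h}_{1,[a,b]}$) that the paper leaves implicit.
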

\begin{proof}
    This is just a change of variables with $s=\alpha_{1}(t)$.
\end{proof}

A similar rescaling principle was previously introduced for simulating time-ordered integrators~\cite{Berry2020timedependent}. Here, we use this technique to improve the circuit implementation of scalar Riemann integrals. To this end, we choose $\alpha(\tau)\sim\abs{h(\tau)}$ so that absolute value of the rescaled integrand
\begin{equation}
    \abs{\frac{h\left(\alpha_{1}^{-1}(s)\right)}{\alpha\left(\alpha_{1}^{-1}(s)\right)}}
    \sim\abs{\frac{h\left(\alpha_{1}^{-1}(s)\right)}{h\left(\alpha_{1}^{-1}(s)\right)}}=1
\end{equation}
is almost uniform, over the rescaled interval of length
\begin{equation}
    a_{1}(b)=
    \norm{\alpha}_{1,[a,b]}=\int_{a}^{b}\mathrm{d}\tau\ \alpha(\tau)
    \sim\int_{a}^{b}\mathrm{d}\tau\ \abs{h(\tau)}=\norm{h}_{1,[a,b]}.
\end{equation}
This technique thus helps ``flatten out'' a function whose instantaneous value may change dramatically over the domain.

Toward generating the Fourier coefficients via \eq{fourier_coeff_int}, we further simplify the frequency domain convolution as
\begin{equation}
	\begin{aligned}
		\frac{1}{2\pi}\int_{-\pi}^{\pi}\mathrm{d}u\ g\left(\frac{m\pi}{n}-u\right)\frac{1-e^{-niu}}{1-e^{-iu}}
		&=\frac{1}{2\pi}\int_{-\pi}^{\pi}\mathrm{d}u\ g\left(\frac{m\pi}{n}-u\right)e^{-\frac{n-1}{2}iu}\frac{\sin\left(\frac{nu}{2}\right)}{\sin\left(\frac{u}{2}\right)}\\
		&=\frac{1}{\pi}\int_{-\frac{\pi}{2}}^{\frac{\pi}{2}}\mathrm{d}v\ g\left(\frac{m\pi}{n}-2v\right)e^{-(n-1)iv}\frac{\sin\left(nv\right)}{\sin\left(v\right)}.\\
	\end{aligned}
\end{equation}
We will break the integral into the following $4$ parts:
\begin{enumerate}
    \item $[0,\frac{\pi}{2n}]$: In this case, we have the integral
	\begin{equation}
		\frac{1}{\pi}\int_{0}^{\frac{\pi}{2n}}\mathrm{d}v\ g\left(\frac{m\pi}{n}-2v\right)e^{-(n-1)iv}\frac{\sin\left(nv\right)}{\sin\left(v\right)}.
	\end{equation}
	The $\mathcal{L}_\infty$-norm of the integrand multiplied by length of the interval is bounded by
	\begin{equation}
		\frac{\pi}{2n}\frac{1}{\pi}\max_v\abs{g\left(\frac{m\pi}{n}-2v\right)e^{-(n-1)iv}\frac{\sin\left(nv\right)}{\sin\left(v\right)}}
		\leq\frac{1}{2n}\norm{g}_{\max,[-\pi,\pi]}\max_v\frac{nv}{\frac{2v}{\pi}}
		=\frac{\pi}{4}\norm{g}_{\max,[-\pi,\pi]}.
	\end{equation}
    \item $[\frac{\pi}{2n},\frac{\pi}{2}]$: In this case, we have the integral
	\begin{equation}
		\frac{1}{\pi}\int_{\frac{\pi}{2n}}^{\frac{\pi}{2}}\mathrm{d}v\ g\left(\frac{m\pi}{n}-2v\right)e^{-(n-1)iv}\frac{\sin\left(nv\right)}{\sin\left(v\right)}.
	\end{equation}
	In this case, the instantaneous absolute value of the integrand changes dramatically, and we need to rescale the integral by letting
	\begin{equation}
		v=e^s,
	\end{equation}
	so that
	\begin{equation}
		\frac{1}{\pi}\int_{\frac{\pi}{2n}}^{\frac{\pi}{2}}\mathrm{d}v\ g\left(\frac{m\pi}{n}-2v\right)e^{-(n-1)iv}\frac{\sin\left(nv\right)}{\sin\left(v\right)}
		=\frac{1}{\pi}\int_{\ln\frac{\pi}{2n}}^{\ln\frac{\pi}{2}}\mathrm{d}s\ e^sg\left(\frac{m\pi}{n}-2e^s\right)e^{-(n-1)ie^s}\frac{\sin\left(ne^s\right)}{\sin\left(e^s\right)}.
	\end{equation}
	After the rescaling, the $\mathcal{L}_\infty$-norm of the integrand multiplied by length of interval is now bounded by
	\begin{equation}
		\begin{aligned}
		&\left(\ln\frac{\pi}{2}-\ln\frac{\pi}{2n}\right)\frac{1}{\pi}\max_s\abs{e^sg\left(\frac{m\pi}{n}-2e^s\right)e^{-(n-1)ie^s}\frac{\sin\left(ne^s\right)}{\sin\left(e^s\right)}}\\
		&\quad\leq\frac{\ln(n)}{\pi}\norm{g}_{\max,[-\pi,\pi]}\max_s\abs{\frac{e^s}{\frac{2}{\pi}e^s}}
		=\frac{\ln(n)}{2}\norm{g}_{\max,[-\pi,\pi]}.
		\end{aligned}
	\end{equation}
    \item $[-\frac{\pi}{2},-\frac{\pi}{2n}]$: In this case, we have
	\begin{equation}
			\frac{1}{\pi}\int_{-\frac{\pi}{2}}^{-\frac{\pi}{2n}}\mathrm{d}v\ g\left(\frac{m\pi}{n}-2v\right)e^{-(n-1)iv}\frac{\sin\left(nv\right)}{\sin\left(v\right)}
			=\frac{1}{\pi}\int_{\frac{\pi}{2n}}^{\frac{\pi}{2}}\mathrm{d}v\ g\left(\frac{m\pi}{n}+2v\right)e^{(n-1)iv}\frac{\sin\left(nv\right)}{\sin\left(v\right)}.
	\end{equation}
	We then proceed as in the second case.
    \item $[-\frac{\pi}{2n},0]$: In this case, we have
	\begin{equation}
		\begin{aligned}
			\frac{1}{\pi}\int_{-\frac{\pi}{2n}}^{0}\mathrm{d}v\ g\left(\frac{m\pi}{n}-2v\right)e^{-(n-1)iv}\frac{\sin\left(nv\right)}{\sin\left(v\right)}
			=\frac{1}{\pi}\int_{0}^{\frac{\pi}{2n}}\mathrm{d}v\ g\left(\frac{m\pi}{n}+2v\right)e^{(n-1)iv}\frac{\sin\left(nv\right)}{\sin\left(v\right)}.
		\end{aligned}
	\end{equation}
	We then proceed as in the first case.
\end{enumerate}

By taking a linear combination of the above integrals, we can block encode
\begin{equation}
    \frac{\frac{1}{2\pi}\int_{-\pi}^{\pi}\mathrm{d}u\ g\left(\frac{m\pi}{n}-u\right)\frac{1-e^{-niu}}{1-e^{-iu}}}{2\left(\frac{\pi}{4}\norm{g}_{\max,[-\pi,\pi]}+\frac{\ln(n)}{2}\norm{g}_{\max,[-\pi,\pi]}\right)}
	=\frac{\frac{1}{2\pi}\int_{-\pi}^{\pi}\mathrm{d}u\ g\left(\frac{m\pi}{n}-u\right)\frac{1-e^{-niu}}{1-e^{-iu}}}{\left(\frac{\pi}{2}+\ln(n)\right)\norm{g}_{\max,[-\pi,\pi]}}.
\end{equation}
If we perform the block encoding for each quantum value $m=0,1,\ldots,2n-1$ with the same normalization factor, we obtain
\begin{equation}
\label{eq:block_fourier_Z}
    \frac{\sum_{j=0}^{n-1}\xi_jZ_{2n}^j}{\left(\frac{\pi}{2}+\ln(n)\right)\norm{g}_{\max,[-\pi,\pi]}},
\end{equation}
which block encodes
\begin{equation}
\label{eq:block_fourier_L}
    \frac{\sum_{j=0}^{n-1}\xi_jL_{n}^j}{\left(\frac{\pi}{2}+\ln(n)\right)\norm{g}_{\max,[-\pi,\pi]}}
\end{equation}
per the analysis of \sec{fourier_conv}. We analyze the complexity of this circuit in more detail in the next subsection.

\subsection{Summary of Fourier coefficients generation}
\label{sec:fourier_summary}

We now summarize the circuit for generating Fourier coefficients.
\begin{enumerate}
    \item Implement the four rescaled convolutions of \sec{fourier_rescale} using \lem{block_int}, controlled by the quantum value $m=0,1,\ldots,2n-1$ from an ancilla register.
    \item Take a linear combination to block encode \eq{block_fourier_Z}.
    \item Construct a block encoding of \eq{block_fourier_L} using \eq{block_L_Z}. 
\end{enumerate}

\begin{theorem}[Fourier coefficients generation]
\label{thm:fourier_coeff}
Let $g$ be a $2\pi$-periodic function having the Fourier expansion $g(\omega)=\sum_{j=-\infty}^\infty\xi_je^{-ij\omega}$, 
with max-norm $\norm{g}_{\max,[-\pi,\pi]}$ of the function value and maximum derivative $\norm{g'}_{\max,[-\pi,\pi]}$.
Suppose that the absolute value and argument of $g$ are provided by the oracles
    \begin{equation}
    \begin{aligned}
        O_{\text{Abs}}\ket{m,s,0}
        &=\bigg|m,s,\mathbf{Floor}\left(n_{\text{Abs}}\frac{\abs{g\left(\frac{m\pi}{n}+\pi-\frac{2\pi}{n_{\text{in}}}s\right)}}{\norm{g}_{\max,[-\pi,\pi]}}\right)\bigg\rangle,\\
        O_{\text{Arg}}\ket{m,s,0}
        &=\bigg|m,s,\mathbf{Floor}\left(n_{\text{Arg}}\frac{\mathbf{Arg}\left(g\left(\frac{m\pi}{n}+\pi-\frac{2\pi}{n_{\text{in}}}s\right)\right)}{2\pi}\right)\bigg\rangle,
    \end{aligned}
    \end{equation}
with $2n$-value register holding $\ket{m}$, $n_{\text{in}}$-value register holding $\ket{s}$, and ($n_{\text{Abs}}+1$)- and $n_{\text{Arg}}$-value output registers respectively.
Then the operator
\begin{equation}
    \frac{\sum_{j=0}^{n-1}\xi_jL_{n}^j}{\left(\frac{\pi}{2}+\ln(n)\right)\norm{g}_{\max,[-\pi,\pi]}}
\end{equation}
can be block encoded to accuracy $\epsilon$ by setting size of the registers $n_{\text{Abs}},n_{\text{Arg}}=\mathbf{O}(1/\epsilon)$ and $n_{\text{in}}=\mathbf{O}\left(\frac{\norm{g'}_{\max,[-\pi,\pi]}n\log(n)}{\norm{g}_{\max,[-\pi,\pi]}\epsilon}+\frac{n^2\log(n)}{\epsilon}\right)$,
using $\mathbf{O}(1)$ queries to $O_{\text{Abs}}$ and $O_{\text{Arg}}$, together with
    \begin{equation}
        \mathbf{O}\left(\polylog\left(\frac{\norm{g'}_{\max,[-\pi,\pi]}n}{\norm{g}_{\max,[-\pi,\pi]}\epsilon}+\frac{n^2}{\epsilon}\right)\right)
    \end{equation}
two qubit gates.
\end{theorem}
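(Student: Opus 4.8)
The plan is to assemble \thm{fourier_coeff} by composing the three ingredients already developed: the block encoding of Riemann integrals (\lem{block_int}), the rescaling principle (\lem{rescale}), and the lower-shift-to-cyclic-shift correspondence (\lem{shift}), following the three-step recipe enumerated just before the theorem statement. First I would fix a quantum value $m \in \{0,1,\ldots,2n-1\}$ and recall from \eq{fourier_coeff_int} and the subsequent manipulation in \sec{fourier_rescale} that the $m$-th diagonal entry of the target operator equals $\frac{1}{\pi}\int_{-\pi/2}^{\pi/2}\mathrm{d}v\ g\!\left(\frac{m\pi}{n}-2v\right)e^{-(n-1)iv}\frac{\sin(nv)}{\sin(v)}$, which the previous subsection splits into the four pieces over $[0,\frac{\pi}{2n}]$, $[\frac{\pi}{2n},\frac{\pi}{2}]$, $[-\frac{\pi}{2},-\frac{\pi}{2n}]$, $[-\frac{\pi}{2n},0]$. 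For the two ``outer'' pieces I would apply \lem{rescale} with the substitution $v=e^s$ and for the two ``inner'' pieces keep the integral as is; in both regimes \sec{fourier_rescale} already shows that the rescaled integrand has $\mathcal{L}_\infty$-norm times interval length bounded by $\mathbf{O}(\log n)\,\alpha_{g,\max}$, so that the combined normalization factor is $(\frac{\pi}{2}+\ln n)\alpha_{g,\max}$ as claimed.

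Next I would check the Lipschitz hypothesis required to invoke \lem{block_int} on each rescaled integrand. The integrand is a product of $g$ (with derivative bounded by $\alpha_{g',\max}$), a pure phase $e^{-(n-1)iv}$ (derivative $\mathbf{O}(n)$), and the Dirichlet-type factor $\frac{\sin(nv)}{\sin(v)}$ or its rescaled version $e^s\frac{\sin(ne^s)}{\sin(e^s)}$. A short estimate shows the Lipschitz constant $\mu_h$ of each rescaled integrand is $\mathbf{O}\!\left(\alpha_{g',\max}n\log n + n^2\log n\right)$ once the interval length of $\mathbf{O}(\log n)$ is folded in — the $n^2$ term coming from the derivative of $\sin(ne^s)$ against the $\mathbf{O}(n)$ size of that factor near $v\sim\frac{1}{n}$, and the $\alpha_{g',\max}n$ term from differentiating $g(\frac{m\pi}{n}-2v)$ against the $\mathbf{O}(n)$-sized kernel. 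Feeding this $\mu_h$, together with $\alpha_{h,\max}=\mathbf{O}(\log n)\alpha_{g,\max}$ and interval length $b-a=\mathbf{O}(\log n)$, into the parameter choice of \lem{block_int} gives $n_{\text{in}}=\mathbf{O}\!\left(\frac{\alpha_{g',\max}n\log n}{\alpha_{g,\max}\epsilon}+\frac{n^2\log n}{\epsilon}\right)$ and $n_{\text{Abs}},n_{\text{Arg}}=\mathbf{O}(1/\epsilon)$, each integral costing $\mathbf{O}(1)$ oracle queries and $\mathbf{O}(\log(n_{\text{in}}))=\mathbf{O}(\polylog(\cdots))$ two-qubit gates.

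Then I would take a linear combination of the four block-encoded integrals (a $2$-qubit selection plus state preparation, $\mathbf{O}(1)$ gates since there are only four terms), controlled coherently by $m$, to obtain a block encoding of $\frac{\sum_{j=0}^{n-1}\xi_j Z_{2n}^j}{(\frac{\pi}{2}+\ln n)\alpha_{g,\max}}$ as in \eq{block_fourier_Z}; crucially the oracles $O_{\text{Abs}}$, $O_{\text{Arg}}$ in the theorem statement already carry the $\ket{m}$ register, so the controlled version is just a direct call. Finally, conjugating by $F_{2n}$ and the comparator $\mathrm{CMP}_{2n,2}$ as in \eq{block_L_Z} (which by \lem{shift} converts powers of $Z_{2n}$ to powers of $L_n$) yields the block encoding of $\frac{\sum_{j=0}^{n-1}\xi_j L_n^j}{(\frac{\pi}{2}+\ln n)\alpha_{g,\max}}$; the quantum Fourier transform $F_{2n}$ contributes $\mathbf{O}(\log^2 n)$ gates and the comparator $\mathbf{O}(\log n)$, all absorbed in the stated $\polylog$ bound. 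The accuracy bookkeeping is routine: each of the $\mathbf{O}(1)$ integrals is prepared to accuracy $\mathbf{O}(\epsilon)$, errors add, and the conjugating unitaries are exact. I expect the main obstacle to be the Lipschitz-constant estimate for the rescaled integrand in the region near $v\sim\frac{\pi}{2n}$ — one must verify that after the $v=e^s$ substitution the factor $e^s\frac{\sin(ne^s)}{\sin(e^s)}$ genuinely has derivative $\mathbf{O}(n^2)$ uniformly (not worse), and that the boundary matching at $v=\frac{\pi}{2n}$ between the rescaled and non-rescaled pieces does not introduce a hidden blow-up; everything else is an assembly of lemmas already proved.
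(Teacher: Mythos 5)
Your proposal follows essentially the same approach as the paper's proof: split the frequency-domain convolution into the four regions $[0,\frac{\pi}{2n}]$, $[\frac{\pi}{2n},\frac{\pi}{2}]$, $[-\frac{\pi}{2},-\frac{\pi}{2n}]$, $[-\frac{\pi}{2n},0]$, rescale the outer two with $v=e^s$ via \lem{rescale}, estimate the Lipschitz constant of each (rescaled) integrand, invoke \lem{block_int} piece by piece, take a four-term linear combination controlled on $\ket{m}$, and conjugate by $F_{2n}$ and $\mathrm{CMP}_{2n,2}$ as in \eq{block_L_Z} to pass from $Z_{2n}^j$ to $L_n^j$. One small bookkeeping slip: you quote $\alpha_{h,\max}=\mathbf{O}(\log n)\alpha_{g,\max}$ for the rescaled integrand, but the rescaling is exactly what flattens the Dirichlet kernel, so the pointwise max-norm there is $\mathbf{O}(1)\cdot\alpha_{g,\max}$; the $\log n$ factor you are attributing to $\alpha_{h,\max}$ actually lives in $(b-a)=\ln n$, which is what makes the normalization $(b-a)\alpha_{h,\max}=\mathbf{O}(\log n)\alpha_{g,\max}$. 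Since you have already ``folded'' the interval length into your $\mu_h$, your final $n_{\text{in}}$ expression comes out matching the theorem, and in any case the extra $\log n$ factors are absorbed in the stated $\polylog$ gate count, so this does not create a genuine gap. Also, your worry about ``boundary matching'' at $v=\frac{\pi}{2n}$ is unnecessary: the four integrals are summed as a linear combination of separately block-encoded operators, not glued pointwise, so no continuity or matching condition is needed.
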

\begin{proof}
    The query complexity and normalization factor of the block encoding are already analyzed in \sec{fourier_rescale}.
    To establish the gate complexity, we begin with the four convolutions from \sec{fourier_rescale}. Consider first the integral
        \begin{equation}
		\frac{1}{\pi}\int_{0}^{\frac{\pi}{2n}}\mathrm{d}v\ g\left(\frac{m\pi}{n}-2v\right)e^{-(n-1)iv}\frac{\sin\left(nv\right)}{\sin\left(v\right)}.
	\end{equation}
    We have already shown that $\mathcal{L}_\infty$-norm of the integrand is bounded by
    \begin{equation}
        \norm{\frac{1}{\pi}g\left(\frac{m\pi}{n}-2(\cdot)\right)e^{-(n-1)i(\cdot)}\frac{\sin\left(n(\cdot)\right)}{\sin\left(\cdot\right)}}_{\max,[0,\frac{\pi}{2n}]}
        \leq\frac{n}{2}\norm{g}_{\max,[-\pi,\pi]},
    \end{equation}
    whereas length of the interval is $\frac{\pi}{2n}$.
    For the Lipschitz constant, we take the derivative of the integrand
\begin{equation}
\begin{aligned}
    \mu&=\max_v\abs{\frac{1}{\pi}\frac{\mathrm{d}}{\mathrm{d}v}\left(g\left(\frac{m\pi}{n}-2v\right)
    e^{-(n-1)iv}\frac{\sin(nv)}{\sin(v)}\right)}
    =\frac{1}{\pi}\max_v\abs{\frac{\mathrm{d}}{\mathrm{d}v}\left(g\left(\frac{m\pi}{n}-2v\right)
    \sum_{j=0}^{n-1}e^{-i2jv}\right)}\\
    &=\frac{1}{\pi}\max_v\abs{-2g'\left(\frac{m\pi}{n}-2v\right)\sum_{j=0}^{n-1}e^{-i2jv}
    -i2g\left(\frac{m\pi}{n}-2v\right)\sum_{j=0}^{n-1}je^{-i2jv}}\\
    &=\mathbf{O}\left(n\norm{g'}_{\max,[-\pi,\pi]}+n^2\norm{g}_{\max,[-\pi,\pi]}\right).
\end{aligned}
\end{equation}
    Moreover, the integrand has absolute value
    \begin{equation}
        \abs{\frac{1}{\pi}g\left(\frac{m\pi}{n}-2v\right)e^{-(n-1)iv}\frac{\sin\left(nv\right)}{\sin\left(v\right)}}
        =\frac{1}{\pi}\abs{g\left(\frac{m\pi}{n}-2v\right)}\frac{\sin\left(nv\right)}{\sin\left(v\right)}
    \end{equation}
    and argument
    \begin{equation}
        \mathbf{Arg}\left(\frac{1}{\pi}g\left(\frac{m\pi}{n}-2v\right)e^{-(n-1)iv}\frac{\sin\left(nv\right)}{\sin\left(v\right)}\right)
        =\mathbf{Mod}_{2\pi}\left(\mathbf{Arg}\left(g\left(\frac{m\pi}{n}-2v\right)\right)-(n-1)v\right).
    \end{equation}
    Invoking \lem{block_int} with additional arithmetics to compute the trigonometric and exponential functions, this integral can be implemented to accuracy $\mathbf{O}(\epsilon)$ with gate complexity
    \begin{equation}
        \mathbf{O}\left(\polylog\left(\frac{\norm{g'}_{\max,[-\pi,\pi]}}{ n\norm{g}_{\max,[-\pi,\pi]}\epsilon}+\frac{1}{\epsilon}\right)\right).
    \end{equation}

    Now, consider the rescaled convolution
	\begin{equation}
		\frac{1}{\pi}\int_{\frac{\pi}{2n}}^{\frac{\pi}{2}}\mathrm{d}v\ g\left(\frac{m\pi}{n}-2v\right)e^{-(n-1)iv}\frac{\sin\left(nv\right)}{\sin\left(v\right)}
		=\frac{1}{\pi}\int_{\ln\frac{\pi}{2n}}^{\ln\frac{\pi}{2}}\mathrm{d}s\ e^sg\left(\frac{m\pi}{n}-2e^s\right)e^{-(n-1)ie^s}\frac{\sin\left(ne^s\right)}{\sin\left(e^s\right)}.
	\end{equation}
    We have already shown that $\mathcal{L}_\infty$-norm of the integrand is bounded by
    \begin{equation}
        \norm{\frac{1}{\pi}e^{(\cdot)}g\left(\frac{m\pi}{n}-2e^{(\cdot)}\right)e^{-(n-1)ie^{(\cdot)}}\frac{\sin\left(ne^{(\cdot)}\right)}{\sin\left(e^{(\cdot)}\right)}}_{\max,[\ln\frac{\pi}{2n},\ln\frac{\pi}{2}]}
        \leq\frac{\norm{g}_{\max,[-\pi,\pi]}}{2},
    \end{equation}
    whereas length of the interval is $\ln\frac{\pi}{2}-\ln\frac{\pi}{2n}=\ln n$.
    Denoting the integrand by $h(s)$, we have
    \begin{equation}
        \frac{\mathrm{d}}{\mathrm{d}s}h(s)
        =\frac{\mathrm{d}}{\mathrm{d}v}h(v)\frac{\mathrm{d}}{\mathrm{d}s}v(s)
        =\frac{\mathrm{d}}{\mathrm{d}v}h(v)e^s
    \end{equation}
    from the chain rule. Thus the Lipschitz constant has the scaling
    \begin{equation}
    \begin{aligned}
        \mu&=\max_s\abs{\frac{1}{\pi}\frac{\mathrm{d}}{\mathrm{d}s}\left(e^sg\left(\frac{m\pi}{n}-2e^s\right)e^{-(n-1)ie^s}\frac{\sin\left(ne^s\right)}{\sin\left(e^s\right)}\right)}\\
        &=\mathbf{O}\left(n\log(n)\norm{g'}_{\max,[-\pi,\pi]}+n^2\log(n)\norm{g}_{\max,[-\pi,\pi]}\right).
    \end{aligned}
    \end{equation}
    Moreover, the integrand has absolute value
    \begin{equation}
        \abs{\frac{1}{\pi}e^sg\left(\frac{m\pi}{n}-2e^s\right)e^{-(n-1)ie^s}\frac{\sin\left(ne^s\right)}{\sin\left(e^s\right)}}
        =\frac{1}{\pi}e^s\abs{g\left(\frac{m\pi}{n}-2e^s\right)\frac{\sin\left(ne^s\right)}{\sin\left(e^s\right)}}
    \end{equation}
    and argument
    \begin{equation}
    \begin{aligned}
        &\ \mathbf{Arg}\left(\frac{1}{\pi}e^sg\left(\frac{m\pi}{n}-2e^s\right)e^{-(n-1)ie^s}\frac{\sin\left(ne^s\right)}{\sin\left(e^s\right)}\right)\\
        =&\ \mathbf{Mod}_{2\pi}\left(\mathbf{Arg}\left(g\left(\frac{m\pi}{n}-2e^s\right)\right)
        -(n-1)e^s
        +\mathbf{Arg}\left(\sin\left(ne^s\right)\right)-\mathbf{Arg}\left(\sin\left(e^s\right)\right)
        \right).
    \end{aligned}
    \end{equation}
    Invoking \lem{block_int} with additional arithmetics to compute the trigonometric and exponential functions (for $s\in[\ln\frac{\pi}{2n},\ln\frac{\pi}{2}]$), this integral can be implemented to accuracy $\mathbf{O}(\epsilon)$ with gate complexity
    \begin{equation}
        \mathbf{O}\left(\polylog\left(\frac{n\norm{g'}_{\max,[-\pi,\pi]}}{\norm{g}_{\max,[-\pi,\pi]}\epsilon}+\frac{n^2}{\epsilon}\right)\right).
    \end{equation}
    This completes the proof since analysis of the remaining two convolutions proceeds in a similar way as above.
\end{proof}
\begin{remark}
    Although we have only constructed the circuit to generate Fourier coefficients in the exponential-form expansion, the construction can be trivially adapted to other settings such as the trigonometric form. For instance, suppose we have 
    \begin{equation}
        g(\omega)=\sum_{j=-\infty}^\infty\xi_je^{-ij\omega}
        =\xi_0+\sum_{j=1}^{\infty}\left(\left(\xi_j+\xi_{-j}\right)\cos(j\omega)+\left(-i\xi_j+i\xi_{-j}\right)\sin(j\omega)\right),
    \end{equation}
    and we want to implement $\sum_{j=0}^{n-1}\left(\xi_j+\xi_{-j}\right)L^j$, with the first coefficient rescaled. Because
    \begin{equation}
        g(\omega)+g(-\omega)=\sum_{j=-\infty}^{\infty}\left(\xi_j+\xi_{-j}\right)e^{-ij\omega},
    \end{equation}
    it suffices for us to invoke the above circuit with the even function $g(\omega)+g(-\omega)$.
    
    We can thus apply this to generate the Chebyshev coefficients, since Chebyshev expansion can be recast as a Fourier expansion of an even function in the trigonometric form, as per \eq{cheby_trig}. Specifically, given a Chebyshev expansion $p(x)=\sum_{j=0}^{n-1}\widetilde\beta_j\widetilde{\mathbf{T}}_j(x)$ of a degree-($n-1$) polynomial, we can block encode
    $\sum_{j=0}^{n-1}\widetilde{\beta}_jL_{n}^j/\alpha$ with some normalization factor $\alpha=\mathbf{O}\left(\norm{p}_{\max,[-1,1]}\log(n)\right)$. When applied to $\ket{0}$, this gives the Chebyshev coefficient state $\sum_{j=0}^{n-1}\widetilde\beta_j\ket{j}/\norm{\widetilde\beta}$ with probability $\norm{\widetilde\beta}^2/\alpha^2$.
    We can then subtract $n-1$ and negate the quantum register to get $\sum_{j=0}^{n-1}\widetilde\beta_j\ket{n-1-j}/\norm{\widetilde\beta}$ with the same probability.
    By a further application of $(I_n-L_n)/2$ through block encoding, we obtain the shifted Chebyshev coefficients 
    \begin{equation}
        \frac{\sum_{k=0}^{n-1}(\widetilde\beta_k-\widetilde\beta_{k+2})\ket{n-1-k}}{\alpha_{\widetilde\beta}}
    \end{equation}
    for $\alpha_{\widetilde\beta}=\sqrt{\sum_{k=0}^{n-1}|\widetilde{\beta}_k-\widetilde{\beta}_{k+2}|^2}=\mathbf{O}\left(\norm{p(\cos)\sin}_{2,[-\pi,\pi]}\right)$, with success probability $\alpha_{\widetilde\beta}^2/\alpha^2$. Typically, this ratio can be computed exactly on a classical computer, so the success probability can be boosted exactly to unity using 
    \begin{equation}
        \mathbf{O}\left(\frac{\norm{p}_{\max,[-1,1]}\log(n)}{\norm{p(\cos)\sin}_{2,[-\pi,\pi]}}\right)
    \end{equation}
    steps of amplitude amplification.
    Altogether, this gives the gate complexity
    \begin{equation}
        \mathbf{O}\left(\frac{\norm{p}_{\max,[-1,1]}\log(n)}{\norm{p(\cos)\sin}_{2,[-\pi,\pi]}}
        \polylog\left(\frac{\norm{p'}_{\max,[-1,1]}n}{\norm{p(\cos)\sin}_{2,[-\pi,\pi]}\epsilon}
        +\frac{\norm{p}_{\max,[-1,1]}n^2}{\norm{p(\cos)\sin}_{2,[-\pi,\pi]}\epsilon}\right)\right)
    \end{equation}
    to prepare the shifting of Chebyshev coefficients, fulfilling the requirement of QEVT in \thm{qevt}.
    See the remark succeeding \thm{qevt_block} for further discussions on the scaling of $\norm{p(\cos)\sin}_{2,[-\pi,\pi]}$.
\end{remark}

\section{Applications}
\label{sec:app}
We now apply QEVT to solve linear differential equations in \sec{app_diff_eq} and prepare ground states in \sec{app_ground}. For both applications, the underlying idea is to implement truncated Chebyshev expansions that approximate the target functions, similar to previous QSVT-based results~\cite{Low2016HamSim,Low2019hamiltonian,Lin2020nearoptimalground}. However, the challenge here is that our input matrices are no longer Hermitian or even diagonalizable. So to establish the claimed complexities in \thm{diff_eq} and \thm{ground}, we will develop more general error bounds for truncating Chebyshev expansions of matrix functions, which may be of independent interest.

\subsection{Quantum algorithm for linear differential equations}
\label{sec:app_diff_eq}

We first consider the homogeneous linear differential equations
\begin{equation}
    \frac{\mathrm{d}}{\mathrm{d}t}x(t)=Cx(t),
\end{equation}
whose solution is given by
\begin{equation}
    x(t)=e^{tC}x(0).
\end{equation}
When $C$ has only imaginary eigenvalues, we can use QEVT to implement the function $e^{-i\alpha_Ctx}$ on a block encoding of $iC/\alpha_C$, which can be easily constructed from a block encoding of $C/\alpha_C$. For presentational purpose, we change the variable to $A=iC$ and describe our result as to implement $e^{-i\alpha_A tx}$ on a block encoding of $A/\alpha_A$.

Suppose that the matrix exponential function has the (rescaled) Chebyshev expansion $e^{-itA}=\sum_{j=0}^{\infty}\widetilde{\beta}_j\widetilde{\mathbf{T}}_j\left(\frac{A}{\alpha_A}\right)$. We wish to choose the truncate order $n$ sufficiently large so that the error $\norm{e^{-itA}-\sum_{j=0}^{n-1}\widetilde{\beta}_j\widetilde{\mathbf{T}}_j\left(\frac{A}{\alpha_A}\right)}$ is at most $\epsilon$. The problem can then be solved by implementing the truncated series using QEVT. When $A$ is Hermitian, we can use \prop{trunc_exp} to get an $n$ scaling like $n=\mathbf{O}\left(\alpha_At+\log\left(\frac{1}{\epsilon}\right)\right)$, which leads to the optimal Hamiltonian simulation results from previous work~\cite{Low2016HamSim,Low2019hamiltonian}. In the general case where $A$ is not diagonalizable, \prop{trunc_exp} is no longer applicable, and we instead develop the following bound for truncating matrix exponentials.

\begin{lemma}[Chebyshev truncation of matrix exponentials]
\label{lem:trunc_matrix_exp}
    Let $\widetilde A$ be a matrix with eigenvalues belonging to the real interval $[-\frac{1}{2},\frac{1}{2}]$.
    Suppose that $\widetilde A=SJS^{-1}$ has a Jordan form decomposition with upper bound $\kappa_S\geq\norm{S}\norm{S}^{-1}$ on the Jordan condition number and size $d_{\max}$ of the largest Jordan block.
    Given $\tau>0$, let $e^{-i\tau x}=\sum_{j=0}^\infty\widetilde\beta_j\widetilde{\mathbf{T}}_j(x)$ be the Chebyshev expansion of the complex exponential function $e^{-i\tau x}$.
    Then,
    \begin{equation}
        \norm{e^{-i\tau\widetilde A}-\sum_{j=0}^{n-1}\widetilde{\beta}_j\widetilde{\mathbf{T}}_j\left(\widetilde A\right)}
        =\mathbf{O}\left(\kappa_S\left(\frac{ed_{\max}\tau}{2n}\right)^n\right).
    \end{equation}
\end{lemma}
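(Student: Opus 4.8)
The plan is to reduce the matrix estimate to a scalar estimate on the Chebyshev remainder and its derivatives, using the Jordan structure. Set $r(x):=e^{-i\tau x}-\sum_{j=0}^{n-1}\widetilde\beta_j\widetilde{\mathbf T}_j(x)$. Since $e^{-i\tau x}$ is entire and the partial sum is a polynomial, $r$ is entire; moreover on every bounded subset of $\mathbb C$ it equals the Chebyshev tail $\sum_{j\ge n}\widetilde\beta_j\widetilde{\mathbf T}_j(x)$, the series converging uniformly there because $|\widetilde\beta_j|=2|\mathbf J_j(\tau)|\le 2(\tau/2)^j/j!$ (the standard Bessel bound, also implicit in \prop{trunc_exp}) while $\mathbf T_j$ grows only geometrically on compacts. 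From the Jordan decomposition $\widetilde A=SJS^{-1}$ one has $e^{-i\tau\widetilde A}-\sum_{j=0}^{n-1}\widetilde\beta_j\widetilde{\mathbf T}_j(\widetilde A)=S\,r(J)\,S^{-1}$, so the quantity to be bounded is at most $\kappa_S\|r(J)\|=\kappa_S\max_l\|r(J(\lambda_l,d_l))\|$, where each $\lambda_l\in[-\tfrac12,\tfrac12]$ and $d_l\le d_{\max}$. It thus suffices to bound $\|r(J(\lambda,d))\|$ for $\lambda\in[-\tfrac12,\tfrac12]$ and $1\le d\le d_{\max}$.

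For a single block, the matrix function formula \eqref{eq:jordan_form_transformation} gives $r(J(\lambda,d))=\sum_{m=0}^{d-1}\tfrac{r^{(m)}(\lambda)}{m!}L_d^m$, and since $\|L_d^m\|=1$ for $0\le m\le d-1$ this yields $\|r(J(\lambda,d))\|\le\sum_{m=0}^{d-1}\tfrac{|r^{(m)}(\lambda)|}{m!}$. For $d=1$ the right side is $|r(\lambda)|\le\|r\|_{\max,[-1,1]}=\mathbf O\bigl((e\tau/2n)^n\bigr)$ by \prop{trunc_exp}. For $d\ge2$ I would estimate the derivatives by Cauchy's formula on the circle $\{|z-\lambda|=\rho\}$, obtaining $\tfrac{|r^{(m)}(\lambda)|}{m!}\le\rho^{-m}\sup_{|z-\lambda|=\rho}|r(z)|$; on that circle $|z|\le\tfrac12+\rho=:M$, hence $|\mathbf T_j(z)|\le R^j$ with $R=R(\rho):=M+\sqrt{M^2+1}$, so using the coefficient bound,
\begin{equation}
  \sup_{|z-\lambda|=\rho}|r(z)|\;\le\;2\sum_{j\ge n}\frac{(R\tau/2)^j}{j!}\;=\;\mathbf O\!\left(\left(\frac{eR\tau}{2n}\right)^{\!n}\right)
\end{equation}
whenever $n=\mathbf{\Omega}(\tau)$ (so that the tail is essentially geometric), and therefore $\|r(J(\lambda,d))\|\le\bigl(\sum_{m=0}^{d-1}\rho^{-m}\bigr)\cdot\mathbf O\bigl((eR\tau/2n)^n\bigr)$.

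The crucial choice is $\rho$. The map $\rho\mapsto R(\rho)$ is a strictly increasing continuous bijection of $(0,\infty)$ onto $\bigl(\tfrac{1+\sqrt5}{2},\infty\bigr)$, and every integer $d\ge2$ exceeds $\tfrac{1+\sqrt5}{2}$; so I take $\rho=\rho_d$ with $R(\rho_d)=d$, which makes the exponential base exactly $\tfrac{ed\tau}{2n}$. The residual prefactor $C_d:=4\sum_{m=0}^{d-1}\rho_d^{-m}$ depends only on $d$, and --- the one point needing a short argument --- is bounded uniformly in $d$: from $R(\rho)\sim2\rho$ at infinity we get $\rho_d\to\infty$, so $\rho_d\ge2$ for all but finitely many $d$, giving $C_d\le 4/(1-\tfrac12)=8$ for those; the finitely many remaining $d$ each contribute a finite constant. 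Hence $\|r(J(\lambda,d))\|\le C\bigl(\tfrac{ed\tau}{2n}\bigr)^n$ with $C$ absolute, uniformly over $1\le d\le d_{\max}$ (the $d=1$ case from \prop{trunc_exp} fits, being the $d=1$ instance of the same bound). Since $\bigl(\tfrac{ed\tau}{2n}\bigr)^n$ is increasing in $d$, taking the max over blocks gives $\|r(J)\|=\mathbf O\bigl((\tfrac{ed_{\max}\tau}{2n})^n\bigr)$, and the lemma follows with the extra factor $\kappa_S$.

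I expect the main obstacle to be exactly the balancing in the third paragraph: enlarging the Cauchy radius $\rho$ shrinks the geometric factor $\sum_m\rho^{-m}$ but inflates the Bernstein ellipse parameter $R(\rho)$ governing the tail, so one must pin $\rho_d$ precisely to $R(\rho_d)=d$ and then verify the leftover $d$-dependent constant stays bounded as $d\to\infty$. A secondary, routine point is that the asserted $\mathbf O$-bound is only meaningful --- and the geometric-tail estimate only valid --- once $n=\mathbf{\Omega}(d_{\max}\tau)$; for smaller $n$ the bound is $\ge1$ and one would instead invoke a crude a priori estimate on $\|e^{-i\tau\widetilde A}\|$ and on the partial sum, a convention already implicit in \prop{trunc_exp}.
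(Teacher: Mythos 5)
Your proposal is correct in substance, but it follows a genuinely different route from the paper's. The paper bounds the truncation error term by term: it writes the remainder as $\sum_{j\geq n}\widetilde\beta_j\widetilde{\mathbf T}_j(\widetilde A)$, bounds each $\|\widetilde{\mathbf T}_j(\widetilde A)\|$ by $\mathbf O(\kappa_S j^{d_{\max}-1})=\mathbf O(\kappa_S d_{\max}^j)$ using a general matrix-polynomial lemma built from a recursion of Bernstein's inequality on real intervals (Corollary~\ref{cor:matrix_poly_bound}, \cref{append:analysis_cheby_bernstein}), and then sums against the Bessel-coefficient decay. You instead bound the entire remainder $r(\widetilde A)$ at once: you pass to $\|r(J(\lambda,d))\|\le\sum_{m<d}|r^{(m)}(\lambda)|/m!$, control the derivatives by Cauchy's formula on a disk of tuned radius $\rho_d$, and control the scalar remainder on that disk via the Bernstein-ellipse growth $|\mathbf T_j(z)|\le R(\rho_d)^j$. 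Pinning $R(\rho_d)=d$ makes the exponential base $e d\tau/(2n)$ come out directly, and you verify the leftover $\sum_m\rho_d^{-m}$ stays bounded uniformly over $d$; both arguments require $n=\mathbf\Omega(d_{\max}\tau)$, a restriction the paper also imposes mid-proof. Your argument is more self-contained and arguably cleaner for entire functions (Cauchy's formula replaces the iterated Bernstein inequality and there is no need for the interval margin $\delta$). The paper's version trades that for modularity: its matrix-polynomial bound is a reusable general tool, invoked again essentially verbatim in the proof of Lemma~\ref{lem:trunc_matrix_sign} for the error function, where the target is not entire on all of $\mathbb C$ (it has branch issues) and your ellipse-centric reduction would need to be reworked over two disjoint real intervals rather than ported wholesale.
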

\begin{proof}
    We start with the triangle inequality estimate
    \begin{equation}
    \begin{aligned}
        \norm{e^{-i\tau\widetilde A}-\sum_{j=0}^{n-1}\widetilde{\beta}_j\widetilde{\mathbf{T}}_j\left(\widetilde A\right)}
        =\norm{\sum_{j=n}^{\infty}\widetilde{\beta}_j\widetilde{\mathbf{T}}_j\left(\widetilde A\right)}
        \leq\sum_{j=n}^{\infty}\abs{\widetilde{\beta}_j}\norm{\widetilde{\mathbf{T}}_j\left(\widetilde A\right)}.
    \end{aligned}
    \end{equation}
    Here, the Chebyshev expansion coefficients are given by Bessel functions of the first kind as $\widetilde{\beta}_j=2i^j\mathbf{J}_j(\tau)$, which are bounded by
    \begin{equation}
        \abs{\widetilde{\beta}_j}=2\abs{\mathbf{J}_j(\tau)}\leq\frac{2}{j!}\left(\frac{\tau}{2}\right)^j.
    \end{equation}
    It remains to analyze size of the matrix polynomial $\norm{\widetilde{\mathbf{T}}_j\left(\widetilde A\right)}$.

    In \cor{matrix_poly_bound} of \append{analysis_cheby}, we will derive a general bound for matrix polynomial functions that reads 
    \begin{equation}
        \norm{p_j(C)}=\mathbf{O}\left(\kappa_S\left(\frac{j}{\sqrt{\delta}}\right)^{d_{\max}-1}\norm{p_j}_{\max,[a,b]}\right),
    \end{equation}
    where $p_j$ is a degree-$j$ polynomial and 
    eigenvalues of $C$ are all contained in $[a+\delta,b-\delta]$.
    The prefactor of the bound only depends on $d_{\max}$ and $[a,b]$ which we treat as constant, and is independent of the polynomial degree $j$ and the margin $\delta$.
    
    We now apply this bound to the rescaled Chebyshev polynomials $\widetilde{\mathbf{T}}_j(\widetilde A)$. To this end, we set $[a,b]=[-1,1]$, which contains all eigenvalues of $\widetilde A$, and is constant-distance gapped from $[-\frac{1}{2},\frac{1}{2}]$ that also encloses the eigenvalues.
    We have
    \begin{equation}
        \norm{\widetilde{\mathbf{T}}_j\left(\widetilde A\right)}
        =\mathbf{O}\left(\kappa_Sj^{d_{\max}-1}\norm{\widetilde{\mathbf{T}}_j}_{\max,[-1,1]}\right)
        =\mathbf{O}\left(\kappa_Sj^{d_{\max}-1}\right)
        =\mathbf{O}\left(\kappa_Sd_{\max}^j\right).
    \end{equation}
    Thus, there exists a constant $c$ for which
    \begin{equation}
        \norm{e^{-i\tau\widetilde A}-\sum_{j=0}^{n-1}\widetilde{\beta}_j\widetilde{\mathbf{T}}_j\left(\widetilde A\right)}
        \leq c\sum_{j=n}^{\infty}
        \frac{1}{j!}\left(\frac{\tau}{2}\right)^j
        \kappa_Sd_{\max}^j
        \leq \frac{c\kappa_S}{\sqrt{2\pi}}\sum_{j=n}^{\infty}
        \left(\frac{ed_{\max}\tau}{2j}\right)^j,
    \end{equation}
    where we have used the bound $\sqrt{2\pi j}\left(\frac{j}{e}\right)^j\leq j!\leq e\sqrt{j}\left(\frac{j}{e}\right)^j$. Assuming $n\geq ed_{\max}\tau$, we continue the calculation to get
    \begin{equation}
        \norm{e^{-i\tau\widetilde A}-\sum_{j=0}^{n-1}\widetilde{\beta}_j\widetilde{\mathbf{T}}_j\left(\widetilde A\right)}
        \leq\frac{c\kappa_S}{\sqrt{2\pi}}\left(\frac{ed_{\max}\tau}{2n}\right)^n\sum_{j=n}^{\infty}
        \left(\frac{1}{2}\right)^{j-n}
        =\frac{2c\kappa_S}{\sqrt{2\pi}}\left(\frac{ed_{\max}\tau}{2n}\right)^n.
    \end{equation}
    The claimed bound is now established.
\end{proof}

The above bound essentially quantifies the error of Chebyshev truncation for producing an unnormalized solution state. The effect of normalization is considered by the following bound, which follows from a similar reasoning as in \cor{lin_sys_perturb}.
\begin{lemma}[Quantum state transformation with perturbation]
\label{lem:state_transform_perturb}
    Let $C$ and $\widetilde C$ be invertible matrices of the same size, acting on (normalized) quantum states $\ket{\psi}$ and $\ket{\widetilde\psi}$. We have
	\begin{equation}
		\norm{\frac{\widetilde{C}\ket{\widetilde\psi}}{\norm{\widetilde{C}\ket{\widetilde\psi}}}-\frac{C\ket{\psi}}{\norm{C\ket{\psi}}}}
		\leq\frac{2\norm{C}\norm{\ket{\widetilde\psi}-\ket{\psi}}}{\norm{C\ket{\psi}}}
        +\frac{2\norm{\widetilde{C}-C}}{\norm{C\ket{\psi}}}.
	\end{equation}
\end{lemma}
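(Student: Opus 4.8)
The plan is to mirror the argument of \cor{lin_sys_perturb}, replacing the inverse matrices there by the matrices $C,\widetilde C$ themselves; the structure of the estimate is identical, only the bound on $\norm{C\ket{\psi}-\widetilde C\ket{\widetilde\psi}}$ changes. First I would apply the triangle inequality with the intermediate (unnormalized-in-the-denominator) vector $\widetilde C\ket{\widetilde\psi}/\norm{C\ket{\psi}}$, writing
\begin{equation}
    \norm{\frac{\widetilde C\ket{\widetilde\psi}}{\norm{\widetilde C\ket{\widetilde\psi}}}-\frac{C\ket{\psi}}{\norm{C\ket{\psi}}}}
    \leq\norm{\frac{C\ket{\psi}}{\norm{C\ket{\psi}}}-\frac{\widetilde C\ket{\widetilde\psi}}{\norm{C\ket{\psi}}}}
    +\norm{\frac{\widetilde C\ket{\widetilde\psi}}{\norm{C\ket{\psi}}}-\frac{\widetilde C\ket{\widetilde\psi}}{\norm{\widetilde C\ket{\widetilde\psi}}}}.
\end{equation}

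Next, the first term on the right equals $\norm{C\ket{\psi}-\widetilde C\ket{\widetilde\psi}}/\norm{C\ket{\psi}}$ directly. For the second term I would pull out $\norm{\widetilde C\ket{\widetilde\psi}}$ and use the reverse triangle inequality:
\begin{equation}
    \norm{\frac{\widetilde C\ket{\widetilde\psi}}{\norm{C\ket{\psi}}}-\frac{\widetilde C\ket{\widetilde\psi}}{\norm{\widetilde C\ket{\widetilde\psi}}}}
    =\norm{\widetilde C\ket{\widetilde\psi}}\,\abs{\frac{1}{\norm{C\ket{\psi}}}-\frac{1}{\norm{\widetilde C\ket{\widetilde\psi}}}}
    =\frac{\abs{\norm{\widetilde C\ket{\widetilde\psi}}-\norm{C\ket{\psi}}}}{\norm{C\ket{\psi}}}
    \leq\frac{\norm{C\ket{\psi}-\widetilde C\ket{\widetilde\psi}}}{\norm{C\ket{\psi}}},
\end{equation}
so both terms are controlled by $\norm{C\ket{\psi}-\widetilde C\ket{\widetilde\psi}}/\norm{C\ket{\psi}}$, giving an overall factor of $2$.

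Finally I would bound the numerator by inserting $C\ket{\widetilde\psi}$:
\begin{equation}
    \norm{C\ket{\psi}-\widetilde C\ket{\widetilde\psi}}
    \leq\norm{C(\ket{\psi}-\ket{\widetilde\psi})}+\norm{(C-\widetilde C)\ket{\widetilde\psi}}
    \leq\norm{C}\,\norm{\ket{\psi}-\ket{\widetilde\psi}}+\norm{\widetilde C-C},
\end{equation}
where the last step uses $\norm{\ket{\widetilde\psi}}=1$. Combining the three displays yields the claimed inequality. There is no real obstacle here: the statement is the ``forward'' counterpart of \cor{lin_sys_perturb} and the only point requiring a little care is the normalization bookkeeping in the second term, which the reverse triangle inequality handles cleanly; invertibility of $C,\widetilde C$ is used only to ensure the denominators are nonzero.
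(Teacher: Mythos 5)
Your proof is correct and follows essentially the same route as the paper's: the same triangle-inequality split using the intermediate vector $\widetilde C\ket{\widetilde\psi}/\norm{C\ket{\psi}}$, the same reverse-triangle-inequality handling of the normalization term, and the same final bound $\norm{C\ket{\psi}-\widetilde C\ket{\widetilde\psi}}\leq\norm{C}\norm{\ket{\psi}-\ket{\widetilde\psi}}+\norm{\widetilde C-C}$ via inserting $C\ket{\widetilde\psi}$. No differences worth noting.
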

\begin{proof}
    We use the triangle inequality to upper bound the left-hand side as
    \begin{equation}
    \begin{aligned}
        \norm{\frac{C\ket{\psi}}{\norm{C\ket{\psi}}}-\frac{\widetilde{C}\ket{\widetilde\psi}}{\norm{\widetilde{C}\ket{\widetilde\psi}}}}
		\leq\norm{\frac{C\ket{\psi}}{\norm{C\ket{\psi}}}-\frac{\widetilde{C}\ket{\widetilde\psi}}{\norm{C\ket{\psi}}}}
		+\norm{\frac{\widetilde{C}\ket{\widetilde\psi}}{\norm{C\ket{\psi}}}-\frac{\widetilde{C}\ket{\widetilde\psi}}{\norm{\widetilde{C}\ket{\widetilde\psi}}}}.
    \end{aligned}
    \end{equation}
    For the first term, we have
    \begin{equation}
        \norm{\frac{C\ket{\psi}}{\norm{C\ket{\psi}}}-\frac{\widetilde{C}\ket{\widetilde\psi}}{\norm{C\ket{\psi}}}}
        =\frac{\norm{C\ket{\psi}-\widetilde{C}\ket{\widetilde\psi}}}{\norm{C\ket{\psi}}},
    \end{equation}
    whereas the second term can be further bounded similarly as
    \begin{equation}
        \norm{\frac{\widetilde{C}\ket{\widetilde\psi}}{\norm{C\ket{\psi}}}-\frac{\widetilde{C}\ket{\widetilde\psi}}{\norm{\widetilde{C}\ket{\widetilde\psi}}}}
        =\norm{\widetilde{C}\ket{\widetilde\psi}}\abs{\frac{1}{\norm{C\ket{\psi}}}-\frac{1}{\norm{\widetilde{C}\ket{\widetilde\psi}}}}
        \leq\frac{\norm{C\ket{\psi}-\widetilde{C}\ket{\widetilde\psi}}}{\norm{C\ket{\psi}}}.
    \end{equation}
    The claimed bound then follows from
    \begin{equation}
        \norm{C\ket{\psi}-\widetilde{C}\ket{\widetilde\psi}}
        \leq\norm{C\ket{\psi}-C\ket{\widetilde\psi}}
        +\norm{C\ket{\widetilde\psi}-\widetilde{C}\ket{\widetilde\psi}}
        \leq
        \norm{C}\norm{\ket{\psi}-\ket{\widetilde\psi}}
        +\norm{C-\widetilde{C}}.
    \end{equation}
\end{proof}

\begin{theorem}[Quantum differential equation algorithm]
\label{thm:diff_eq}
    Let $A$ be a square matrix with only real eigenvalues, such that $A/\alpha_A$ is block encoded by $O_A$ with some normalization factor $\alpha_A\geq\norm{A}$.
    Suppose that $A/\alpha_A=SJS^{-1}$ has a Jordan form decomposition with upper bound $\kappa_S\geq\norm{S}\norm{S}^{-1}$ on the Jordan condition number.
    Let $O_\psi\ket{0}=\ket{\psi}$ be the oracle preparing the initial state.
    
    Then, applying \thm{qevt} to the function $e^{-i\alpha_Atx}$ truncated at order
    \begin{equation}
        n=\mathbf{O}\left(\alpha_At+\log\left(\frac{\kappa_S}{\epsilon}\right)\right)
    \end{equation}
    produces the state
    \begin{equation}
        \frac{e^{-itA}\ket{\psi}}{\norm{e^{-itA}\ket{\psi}}}
    \end{equation}
    with accuracy $\epsilon$ and probability $1-\pfail$. The algorithm uses
    \begin{equation}
        \mathbf{O}\left(\frac{\alphaT}{\alphaFPsi}\alphaU
        \left(\alpha_At+\log\left(\frac{\kappa_S}{\epsilon}\right)\right)
        \log\left(\frac{\alphaT}{\alphaFPsi\epsilon}\right)\log\left(\frac{1}{\pfail}\right)\right)
    \end{equation}
    queries to controlled-$O_A$, controlled-$O_\psi$, and their inverses,
    where $\alphaU$ satisfies \eq{alphaU2}, $\alphaT$ satisfies \eq{alphaT_alphaPPsi}, and
    \begin{equation}
        \alphaFPsi\leq\norm{e^{-itA}\ket{\psi}}
    \end{equation}
    is a lower bound on size of the solution vector.
\end{theorem}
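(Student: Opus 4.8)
The plan is to reduce \thm{diff_eq} to \thm{qevt} applied to a truncated Chebyshev expansion of the scalar exponential, controlling separately the truncation error (via \lem{trunc_matrix_exp}) and the normalization error it induces (via \lem{state_transform_perturb}).

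First I would reformulate the goal: producing $e^{-itA}\ket{\psi}$ up to normalization is exactly applying the rescaled function $f(x)=e^{-i\alpha_Atx}$ to the block-encoded operator $A/\alpha_A$, since $f(A/\alpha_A)=e^{-itA}$ independently of the choice of $\alpha_A$. Using the rescaling trick \eq{block_rescaling} I may therefore assume without loss of generality that $\alpha_A\geq 2\norm{A}$, so that every eigenvalue of $A/\alpha_A$ lies in $[-\tfrac12,\tfrac12]$; this costs only a constant factor in $\alpha_A$ and hence in all subsequent bounds. Let $p(x)=\sum_{j=0}^{n-1}\widetilde\beta_j\widetilde{\mathbf{T}}_j(x)$ be the order-$n$ truncation of the Chebyshev expansion of $e^{-i\alpha_Atx}$; by \prop{trunc_exp} (with $\tau=\alpha_At$) its coefficients are Bessel functions $\widetilde\beta_j=2i^j\mathbf{J}_j(\alpha_At)$, which are classically computable, and $p$ has degree $n-1$ as required by \thm{qevt}.

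Next I would bound the truncation error at the matrix level. Applying \lem{trunc_matrix_exp} with $\widetilde A=A/\alpha_A$ and $\tau=\alpha_At$ gives $\norm{e^{-itA}-p(A/\alpha_A)}=\mathbf{O}\!\left(\kappa_S(ed_{\max}\alpha_At/(2n))^n\right)$, which for a diagonalizable input ($d_{\max}=1$, the case implicit in the stated bound on $n$) is $\mathbf{O}\!\left(\kappa_S(e\alpha_At/(2n))^n\right)$. Choosing $n=\mathbf{O}\!\left(\alpha_At+\log(\kappa_S/(\alphaFPsi\epsilon))\right)$—which is the asserted $n=\mathbf{O}(\alpha_At+\log(\kappa_S/\epsilon))$ up to the harmless $\log(1/\alphaFPsi)$ term—makes this at most $\epsilon':=c\,\alphaFPsi\epsilon$ for a suitably small constant $c$. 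Since $\alphaFPsi\leq\norm{e^{-itA}\ket{\psi}}$, \lem{state_transform_perturb} with $C=e^{-itA}$, $\widetilde C=p(A/\alpha_A)$ and $\ket{\widetilde\psi}=\ket{\psi}$ (so its first term vanishes) gives
\[
  \norm{\frac{p(A/\alpha_A)\ket{\psi}}{\norm{p(A/\alpha_A)\ket{\psi}}}-\frac{e^{-itA}\ket{\psi}}{\norm{e^{-itA}\ket{\psi}}}}\leq\frac{2\epsilon'}{\norm{e^{-itA}\ket{\psi}}}\leq\frac{2\epsilon'}{\alphaFPsi}=\mathbf{O}(\epsilon),
\]
and moreover $\norm{p(A/\alpha_A)\ket{\psi}}\geq\norm{e^{-itA}\ket{\psi}}-\epsilon'\geq\alphaFPsi-\epsilon'=\mathbf{\Omega}(\alphaFPsi)$ for $c$ small enough, so $\alphaPPsi=\mathbf{\Theta}(\alphaFPsi)$ is a valid lower bound in \thm{qevt}.

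Finally I would invoke \thm{qevt} to prepare $p(A/\alpha_A)\ket{\psi}/\norm{p(A/\alpha_A)\ket{\psi}}$ to accuracy $\mathbf{O}(\epsilon)$ and success probability $1-\pfail$; combining with the displayed inequality by the triangle inequality yields the target state to accuracy $\epsilon$. Substituting $n=\mathbf{O}(\alpha_At+\log(\kappa_S/\epsilon))$ and $\alphaPPsi=\mathbf{\Theta}(\alphaFPsi)$ into the cost $\mathbf{O}\!\left(\frac{\alphaT}{\alphaPPsi}\alphaU n\log(\frac{\alphaT}{\alphaPPsi\epsilon})\log(1/\pfail)\right)$ of \thm{qevt} reproduces the claimed query complexity. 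The only genuinely delicate bookkeeping is around $\alphaFPsi=\norm{e^{-itA}\ket{\psi}}$, which enters both the required truncation accuracy and the amplitude-amplification overhead: one must verify that demanding the truncation error be a constant fraction of $\alphaFPsi\epsilon$ (rather than just $\epsilon$) keeps $\norm{p(A/\alpha_A)\ket{\psi}}$ bounded below by $\mathbf{\Omega}(\alphaFPsi)$, so that the QEVT cost is unaffected up to constants. Everything else is a direct assembly of \lem{trunc_matrix_exp}, \lem{state_transform_perturb}, and \thm{qevt}; the substantive estimate (the matrix-polynomial norm bound underlying \lem{trunc_matrix_exp}) is deferred to \append{analysis_cheby}.
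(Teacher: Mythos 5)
Your high-level structure matches the paper's proof exactly: use \prop{trunc_exp}/\lem{trunc_matrix_exp} for the Chebyshev truncation at the matrix level, use \lem{state_transform_perturb} for the normalization error, then invoke \thm{qevt}. There is, however, one genuine gap in the choice of truncation order.

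You bound the denominator $1/\norm{e^{-itA}\ket{\psi}}$ by $1/\alphaFPsi$, and this forces you to choose $n=\mathbf{O}\left(\alpha_At+\log\left(\kappa_S/(\alphaFPsi\epsilon)\right)\right)$. That is not the $n$ in the theorem statement, and the extra $\log(1/\alphaFPsi)$ term is not automatically harmless: $\alphaFPsi$ is a user-supplied lower bound and the theorem does not assume it is chosen optimally or that it is polynomially related to $\kappa_S$. If you plug your $n$ into the \thm{qevt} cost you obtain a query complexity with $\log\left(\kappa_S/(\alphaFPsi\epsilon)\right)$ replacing $\log(\kappa_S/\epsilon)$ in the polynomial-degree factor, which is strictly weaker than claimed.

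The paper sidesteps this by bounding the denominator differently, namely
\begin{equation}
\frac{1}{\norm{e^{-itA}\ket{\psi}}}\leq\norm{e^{itA}},
\end{equation}
which is the observation recorded in \eq{denominator_bnd}. This replaces the state-dependent parameter $\alphaFPsi$ by a purely matrix-dependent quantity. It then estimates $\norm{e^{itA}}$ via the triangle inequality against the truncated Chebyshev polynomial together with \cor{matrix_poly_bound}, obtaining $\norm{e^{itA}}=\mathbf{O}\!\left(\kappa_S d_{\max}^n\left(1+\left(e\alpha_At/(2n)\right)^n\right)\right)$. Combined with the $\mathbf{O}\!\left(\kappa_S(ed_{\max}\alpha_At/(2n))^n\right)$ numerator bound, the product is controlled by $\epsilon$ already at $n=\mathbf{O}(\alpha_At+\log(\kappa_S/\epsilon))$, with no $\alphaFPsi$ appearing in $n$ at all. (The factor $1/\alphaFPsi$ shows up only in the amplification overhead of \thm{qevt}, exactly as stated in the theorem.) To fix your proof you should substitute the operator-norm bound \eq{denominator_bnd} for your $1/\alphaFPsi$ estimate when choosing $n$, and then bound $\norm{e^{itA}}$ by \cor{matrix_poly_bound} rather than leaving it implicit; the rest of your argument (including the $\alphaPPsi=\mathbf{\Theta}(\alphaFPsi)$ bookkeeping) goes through unchanged.
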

\begin{proof}
    There are two sources of error, one from Chebyshev truncation of the matrix exponential function, and the other from the application of QEVT.
    Most of our effort will be spent on choosing the truncate order $n$, such that $\frac{e^{-itA}\ket{\psi}}{\norm{e^{-itA}\ket{\psi}}}$ approximates $\frac{\sum_{j=0}^{n-1}\widetilde{\beta}_j\widetilde{\mathbf{T}}_j\left(\frac{A}{\alpha_A}\right)\ket{\psi}}{\norm{\sum_{j=0}^{n-1}\widetilde{\beta}_j\widetilde{\mathbf{T}}_j\left(\frac{A}{\alpha_A}\right)\ket{\psi}}}$ to accuracy $\epsilon/2$.
    We know from the perturbation bound of \lem{state_transform_perturb} that
    \begin{equation}
        \norm{\frac{e^{-itA}\ket{\psi}}{\norm{e^{-itA}\ket{\psi}}}
        -\frac{\sum_{j=0}^{n-1}\widetilde{\beta}_j\widetilde{\mathbf{T}}_j\left(\frac{A}{\alpha_A}\right)\ket{\psi}}{\norm{\sum_{j=0}^{n-1}\widetilde{\beta}_j\widetilde{\mathbf{T}}_j\left(\frac{A}{\alpha_A}\right)\ket{\psi}}}}
        \leq
        \frac{2\norm{e^{-itA}-\sum_{j=0}^{n-1}\widetilde{\beta}_j\widetilde{\mathbf{T}}_j\left(\frac{A}{\alpha_A}\right)}}{\norm{e^{-itA}\ket{\psi}}}.
    \end{equation}
    Furthermore, \lem{trunc_matrix_exp} implies that the numerator has the scaling
    \begin{equation}
        \norm{e^{-itA}-\sum_{j=0}^{n-1}\widetilde{\beta}_j\widetilde{\mathbf{T}}_j\left(\frac{A}{\alpha_A}\right)}
        =\mathbf{O}\left(\kappa_S\left(\frac{ed_{\max}\alpha_At}{2n}\right)^n\right),
    \end{equation}
    as long as $\alpha_A\geq2\norm{A}$, which can always be satisfied using the rescaling trick of \eq{block_rescaling}. It remains to analyze $1/\norm{e^{-itA}\ket{\psi}}$.

    We have from \eq{denominator_bnd} that
    \begin{equation}
        \frac{1}{\norm{e^{-itA}\ket{\psi}}}
        \leq\norm{e^{itA}}
        =\norm{\sum_{j=0}^{n-1}\widetilde{\beta}_j\widetilde{\mathbf{T}}_j\left(\frac{A}{\alpha_A}\right)}
        +\mathbf{O}\left(\kappa_S\left(\frac{ed_{\max}\alpha_At}{2n}\right)^n\right).
    \end{equation}
    Here, the first term is a degree-($n-1$) matrix polynomial function, which can be bounded again using \cor{matrix_poly_bound} from \append{analysis_cheby} as
    \begin{equation}
    \begin{aligned}
        \norm{\sum_{j=0}^{n-1}\widetilde{\beta}_j\widetilde{\mathbf{T}}_j\left(\frac{A}{\alpha_A}\right)}
        &=\mathbf{O}\left(\kappa_Sn^{d_{\max}-1}\norm{\sum_{j=0}^{n-1}\widetilde{\beta}_j\widetilde{\mathbf{T}}_j}_{\max,[-1,1]}\right)\\
        &=\mathbf{O}\left(\kappa_Sd_{\max}^n\left(\norm{e^{-i\alpha_At}}_{\max,[-1,1]}+\left(\frac{e\alpha_At}{2n}\right)^n\right)\right)\\
        &=\mathbf{O}\left(\kappa_Sd_{\max}^n\left(1+\left(\frac{e\alpha_At}{2n}\right)^n\right)\right).
    \end{aligned}
    \end{equation}

    To approximate with accuracy $\epsilon/2$, it thus suffices to choose a truncate order $n$ scaling like $n=\mathbf{O}\left(\alpha_At+\log\left(\frac{\kappa_S}{\epsilon}\right)\right)$. We now apply QEVT with accuracy $\epsilon/2$ as well. The claimed complexity follows from \thm{qevt}.
\end{proof}
\begin{remark}
    Methods for bounding $\alphaU$ and $\alphaT$ are discussed in the remarks succeeding \thm{generate_history} and \thm{qevt}, whereas a bound for $1/\alphaFPsi$ is given in the above proof. For diagonalizable coefficient matrices with purely imaginary spectra, we show in \append{analysis_cheby_bernstein} and \append{analysis_cheby_carleson} that $\alphaU=\mathbf{O}(\kappa_S)$, and that $\frac{\alphaT}{\alphaFPsi}=\mathbf{O}\left(\kappa_S\log(n)\right)$ in the worst case. But the $\log(n)$ factor can be shaved off for an average input, resulting in the complexity
    \begin{equation}
        \mathbf{O}\left(\kappa_S^2\left(\alpha_At+\log\left(\frac{\kappa_S}{\epsilon}\right)\right)\log\left(\frac{\kappa_S}{\epsilon}\right)\log\left(\frac{1}{\pfail}\right)\right)
    \end{equation}
    strictly linear in the evolution time.

    In the circuit implementation, we use \thm{fourier_coeff} to prepare the shifted Chebyshev coefficients; see the remark succeeding that theorem. To this end, we need to implement the oracle for the complex exponential function, which can be achieved in a standard way using a classical reversible computation. Moreover, we have $\norm{f}_{\max,[-1,1]}=\mathbf{O}(1)$ and $\norm{f(\cos)\sin}_{2,[-\pi,\pi]}=\mathbf{\Omega}(1)$ for the exponential function $f(x)=e^{-i\alpha_A tx}$. Therefore, the gate complexity for preparing the shifted Chebyshev coefficients is polylogarithmic in the input parameters.

It is unclear whether our method can be used to solve differential equations with time-dependent coefficients. This is somewhat reminiscent of the limitation that Chebyshev-based approach~\cite{Low2016HamSim} is not directly applicable to the quantum simulation of time-dependent Hamiltonians.
Note however that the above algorithm can be adapted to solve an inhomogeneous linear differential equation $\frac{\mathrm{d}x(t)}{\mathrm{d}t}=Cx(t)+b$,
which has the exact solution given by $x(t)=e^{t C}x_0+\frac{e^{tC}-I}{C}b$.
After the substitution $A=iC$, we consider the Chebyshev expansion of both
\begin{equation}
    f(x)=e^{-i\alpha_Atx}
    \approx\sum_{j=0}^{n-1}\beta_j\mathbf{T}_j(x)
\end{equation}
and
\begin{equation}
    g(x)=\frac{e^{-i\alpha_Atx}-1}{-i\alpha_Ax}
    \approx\sum_{j=0}^{n-1}\xi_j\mathbf{T}_j(x).
\end{equation}
We can truncate $f$ and $g$ at order $n=\mathbf{O}\left(\alpha_At+\log\left(\frac{1}{\delta}\right)\right)$ as both functions can be extended to be analytic on the entire complex plane. Then, we use the Chebyshev generating function as before to generate a state proportional to
\begin{equation}
    \ket{0}\sum_{j=0}^{n-1}\beta_j\mathbf{T}_j\left(\frac{A}{\alpha_A}\right)\ket{x_0}
    +\ket{1}\sum_{j=0}^{n-1}\xi_j\mathbf{T}_j\left(\frac{A}{\alpha_A}\right)\ket{b}.
\end{equation}
Finally, we perform amplitude amplification toward the state $(\norm{x_0}\ket{0}+\norm{b}\ket{1})/\sqrt{\norm{x_0}^2+\norm{b}^2}$ in the first register.

Note that the query complexity of initial state preparation can be improved using the block preconditioning technique of~\cite{OptInit}.
\end{remark}

\subsection{Quantum algorithm for ground state preparation}
\label{sec:app_ground}

As a second application, we present a quantum algorithm that prepares the ground state of an input matrix with real eigenvalues.

Let $A$ be the input matrix with only real eigenvalues and block encoded as $A/\alpha_A$. Suppose that the Jordan form decomposition $A/\alpha_A=SJ S^{-1}$ holds with upper bound $\kappa_S\geq\norm{S}\norm{S^{-1}}$ on the Jordan condition number and size $d_{\max}$ of the largest Jordan block. 
To simplify the analysis, we assume that $\lambda_0$ is the smallest eigenvalue of $A$ with eigenstate $\ket{\psi_0}$ that is \emph{nondefective} and \emph{nonderogatory}. In other words, there is only one Jordan block correponding to $\lambda_0$, and size of that block is $1$. 
Then our goal is to approximately prepare $\ket{\psi_0}$, given an initial trial state expanded in the Jordan basis as
\begin{equation}
    \ket{\psi}=\gamma_0\ket{\psi_0}+\sum_{l=1}^{d-1}\gamma_l\ket{\psi_l}.
\end{equation}

Following previous conventions~\cite{Lin2020nearoptimalground}, we assume that $\lambda_0$ is separated from the next eigenvalue $\lambda_1$:
\begin{equation}
    \lambda_0\leq-\frac{\delta_A}{2}<0<\frac{\delta_A}{2}\leq\lambda_1
\end{equation}
with some spectral gap $\delta_A>0$. Here, we have placed $\lambda_0$ and $\lambda_1$ on different sides of the origin, which is without loss of generality as we can always shift and rescale the input block encoding~\cite{Lin2020nearoptimalground}. 
In practice, it is also natural to consider quantum algorithms for preparing an arbitrary eigenstate as opposed to the ground state, but such an extension is fairly straightforward and will not be discussed here.

As is to be expected, we will solve the ground state preparation problem by applying QEVT to implement a truncated Chebyshev expansion. In the special case where the input is Hermitian, this route was pursued by previous work~\cite{Lin2020nearoptimalground} with QSVT. In that case, the input matrix can be unitarily diagonalized and one only needs to find a Chebyshev truncation for the scalar error function as in \prop{trunc_erf}. However, such a truncation result is not applicable here per se, because our matrices are not necessarily Hermitian (or even diagonalizable).
Instead, we prove the following bound for truncating matrix functions.

\begin{lemma}[Chebyshev truncation of matrix sign functions]
\label{lem:trunc_matrix_sign}
    Let $\widetilde{A}$ be a matrix with eigenvalues belonging to $[-\frac{1}{2},-2\widetilde\delta]\cup[2\widetilde\delta,\frac{1}{2}]$.
    Suppose that $\widetilde{A}=SJS^{-1}$ has a Jordan form decomposition with upper bound $\kappa_S\geq\norm{S}\norm{S^{-1}}$ on the Jordan condition number and size $d_{\max}$ of the largest Jordan block.
    Given $c>0$, let $\frac{1-\mathbf{Erf}(cx)}{2}=\sum_{j=0}^{\infty}\widetilde{\beta}_j\widetilde{\mathbf{T}}_j(x)$ be the Chebyshev expansion of the (shifted and rescaled) error function $\mathbf{Erf}(x)
    =\frac{2}{\sqrt{\pi}}\int_{0}^{x}\mathrm{d}u\ e^{-u^2}$. Then,
    \begin{small}
    \newmaketag
    \begin{equation}
        \norm{\frac{I-\mathbf{Sgn}\left(\widetilde A\right)}{2}
        -\sum_{j=0}^{n-1}\widetilde{\beta}_j\widetilde{\mathbf{T}}_j\left(\widetilde A\right)}
        =\mathbf{O}\left(\kappa_S\left(\frac{n}{\sqrt{\widetilde\delta}}\right)^{d_{\max}-1}
        \left(\frac{e^{-c^2\widetilde\delta^2}}{c\widetilde\delta}
        +\frac{c}{n}e^{-\frac{n^2}{2m}}
        +\frac{c}{n}e^{-\frac{c^2}{2}}\left(\frac{ec^2}{2m}\right)^m\right)\right),
    \end{equation}
    \end{small}
    with the sign function
    \begin{equation}
        \mathbf{Sgn}(x)=
        \begin{cases}
            -1,\quad &x<0,\\
            0,&x=0,\\
            1,&x>0.
        \end{cases}
    \end{equation}
\end{lemma}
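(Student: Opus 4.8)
The plan is to transfer the matrix bound to scalar and complex-analytic estimates through the Jordan form, and then to control the resulting Taylor coefficients by Cauchy's integral formula on circles of a carefully balanced radius --- the same mechanism that underlies \cor{matrix_poly_bound}, here applied directly to the non-polynomial difference of interest. Write $g(x)=\tfrac{1-\mathbf{Sgn}(x)}{2}$ and $q(x)=\sum_{j=0}^{n-1}\widetilde{\beta}_j\widetilde{\mathbf{T}}_j(x)$, so that the quantity to estimate is $\norm{g(\widetilde A)-q(\widetilde A)}$. Using $\widetilde A=SJS^{-1}$, this is at most $\kappa_S\max_l\norm{(g-q)\bigl(J(\lambda_l,d_l)\bigr)}$. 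Every eigenvalue obeys $\abs{\lambda_l}\geq 2\widetilde\delta>0$, so $g$ is \emph{locally constant} (hence analytic) near $\lambda_l$, and therefore $(g-q)\bigl(J(\lambda_l,d_l)\bigr)=\sum_{r=0}^{d_l-1}\tfrac{(g-q)^{(r)}(\lambda_l)}{r!}L_{d_l}^r$ with $\norm{L_{d_l}^r}\leq 1$. By Cauchy's formula on $\abs{z-\lambda_l}=\rho$ for any $0<\rho<\abs{\lambda_l}$, one has $\tfrac{\abs{(g-q)^{(r)}(\lambda_l)}}{r!}\leq\rho^{-r}M_l(\rho)$ where $M_l(\rho)=\max_{\abs{z-\lambda_l}=\rho}\abs{g(\lambda_l)-q(z)}$, so summing the geometric series in $r$ gives $\norm{(g-q)\bigl(J(\lambda_l,d_l)\bigr)}=\mathbf{O}\!\left(\rho^{-(d_{\max}-1)}\right)M_l(\rho)$.

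Next I would bound $M_l(\rho)$ by splitting $g(\lambda_l)-q(z)=\bigl(g(\lambda_l)-\tfrac{1-\mathbf{Erf}(cz)}{2}\bigr)+\bigl(\tfrac{1-\mathbf{Erf}(cz)}{2}-q(z)\bigr)$. The first summand equals $\tfrac12\abs{\mathbf{Sgn}(\lambda_l)-\mathbf{Erf}(cz)}$, that is, half of $\tfrac{2}{\sqrt\pi}\int_{w}^{\infty}e^{-t^2}\,\mathrm{d}t$ with $w=\pm cz$ chosen so that $\mathrm{Re}(w)>0$; the contour estimate $\bigl\lvert\tfrac{2}{\sqrt\pi}\int_{w}^{\infty}e^{-t^2}\,\mathrm{d}t\bigr\rvert\leq e^{\mathrm{Im}(w)^2}\cdot\tfrac{2}{\sqrt\pi}\int_{\mathrm{Re}(w)}^{\infty}e^{-t^2}\,\mathrm{d}t$, combined with $\mathrm{Re}(z)^2-\mathrm{Im}(z)^2\geq(\abs{\lambda_l}-\rho)^2$ on the circle (valid once $\rho\leq\abs{\lambda_l}/2$, as arranged below), yields a bound of order $\tfrac{e^{-c^2(\abs{\lambda_l}-\rho)^2}}{c(\abs{\lambda_l}-\rho)}$. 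The second summand equals the Chebyshev tail $\sum_{j\geq n}\widetilde{\beta}_j\widetilde{\mathbf{T}}_j(z)$; a Bernstein--Walsh estimate $\abs{\widetilde{\mathbf{T}}_j(z)}\leq e^{j\abs{\mathrm{Im}(\arccos z)}}\leq e^{Cj\rho}$ (valid with an absolute constant $C$ for $z$ within $\rho$ of the interior segment $[-\tfrac12,\tfrac12]$) together with the modified-Bessel formula for $\widetilde{\beta}_j$ and the truncation estimate of \prop{trunc_erf} bounds it by $\sum_{j\geq n}\abs{\widetilde{\beta}_j}\,e^{Cj\rho}$.

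Finally I would set $\rho=\sqrt{\widetilde\delta}/n$: this is below $\abs{\lambda_l}$ (so the contour avoids the origin) and below $1/n$, provided $n\geq 1/\sqrt{\widetilde\delta}$, which holds in all our applications. Then $\abs{\lambda_l}-\rho\geq\abs{\lambda_l}/2\geq\widetilde\delta$, and since $x\mapsto\tfrac{e^{-c^2x^2}}{cx}$ is decreasing the first summand of $M_l(\rho)$ is $\mathbf{O}\!\left(\tfrac{e^{-c^2\widetilde\delta^2}}{c\widetilde\delta}\right)$; moreover $e^{Cj\rho}=\mathbf{O}(1)$ on the range $j=\mathbf{O}(n)$ while the super-exponential (Gaussian-type) decay of $\widetilde{\beta}_j$ absorbs the contribution of $j\gg n$, so $\sum_{j\geq n}\abs{\widetilde{\beta}_j}e^{Cj\rho}=\mathbf{O}\!\left(\tfrac{c}{n}\bigl(e^{-\frac{n^2}{2m}}+e^{-\frac{c^2}{2}}\bigl(\tfrac{ec^2}{2m}\bigr)^{m}\bigr)\right)$ for $m=\mathbf{\Omega}(c^2)$. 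Multiplying $M_l(\rho)$ by $\rho^{-(d_{\max}-1)}=(n/\sqrt{\widetilde\delta})^{d_{\max}-1}$ and by $\kappa_S$, then maximizing over $l$, gives the claimed bound. I expect the main obstacle to be these two complex-variable estimates --- the decay of $\tfrac{2}{\sqrt\pi}\int_{w}^{\infty}e^{-t^2}\,\mathrm{d}t$ at complex arguments $w$ on a circle about a real point bounded away from $0$, and the Bernstein--Walsh control of the Chebyshev tail at complex $z$ together with the Bessel decay of $\widetilde{\beta}_j$ --- and, closely tied to it, checking that the single radius $\rho=\sqrt{\widetilde\delta}/n$ keeps the contour off the origin and the tail controlled simultaneously, which is exactly what manufactures the $\sqrt{\widetilde\delta}$ in the prefactor.
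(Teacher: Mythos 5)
Your proposal is correct in outline but routes the proof through a genuinely different mechanism from the one the paper uses, and it mischaracterizes the paper's own machinery along the way: \cor{matrix_poly_bound} is proved by a \emph{real-variable} argument (recursive Bernstein's inequality on a shrinking real interval), not by Cauchy's integral formula. The paper pulls out $\kappa_S$ via $\widetilde{A}=SJS^{-1}$, splits $J$ into blocks $J_-$ and $J_+$ according to the sign of the eigenvalue, and then applies \cor{matrix_poly_bound} to the \emph{polynomials} $1-q$ and $-q$ (with $q=\sum_{j<n}\widetilde\beta_j\widetilde{\mathbf{T}}_j$) over the real intervals $[-1,-\widetilde\delta]$ and $[\widetilde\delta,1]$ respectively; this is where the $(n/\sqrt{\widetilde\delta})^{d_{\max}-1}$ prefactor comes from, via the Bernstein factor $j/\sqrt{\widetilde\delta}$ per derivative. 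It then simply cites the scalar $\mathcal{L}_\infty$ truncation bounds for the error/sign functions on the real intervals, without ever leaving the real line.

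You instead keep the difference $g-q$ intact and control the Jordan-block norm by $\sum_{r<d_l}\rho^{-r}M_l(\rho)$ using Cauchy's formula on a circle of radius $\rho=\sqrt{\widetilde\delta}/n$, then estimate $M_l(\rho)$ by complex-variable estimates of $\mathbf{Erf}(cz)$ (contour-shifted Gaussian tail at complex argument) and of the Chebyshev tail (via a Bernstein--Walsh growth estimate $|\mathbf{T}_j(z)|\leq e^{Cj\rho}$ near the interval). This does produce the same numerator structure and the same $(n/\sqrt{\widetilde\delta})^{d_{\max}-1}$ prefactor (here from the choice of Cauchy radius), so the route is valid. The tradeoffs: the paper's argument is cleaner and fully modular --- once \cor{matrix_poly_bound} and \prop{trunc_erf} exist, the proof is a three-line triangle inequality; yours avoids \cor{matrix_poly_bound} as a black box and controls derivatives directly, at the cost of two nontrivial complex-analytic side estimates, both of which you correctly flag as the crux. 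The specific things you would still need to nail down carefully are (i) the absorption $\sum_{j\geq n}|\widetilde\beta_j|e^{Cj\rho}\lesssim\sum_{j\geq n}|\widetilde\beta_j|$ --- this does hold because the Bessel coefficients decay like $e^{-j^2/(4c^2)}$ already for $n\leq j\lesssim c^2$ (Gaussian, not merely "eventual" factorial decay), and with $\rho=\sqrt{\widetilde\delta}/n$ the ratio test becomes favorable well before the working regime $j\sim n$, but this deserves a line of verification rather than a gesture at super-exponential decay; and (ii) the constraint $\rho\leq|\lambda_l|/2$, which forces $n\gtrsim 1/\sqrt{\widetilde\delta}$ --- weaker than what the applications actually use, so fine, but it should be stated as a standing hypothesis rather than implicit. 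Neither is a gap in the argument; both are just the places where the complex route demands more bookkeeping than the real one.
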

\begin{proof}
    Similar to the proof of \lem{trunc_matrix_exp}, we will also be using the matrix polynomial bound~\cor{matrix_poly_bound} derived in \append{analysis_cheby}. However, the challenge is that we need to handle two different polynomials here: we have $1-\sum_{j=0}^{n-1}\widetilde{\beta}_j\widetilde{\mathbf{T}}_j(x)$ over the interval $[-1,-\widetilde\delta]$, and $-\sum_{j=0}^{n-1}\widetilde{\beta}_j\widetilde{\mathbf{T}}_j(x)$ over the interval $[\widetilde\delta,1]$. Therefore, we will separate Jordan blocks of $\widetilde A$ accordingly, based on intervals to which the eigenvalues belong.

    Let us start with the estimate
    \begin{equation}
        \norm{\frac{I-\mathbf{Sgn}\left(\widetilde A\right)}{2}
        -\sum_{j=0}^{n-1}\widetilde{\beta}_j\widetilde{\mathbf{T}}_j\left(\widetilde A\right)}
        \leq\kappa_S
        \norm{\frac{I-\mathbf{Sgn}\left(J\right)}{2}
        -\sum_{j=0}^{n-1}\widetilde{\beta}_j\widetilde{\mathbf{T}}_j\left(J\right)}.
    \end{equation}
    Note that by our assumption, all eigenvalues of $\widetilde A$ belong to $[-\frac{1}{2},-2\widetilde\delta]$ and $[2\widetilde\delta,\frac{1}{2}]$, whereas the sign function is analytic on larger intervals $[-1,-\widetilde\delta]$ and $[\widetilde\delta,1]$. Thus, the above matrix functions are indeed well defined as per \eq{jordan_form_transformation}.
    
    We now collect all Jordan blocks with negative eigenvalues into $J_{-}$ and those with positive eigenvalues into $J_{+}$ (size of $J_-$ and $J_-$ sums up to that of $J$). This gives
    \begin{equation}
        \norm{\frac{I-\mathbf{Sgn}\left(J\right)}{2}
        -\sum_{j=0}^{n-1}\widetilde{\beta}_j\widetilde{\mathbf{T}}_j\left(J\right)}
        \leq\norm{I
        -\sum_{j=0}^{n-1}\widetilde{\beta}_j\widetilde{\mathbf{T}}_j\left(J_-\right)}
        +\norm{
        \sum_{j=0}^{n-1}\widetilde{\beta}_j\widetilde{\mathbf{T}}_j\left(J_+\right)}.
    \end{equation}
    Invoking \cor{matrix_poly_bound}, we obtain
    \begin{equation}
    \begin{aligned}
        \norm{I
        -\sum_{j=0}^{n-1}\widetilde{\beta}_j\widetilde{\mathbf{T}}_j\left(J_-\right)}
        &=\mathbf{O}\left(\left(\frac{n}{\sqrt{\widetilde\delta}}\right)^{d_{\max}-1}\norm{1-\sum_{j=0}^{n-1}\widetilde{\beta}_j\widetilde{\mathbf{T}}_j}_{\max,[-1,-\widetilde\delta]}\right),\\
        \norm{\sum_{j=0}^{n-1}\widetilde{\beta}_j\widetilde{\mathbf{T}}_j\left(J_+\right)}
        &=\mathbf{O}\left(\left(\frac{n}{\sqrt{\widetilde\delta}}\right)^{d_{\max}-1}\norm{\sum_{j=0}^{n-1}\widetilde{\beta}_j\widetilde{\mathbf{T}}_j}_{\max,[\widetilde\delta,1]}\right).
    \end{aligned}
    \end{equation}
    The claim now follows from the Chebyshev truncation bounds for the scalar error and sign functions~\cite{Low17,Wan22}.
\end{proof}

\begin{theorem}[Quantum ground state preparation algorithm]
\label{thm:ground}
    Let $A$ be a square matrix with only real eigenvalues, such that $A/\alpha_A$ is block encoded by $O_A$ with some normalization factor $\alpha_A\geq\norm{A}$.
    Suppose that $A/\alpha_A=SJS^{-1}$ has a Jordan form decomposition with upper bound $\kappa_S\geq\norm{S}\norm{S}^{-1}$ on the Jordan condition number. Let eigenvalues of $A$ be ordered nondecreasingly, with $\lambda_0$ the smallest one with eigenstate $\ket{\psi_0}$, which is nondefective and nonderogatory satisfying the condition
    \begin{equation}
        \lambda_0\leq-\frac{\delta_A}{2}<0<\frac{\delta_A}{2}\leq\lambda_1
    \end{equation}
    for some spectral gap $\delta_A>0$. 
    Let $O_\psi\ket{0}=\ket{\psi}$ be the oracle preparing the initial state with the Jordan basis expansion
    \begin{equation}
        \ket{\psi}=\gamma_0\ket{\psi_0}+\sum_{l=1}^{d-1}\gamma_l\ket{\psi_l}.
    \end{equation}
    
    Then, applying \thm{qevt} to the error function $1-\mathbf{Erf}(cx)$ with a rescaling factor $c=\mathbf{O}\left(\frac{\alpha_A}{\delta_A}
        \sqrt{\log\left(\frac{\alpha_A}{\delta_A}\frac{\kappa_S}{|\gamma_0|\epsilon}\right)}\right)$ truncated at order 
    \begin{equation}
        n=\mathbf{O}\left(\frac{\alpha_A}{\delta_A}
        \log\left(\frac{\alpha_A}{\delta_A}\frac{\kappa_S}{|\gamma_0|\epsilon}\right)\right)
    \end{equation}    
    produces the ground state $\ket{\psi_0}$ with accuracy $\epsilon$, probability $1-\pfail$, and the global phase factor $\gamma_0/\abs{\gamma_0}$. 
    The algorithm uses
    \begin{equation}
        \mathbf{O}\left(\frac{\alphaT}{\abs{\gamma_0}}\alphaU
        \frac{\alpha_A}{\delta_A}\log\left(\frac{\alpha_A}{\delta_A}\frac{\kappa_S}{|\gamma_0|\epsilon}\right)
        \log\left(\frac{\alphaT}{\abs{\gamma_0}\epsilon}\right)
        \log\left(\frac{1}{\pfail}\right)\right)
    \end{equation}
    queries to controlled-$O_A$, controlled-$O_\psi$, and their inverses, 
    where $\alphaU$ satisfies \eq{alphaU} and $\alphaT$ satisfies \eq{alphaT_alphaPPsi}.
\end{theorem}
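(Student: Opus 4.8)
The plan is to implement, through \thm{qevt}, a polynomial $p$ that approximates the oblique spectral projector onto the ground eigenspace, so that $p(A/\alpha_A)\ket{\psi}$ is almost proportional to $\ket{\psi_0}$, and then to balance the truncation error against the error of QEVT. To set up the exact target, I would first invoke the rescaling trick \eq{block_rescaling} so that $\alpha_A\geq2\norm{A}$, whence all (real) eigenvalues of $A/\alpha_A$ lie in $[-\frac12,-2\widetilde\delta]\cup[2\widetilde\delta,\frac12]$ with $\widetilde\delta=\mathbf{\Theta}(\delta_A/\alpha_A)$. Since $\mathbf{Sgn}$ is locally constant --- hence has vanishing derivatives --- on each of $[-1,-\widetilde\delta]$ and $[\widetilde\delta,1]$, the Jordan-form functional calculus \eq{jordan_form_transformation} gives $\mathbf{Sgn}(A/\alpha_A)=S\,\mathrm{diag}(-1,+1,\dots,+1)\,S^{-1}$, where the unique $-1$ is the value on the $1\times1$ Jordan block of $\lambda_0$ (which is $1\times1$ exactly because $\lambda_0$ is nondefective and nonderogatory) and every other Jordan block, of whatever size, is mapped to $+I$. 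Thus $\Pi_0:=\frac{I-\mathbf{Sgn}(A/\alpha_A)}{2}$ is the rank-one idempotent with $\Pi_0\ket{\psi_0}=\ket{\psi_0}$ and $\Pi_0\ket{\psi_l}=0$ for $l\geq1$, so $\Pi_0\ket{\psi}=\gamma_0\ket{\psi_0}$ and its normalized image is $(\gamma_0/\abs{\gamma_0})\ket{\psi_0}$ --- the source of the stated global phase.

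Next I would choose the polynomial and fix the free parameters. Let $p(x)=\sum_{j=0}^{n-1}\widetilde\beta_j\widetilde{\mathbf{T}}_j(x)$ be the order-$n$ Chebyshev truncation of $\frac{1-\mathbf{Erf}(cx)}{2}$. By \lem{trunc_matrix_sign} with $m=\mathbf{\Theta}(c^2)$ (permitted by \prop{trunc_erf}), $\norm{\Pi_0-p(A/\alpha_A)}$ is at most $\kappa_S(n/\sqrt{\widetilde\delta})^{d_{\max}-1}$ times the sum of $\frac{e^{-c^2\widetilde\delta^2}}{c\widetilde\delta}$, $\frac{c}{n}e^{-n^2/2m}$, and $\frac{c}{n}e^{-c^2/2}(ec^2/2m)^m$. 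To drive this below $\mathbf{O}(\abs{\gamma_0}\epsilon)$ I would make $c\widetilde\delta$ logarithmically large, i.e.\ $c=\mathbf{\Theta}\!\left(\widetilde\delta^{-1}\sqrt{\log(\tfrac{\alpha_A}{\delta_A}\tfrac{\kappa_S}{\abs{\gamma_0}\epsilon})}\right)=\mathbf{\Theta}\!\left(\tfrac{\alpha_A}{\delta_A}\sqrt{\log(\tfrac{\alpha_A}{\delta_A}\tfrac{\kappa_S}{\abs{\gamma_0}\epsilon})}\right)$, which controls the first and third terms, and then take $n=\mathbf{\Theta}\!\left(c\sqrt{\log(\cdots)}\right)=\mathbf{\Theta}\!\left(\tfrac{\alpha_A}{\delta_A}\log(\tfrac{\alpha_A}{\delta_A}\tfrac{\kappa_S}{\abs{\gamma_0}\epsilon})\right)$ to control the second; the factor $\kappa_S$ is precisely what puts $\kappa_S$ inside the logarithm of the stated $n$, while $(n/\sqrt{\widetilde\delta})^{d_{\max}-1}$ contributes only a further additive $\mathbf{O}(d_{\max}\log(n\alpha_A/\delta_A))$ to the logarithm, absorbed into the $\mathbf{O}(\cdot)$. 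Taking $\norm{\Pi_0-p(A/\alpha_A)}\leq\abs{\gamma_0}\epsilon/4$ also yields $\norm{p(A/\alpha_A)\ket{\psi}}\geq\abs{\gamma_0}(1-\epsilon/4)=\mathbf{\Omega}(\abs{\gamma_0})$, so I would set $\alphaPPsi=\mathbf{\Theta}(\abs{\gamma_0})$.

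Finally I would assemble the pieces. Applying the perturbation bound \lem{state_transform_perturb} with $C=\Pi_0$ and $\widetilde C=p(A/\alpha_A)$ acting on the common state $\ket{\psi}$ (its proof uses only $\Pi_0\ket{\psi}\neq0$, not invertibility of $\Pi_0$) gives $\norm{p(A/\alpha_A)\ket{\psi}/\norm{p(A/\alpha_A)\ket{\psi}}-(\gamma_0/\abs{\gamma_0})\ket{\psi_0}}\leq 2\norm{\Pi_0-p(A/\alpha_A)}/\abs{\gamma_0}\leq\epsilon/2$. Then I would run \thm{qevt} on $p$ with target accuracy $\epsilon/2$ to prepare $p(A/\alpha_A)\ket{\psi}/\norm{p(A/\alpha_A)\ket{\psi}}$ with probability $1-\pfail$; the triangle inequality gives total accuracy $\epsilon$, and substituting $n=\mathbf{O}(\tfrac{\alpha_A}{\delta_A}\log(\tfrac{\alpha_A}{\delta_A}\tfrac{\kappa_S}{\abs{\gamma_0}\epsilon}))$ and $\alphaPPsi=\mathbf{\Theta}(\abs{\gamma_0})$ into the query complexity of \thm{qevt} reproduces the claimed bound. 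I expect the main obstacle to be the error-budget bookkeeping in the middle step: choosing $c$, $n$, and $m$ so that all three terms of \lem{trunc_matrix_sign}, after the $\kappa_S(n/\sqrt{\widetilde\delta})^{d_{\max}-1}$ prefactor, sit below $\abs{\gamma_0}\epsilon$ while $n$ remains at the advertised $\tfrac{\alpha_A}{\delta_A}\log(\cdots)$ scaling --- in particular, keeping the polynomial prefactor's effect merely logarithmic. A secondary subtlety is justifying $\Pi_0\ket{\psi}=\gamma_0\ket{\psi_0}$ exactly, which is where the nondefective-and-nonderogatory hypothesis and the local constancy of $\mathbf{Sgn}$ are both used.
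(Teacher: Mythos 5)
Your proposal is correct and takes essentially the same route as the paper's proof: rescale the block encoding so $\alpha_A\geq 2\norm{A}$, identify $\Pi_0=\tfrac{1}{2}(I-\mathbf{Sgn}(A/\alpha_A))$ as the rank-one oblique projector onto $\ket{\psi_0}$ (using nondefectiveness and nonderogation), apply \lem{trunc_matrix_sign} with $\widetilde\delta=\mathbf{\Theta}(\delta_A/\alpha_A)$ to budget the truncation error at $\mathbf{O}(\abs{\gamma_0}\epsilon)$ via the same $c$, $m$, $n$ choices, use \lem{state_transform_perturb} with denominator $\norm{\Pi_0\ket{\psi}}=\abs{\gamma_0}$ to pass to normalized states, and finish by invoking \thm{qevt} with $\alphaPPsi=\mathbf{\Theta}(\abs{\gamma_0})$. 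Your remark that \lem{state_transform_perturb} is applied to the non-invertible $\Pi_0$ (its proof needing only $\Pi_0\ket{\psi}\neq 0$) is a correct reading that the paper leaves implicit.
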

\begin{proof}
    If $2\norm{A}>\alpha_A\geq\norm{A}$, we first rescale the input block encoding by a factor of $2$ using the trick of \eq{block_rescaling}. So in what follows, we will assume $\alpha_A\geq2\norm{A}$ without loss of generality. This rescaling may increase the Jordan condition number by $2^{d_{\max}-1}$~\cite[Corollary 3.1.21]{horn2012matrix} which we treat as constant. The eigenvalues of the block encoded matrix now satisfy the condition
    \begin{equation}
        -\frac{1}{2}
        \leq\frac{\lambda_0}{\alpha_A}
        \leq-\frac{\delta_A}{2\alpha_A}
        <0
        <\frac{\delta_A}{2\alpha_A}
        \leq\frac{\lambda_1}{\alpha_A}
        \leq\cdots
        \leq\frac{\lambda_{d-1}}{\alpha_A}
        \leq\frac{1}{2}.
    \end{equation}

    We then apply \lem{trunc_matrix_sign} to get
    \begin{small}
    \newmaketag
    \begin{equation}
    \label{eq:trunc_matrix_sign_block}
        \norm{\frac{I-\mathbf{Sgn}\left(\frac{A}{\alpha_A}\right)}{2}
        -\sum_{j=0}^{n-1}\widetilde{\beta}_j\widetilde{\mathbf{T}}_j\left(\frac{A}{\alpha_A}\right)}
        =\mathbf{O}\left(\kappa_S\left(\frac{n}{\sqrt{\widetilde\delta}}\right)^{d_{\max}-1}
        \left(\frac{e^{-c^2\widetilde\delta^2}}{c\widetilde\delta}
        +\frac{c}{n}e^{-\frac{n^2}{2m}}
        +\frac{c}{n}e^{-\frac{c^2}{2}}\left(\frac{ec^2}{2m}\right)^m\right)\right),
    \end{equation}
    \end{small}\\
    where
    \begin{equation}
        \widetilde\delta=\frac{\delta_A}{4\alpha_A}.
    \end{equation}
    From the perturbation bound of \lem{state_transform_perturb},
    \begin{equation}
        \norm{\frac{\frac{I-\mathbf{Sgn}\left(\frac{A}{\alpha_A}\right)}{2}\ket{\psi}}{\norm{\frac{I-\mathbf{Sgn}\left(\frac{A}{\alpha_A}\right)}{2}\ket{\psi}}}
        -\frac{\sum_{j=0}^{n-1}\widetilde{\beta}_j\widetilde{\mathbf{T}}_j\left(\frac{A}{\alpha_A}\right)\ket{\psi}}{\norm{\sum_{j=0}^{n-1}\widetilde{\beta}_j\widetilde{\mathbf{T}}_j\left(\frac{A}{\alpha_A}\right)\ket{\psi}}}}
        \leq\frac{2\norm{\frac{I-\mathbf{Sgn}\left(\frac{A}{\alpha_A}\right)}{2}
        -\sum_{j=0}^{n-1}\widetilde{\beta}_j\widetilde{\mathbf{T}}_j\left(\frac{A}{\alpha_A}\right)}}{\norm{\frac{I-\mathbf{Sgn}\left(\frac{A}{\alpha_A}\right)}{2}\ket{\psi}}}.
    \end{equation}
    Note that denominator of the above bound is exactly $\abs{\gamma_0}$, because
    \begin{equation}
        \norm{\frac{I-\mathbf{Sgn}\left(\frac{A}{\alpha_A}\right)}{2}\ket{\psi}}
        =\norm{S\frac{I-\mathbf{Sgn}\left(J\right)}{2}S^{-1}\ket{\psi}}
        =\norm{\ket{\psi_0}\gamma_0}=\abs{\gamma_0}.
    \end{equation}
    To see the second equality, we permute the Jordan blocks and Jordan basis so that eigenvalues are ordered increasingly.
    Then, we have
    \begin{equation}
        S=\begin{bmatrix}
            \ket{\psi_0} & \cdots & \ket{\psi_{d-1}}
        \end{bmatrix},\qquad
        \frac{I-\mathbf{Sgn}\left(J\right)}{2}
        =\ketbra{0}{0},\qquad
        \ket{\psi}
    =S\begin{bmatrix}
        \gamma_0\\
        \vdots\\
        \gamma_{d-1}
    \end{bmatrix},
    \end{equation}
    from which the above calculation is justified.

    Our goal now is to choose parameters $c,m,n$ so that Chebyshev truncation of the matrix sign function in \eq{trunc_matrix_sign_block} has an error at most $\widetilde\epsilon$. We will make sure that $n=\mathbf{\Omega}(c)$ and $c=\mathbf{\Omega}\left(\frac{1}{\widetilde\delta}\right)$, so the error scaling simplifies to
    \begin{equation}
        \mathbf{O}\left(\kappa_Sn^{\widetilde d_{\max}}
        \left(e^{-c^2\widetilde\delta^2}
        +e^{-\frac{n^2}{2m}}
        +\left(\frac{ec^2}{2m}\right)^m\right)\right)
    \end{equation}
    for $\widetilde d_{\max}=\frac{3d_{\max}-3}{2}=\mathbf{O}(1)$.
    Let us first try
    \begin{equation}
        c\sim\frac{1}{\widetilde\delta}\sqrt{\log\left(\frac{1}{\widetilde\epsilon}\right)},\qquad
        m\sim c^2+\log\left(\frac{1}{\widetilde\epsilon}\right)
        \sim\frac{1}{\widetilde\delta^2}\log\left(\frac{1}{\widetilde\epsilon}\right),\qquad
        n\sim\sqrt{m\log\left(\frac{1}{\widetilde\epsilon}\right)}
        \sim\frac{1}{\widetilde\delta}\log\left(\frac{1}{\widetilde\epsilon}\right),
    \end{equation}
    which ensures that
    \begin{equation}
        e^{-c^2\widetilde\delta^2}
        +e^{-\frac{n^2}{2m}}
        +\left(\frac{ec^2}{2m}\right)^m
        =\mathbf{O}\left(\widetilde\epsilon\right).
    \end{equation}
    However, we have an additional factor due to the non-Hermitian nature of the input matrix, which contributes to an asymptotic scaling of
    \begin{equation}
        \kappa_Sn^{\widetilde d_{\max}}
        \sim\frac{\kappa_S}{\widetilde\delta^{\widetilde d_{\max}}}\log^{\widetilde d_{\max}}\left(\frac{1}{\widetilde\epsilon}\right)
        \lesssim\frac{\kappa_S}{\widetilde\delta^{\widetilde d_{\max}}\sqrt{\widetilde\epsilon}}.
    \end{equation}
    Hence to compensate for this contribution, we choose
    \begin{equation}
        n=\mathbf{\Theta}\left(\frac{1}{\widetilde\delta}\log\left(\frac{\kappa_S}{\widetilde\delta\widetilde\epsilon}\right)\right).
    \end{equation}
    
    Finally, we take the normalization into account and set
    \begin{equation}
        \widetilde\epsilon=\mathbf{\Theta}\left(\abs{\gamma_0}\epsilon\right).
    \end{equation}
    The proof is now complete by noting that
    \begin{equation}
        \frac{\frac{I-\mathbf{Sgn}\left(\frac{A}{\alpha_A}\right)}{2}\ket{\psi}}{\norm{\frac{I-\mathbf{Sgn}\left(\frac{A}{\alpha_A}\right)}{2}\ket{\psi}}}
        =\frac{S\frac{I-\mathbf{Sgn}\left(J\right)}{2}S^{-1}\ket{\psi}}{\norm{S\frac{I-\mathbf{Sgn}\left(J\right)}{2}S^{-1}\ket{\psi}}}
        =\frac{\ket{\psi_0}\gamma_0}{\norm{\ket{\psi_0}\gamma_0}}
        =\frac{\gamma_0}{\abs{\gamma_0}}\ket{\psi_0}.
    \end{equation}
\end{proof}
\begin{remark}
    For a discussion on the scaling of $\alphaU$ and $\alphaT$, see the remarks succeeding \thm{generate_history} and \thm{qevt}. For input matrices that are diagonalizable with real spectra, we show in \append{analysis_cheby_bernstein} that $\alphaU=\mathbf{O}(\kappa_S)$, and that $\alphaT=\mathbf{O}(\kappa_S\log(n))$ in the worst case where the $\log(n)$ factor can be dropped for an average input (\append{analysis_cheby_carleson}), resulting in the complexity
    \begin{equation}
        \mathbf{O}\left(\frac{\alpha_A}{\delta_A}\frac{\kappa_S^2}{\abs{\gamma_0}}
        \log^2\left(\frac{\kappa_S}{\abs{\gamma_0}\epsilon}\right)
        \log\left(\frac{1}{\pfail}\right)\right).
    \end{equation}
    Note that we have also removed a factor of $\frac{\alpha_A}{\delta_A}$ from inside of a logarithmic factor, due to the fact that $d_{\max}=1$ in \eq{trunc_matrix_sign_block}. 
    Anyway, our result naturally reduces to the nearly optimal ground state preparation result from previous work~\cite{Lin2020nearoptimalground} in the special case where the input matrix is a Hermitian Hamiltonian.

    The complexity of our algorithm depends on the expansion coefficient $\gamma_0$ of the initial trial state under the Jordan basis. This is compatible with previous work~\cite{Lin2020nearoptimalground} that uses the notion of initial overlap, because the basis is orthonormal when the input matrix is Hermitian.
    We have also explicitly worked out the global phase factor $\gamma_0/\abs{\gamma_0}$, and adopted the Euclidean distance as the accuracy metric of the output state. This necessarily implies that our output state has a large overlap/fidelity with the true ground state in the language of~\cite{Lin2020nearoptimalground}, since the inequality $\abs{\langle\psi_0|\varphi_2\rangle}\geq\abs{\langle\psi_0|\varphi_1\rangle}-\norm{\ket{\varphi_1}-\ket{\varphi_2}}$ holds for arbitrary quantum states $\ket{\psi_0},\ket{\varphi_1},\ket{\varphi_2}$.

    For a circuit implementation of the shifted Chebyshev coefficients preparation, see \thm{fourier_coeff} and the succeeding remark. To this end, we need to implement the oracle for the error function $\mathbf{Erf}$, which can be achieved by translating the efficient classical algorithm from~\cite{CHEVILLARD201272}. Moreover, we have $\norm{f}_{\max,[-1,1]}=\mathbf{O}(1)$ and $\norm{f(\cos)\sin}_{2,[-\pi,\pi]}=\mathbf{\Omega}(1)$ for the rescaled error function, assuming the spectral gap is at most constant. Therefore, the gate complexity for preparing the shifted Chebyshev coefficients is polylogarithmic in the input parameters.

    Finally, note that the query complexity of initial state preparation can be improved using the block preconditioning technique of~\cite{OptInit}.
\end{remark}

\section{Eigenvalue processing over the complex plane}
\label{sec:faber}
We have so far focused on quantum eigenvalue algorithms for input matrices with real spectra.
In this section, we show that many of these results can be carried over to the complex plane.
This is achieved using the Faber expansion that provides a nearly optimal polynomial basis for approximation over a compact set of the complex plane, the preliminaries of which will be reviewed in \sec{faber_prelim}.
We then describe efficient quantum algorithms for generating the Faber history state in \thm{faber_history} of \sec{faber_history}, and for transforming eigenvalues of input matrices with complex spectra in \thm{qevt_faber} of \sec{faber_qevt}.
Finally, we present in \sec{faber_app} a quantum algorithm for solving differential equations with general coefficient matrices (\thm{diff_eq_faber}), as well as a quantum algorithm for estimating leading eigenvalues (\thm{qeve_extreme}).

\subsection{Preliminaries on Faber expansion}
\label{sec:faber_prelim}

Suppose that we have an input matrix whose eigenvalues are enclosed by a subset $\mathcal{E}$ of the complex plane. When the matrix has only real spectra, we may choose $\mathcal{E}$ to be a closed real interval, and approximate functions over $\mathcal{E}$ using the Chebyshev expansion. But here, we relax this assumption to handle more general operators with complex eigenvalues. 

Specifically, we require $\mathcal{E}$ to be a nonempty simply connected compact set in the complex plane with a simply closed (or \emph{Jordan curve}) boundary, which we refer to as a \emph{Faber region}.
By the Riemann mapping theorem, there exists a unique function, known as the exterior Riemann map,
\begin{equation}
	\mathbf{\Phi}:\mathcal{E}^c\rightarrow\mathcal{D}^c,\qquad
	\mathbf{\Phi}(z)=w,
\end{equation}
which sends the complement of $\mathcal{E}$ conformally onto the exterior of the unit disk $\mathcal{D}=\{|w|\leq1\}$ and satisfies the conditions
\begin{equation}
	\mathbf{\Phi}(\infty)=\infty,\qquad
	\mathbf{\Phi}'(\infty)=\lim_{z\rightarrow\infty}\frac{\mathbf{\Phi}(z)}{z}=\zeta>0.
\end{equation}
Here, the complement is taken with respect to the extended complex plane $\mathbb{C}\cup\{\infty\}$.
This implies that $\mathbf{\Phi}$ has its Laurent expansion in some neighborhood of $\infty$ as
\begin{equation}
	\mathbf{\Phi}(z)=\zeta z+\zeta_0+\frac{\zeta_1}{z}+\frac{\zeta_2}{z^2}+\cdots
\end{equation}
Then the $n$th Faber polynomial $\mathbf{F}_n(z)$ for the domain $\mathcal{E}$ is taken to be the polynomial part of the Laurent expansion of $\mathbf{\Phi}^n(z)$.
We let the inverse of $\mathbf{\Phi}$ be
\begin{equation}
	\mathbf{\Psi}:\mathcal{D}^c\rightarrow\mathcal{E}^c,\qquad
	\mathbf{\Psi}(w)=\mathbf{\Phi}^{-1}(w)=z,
\end{equation}
which maps the exterior of the unit disk $\mathcal{D}$ conformally onto the complement of $\mathcal{E}$, with the Laurent expansion
\begin{equation}
	\mathbf{\Psi}(w)=\varsigma w+\varsigma_0+\frac{\varsigma_1}{w}+\frac{\varsigma_2}{w^2}+\cdots
\end{equation}
for $\abs{w}>1$. 
By the Carath\'{e}odory's theorem, the maps $\mathbf{\Phi}$ and $\mathbf{\Psi}$ can be extended continuously to the boundaries $\partial\mathcal{E}$ and $\partial\mathcal{D}$ respectively.
See \fig{faber_def} for an illustration of these definitions, and references~\cite{suetin1998series,markushevich2005theory} for more introductory material on Faber polynomials.

By restricting $\mathcal{E}$ to be special subsets of the complex plane, one can recover familiar examples of
polynomial basis for nearly best uniform approximation of functions. For instance, consider first the case where $\mathcal{E}=[-1,1]$. Then we have
\begin{equation}
    \mathbf{\Psi}(w)=\frac{1}{2}\left(w+\frac{1}{w}\right)
\end{equation}
as the \emph{Joukowsky map}, which has inverse $\mathbf{\Phi}(z)=z+\sqrt{z^2-1}$ with the branch of square root satisfying $\lim_{z\rightarrow\infty}\frac{\sqrt{z^2-1}}{z}=1$. This means $z-\sqrt{z^2-1}$ has no polynomial part in its Laurent expansion, so the polynomial part of $\mathbf{\Phi}^n(z)$ is the same as the polynomial part of $\left(z+\sqrt{z^2-1}\right)^n+\left(z-\sqrt{z^2-1}\right)^n$. But the $n$th Chebyshev polynomial of the first kind satisfies the equality $\mathbf{T}_n(x)=\frac{1}{2}\left(\left(x+\sqrt{x^2-1}\right)^n+\left(x-\sqrt{x^2-1}\right)^n\right)$ for $x\in\mathbb R$. We thus conclude that $\left(z+\sqrt{z^2-1}\right)^n+\left(z-\sqrt{z^2-1}\right)^n$ is a degree-$n$ polynomial itself, and that $\mathbf{F}_n(z)=2\mathbf{T}_n(z)$ for $n\geq1$, or 
\begin{equation}
    \mathbf{F}_n(z)=2\widetilde{\mathbf{T}}_n(z)
\end{equation}
for all nonnegative integers $n$.
As another example, consider the case where $\mathcal{E}$ is just the unit disk $\mathcal{D}=\{|w|\leq1\}$ itself. Then we have
\begin{equation}
    \mathbf{\Psi}(w)=w
\end{equation}
and the inverse $\mathbf{\Phi}(z)=z$ both as identity maps, so
\begin{equation}
    \mathbf{F}_n(z)=z^n
\end{equation}
are the power functions. But the significance of Faber polynomials is that they provide a unifying approach for function approximations over compact subsets of the complex plane, of which Chebyshev polynomials and power series are two special cases.

The generating functions for Faber polynomials and their derivatives have the form~\cite{CurtissPaper71,schiffer1948faber}
\begin{equation}
\label{eq:faber_gen}
    \sum_{j=0}^\infty\mathbf{F}_j(z)y^j=\frac{\mathbf{\Psi}'(y^{-1})}{y\left(\mathbf{\Psi}(y^{-1})-z\right)},\qquad
    \sum_{j=1}^\infty\frac{\mathbf{F}'_j(z)}{j}y^{j-1}=\frac{1}{y\left(\mathbf{\Psi}(y^{-1})-z\right)},
\end{equation}
for $\abs{y}<1$. This is reminiscent of the generating functions
\begin{equation}
    \sum_{j=0}^\infty \widetilde{\mathbf{T}}_j(x)y^j
    =\frac{1-y^2}{2(1+y^2-2yx)},\qquad
    \sum_{j=0}^\infty\mathbf{U}_j(x)y^j=\frac{1}{1+y^2-2yx},
\end{equation}
for Chebyshev polynomials of the first and second kind, where $\widetilde{\mathbf{T}}_j'(x)=\mathbf{T}_j'(x)=j\mathbf{U}_{j-1}(x)$ for $j\geq 1$.
In our Faber algorithms, we will implement a matrix version of the generating function $\sum_{j=0}^\infty L^j\otimes\mathbf{F}_j\left(\frac{A}{\alpha_A}\right)=\frac{\mathbf{\Psi}'(L^{-1})\otimes I}{L\mathbf{\Psi}(L^{-1})\otimes I-L\otimes \frac{A}{\alpha_A}}$. The challenge here is that we need to handle operators like $\mathbf{\Psi}'(L^{-1})$ and $L\mathbf{\Psi}(L^{-1})$ which need not have finite Taylor expansions, unlike the Chebyshev case. We overcome this difficulty by reformulating it as a problem of generating Fourier coefficients, which we solve using techniques we develop in~\sec{fourier}.

Given a function analytic over a Faber region in the complex plane, we may expand the function into a series of Faber polynomials associated with that region. This is formalized by the following lemma.
\begin{lemma}
    Let $\mathcal{E}$ be a Faber region with the corresponding conformal maps $\mathbf{\Phi}:\mathcal{E}^c\rightarrow\mathcal{D}^c$, $\mathbf{\Psi}:\mathcal{D}^c\rightarrow\mathcal{E}^c$ and Faber polynomials $\mathbf{F}_n(z)$. For any function $f:\mathbb{C}\rightarrow\mathbb{C}$ analytic on $\mathcal{E}$, the following statements hold:
    \begin{enumerate}
        \item (Existence~\cite[Page 52]{suetin1998series}): There exists an expansion
        \begin{equation}
            f(z)=\sum_{j=0}^\infty \beta_j\mathbf{F}_j(z)
        \end{equation}
        converging uniformly on the entire $\mathcal{E}$.
        \item (Uniqueness~\cite[Page 109]{suetin1998series}): For any expansion $f(z)=\sum_{j=0}^\infty \beta_j\mathbf{F}_j(z)$ converging uniformly on the entire $\mathcal{E}$,
        \begin{equation}
        \label{eq:faber_coeff_unit}
            \beta_j=\frac{1}{2\pi i}\int_{\partial\mathcal{D}}\mathrm{d}w\ \frac{f(\mathbf{\Psi}(w))}{w^{j+1}}
            =\frac{1}{2\pi}\int_{0}^{2\pi}\mathrm{d}\theta\ e^{-ij\theta}f\left(\mathbf{\Psi}(e^{i\theta})\right),
        \end{equation}
        where $\partial\mathcal{D}=\{\abs{w}=1\}$ is the unit circle.
    \end{enumerate}
    Moreover, if $f$ is analytic on a region containing $\mathbf{\Psi}(r\partial\mathcal{D})$ ($r\geq 1$) and its interior, then the Faber coefficients can also be computed with the rescaled contour:
\begin{equation}
    \beta_j=\frac{1}{2\pi i}\int_{r\partial\mathcal{D}}\mathrm{d}w\ \frac{f(\mathbf{\Psi}(w))}{w^{j+1}}
    =\frac{1}{2\pi r^j}\int_{0}^{2\pi}\mathrm{d}\theta\ e^{-ij\theta}f\left(\mathbf{\Psi}(re^{i\theta})\right).
\end{equation}
\end{lemma}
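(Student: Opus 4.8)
The plan is to take the existence and uniqueness claims as the cited results of Suetin and to devote the actual work to the rescaled‑contour formula in the ``Moreover'' clause; since that derivation reuses the mechanism behind \eq{faber_coeff_unit}, I first recall that mechanism and then carry out a contour deformation.

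First I would record the defining property of the Faber polynomials in a form suited to contour integration. By definition $\mathbf{F}_k(z)$ is the polynomial part of the Laurent expansion of $\mathbf{\Phi}^k(z)$ at $\infty$, so $\mathbf{\Phi}^k(z) = \mathbf{F}_k(z) + E_k(z)$ with $E_k$ analytic on $\mathcal{E}^c$, including at $\infty$, and $E_k(\infty) = 0$. Substituting $z = \mathbf{\Psi}(w)$ and using $\mathbf{\Phi}(\mathbf{\Psi}(w)) = w$ gives
\[
    \mathbf{F}_k(\mathbf{\Psi}(w)) = w^k - E_k(\mathbf{\Psi}(w)),\qquad |w|>1,
\]
where $E_k\circ\mathbf{\Psi}$ is analytic for $|w|>1$ and vanishes at $\infty$, hence expands there in strictly negative powers of $w$. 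Integrating term by term against $w^{-j-1}$ (after first nudging the contour $\partial\mathcal{D}$ outward to a circle of radius $1+\varepsilon$, where the expansion converges uniformly) yields the near‑orthogonality relation $\frac{1}{2\pi i}\int_{\partial\mathcal{D}}\mathbf{F}_k(\mathbf{\Psi}(w))\,w^{-j-1}\,\mathrm{d}w = \delta_{jk}$ for $j,k\ge0$. Given a uniformly convergent Faber expansion $f(z) = \sum_k\beta_k\mathbf{F}_k(z)$ on $\mathcal{E}$, the series $f(\mathbf{\Psi}(w)) = \sum_k\beta_k\mathbf{F}_k(\mathbf{\Psi}(w))$ converges uniformly on $\partial\mathcal{D}$ because $\mathbf{\Psi}(\partial\mathcal{D}) = \partial\mathcal{E}\subseteq\mathcal{E}$; exchanging sum and integral reproduces \eq{faber_coeff_unit}.

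For the rescaled contour I would argue by Cauchy deformation. Under the stronger hypothesis that $f$ is analytic on an open region $\Omega$ containing the Jordan curve $\mathbf{\Psi}(r\partial\mathcal{D})$ and its interior, I would first note the topological fact that this interior is exactly $\mathcal{E}\cup\mathbf{\Psi}(\{1<|w|<r\})$: because $\mathbf{\Psi}$ maps $\mathcal{D}^c$ conformally onto $\mathcal{E}^c$ and fixes $\infty$, the image $\mathbf{\Psi}(r\partial\mathcal{D})$ is a Jordan curve surrounding $\mathcal{E}$, and $\mathbf{\Psi}$ carries the closed annulus $\{1\le|w|\le r\}$ onto the closed region it bounds. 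Consequently $w\mapsto f(\mathbf{\Psi}(w))\,w^{-j-1}$ is analytic on an open neighborhood of $\{1<|w|\le r\}$ and continuous up to $|w|=1$. Starting from \eq{faber_coeff_unit}, I push the contour from $|w|=1$ to $|w|=1+\varepsilon$ (justified because the integrals over the circles $|w|=s$ for $1<s\le1+\varepsilon$ are all equal by Cauchy's theorem and converge to the $|w|=1$ integral as $s\to1^+$ by uniform continuity on the compact annulus), and then from $|w|=1+\varepsilon$ to $|w|=r$ by Cauchy's theorem in the fully analytic region $\{1+\varepsilon\le|w|\le r\}$; this gives $\beta_j = \frac{1}{2\pi i}\int_{r\partial\mathcal{D}} f(\mathbf{\Psi}(w))\,w^{-j-1}\,\mathrm{d}w$. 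Parametrizing $w = re^{i\theta}$, $\mathrm{d}w = ire^{i\theta}\,\mathrm{d}\theta$, turns this into $\frac{1}{2\pi r^j}\int_0^{2\pi} e^{-ij\theta}f(\mathbf{\Psi}(re^{i\theta}))\,\mathrm{d}\theta$, as claimed.

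The main obstacle I anticipate is the delicate behavior at $\partial\mathcal{D}$ rather than any hard estimate: $\mathbf{\Psi}$ is conformal only on the open exterior and merely continuous (via Carathéodory) up to the unit circle, and in the uniqueness step $f$ may be analytic only on $\mathcal{E}$ itself, so one cannot directly invoke Cauchy's theorem on an annulus that touches $|w|=1$. The remedy in both places is never to integrate across $|w|=1$ but to pass first to $|w|=1+\varepsilon$ by the compactness/continuity argument above and perform all further deformations strictly inside the region of analyticity. A secondary point worth stating explicitly is the level‑curve topology ($\mathbf{\Psi}(r\partial\mathcal{D})$ a Jordan curve enclosing $\mathcal{E}$, with interior $\mathcal{E}\cup\mathbf{\Psi}(\{1<|w|<r\})$), on which the whole deformation rests; it follows from the conformality of $\mathbf{\Psi}$ together with the normalization $\mathbf{\Psi}(\infty)=\infty$.
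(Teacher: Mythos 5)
The paper itself gives no proof here: the existence and uniqueness of the Faber expansion are deferred to Suetin, and the rescaled-contour formula is asserted without argument, so there is no ``paper proof'' to compare against. Your derivation is correct and supplies the standard argument that underlies all three parts: the near-orthogonality $\frac{1}{2\pi i}\int_{\partial\mathcal{D}}\mathbf{F}_k(\mathbf{\Psi}(w))\,w^{-j-1}\,\mathrm{d}w=\delta_{jk}$, which follows from the split $\mathbf{F}_k(\mathbf{\Psi}(w))=w^k-E_k(\mathbf{\Psi}(w))$ with $E_k\circ\mathbf{\Psi}$ analytic on $\{|w|>1\}$ and vanishing at $\infty$ (hence having only strictly negative Laurent powers); term-by-term integration justified by uniform convergence of the Faber series on $\partial\mathcal{E}=\mathbf{\Psi}(\partial\mathcal{D})$; and Cauchy deformation through the annulus $\{1+\varepsilon\le|w|\le r\}$. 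You also correctly flag and handle the only genuine subtlety---$\mathbf{\Psi}$ is merely Carath\'eodory-continuous on $\partial\mathcal{D}$ rather than analytic there---by never deforming a contour across $|w|=1$ and instead letting $\varepsilon\to0$ afterward via uniform continuity on the compact annulus $\{1\le|w|\le1+\varepsilon\}$. The topological claim that $\mathbf{\Psi}(r\partial\mathcal{D})$ is a Jordan curve with interior $\mathcal{E}\cup\mathbf{\Psi}(\{1<|w|<r\})$, on which the whole deformation rests, is likewise correct and follows from $\mathbf{\Psi}$ mapping $\{|w|>1\}$ conformally onto $\mathcal{E}^c$ with $\mathbf{\Psi}(\infty)=\infty$.
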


Finally, we consider the size of Faber polynomials $\mathbf{F}_j$ over the Faber region $\mathcal{E}$, which is useful for bounding the complexity of our quantum algorithms. Applying Cauchy's integral formula to the Faber generating function, we have
\begin{equation}
    \mathbf{F}_j(z)=\frac{1}{2\pi i}\int_{(1+\delta)\partial\mathcal{D}}\mathrm{d}w\ \frac{w^j\mathbf{\Psi}'(w)}{\mathbf{\Psi}(w)-z}
\end{equation}
for any $z\in\mathcal{E}$ and $\delta>0$, which implies $\norm{\mathbf{F}_j}_{\max,\mathcal{E}}
    \leq c_\delta(1+\delta)^j$ for constant $c_\delta=\frac{(1+\delta)\norm{\mathbf{\Psi}'}_{\max,(1+\delta)\partial\mathcal{D}}}
    {\mathbf{Dist}\left(\mathbf{\Psi}((1+\delta)\partial\mathcal{D}),\mathcal{E}\right)}$ independent of $j$,
and hence
\begin{equation}
    \limsup_{j\rightarrow\infty}\sqrt[j]{\norm{\mathbf{F}_j}_{\max,\mathcal{E}}}\leq1.
\end{equation}
This estimate holds for an arbitrary Faber region $\mathcal{E}$. But under additional assumptions of the region, it is possible to get a tighter estimate. For instance, if the boundary $\partial\mathcal{E}$ is of bounded \emph{total rotation}, we have the following integral representation of Faber polynomials
\begin{equation}
    \mathbf{F}_j(\mathbf{\Psi}(e^{i\theta}))
    =\frac{1}{\pi}\int_{0}^{2\pi}e^{ij\varphi}\mathrm{d}_\varphi v(\varphi,\theta),\quad
    j\geq1,
\end{equation}
where $v(\varphi,\theta)=\mathbf{Arg}\left(\mathbf{\Psi}(e^{i\varphi})-\mathbf{\Psi}(e^{i\theta})\right)$ is an angular function with the jump at $\varphi=\theta$ equal to the exterior angle of $\partial\mathcal{E}$ at $\theta$. 
From this, we obtain the following bound on the maximum size of Faber polynomials~\cite[Page 182]{suetin1998series}~\cite{Ellacott1983}
\begin{equation}
    \norm{\mathbf{F}_j}_{\max,\mathcal{E}}
    =\norm{\mathbf{F}_j}_{\max,\partial\mathcal{E}}
    =\norm{\mathbf{F}_j\left(\mathbf{\Psi}(e^{i(\cdot)})\right)}_{\max,[0,2\pi]}
    \leq\max_{\theta\in[0,2\pi]}\frac{1}{\pi}\int_{0}^{2\pi}\abs{\mathrm{d}_\varphi v(\varphi,\theta)}
    \leq\frac{\mathbf{V}(\partial\mathcal{E})}{\pi},
\end{equation}
where the first equality follows from the maximum modulus principle, and the last inequality follows from a bound due to Radon. Here, $\mathbf{V}(\partial\mathcal{E})$ is the variation or \emph{total rotation} of the curve $\partial\mathcal{E}$. It is known that $\mathbf{V}(\partial\mathcal{E})\leq4\pi$ if $\mathcal{E}$ is simply connected, and that $\mathbf{V}(\partial\mathcal{E})\geq2\pi$ always holds, with the equality $\mathbf{V}(\partial\mathcal{E})=2\pi$ if and only if $\mathcal{E}$ is convex~\cite[Page 147]{elliott1978construction}. Thus, we have
\begin{equation}
    \norm{\mathbf{F}_j}_{\max,\mathcal{E}}
    \leq2
\end{equation}
for a convex Faber region, which is reminiscent of the familiar bound
\begin{equation}
    \norm{\mathbf{T}_j}_{\max,[-1,1]}
    \leq1
\end{equation}
for Chebyshev polynomials.

\subsection{Faber history state generation}
\label{sec:faber_history}

Suppose that the input matrix is block encoded as $A/\alpha_A$ with some normalization factor $\alpha_A\geq\norm{A}$, whose eigenvalues are enclosed by a Faber region $\mathcal{E}$. Denote the corresponding conformal maps as $\mathbf{\Phi}:\mathcal{E}^c\rightarrow\mathcal{D}^c$, $\mathbf{\Psi}:\mathcal{D}^c\rightarrow\mathcal{E}^c$ and the Faber polynomials as $\mathbf{F}_n(z)$.
As aforementioned, the main idea behind our approach is to use a matrix Faber generating function of the form
\begin{equation}
    \sum_{j=0}^{n-1} L^j\otimes\mathbf{F}_j\left(\frac{A}{\alpha_A}\right)
    =\sum_{j=0}^\infty L^j\otimes\mathbf{F}_j\left(\frac{A}{\alpha_A}\right)=\frac{\mathbf{\Psi}'(L^{-1})\otimes I}{L\mathbf{\Psi}(L^{-1})\otimes I-L\otimes \frac{A}{\alpha_A}}.
\end{equation}
This follows from \eq{faber_gen} by substituting $z=I\otimes\frac{A}{\alpha_A}$ and $y=L\otimes I$. This substitution is mathematically valid because $L$ has zero eigenvalues only, whereas both sides of \eq{faber_gen} have the same derivatives at $y=0$ of any order. See~\cite[Chapter 6]{roger1994topics} or~\cite[Chapter 2]{henrici1974applied} for a complete mathematical justification. Note that the Laurent series of $\mathbf{\Psi}'(w)$ and $w\mathbf{\Psi}(w^{-1})$ only contains terms with nonnegative exponents, so operator functions $\mathbf{\Psi}'(L^{-1})$ and $L\mathbf{\Psi}(L^{-1})$ are well defined even though the lower shift matrix $L$ is not invertible per se.

Now consider the problem of eigenvalue processing. Toward implementing a truncated Faber expansion of the form $\sum_{k=0}^{n-1}\beta_k\mathbf{F}_k$, we apply the matrix generating function to the initial state
\begin{equation}
	\frac{\sum_{k=0}^{n-1}{\beta}_k\ket{n-1-k}}{\norm{ \beta}}\ket{\psi}.
\end{equation}
Similar to the Chebyshev case, we obtain up to a normalization factor
\begin{equation}
	\begin{aligned}
		\left(\sum_{j=0}^{n-1}L^j\otimes {\mathbf{F}}_j\left(\frac{A}{\alpha_A}\right)\right)
		\left(\sum_{k=0}^{n-1}{\beta}_k\ket{n-1-k}\ket{\psi}\right)
		&=\sum_{l=0}^{n-1}\ket{l}
		\sum_{k=n-1-l}^{n-1}{\beta}_k{\mathbf{F}}_{k+l-n+1}\left(\frac{A}{\alpha_A}\right)\ket{\psi}.
	\end{aligned}
\end{equation}
If we now measure the first register and get the outcome $l=n-1$, the second register will have the desired state proportional to
\begin{equation}
	\sum_{k=0}^{n-1}{\beta}_k\mathbf{F}_{k}\left(\frac{A}{\alpha_A}\right)\ket{\psi}.
\end{equation}
However, we will also get unwanted components for $l=0,\ldots,n-2$, leading to a failure of the algorithm.

To boost the success probability, we use the runaway padding trick to repeat the desired state $\eta n$ times. This can again be understood via the formula in \lem{lower_block_inv} for inverting lower block matrices. Specifically, we let
\begin{equation}
	A_{11}=L_n\mathbf{\Psi}(L_n^{-1})\otimes I-L_n\otimes \frac{A}{\alpha_A},
\end{equation}
which corresponds to the denominator of the generating function. Here, we have used subscripts to explicitly represent dimensions of the matrices.
Now we take
\begin{equation}
	A_{21}=\ketbra{0}{n-1}\otimes(-I)=
	\begin{bmatrix}
		0 & 0 & \cdots & -I\\
		0 & 0 & \cdots & 0\\
		\vdots & \vdots & \vdots & \vdots\\
		0 & 0 & \cdots & 0\\
	\end{bmatrix}.
\end{equation}
and set
\begin{equation}
	A_{22}=\left(I_{\eta n}-L_{\eta n}\right)\otimes I=
	\begin{bmatrix}
		I & 0 & 0 & \cdots & 0\\
		-I & I & 0 & \cdots & 0\\
		0 & -I & I & \ddots & 0\\
		\vdots & \ddots & \ddots & \ddots & \vdots\\
		0 & \cdots & \ddots & -I & I\\
	\end{bmatrix}\
	\Rightarrow\
	A_{22}^{-1}=\begin{bmatrix}
		I & 0 & 0 & \cdots & 0\\
		I & I & 0 & \cdots & 0\\
		I & I & I & \ddots & \vdots\\
		\vdots & \ddots & \ddots & \ddots & \vdots\\
		I & \cdots & \ddots & \ddots & I
	\end{bmatrix}.
\end{equation}
We will bundle the numerator of the generating function
\begin{equation}
	B_{11}=\mathbf{\Psi}'(L_n^{-1})\otimes I
\end{equation}
with the state preparation subroutine and handle it later.

To summarize, after the padding, denominator of the matrix Faber generating function becomes
\begin{equation}
\label{eq:pad_a_faber}
\NiceMatrixOptions
{
    custom-line = 
    {
        letter = I , 
        command = hdashedline , 
        tikz = {dashed,dash phase=3pt} ,
        width = \pgflinewidth
    }
}
	\begin{aligned}
		\mathbf{Pad}(A)&=\ketbra{0}{0}\otimes A_{11}+\ketbra{1}{0}\otimes A_{21}+\ketbra{1}{1}\otimes A_{22}\\
		&=\ketbra{0}{0}\otimes\left(L_n\mathbf{\Psi}(L_n^{-1})\otimes I-L_n\otimes \frac{A}{\alpha_A}\right)\\
		&\quad+\ketbra{1}{0}\otimes\ketbra{0}{n-1}\otimes (-I)
		+\ketbra{1}{1}\otimes(I_{\eta n}-L_{\eta n})\otimes I\\
        &=
        \begin{bNiceArray}{cccccIccccc}
            \Block{5-5}<\large>{L_n\mathbf{\Psi}(L_n^{-1})\otimes I\\-L_n\otimes \frac{A}{\alpha_A}} 
            & & & & & 0 & 0 & 0 & \cdots & 0\\
            & & & & & 0 & 0 & 0 & \cdots & 0\\
            & & & & & \vdots & \vdots & \vdots & \vdots & \vdots\\
            & & & & & 0 & 0 & 0 & \cdots & 0\\
            & & & & & 0 & 0 & 0 & \cdots & 0\\
            \hdashedline
            0 & 0 & \cdots & 0 & -I & I & 0 & 0 & \cdots & 0\\
            0 & 0 & \cdots & 0 & 0 & -I & I & 0 & \cdots & 0\\
            0 & 0 & \cdots & 0 & 0 & 0 & -I & I & \ddots & \vdots\\
            \vdots & \vdots & \vdots & \vdots & \vdots & \vdots & \ddots & \ddots & \ddots & 0\\
            0 & 0 & \cdots & 0 & 0 & 0 & \cdots & 0 & -I & I
        \end{bNiceArray}.\\
	\end{aligned}
\end{equation}
The numerator
\begin{equation}
\begin{aligned}
    \mathbf{Pad}(B)&=\ketbra{0}{0}\otimes B_{11}+\ketbra{1}{1}\otimes I_{\eta n}\otimes I\\
    &=\ketbra{0}{0}\otimes\mathbf{\Psi}'(L_n^{-1})\otimes I
	+\ketbra{1}{1}\otimes I_{\eta n}\otimes I
\end{aligned}
\end{equation}
will be bundled with the state preparation which is now augmented with an additional ancilla state
\begin{equation}
	\ket{0}\frac{\sum_{k=0}^{n-1}{\beta}_k\ket{n-1-k}}{\norm{ \beta}}\ket{\psi}.
\end{equation}

Using the matrix generating function
\begin{equation}
        \sum_{j=1}^n L^{j-1}\otimes\frac{\mathbf{F}'_j\left(\frac{A}{\alpha_A}\right)}{j}
	=\sum_{j=1}^\infty L^{j-1}\otimes\frac{\mathbf{F}'_j\left(\frac{A}{\alpha_A}\right)}{j}
	=\frac{1}{L\mathbf{\Psi}(L^{-1})\otimes I-L\otimes\frac{A}{\alpha_A}}
\end{equation}
for the derivative of Faber polynomials, we have
\begin{equation}
\label{eq:pad_a_inv_faber}
\NiceMatrixOptions
{
    custom-line = 
    {
        letter = I , 
        command = hdashedline , 
        tikz = {dashed,dash phase=3pt} ,
        width = \pgflinewidth
    }
}
    \mathbf{Pad}(A)^{-1}=
    \begin{bNiceArray}{ccccIccc}
        \mathbf{F}_{1}'\left(\frac{A}{\alpha_A}\right) & 0 & \cdots & 0 & 0 & \cdots & \cdots \\
        \frac{\mathbf{F}_{2}'\left(\frac{A}{\alpha_A}\right)}{2} & \mathbf{F}_{1}'\left(\frac{A}{\alpha_A}\right) & \ddots & \vdots & \vdots & \vdots & \vdots \\
        \vdots & \ddots & \ddots & 0 & \vdots & \vdots & \vdots \\
        \frac{\mathbf{F}_{n}'\left(\frac{A}{\alpha_A}\right)}{n} & \cdots & \frac{\mathbf{F}_{2}'\left(\frac{A}{\alpha_A}\right)}{2} & \mathbf{F}_{1}'\left(\frac{A}{\alpha_A}\right) & 0 & \cdots & \cdots \\
        \hdashedline
        \frac{\mathbf{F}_{n}'\left(\frac{A}{\alpha_A}\right)}{n} & \cdots & \frac{\mathbf{F}_{2}'\left(\frac{A}{\alpha_A}\right)}{2} & \mathbf{F}_{1}'\left(\frac{A}{\alpha_A}\right) & I & 0 & \cdots \\
        \frac{\mathbf{F}_{n}'\left(\frac{A}{\alpha_A}\right)}{n} & \cdots & \frac{\mathbf{F}_{2}'\left(\frac{A}{\alpha_A}\right)}{2} & \mathbf{F}_{1}'\left(\frac{A}{\alpha_A}\right) & I & I & \ddots \\
        \vdots & \vdots & \vdots & \vdots & \vdots & \vdots & \ddots 
    \end{bNiceArray}.
\end{equation}
This implies
\begin{equation}
\NiceMatrixOptions
{
    custom-line = 
    {
        letter = I , 
        command = hdashedline , 
        tikz = {dashed,dash phase=3pt} ,
        width = \pgflinewidth
    }
}
    \mathbf{Pad}(A)^{-1}\mathbf{Pad}(B)=
    \begin{bNiceArray}{ccccIccc}
        \mathbf{F}_{0}\left(\frac{A}{\alpha_A}\right) & 0 & \cdots & 0 & 0 & \cdots & \cdots \\
        \mathbf{F}_{1}\left(\frac{A}{\alpha_A}\right) & \mathbf{F}_{0}\left(\frac{A}{\alpha_A}\right) & \ddots & \vdots & \vdots & \vdots & \vdots \\
        \vdots & \ddots & \ddots & 0 & \vdots & \vdots & \vdots \\
        \mathbf{F}_{n-1}\left(\frac{A}{\alpha_A}\right) & \cdots & \mathbf{F}_{1}\left(\frac{A}{\alpha_A}\right) & \mathbf{F}_{0}\left(\frac{A}{\alpha_A}\right) & 0 & \cdots & \cdots \\
        \hdashedline
        \mathbf{F}_{n-1}\left(\frac{A}{\alpha_A}\right) & \cdots & \mathbf{F}_{1}\left(\frac{A}{\alpha_A}\right) & \mathbf{F}_{0}\left(\frac{A}{\alpha_A}\right) & I & 0 & \cdots \\
        \mathbf{F}_{n-1}\left(\frac{A}{\alpha_A}\right) & \cdots & \mathbf{F}_{1}\left(\frac{A}{\alpha_A}\right) & \mathbf{F}_{0}\left(\frac{A}{\alpha_A}\right) & I & I & \ddots \\
        \vdots & \vdots & \vdots & \vdots & \vdots & \vdots & \ddots 
    \end{bNiceArray}
\end{equation}
as desired.

Let us now consider circuit implementation of block encoding \eq{pad_a_faber}. To this end, we bundle $-L_n\otimes \frac{A}{\alpha_A}$ with the two blocks at the bottom to get
\begin{equation}
\begin{aligned}
    &\ \ketbra{0}{0}\otimes L_n\otimes\left(-\frac{A}{\alpha_A}\right)
    +\ketbra{1}{0}\otimes\ketbra{0}{n-1}\otimes (-I)
    +\ketbra{1}{1}\otimes(I_{\eta n}-L_{\eta n})\otimes I\\
    =&\ -\sum_{k=1}^{n-1}\ketbra{k}{k-1}\otimes\frac{A}{\alpha_A}
    -\sum_{k=n}^{n\eta+n-1}\ketbra{k}{k-1}\otimes I
    +\sum_{k=n}^{n\eta+n-1}\ketbra{k}{k}\otimes I.
\end{aligned}
\end{equation}
Here, the first two terms can be rewritten as
\begin{equation}
    -\left(\sum_{k=0}^{n-1}\ketbra{k}{k}\otimes\frac{A}{\alpha_A}
    +\sum_{k=n}^{n\eta+n-1}\ketbra{k}{k}\otimes I
    \right)
    \left(\sum_{k=1}^{n\eta+n-1}\ketbra{k}{k-1}\otimes I\right)
\end{equation}
and can thus be block encoded with normalization factor $1$, using \lem{shift}. The third term is reexpressed as
\begin{equation}
    \sum_{k=0}^{n\eta+n-1}\frac{1+(-1)^{\mathbf{Ind}_{[0,n-1]}(k)}}{2}\ketbra{k}{k}\otimes I,
\end{equation}
with the indicator function $\mathbf{Ind}_{[0,n-1]}(k)=1$ if and only if $0\leq k\leq n-1$, and can also be block encoded with normalization factor $1$.

We already know from \sec{fourier} that $L_n\mathbf{\Psi}(L_n^{-1})$ can be block encoded as
\begin{equation}
\label{eq:psi_block_preamp}
    \frac{L_n\mathbf{\Psi}(L_n^{-1})}{\left(\frac{\pi}{2}+\ln(n)\right)\alpha_{{\scriptscriptstyle \mathbf{\Psi}},\max}}
\end{equation}
for any upper bound on maximum value of the conformal map over the unit circle
\begin{equation}
    \alpha_{{\scriptscriptstyle \mathbf{\Psi}},\max}\geq\norm{(\cdot)^{-1}\mathbf{\Psi}(\cdot)}_{\max,\partial\mathcal{D}},
\end{equation}
with a normalization factor scaling like $\sim\log(n)\norm{(\cdot)^{-1}\mathbf{\Psi}(\cdot)}_{\max,\partial\mathcal{D}}$.
To prevent this logarithmic overhead from ruining the query complexity, we perform a uniform amplification of the block encoding as in \lem{amp_block}. This gives
\begin{equation}
\label{eq:psi_block_amp}
    \frac{L_n\mathbf{\Psi}(L_n^{-1})}{2\alphaPsiFaber}
\end{equation}
for any upper bound on the spectral norm
\begin{equation}
\label{eq:alpha_psi_faber}
    \alphaPsiFaber\geq\norm{L_n\mathbf{\Psi}(L_n^{-1})}.
\end{equation}
By taking a final linear combination, we can thus block encode
\begin{equation}
\label{eq:psi_block_lcu}
    \frac{\mathbf{Pad}(A)}{2\alphaPsiFaber+2}.
\end{equation}

    We see that the dependence on $\alphaPsiFaber$ comes from the necessity of block encoding $L_n\mathbf{\Psi}(L_n^{-1})$ in implementing the matrix Faber generating function. It is obvious that $\alphaPsiFaber=\mathbf{O}(1)$ when $\mathbf{\Psi}$ has a finite Laurent expansion. This happens for instance in the generation of Chebyshev history states, where $\mathbf{\Psi}(w)=\frac{1}{2}\left(w+\frac{1}{w}\right)$ is the Joukowsky map containing only two terms. Even when $\mathbf{\Psi}$ has an infinite Laurent expansion, $\alphaPsiFaber$ can still be constant. For instance, if $\partial\mathcal{E}=\mathbf{\Psi}(\partial\mathcal{D})$ is an analytic curve, then by the generalized Schwarz reflection principle, $\mathbf{\Psi}$ can be analytically continued across the unit circle $\partial\mathcal{D}$~\cite[Page 299]{lang2013complex}. Therefore, the Laurent series of $\mathbf{\Psi}$ converges absolutely on $\partial\mathcal{D}$, implying a constant value of $\alphaPsiFaber$ as well. 
    In fact, the scaling $\alphaPsiFaber=\mathbf{O}(1)$ holds in general. This can be proved using the Crouzeix-Palencia theorem (\lem{crouzeix_palencia} of \append{analysis_faber}), together with the fact that the numerical range of $L_n$ is the disk $\cos\left(\frac{\pi}{n+1}\right)\mathcal{D}$~\cite{VANDANJAV201476}, whereas the function $(\cdot)\mathbf{\Psi}((\cdot)^{-1})$ is analytic on $\{\abs{w}<1\}$ and can be extended continuously to the entire $\mathcal{D}$:
    \begin{equation}
    \begin{aligned}
        \norm{L_n\mathbf{\Psi}(L_n^{-1})}
        &\lesssim\norm{(\cdot)\mathbf{\Psi}((\cdot)^{-1})}_{\max,\mathcal{W}(L_n)}
        =\norm{(\cdot)\mathbf{\Psi}((\cdot)^{-1})}_{\max,\cos\left(\frac{\pi}{n+1}\right)\mathcal{D}}\\
        &\leq\norm{(\cdot)\mathbf{\Psi}((\cdot)^{-1})}_{\max,\mathcal{D}}=\mathbf{O}(1).
    \end{aligned}
    \end{equation}
    Note that constant prefactor of the above analysis can be tightened using von Neumann's inequality~\cite[Proposition 1]{RoystonThesis} and~\cite[1.6.P26]{roger1994topics}.
    It is for this reason that the dependence on $\alphaPsiFaber$ will be dropped in the asymptotic analysis of Faber-based algorithms.

\begin{theorem}[Faber history state generation]
\label{thm:faber_history}
Let $A$ be a square matrix such that $A/\alpha_A$ is block encoded by $O_A$ with some normalization factor $\alpha_A\geq\norm{A}$. Suppose that eigenvalues of $A/\alpha_A$ are enclosed by a Faber region $\mathcal{E}$ with associated conformal maps $\mathbf{\Phi}:\mathcal{E}^c\rightarrow\mathcal{D}^c$, $\mathbf{\Psi}:\mathcal{D}^c\rightarrow\mathcal{E}^c$ and Faber polynomials $\mathbf{F}_n(z)$.
Let $O_\psi\ket{0}=\ket{\psi}$ be the oracle preparing the initial state, and $O_{\beta}\ket{0}=\mathbf{\Psi}'(L_n^{-1})\sum_{k=0}^{n-1}\beta_k\ket{n-1-k}/\norm{\mathbf{\Psi}'(L_n^{-1})\sum_{k=0}^{n-1}\beta_k\ket{n-1-k}}$ be the oracle preparing the shifting of coefficients $\beta$.
Then, the quantum state
	\begin{equation}
		\frac{\ket{0}\sum_{l=0}^{n-1}\ket{l}
			\sum_{k=n-1-l}^{n-1}{\beta}_k{\mathbf{F}}_{k+l-n+1}\left(\frac{A}{\alpha_A}\right)\ket{\psi}
			+\sum_{s=1}^{\eta}\ket{s}\sum_{l=0}^{n-1}\ket{l}
			\sum_{k=0}^{n-1}{\beta}_k{\mathbf{F}}_{k}\left(\frac{A}{\alpha_A}\right)\ket{\psi}}
        {\sqrt{\sum_{l=0}^{n-1}\norm{\sum_{k=n-1-l}^{n-1}\beta_k\mathbf{F}_{k+l-n+1}\left(\frac{A}{\alpha_A}\right)\ket{\psi}}^2
        +\eta n\norm{\sum_{k=0}^{n-1}\beta_k\mathbf{F}_{k}\left(\frac{A}{\alpha_A}\right)\ket{\psi}}^2}}
	\end{equation}
can be prepared with accuracy $\epsilon$ and probability $1-\pfail$ using
\begin{equation}
    \mathbf{O}\left(\alphaFP n(\eta+1)\log\left(\frac{1}{\epsilon}\right)\log\left(\frac{1}{\pfail}\right)\right)
\end{equation}
queries to controlled-$O_A$, controlled-$O_\psi$, controlled $O_{\widetilde\beta}$, and their inverses, where
\begin{equation}
\label{eq:alphaFP_alphaPsiFaber}
    \alphaFP\geq\max_{j=1,\ldots,n}\norm{\frac{\mathbf{F}_j'\left(\frac{A}{\alpha_A}\right)}{j}}
\end{equation}
is an upper bound on the derivative of Faber polynomials.
\end{theorem}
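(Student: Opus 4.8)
The plan is to transcribe the proof of \thm{generate_history} into the Faber setting, with the matrix Chebyshev generating function replaced by its Faber analogue \eq{pad_a_faber}. First I would assemble the block encoding of the padded denominator $\mathbf{Pad}(A)$. From the discussion preceding the theorem, the part $-L_n\otimes\frac{A}{\alpha_A}$ together with the two lower padding blocks can be block encoded with normalization factor $1$ using \lem{shift}, spending only $\mathbf{O}(1)$ queries to controlled-$O_A$; the part $L_n\mathbf{\Psi}(L_n^{-1})$ is block encoded with normalization $\sim\log(n)\,\alpha_{{\scriptscriptstyle\mathbf{\Psi}},\max}$ via the Fourier-coefficient circuit \thm{fourier_coeff} applied to the power series $e^{-i\omega}\mathbf{\Psi}(e^{i\omega})$, then uniformly amplified via \lem{amp_block} to normalization $2\alphaPsiFaber$ as in \eq{psi_block_amp}. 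Crucially, neither of these two subroutines queries $O_A$, so they contribute only $\polylog(n)$ extra two-qubit gates. Taking a linear combination yields a block encoding of $\mathbf{Pad}(A)/(2\alphaPsiFaber+2)$ as in \eq{psi_block_lcu}, still using $\mathbf{O}(1)$ queries to controlled-$O_A$, and since $\alphaPsiFaber=\mathbf{O}(1)$ by the Crouzeix--Palencia argument of \append{analysis_faber} (using that $\mathcal{W}(L_n)=\cos(\frac{\pi}{n+1})\mathcal{D}\subseteq\mathcal{D}$ and that $(\cdot)\mathbf{\Psi}((\cdot)^{-1})$ extends continuously to $\mathcal{D}$), this normalization factor is a constant.

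Next I would invoke the optimal quantum linear system solver \lem{opt_lin} with coefficient matrix $\mathbf{Pad}(A)$ and initial state the normalized version of $\ket{0}\,\big(\mathbf{\Psi}'(L_n^{-1})\sum_{k=0}^{n-1}\beta_k\ket{n-1-k}\big)\,\ket{\psi}=\ket{0}\,(O_\beta\ket{0})\,(O_\psi\ket{0})$. Because $\mathbf{Pad}(B)$ acts as $\ketbra{0}{0}\otimes\mathbf{\Psi}'(L_n^{-1})\otimes I$ on the $\ket{0}$-flagged sector, the solver returns, up to error $\epsilon$, the normalized state $\mathbf{Pad}(A)^{-1}\mathbf{Pad}(B)\big(\ket{0}\sum_k\beta_k\ket{n-1-k}\ket{\psi}\big)$, and the normalization constants of $O_\beta\ket{0}$ and $O_\psi\ket{0}$ cancel automatically since the solver outputs a unit vector. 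By the explicit form of $\mathbf{Pad}(A)^{-1}$ in \eq{pad_a_inv_faber} and the displayed product $\mathbf{Pad}(A)^{-1}\mathbf{Pad}(B)$ immediately following it, this is exactly the claimed Faber history state. I would then bound the condition number: applying \lem{block_norm} to \eq{pad_a_inv_faber}, whose blocks are either $\frac{1}{j}\mathbf{F}_j'(\frac{A}{\alpha_A})$ (bounded by $\alphaFP$ from \eq{alphaFP_alphaPsiFaber}) or copies of $I$, and counting $\mathbf{O}((\eta+1)n)$ nonzero blocks in any row or column, I get $\norm{\mathbf{Pad}(A)^{-1}}=\mathbf{O}((\eta+1)n\,\alphaFP)$. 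Plugging $\alpha_C=2\alphaPsiFaber+2=\mathbf{O}(1)$ and $\alpha_{C^{-1}}=\mathbf{O}((\eta+1)n\,\alphaFP)$ into \eq{opt_lin_cost} gives $\mathbf{O}(\alphaFP\,n(\eta+1)\log(1/\epsilon))$ uses of the block encoding of $\mathbf{Pad}(A)/(2\alphaPsiFaber+2)$, each costing $\mathbf{O}(1)$ queries to controlled-$O_A$ plus $\polylog(n)$ gates, which yields the stated query complexity to controlled-$O_A$, controlled-$O_\psi$, controlled-$O_\beta$, and their inverses.

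The main obstacle is not this skeleton — which is a routine Faber transcription of \thm{generate_history} — but the two auxiliary facts that keep the normalization of $\mathbf{Pad}(A)$ constant, and which must be carefully invoked rather than re-derived: (i) that $L_n\mathbf{\Psi}(L_n^{-1})$ admits a block encoding with normalization only $\mathbf{O}(\log n\cdot\alpha_{{\scriptscriptstyle\mathbf{\Psi}},\max})$ through Fourier-coefficient generation, which depends on recasting $w\mathbf{\Psi}(w^{-1})$ as a pure power series and on the rescaling principle \lem{rescale} used inside \thm{fourier_coeff}; and (ii) that this can be amplified down to $\alphaPsiFaber=\mathbf{O}(1)$, which is precisely where the Crouzeix--Palencia bound and the identification of $\mathcal{W}(L_n)$ enter. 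Once these are in place, the only remaining work is accuracy bookkeeping: each approximate sub-block encoding (Fourier generation and amplification) can be made $\mathbf{O}(\epsilon)$-accurate at $\polylog$ cost, and by the robustness of \lem{opt_lin} this propagates to an $\mathbf{O}(\epsilon)$ error in the output state, so the truncated generating function $\sum_{j=0}^{n-1}L^j\otimes\mathbf{F}_j(\frac{A}{\alpha_A})$ is implemented to the desired accuracy.
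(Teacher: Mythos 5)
Your proposal follows essentially the same route as the paper's proof: block encode $\mathbf{Pad}(A)/(2\alphaPsiFaber+2)$ by combining the lower-shift block encoding with the Fourier-coefficient/amplification subroutine for $L_n\mathbf{\Psi}(L_n^{-1})$, invoke \lem{opt_lin} with the $O_\beta$-prepared right-hand side so that $\mathbf{Pad}(B)$ is absorbed into the state preparation, bound $\norm{\mathbf{Pad}(A)^{-1}}=\mathbf{O}((\eta+1)n\,\alphaFP)$ via \lem{block_norm} on \eq{pad_a_inv_faber}, and use $\alphaPsiFaber=\mathbf{O}(1)$ from Crouzeix--Palencia to conclude via \eq{opt_lin_cost}. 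The only stylistic difference is that the paper keeps the block-encoding construction and the $\alphaPsiFaber$ bound in the text preceding the theorem and cites them, whereas you spell them out inline; the substance is identical.
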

\begin{proof}
    The analysis proceeding the above theorem shows that one can block encode $\mathbf{Pad}(A)/(2\alphaPsiFaber+2)$ with an arbitrary precision using only $1$ query to $O_A$.
    The quantum linear system algorithm of \lem{opt_lin} then outputs a state $\epsilon$-close to
    \begin{equation}
        \frac{\frac{2\alphaPsiFaber+2}{\mathbf{Pad}(A)}
        \left(\ket{0}\frac{\mathbf{\Psi}'(L_n^{-1})\sum_{k=0}^{n-1}\beta_k\ket{n-1-k}}{\norm{\mathbf{\Psi}'(L_n^{-1})\sum_{k=0}^{n-1}\beta_k\ket{n-1-k}}}\ket{\psi}\right)}
        {\norm{\frac{2\alphaPsiFaber+2}{\mathbf{Pad}(A)}\left(\ket{0}\frac{\mathbf{\Psi}'(L_n^{-1})\sum_{k=0}^{n-1}\beta_k\ket{n-1-k}}{\norm{\mathbf{\Psi}'(L_n^{-1})\sum_{k=0}^{n-1}\beta_k\ket{n-1-k}}}\ket{\psi}\right)}}
        =\frac{\mathbf{Pad}(A)^{-1}\mathbf{Pad}(B)\left(\ket{0}\sum_{k=0}^{n-1}\beta_k\ket{n-1-k}\ket{\psi}\right)}
        {\norm{\mathbf{Pad}(A)^{-1}\mathbf{Pad}(B)\left(\ket{0}\sum_{k=0}^{n-1}\beta_k\ket{n-1-k}\ket{\psi}\right)}}.
    \end{equation}
    This is exactly the padded Faber history state.

    We have the norm bound on the inverse padded matrix
    \begin{equation}
        \norm{\mathbf{Pad}(A)^{-1}}=\mathbf{O}\left((\eta+1)n\alphaFP\right)
    \end{equation}
    for any upper bound $\alphaFP\geq\max_{j=1,\ldots,n}\norm{\frac{\mathbf{F}_j'\left(\frac{A}{\alpha_A}\right)}{j}}$ on the derivative of Faber polynomials (normalized by the degrees), which follows from \lem{block_norm} and the matrix representation of $\mathbf{Pad}(A)^{-1}$ in \eq{pad_a_inv_faber}.
    The claimed complexity now follows from \eq{opt_lin_cost}.
\end{proof}
\begin{remark}
    For the purpose of generality, we have expressed the complexity of our algorithm in terms of $\alphaFP$. This analysis can be further refined when the algorithm is applied in a concrete setting. For instance, if the Faber region $\mathcal{E}$ encloses the numerical range $\mathcal{W}(A/\alpha_A)$ or the pseudospectrum $\mathcal{S}_\delta(A/\alpha_A)$ with a sufficiently smooth boundary $\partial\mathcal{E}$, then it holds $\alphaFP=\mathbf{O}(1)$. This extends previous analysis of quantum differential equation algorithms~\cite{Krovi2023improvedquantum} based on the notion of numerical abscissa, which avoids the issue with a ill-conditioned Jordan basis that could potentially arise in the Chebyshev case. See \append{analysis_faber_crouzeix} and \append{analysis_faber_pseudospectrum} for more details.

    In the circuit implementation of this algorithm, we need to choose the precision with which \eq{psi_block_preamp} is block encoded. This can be determined as follows. We first invoke~\cor{lin_sys_perturb} to see that the padded Faber history state has accuracy $\sim\epsilon$, as long as \eq{psi_block_lcu} is block encoded with accuracy $\sim\frac{\epsilon}{(\eta+1)n\alphaFP\alphaPsiFaber}$. This is satisfied as long as we block encode \eq{psi_block_amp} using \lem{amp_block} with the same asymptotic accuracy. Up to a polylogarithmic factor, we can then block encode \eq{psi_block_preamp} with accuracy $\sim\frac{\epsilon}{(\eta+1)n\alphaFP\alpha_{{\scriptscriptstyle \mathbf{\Psi}},\max}}$. However, the specific choice of block encoding accuracy does not change the query complexity of the algorithm, and its contribution to the gate complexity is only polylogarithmically as per \thm{fourier_coeff}.

    Finally, note that the query complexity of initial state preparation can be improved using the block preconditioning technique of~\cite{OptInit}.
\end{remark}

\subsection{Quantum eigenvalue transformation, Faber version}
\label{sec:faber_qevt}

Now that we have an efficient quantum algorithm for preparing the Faber history state, we let the padding parameter $\eta=1$ and further perform a (fixed-point) amplitude amplification to solve the eigenvalue transformation problem. This is formally stated as follows.

\begin{theorem}[Quantum eigenvalue transformation, Faber version]
\label{thm:qevt_faber}
Let $A$ be a square matrix such that $A/\alpha_A$ is block encoded by $O_A$ with some normalization factor $\alpha_A\geq\norm{A}$. Suppose that eigenvalues of $A/\alpha_A$ are enclosed by a Faber region $\mathcal{E}$ with associated conformal maps $\mathbf{\Phi}:\mathcal{E}^c\rightarrow\mathcal{D}^c$, $\mathbf{\Psi}:\mathcal{D}^c\rightarrow\mathcal{E}^c$ and Faber polynomials $\mathbf{F}_n(z)$.
Let $p(z)=\sum_{k=0}^{n-1}\beta_k{\mathbf{F}}_{k}(z)$ be the Faber expansion of a degree-($n-1$) polynomial $p$.
Let $O_\psi\ket{0}=\ket{\psi}$ be the oracle preparing the initial state, and $O_{\beta}\ket{0}=\mathbf{\Psi}'(L_n^{-1})\sum_{k=0}^{n-1}\beta_k\ket{n-1-k}/\norm{\mathbf{\Psi}'(L_n^{-1})\sum_{k=0}^{n-1}\beta_k\ket{n-1-k}}$ be the oracle preparing the shifting of coefficients $\beta$.
Then, the quantum state
    \begin{equation}
        \frac{p\left(\frac{A}{\alpha_A}\right)\ket{\psi}}{\norm{p\left(\frac{A}{\alpha_A}\right)\ket{\psi}}}
    \end{equation}
can be prepared with accuracy $\epsilon$ and probability $1-\pfail$ using
\begin{equation}
    \mathbf{O}\left(\frac{\alphaF}{\alphaPPsi}\alphaFP n\log\left(\frac{\alphaF}{\alphaPPsi\epsilon}\right)\log\left(\frac{1}{\pfail}\right)\right)
\end{equation}
queries to controlled-$O_A$, controlled-$O_\psi$, controlled $O_{\widetilde\beta}$, and their inverses, where $\alphaFP$ satisfies \eq{alphaFP_alphaPsiFaber} and
\begin{equation}
\label{eq:alphaF_alphaPPsi}
    \alphaF\geq\max_{l=0,1,\ldots,n-1}\norm{\sum_{k=l}^{n-1}\beta_k\mathbf{F}_{k-l}\left(\frac{A}{\alpha_A}\right)\ket{\psi}},\qquad
    \alphaPPsi\leq\norm{p\left(\frac{A}{\alpha_A}\right)\ket{\psi}}.
\end{equation}
are upper bound on the shifted Faber partial sum and lower bound on the transformed state respectively.
\end{theorem}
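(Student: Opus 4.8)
The plan is to mirror the proof of \thm{qevt}, using Faber history states in place of Chebyshev ones. First I would invoke \thm{faber_history} with padding parameter $\eta=1$ and the Faber oracle $O_\beta\ket{0}=\mathbf{\Psi}'(L_n^{-1})\sum_{k=0}^{n-1}\beta_k\ket{n-1-k}/\norm{\mathbf{\Psi}'(L_n^{-1})\sum_{k=0}^{n-1}\beta_k\ket{n-1-k}}$, obtaining (to accuracy $\epsilon'$ to be fixed later) the Faber history state whose $\ket{1}$-flagged component is the product state $\ket{1}\bigl(\sum_{l=0}^{n-1}\ket{l}\bigr)\bigl(\sum_{k=0}^{n-1}\beta_k\mathbf{F}_k(A/\alpha_A)\ket{\psi}\bigr)$. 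Using the Faber expansion $p(z)=\sum_{k=0}^{n-1}\beta_k\mathbf{F}_k(z)$, the system part of this component equals exactly $p(A/\alpha_A)\ket{\psi}$. I would then run a fixed-point amplitude amplification on the first-register flag $\ket{1}$: since the flagged subspace contains only this product state, the amplified state on the system register is $\epsilon$-close to $p(A/\alpha_A)\ket{\psi}/\norm{p(A/\alpha_A)\ket{\psi}}$.

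The second step is to count the number of amplification rounds. The flagged component has squared amplitude
\begin{equation}
    a^2=\frac{n\,\norm{p\left(\tfrac{A}{\alpha_A}\right)\ket{\psi}}^2}{\sum_{l=0}^{n-1}\norm{\sum_{k=n-1-l}^{n-1}\beta_k\mathbf{F}_{k+l-n+1}\left(\tfrac{A}{\alpha_A}\right)\ket{\psi}}^2+n\,\norm{p\left(\tfrac{A}{\alpha_A}\right)\ket{\psi}}^2}.
\end{equation}
Re-indexing the inner sum of each $l$-term by $k'=k-(n-1-l)$ identifies $\sum_{k=n-1-l}^{n-1}\beta_k\mathbf{F}_{k+l-n+1}(A/\alpha_A)\ket{\psi}$ with a shifted Faber partial sum of the kind bounded in \eq{alphaF_alphaPPsi}, so each such term has norm at most $\alphaF$ and the sum in the denominator is $\mathbf{O}(n\alphaF^2)$. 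Taking $l=0$ in the maximum defining $\alphaF$ shows $\alphaF\ge\norm{p(A/\alpha_A)\ket{\psi}}\ge\alphaPPsi$, hence $1/a=\mathbf{O}(\alphaF/\alphaPPsi)$; standard fixed-point amplitude amplification then reaches success probability $1-\pfail$ using $\mathbf{O}\bigl((\alphaF/\alphaPPsi)\log(1/\pfail)\bigr)$ rounds, each of which calls the Faber-history-state circuit and its inverse once.

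Finally I would propagate the error and combine costs. For the amplified state to be $\epsilon$-close to the target it suffices that each run of \thm{faber_history} be accurate to $\epsilon'=\mathbf{O}(\alphaPPsi\epsilon/\alphaF)$, which by \thm{faber_history} with $\eta=1$ costs $\mathbf{O}\bigl(\alphaFP\,n\log(\alphaF/(\alphaPPsi\epsilon))\bigr)$ queries to controlled-$O_A$, controlled-$O_\psi$, controlled-$O_\beta$, and their inverses; multiplying by the number of rounds yields the stated complexity. I expect the main obstacle to be the bookkeeping in the amplitude-ratio step: verifying that the re-indexed partial sums really are the objects controlled by $\alphaF$ rather than something larger, and confirming that the hidden dependence on $\alphaPsiFaber$ in \thm{faber_history} genuinely drops out (it equals $\mathbf{O}(1)$ by the Crouzeix--Palencia argument recalled before \thm{faber_history}). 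Beyond this, the argument is essentially a verbatim transcription of the proof of \thm{qevt}.
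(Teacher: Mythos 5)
Your proof is correct and follows exactly the route the paper intends: the paper's own proof of this theorem is the single sentence ``This follows from a similar reasoning as that for Theorem~\ref{thm:qevt},'' and you have carried out that transcription faithfully, with the right re-indexing to match the history-state denominator to the shifted Faber partial sums controlled by $\alphaF$, the right amplitude-ratio bound $1/a=\mathbf{O}(\alphaF/\alphaPPsi)$, and the right per-round accuracy $\epsilon'=\mathbf{O}(\alphaPPsi\epsilon/\alphaF)$. Your side remark about $\alphaPsiFaber$ is also accurate; it is already absorbed as $\mathbf{O}(1)$ in the complexity statement of \thm{faber_history} via the Crouzeix--Palencia bound discussed there, so it does not resurface here.
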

\begin{proof}
    This follows from a similar reasoning as that for \thm{qevt}.
\end{proof}
\begin{remark}
    For a discussion on the scaling of $\alphaFP$, see the remark succeeding \thm{faber_history}. The parameter $\alphaF$ denotes maximum size of the shifted Faber expansion, which can be further upper bounded in terms of the max-norm of the polynomial $p$. For instance, if the numerical range $\mathcal{W}(A/\alpha_A)\subseteq\mathcal{E}$ or the pseudospectrum $\mathcal{S}_\delta(A/\alpha_A)\subseteq\mathcal{E}$ is contained in the Faber region, then we prove in \append{analysis_faber_crouzeix} and \append{analysis_faber_pseudospectrum} that $\alphaF=\mathbf{O}\left(\log(n)\norm{p}_{\max,\partial\mathcal{E}}\right)$. Furthermore, we can shave off the $\log(n)$ factor (albeit picking up the Jordan condition number $\kappa_S$) when running the algorithm on an average diagonalizable input matrix (\append{analysis_faber_carleson}), giving $\alphaF=\mathbf{O}\left(\kappa_S\norm{p}_{\max,\partial\mathcal{E}}\right)$.

    In the circuit implementation, we need to prepare a quantum state encoding the Faber expansion coefficients. This can be realized using \thm{fourier_coeff} in a similar way as in the Chebyshev case. This is because in computing the Faber coefficients \eq{faber_coeff_unit}, we can perform a contour integral along the unit circle, which can be reformulated as the generation of Fourier coefficients after a change of variable.

    Finally, note that the query complexity of initial state preparation can be improved using the block preconditioning technique of~\cite{OptInit}.
\end{remark}

\subsection{Applications}
\label{sec:faber_app}

We now apply the Faber eigenvalue processing algorithm to solve linear differential equations and estimate leading eigenvalue of the input matrix.

Consider a system of linear differential equations $\frac{\mathrm{d}}{\mathrm{d}t}x(t)=Ax(t)$, whose solution is given by $x(t)=e^{tA}x(0)$. We assume that a block encoding of $A/\alpha_A$ is given as input, and that the numerical range $\mathcal{W}(A/\alpha_A)\subseteq\mathcal{E}$ is enclosed by a Faber region. This consideration is a generalization of previous analysis of differential equation algorithms based on the notion of numerical abscissa~\cite{Krovi2023improvedquantum}, which avoids the stability issue that could arise in transforming the Jordan basis. Then, we write the Faber expansion of the matrix exponential $e^{tA}=\sum_{j=0}^{\infty}\beta_j\mathbf{F}_j\left(\frac{A}{\alpha_A}\right)$ and aim to choose a truncate order $n$ sufficiently large to achieve a target accuracy. We prove the following matrix Faber truncation bound by adapting previous results~\cite{Moret01,NOVATI2003201,Beckermann} to our setting.

\begin{lemma}[Faber truncation of matrix exponentials]
\label{lem:trunc_matrix_exp_faber}
Let $\widetilde A$ be a matrix, such that its numerical range $\mathcal{W}(\widetilde A)$ is enclosed by a Faber region $\mathcal{E}$ with associated conformal maps $\mathbf{\Phi}:\mathcal{E}^c\rightarrow\mathcal{D}^c$, $\mathbf{\Psi}:\mathcal{D}^c\rightarrow\mathcal{E}^c$ and Faber polynomials $\mathbf{F}_n(z)$.
Given $\tau>0$, let $e^{\tau z}=\sum_{j=0}^\infty\beta_j\mathbf{F}_j(z)$ be the Faber expansion of the complex exponential function $e^{\tau z}$.
Assume that $\mathcal{E}$ is convex and symmetric with respect to the real axis, lying on the left half of the complex plane ($\Re(\mathcal{E})\leq0$).
Then,
    \begin{equation}
        \norm{e^{\tau\widetilde A}-\sum_{j=0}^{n-1}\beta_j\mathbf{F}_j\left(\widetilde A\right)}
        =\mathbf{O}\left(\left(\frac{e^{\zeta}\tau}{n}\right)^n\right)
    \end{equation}
    for $n=\mathbf{\Omega}(\tau)$ sufficiently large, where $\zeta=\mathbf{\Psi}'(\infty)>0$ is the \emph{capacity} of the Faber region.
\end{lemma}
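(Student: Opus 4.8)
The plan is to reduce the matrix bound to a scalar bound on the numerical range via the Crouzeix--Palencia inequality, and then control the scalar Faber truncation error using the known decay of Faber coefficients of the exponential together with the uniform bound on Faber polynomials over a convex region. Concretely, I would first write $e^{\tau\widetilde A}-\sum_{j=0}^{n-1}\beta_j\mathbf{F}_j(\widetilde A)=\sum_{j=n}^\infty\beta_j\mathbf{F}_j(\widetilde A)$ and apply the triangle inequality to get $\norm{\cdot}\le\sum_{j=n}^\infty\abs{\beta_j}\norm{\mathbf{F}_j(\widetilde A)}$. Since $\mathcal{W}(\widetilde A)\subseteq\mathcal{E}$ and $\mathbf{F}_j$ is a polynomial, the Crouzeix--Palencia theorem (\lem{crouzeix_palencia}, cited in \append{analysis_faber}) gives $\norm{\mathbf{F}_j(\widetilde A)}\lesssim\norm{\mathbf{F}_j}_{\max,\mathcal{E}}$, and because $\mathcal{E}$ is convex the preliminary discussion in \sec{faber_prelim} already establishes $\norm{\mathbf{F}_j}_{\max,\mathcal{E}}\le2$. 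So the matrix error is $\mathbf{O}\!\left(\sum_{j=n}^\infty\abs{\beta_j}\right)$, reducing everything to a bound on the tail of the Faber coefficients of $e^{\tau z}$.

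Next I would bound $\abs{\beta_j}$ using the rescaled-contour formula from the lemma on Faber expansions: $\beta_j=\frac{1}{2\pi r^j}\int_0^{2\pi}\mathrm{d}\theta\,e^{-ij\theta}e^{\tau\mathbf{\Psi}(re^{i\theta})}$ for any $r>1$ (valid since $e^{\tau z}$ is entire). This yields $\abs{\beta_j}\le r^{-j}\max_{\abs{w}=r}\abs{e^{\tau\mathbf{\Psi}(w)}}=r^{-j}e^{\tau\max_{\abs{w}=r}\Re\mathbf{\Psi}(w)}$. Here I would use the hypothesis $\Re(\mathcal{E})\le0$ and the fact that $\mathbf{\Psi}(w)=\zeta w+\zeta_0+\zeta_1 w^{-1}+\cdots$ near infinity; for $r$ close to $1$, $\Re\mathbf{\Psi}(re^{i\theta})$ is at most a quantity like $\zeta(r-1)+\mathbf{O}((r-1))$ beyond the boundary (since on $\partial\mathcal{D}$ it is $\le0$, and the outward normal derivative is governed by $\zeta=\mathbf{\Psi}'(\infty)$ asymptotically; one can make this precise by noting $\max_{\abs{w}=r}\Re\mathbf{\Psi}(w)\le \zeta(r-1)+C(r-1)^2$ for $r$ in a fixed small neighborhood of $1$, or more cleanly by the subordination/monotonicity properties of $\mathbf{\Psi}$ on nested circles). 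Thus $\abs{\beta_j}\lesssim r^{-j}e^{\zeta\tau(r-1)}$ up to lower-order corrections, and summing the geometric tail gives $\sum_{j\ge n}\abs{\beta_j}\lesssim \frac{r^{-n}e^{\zeta\tau(r-1)}}{1-r^{-1}}$.

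Then I would optimize over $r$. Writing $r=1+s$ and minimizing $-n\log(1+s)+\zeta\tau s$ leads to the choice $s\sim n/(\zeta\tau)-1$, i.e. $r\sim n/(\zeta\tau)$, which for $n=\mathbf{\Omega}(\tau)$ (so that $r>1$) gives the tail bound $\mathbf{O}\!\left(\left(\tfrac{e^\zeta\tau}{n}\right)^n\right)$ after the standard manipulation $\left(\tfrac{\zeta\tau}{n}\right)^n e^{n-\zeta\tau}\le\left(\tfrac{e\zeta\tau}{n}\right)^n$, absorbing $e^{-\zeta\tau}\le1$; one should double-check whether the stated exponent carries $e^\zeta$ versus $e\cdot\zeta$, but this is exactly the entire-function Chebyshev-type estimate seen in \prop{trunc_exp} and \lem{trunc_matrix_exp}, adapted with the capacity $\zeta$ replacing the half-width $1/2$. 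The factor $1/(1-r^{-1})$ contributes only a polynomial-in-$n$ prefactor that is swallowed by the super-exponential decay once $n$ is a large enough multiple of $\tau$.

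The main obstacle I anticipate is the second step: getting a clean, rigorous bound on $\max_{\abs{w}=r}\Re\mathbf{\Psi}(w)$ for $r$ near $1$ in terms of the capacity $\zeta$. The hypotheses (convex, symmetric about the real axis, $\Re(\mathcal{E})\le0$) are exactly what is needed so that the boundary values of $\Re\mathbf{\Psi}$ on $\partial\mathcal{D}$ are $\le0$ and the relevant Schwarz--Pick-type estimate on how fast $\Re\mathbf{\Psi}$ can grow as one moves to circles of radius $r>1$ is governed by $\mathbf{\Psi}'(\infty)=\zeta$; making the constant in front of $(r-1)$ exactly $\zeta$ (rather than merely $\mathbf{O}(1)$) is the delicate point, and it is where I would lean on the cited results \cite{Moret01,NOVATI2003201,Beckermann} for the analytic estimate on $\mathbf{\Psi}$ over exterior circles, rather than re-deriving it. Everything else — the triangle inequality, Crouzeix--Palencia, the convex-region polynomial bound, and the geometric-series optimization — is routine.
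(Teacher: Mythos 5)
Your overall strategy coincides with the paper's: write the error as $\sum_{j\ge n}\beta_j\mathbf{F}_j(\widetilde A)$, reduce to the scalar max-norm on $\mathcal{E}$ via the Crouzeix--Palencia theorem, invoke $\norm{\mathbf{F}_j}_{\max,\mathcal{E}}\leq 2$ from convexity, bound $\abs{\beta_j}$ by the rescaled-contour formula $\abs{\beta_j}\le r^{-j}\norm{e^{\tau(\cdot)}}_{\max,\mathbf{\Psi}(r\partial\mathcal{D})}$, sum the geometric tail, and pick $r$ proportional to $n/\tau$. (The paper applies Crouzeix--Palencia once to the analytic tail $e^{\tau(\cdot)}-\sum_{j<n}\beta_j\mathbf{F}_j$; you apply it term by term to each $\mathbf{F}_j(\widetilde A)$; these are interchangeable here.) These steps are all sound.

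The one place where your proposal goes astray is the analytic estimate on $\max_{\abs w=r}\Re\mathbf{\Psi}(w)$. You suggest a local bound of the form $\zeta(r-1)+C(r-1)^2$ ``for $r$ in a fixed small neighborhood of $1$.'' But the radius you then substitute, $r\approx n/(\zeta\tau)$, is not close to $1$: under $n=\mathbf{\Omega}(\tau)$ sufficiently large it grows without bound, precisely so that $(1/r)^n$ is tiny. A Taylor-type estimate at $r=1$ cannot reach that regime. What the paper uses instead is (i) the identity $\max_{\theta}\Re\mathbf{\Psi}(re^{i\theta})=\mathbf{\Psi}(r)$, a consequence of $\mathcal{E}$ being convex and symmetric about the real axis (so $\mathbf{\Psi}$ commutes with conjugation and the $\Re$-extremizer lies on $\mathbb R_{>1}$), and (ii) the \emph{global} Beckermann estimate $\abs{\mathbf{\Psi}(r)-\mathbf{\Psi}(1)-\zeta(r-1)}\le\zeta(1-1/r)$, valid for \emph{all} $r>1$, which gives $\mathbf{\Psi}(r)\le\mathbf{\Psi}(1)+\zeta r$ with no restriction on $r$. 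Combined with $\mathbf{\Psi}(1)\le 0$ (from $\Re(\mathcal{E})\le 0$), this yields $\abs{\beta_j}\lesssim r^{-j}e^{\zeta\tau r}$, and the simple choice $r=n/\tau$ gives $\bigl(e^{\zeta}\tau/n\bigr)^n$ directly, absorbing $n/(n-\tau)$ as a constant. Your optimized choice $r\approx n/(\zeta\tau)$ actually yields the slightly tighter $\bigl(e\zeta\tau/n\bigr)^n\le\bigl(e^{\zeta}\tau/n\bigr)^n$ (since $e\zeta\le e^{\zeta}$), so the $e^{\zeta}$-vs-$e\zeta$ discrepancy you flagged is just the paper using a non-optimized $r$; it is not the gap. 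The gap is that the local $r\approx 1$ estimate must be replaced by the Beckermann global one.
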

\begin{proof}
    We start with the estimate
    \begin{equation}
        \norm{e^{\tau\widetilde A}-\sum_{j=0}^{n-1}\beta_j\mathbf{F}_j\left(\widetilde A\right)}
        \lesssim\norm{e^{\tau(\cdot)}-\sum_{j=0}^{n-1}\beta_j\mathbf{F}_j}_{\max,\mathcal{E}},
    \end{equation}
    which follows from the Crouzeix-Palencia theorem~\lem{crouzeix_palencia} from \append{analysis_faber}. Using the convexity of $\mathcal{E}$ and evaluating the Faber coefficients with a circular contour of radius $r>1$,
    \begin{equation}
    \begin{aligned}
        \norm{e^{\tau(\cdot)}-\sum_{j=0}^{n-1}\beta_j\mathbf{F}_j}_{\max,\mathcal{E}}
        &=\norm{\sum_{j=n}^{\infty}\beta_j\mathbf{F}_j}_{\max,\mathcal{E}}
        \leq\sum_{j=n}^{\infty}\abs{\beta_j}\norm{\mathbf{F}_j}_{\max,\mathcal{E}}\\
        &\leq2\sum_{j=n}^{\infty}\abs{\frac{1}{2\pi r^j}\int_{0}^{2\pi}\mathrm{d}\theta\ e^{-ij\theta}\exp\left(\tau\mathbf{\Psi}(re^{i\theta})\right)}\\
        &\leq2\norm{e^{\tau(\cdot)}}_{\max,\mathbf{\Psi}\left(r\partial\mathcal{D}\right)}\sum_{j=n}^{\infty}\frac{1}{r^j}
        =2\norm{e^{\tau(\cdot)}}_{\max,\mathbf{\Psi}\left(r\partial\mathcal{D}\right)}\frac{\left(\frac{1}{r}\right)^n}{1-\frac{1}{r}}.
    \end{aligned}
    \end{equation}
    Assuming $n>\tau$, let us choose
\begin{equation}
    r=\frac{n}{\tau}>1,
\end{equation}
    so that
    \begin{equation}
        \norm{e^{\tau\widetilde A}-\sum_{j=0}^{n-1}\beta_j\mathbf{F}_j\left(\widetilde A\right)}
        \lesssim\norm{e^{\tau(\cdot)}}_{\max,\mathbf{\Psi}\left(r\partial\mathcal{D}\right)}
        \frac{n}{n-\tau}\left(\frac{\tau}{n}\right)^n.
    \end{equation}

    Because $\mathcal{E}$ is convex and symmetric relative to the real axis, we have 
    \begin{equation}
        \norm{e^{\tau(\cdot)}}_{\max,\mathbf{\Psi}\left(r\partial\mathcal{D}\right)}
        =e^{\tau\mathbf{\Psi}(r)}.
    \end{equation}
    Now, by \cite[Eqs.\ (4.4) and (4.5)]{Beckermann},
    \begin{equation}
        \abs{\mathbf{\Psi}(r)-\mathbf{\Psi}(1)-\zeta\left(r-1\right)}
        \leq\zeta\left(1-\frac{1}{r}\right)
        \quad\Rightarrow\quad
        \mathbf{\Psi}(r)\leq\mathbf{\Psi}(1)
        +\zeta\left(r-\frac{1}{r}\right)
        \leq\mathbf{\Psi}(1)+\zeta r,
    \end{equation}
    where $\zeta=\mathbf{\Psi}'(\infty)>0$ is the capacity.
    This implies
    \begin{equation}
        \norm{e^{\tau(\cdot)}}_{\max,\mathbf{\Psi}\left(r\partial\mathcal{D}\right)}
        \leq e^{\tau\mathbf{\Psi}(1)+\zeta n}.
    \end{equation}
    Combining with the estimate from the previous paragraph, we obtain
    \begin{equation}
        \norm{e^{\tau\widetilde A}-\sum_{j=0}^{n-1}\beta_j\mathbf{F}_j\left(\widetilde A\right)}
        \lesssim
        \begin{cases}
            \frac{n}{n-\tau}\left(\frac{e^{\zeta}\tau}{n}\right)^n,\quad&\mathbf{\Psi}(1)\leq0,\\
            \frac{n}{n-\tau}\left(\frac{e^{\zeta+\mathbf{\Psi}(1)}\tau}{n}\right)^n,\quad&\mathbf{\Psi}(1)>0.\\
        \end{cases}
    \end{equation}
    This establishes the claimed bound. Note that requirements on the convexity, symmetry, and nonpositiveness of $\Re(\mathcal{E})$ can be relaxed along similar lines of~\cite{NOVATI2003201,Beckermann}.
\end{proof}

\begin{theorem}[Quantum differential equation algorithm, Faber version]
\label{thm:diff_eq_faber}
Let $A$ be a square matrix, such that $A/\alpha_A$ is block encoded by $O_A$ with some normalization factor $\alpha_A\geq\norm{A}$.
Suppose that the numerical range $\mathcal{W}(A/\alpha_A)$ is enclosed by a Faber region $\mathcal{E}$, which is convex and symmetric with respect to the real axis, lying on the left half of the complex plane ($\Re(\mathcal{E})\leq0$), with associated conformal maps $\mathbf{\Phi}:\mathcal{E}^c\rightarrow\mathcal{D}^c$, $\mathbf{\Psi}:\mathcal{D}^c\rightarrow\mathcal{E}^c$ and Faber polynomials $\mathbf{F}_n(z)$.
Let $O_\psi\ket{0}=\ket{\psi}$ be the oracle preparing the initial state.

Then, applying \thm{qevt_faber} to the function $e^{\alpha_Atz}$ ($t>0$) truncated at order
    \begin{equation}
        n=\mathbf{O}\left(\alpha_At+\log\left(\frac{1}{\alphaFPsi\epsilon}\right)\right)
    \end{equation}
    produces the state
    \begin{equation}
        \frac{e^{tA}\ket{\psi}}{\norm{e^{tA}\ket{\psi}}}
    \end{equation}
    with accuracy $\epsilon$ and probability $1-\pfail$. The algorithm uses
\begin{equation}
    \mathbf{O}\left(\frac{\alphaF}{\alphaFPsi}\alphaFP 
    \left(\alpha_At+\log\left(\frac{1}{\alphaFPsi\epsilon}\right)\right)
    \log\left(\frac{\alphaF}{\alphaFPsi\epsilon}\right)\log\left(\frac{1}{\pfail}\right)\right)
\end{equation}
    queries to controlled-$O_A$, controlled-$O_\psi$, and their inverses,
    where $\alphaFP$ satisfies \eq{alphaFP_alphaPsiFaber}, $\alphaF$ satisfies \eq{alphaF_alphaPPsi} and
    \begin{equation}
        \alphaFPsi\leq\norm{e^{tA}\ket{\psi}}
    \end{equation}
    is a lower bound on size of the solution vector.
\end{theorem}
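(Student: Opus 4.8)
\textbf{Proof plan for \thm{diff_eq_faber}.} The plan is to mirror the structure of the Chebyshev-based differential equation proof (\thm{diff_eq}), replacing each Chebyshev ingredient with its Faber counterpart. Concretely: we want to show that choosing the stated truncate order $n$ makes the normalized truncated Faber series $\sum_{j=0}^{n-1}\beta_j\mathbf{F}_j(A/\alpha_A)\ket{\psi}$ an $\epsilon/2$-accurate approximation of $e^{tA}\ket{\psi}/\norm{e^{tA}\ket{\psi}}$, and then invoke \thm{qevt_faber} with accuracy $\epsilon/2$ to account for the algorithmic error. The claimed query complexity then follows by substituting $n = \mathbf{O}(\alpha_At + \log(1/(\alphaFPsi\epsilon)))$ into the bound of \thm{qevt_faber}.

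First I would set up the perturbation estimate. Writing $A = (A/\alpha_A)\cdot\alpha_A$ and using $\tau = \alpha_A t$, I apply \lem{state_transform_perturb} with $C = e^{tA}$, $\widetilde C = \sum_{j=0}^{n-1}\beta_j\mathbf{F}_j(A/\alpha_A)$, and $\ket{\widetilde\psi} = \ket{\psi}$, which gives
\begin{equation}
    \norm{\frac{e^{tA}\ket{\psi}}{\norm{e^{tA}\ket{\psi}}}
    -\frac{\sum_{j=0}^{n-1}\beta_j\mathbf{F}_j\left(\frac{A}{\alpha_A}\right)\ket{\psi}}{\norm{\sum_{j=0}^{n-1}\beta_j\mathbf{F}_j\left(\frac{A}{\alpha_A}\right)\ket{\psi}}}}
    \leq\frac{2\norm{e^{tA}-\sum_{j=0}^{n-1}\beta_j\mathbf{F}_j\left(\frac{A}{\alpha_A}\right)}}{\norm{e^{tA}\ket{\psi}}}.
\end{equation}
Since $e^{tA} = e^{\tau(A/\alpha_A)}$ and the numerical range $\mathcal{W}(A/\alpha_A)$ is enclosed by the Faber region $\mathcal{E}$ which is convex, symmetric about the real axis, and contained in the left half-plane, \lem{trunc_matrix_exp_faber} applies directly with $\widetilde A = A/\alpha_A$ and yields $\norm{e^{tA}-\sum_{j=0}^{n-1}\beta_j\mathbf{F}_j(A/\alpha_A)} = \mathbf{O}((e^\zeta\tau/n)^n) = \mathbf{O}((e^\zeta\alpha_At/n)^n)$ for $n = \mathbf{\Omega}(\alpha_At)$. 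Dividing by $\norm{e^{tA}\ket{\psi}} \geq \alphaFPsi$, the whole right-hand side is $\mathbf{O}\big((e^\zeta\alpha_At/n)^n/\alphaFPsi\big)$, and this is $\leq \epsilon/2$ as soon as $n = \mathbf{\Omega}(e^\zeta\alpha_At)$ and $n = \mathbf{\Omega}(\log(1/(\alphaFPsi\epsilon)))$; since $\zeta$ is a fixed constant of the region, this is $n = \mathbf{O}(\alpha_At + \log(1/(\alphaFPsi\epsilon)))$ as claimed. (Here I would note that the polynomial $p = \sum_{j=0}^{n-1}\beta_j\mathbf{F}_j$ used in \thm{qevt_faber} is precisely the degree-$(n-1)$ truncation of the Faber expansion of $e^{\tau z}$, whose existence and coefficient formula come from the lemma on Faber expansions in \sec{faber_prelim}.)

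Then I would finish by invoking \thm{qevt_faber} with the polynomial $p(z) = \sum_{j=0}^{n-1}\beta_j\mathbf{F}_j(z)$ and target accuracy $\epsilon/2$: it prepares $p(A/\alpha_A)\ket{\psi}/\norm{p(A/\alpha_A)\ket{\psi}}$ with accuracy $\epsilon/2$ and probability $1-\pfail$ using $\mathbf{O}\big(\frac{\alphaF}{\alphaPPsi}\alphaFP n\log(\frac{\alphaF}{\alphaPPsi\epsilon})\log(\frac1\pfail)\big)$ queries, where $\alphaF$ and $\alphaFP$ are as in \eq{alphaF_alphaPPsi} and \eq{alphaFP_alphaPsiFaber}. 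The lower bound $\alphaPPsi \leq \norm{p(A/\alpha_A)\ket{\psi}}$ can be taken to be (a constant fraction of) $\alphaFPsi \leq \norm{e^{tA}\ket{\psi}}$ since the two differ by the exponentially small truncation error just bounded; combining the two $\epsilon/2$ errors via the triangle inequality gives total accuracy $\epsilon$. Substituting $n = \mathbf{O}(\alpha_At + \log(1/(\alphaFPsi\epsilon)))$ into the query count produces exactly the stated complexity.

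\textbf{Main obstacle.} The technically delicate point is ensuring the hypotheses of \lem{trunc_matrix_exp_faber} are genuinely met after the normalization: the lemma requires the numerical range of the \emph{block-encoded} matrix $A/\alpha_A$ (not $A$ itself) to lie inside the convex, left-half-plane Faber region, and one must be careful that $\mathcal{W}(A/\alpha_A) = \mathcal{W}(A)/\alpha_A$ scales consistently with the region $\mathcal{E}$ assumed in the theorem statement — which is why the theorem phrases the hypothesis directly in terms of $\mathcal{W}(A/\alpha_A) \subseteq \mathcal{E}$. A secondary subtlety is that $\zeta = \mathbf{\Psi}'(\infty)$, the capacity of $\mathcal{E}$, appears inside the truncation bound; I would treat it as an $\mathbf{O}(1)$ constant of the fixed region, but should flag that if $\mathcal{E}$ is allowed to grow with the problem parameters this constant must be tracked. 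Everything else — the perturbation bookkeeping, the identification of $p$, and the complexity substitution — is routine given the lemmas already established.
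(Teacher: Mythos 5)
Your proposal is correct and follows the approach the paper itself indicates: the paper's proof of \thm{diff_eq_faber} consists of the single line ``This is proved in a similar way as \thm{diff_eq},'' and you have supplied exactly the right analogy, swapping each Chebyshev ingredient for its Faber counterpart. Your chain of steps---\lem{state_transform_perturb} to reduce to an operator truncation error, \lem{trunc_matrix_exp_faber} with $\widetilde A = A/\alpha_A$ and $\tau = \alpha_A t$ to control that error by $\mathbf{O}\bigl((e^{\zeta}\alpha_A t/n)^n\bigr)$, division by the given lower bound $\alphaFPsi \le \lVert e^{tA}\ket{\psi}\rVert$, and a final invocation of \thm{qevt_faber} at accuracy $\epsilon/2$---is the natural and correct one.

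One small point worth appreciating: you correctly noticed that the Faber and Chebyshev versions treat the denominator $1/\lVert e^{tA}\ket{\psi}\rVert$ differently. In the proof of \thm{diff_eq} the paper further bounds this factor using \cor{matrix_poly_bound}, which is why $\kappa_S$ (rather than $\alphaFPsi$) ends up inside the logarithm of the truncate order there; the Faber version deliberately leaves it as $1/\alphaFPsi$ to avoid the Jordan condition number, and that is why $\alphaFPsi$ appears in $n = \mathbf{O}\bigl(\alpha_A t + \log(1/(\alphaFPsi\epsilon))\bigr)$. Your proof reads the theorem statement correctly and does not unnecessarily import the $\kappa_S$ estimate. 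Your flagged subtleties (matching $\mathcal{W}(A/\alpha_A)\subseteq\mathcal{E}$ to the lemma hypothesis, treating $\zeta$ as a region constant, and relating $\alphaPPsi$ to $\alphaFPsi$ via the exponentially small truncation error) are all the right ones to raise, and your argument for each is sound.
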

\begin{proof}
    This is proved in a similar way as \thm{diff_eq}.
\end{proof}
\begin{remark}
    See the remaks succeeding \thm{faber_history} and \thm{qevt_faber} for a discussion on the scaling of $\alphaFP$ and $\alphaF$. In particular, we have both $\alphaFP=\mathbf{O}(1)$ and $\alphaF=\mathbf{O}\left(\log(n)\norm{p}_{\max,\partial\mathcal{E}}\right)$ satisfied for a region of the form \fig{region_general2}, which is a smooth deformed version of the Elliott semidisk \fig{region_general}. See the discussion at the end of \append{analysis_faber_crouzeix} for more details.
    Finally, note that the query complexity of initial state preparation can be improved using techniques of~\cite{OptInit}.
\end{remark}

We now explain how our techniques can be applied to estimate leading eigenvalues of an input matrix. These leading eigenvalues play an important role in classical linear algebraic algorithms as they largely determine the behavior of algorithms involving matrix power iterations.

Specifically, given matrix $A$, let $\ket{\psi_\theta}$ be an eigenstate of $A$ with eigenvalue $\lambda_{\max}e^{i\theta}$, i.e., $A\ket{\psi_\theta}=\lambda_{\max}e^{i\theta}\ket{\psi_\theta}$. Here, $\lambda_{\max}>0$ is the largest absolute value of eigenvalues of $A$ which is known a priori, and our goal is to estimate the phase angle $\theta$.
The underlying idea for solving this problem is to generate the Faber history state corresponding to the disk $\lambda_{\max}\mathcal{D}$. In this case, Faber polynomials are given by power functions, and one can directly implement the power series
\begin{equation}
    \sum_{j=0}^{n-1}L_n^j\otimes\frac{A^j}{\lambda_{\max}^j}
    =\sum_{j=0}^{\infty}L_n^j\otimes\frac{A^j}{\lambda_{\max}^j}
    =\frac{1}{I_n\otimes I-L_n\otimes\frac{ A}{\lambda_{\max}}}
\end{equation}
without invoking the full power of the Faber mechanism.

Suppose that the input matrix is block encoded as $A/\alpha_A$ with some normalization factor $\alpha_A\geq\norm{A}\geq\lambda_{\max}$.
Then, using the fact that the $n$-by-$n$ lower shift matrix $L_n$ can be block encoded with normalization factor $1$ (\lem{shift}), we can take the tensor product and obtain a block encoding of $L_n\otimes\frac{ A}{\alpha_A}$. By taking a further linear combination using the ancilla state
\begin{equation}
    \frac{\sqrt{\frac{\alpha_A}{\lambda_{\max}}}\ket{0}+\ket{1}}{\sqrt{\frac{\alpha_A}{\lambda_{\max}}+1}},
\end{equation}
we can block encode $\frac{I_n\otimes I-L_n\otimes\frac{ A}{\lambda_{\max}}}{\frac{\alpha_A}{\lambda_{\max}}+1}$.

We now invert the block encoded matrix using the optimal scaling quantum linear system solver~\lem{opt_lin} with the initial state $\ket{0}\ket{\psi}$. 
We have 
\begin{equation}
\begin{aligned}
    \norm{\frac{\frac{\alpha_A}{\lambda_{\max}}+1}{I_n\otimes I-L_n\otimes\frac{ A}{\lambda_{\max}}}}
    &=\left(\frac{\alpha_A}{\lambda_{\max}}+1\right)\norm{\sum_{j=0}^{n-1}L_n^j\otimes\frac{A^j}{\lambda_{\max}^j}}
    \leq\left(\frac{\alpha_A}{\lambda_{\max}}+1\right)
    \sum_{j=0}^{n-1}\frac{\norm{A^j}}{\lambda_{\max}^j}\\
    &\leq n\kappa_S\left(\frac{\alpha_A}{\lambda_{\max}}+1\right),
\end{aligned}
\end{equation}
where $\kappa_S$ is the Jordan condition number of $A$. 
Thus, to generate an $\epsilon_{\text{lin}}$-approximation of the history state, the query complexity of the algorithm is asymptotically
\begin{equation}
    \mathbf{O}\left(\frac{\alpha_A}{\lambda_{\max}}n\kappa_S\log\left(\frac{1}{\epsilon_{\text{lin}}}\right)\right).
\end{equation}
Note that the query complexity of initial state preparation can be improved using the block preconditioning technique of~\cite{OptInit}.
Assuming the input state $\ket{\psi}=\ket{\psi_\theta}$ is the exact eigenstate, we then obtain
\begin{equation}
\label{eq:leading_fourier_state}
    \frac{\frac{\frac{\alpha_A}{\lambda_{\max}}+1}{I_n\otimes I-L_n\otimes\frac{ A}{\lambda_{\max}}}\ket{0}\ket{\psi_\theta}}
    {\norm{\frac{\frac{\alpha_A}{\lambda_{\max}}+1}{I_n\otimes I-L_n\otimes\frac{ A}{\lambda_{\max}}}\ket{0}\ket{\psi_\theta}}}
    =\frac{\sum_{j=0}^{n-1}L_n^j\otimes\frac{A^j}{\lambda_{\max}^j}\ket{0}\ket{\psi_\theta}}
    {\norm{\sum_{j=0}^{n-1}L_n^j\otimes\frac{A^j}{\lambda_{\max}^j}\ket{0}\ket{\psi_\theta}}}
    =\frac{1}{\sqrt{n}}\sum_{j=0}^{n-1}e^{ij\theta}\ket{j}\ket{\psi_\theta}.
\end{equation}
The phase angle $\theta$ can now be estimated using the standard quantum phase estimation.

\begin{theorem}[Quantum eigenvalue estimation, leading eigenvalues]
\label{thm:qeve_extreme}
Let $A$ be a square matrix, such that $A/\alpha_A$ is block encoded by $O_A$ with some normalization factor $\alpha_A\geq\norm{A}$.
Assume that $A/\alpha_A=SJS^{-1}$ has a Jordan form decomposition with upper bound $\kappa_S\geq\norm{S}\norm{S}^{-1}$ on the Jordan condition number.
Suppose that oracle $O_\psi\ket{0}=\ket{\psi}$ prepares an initial state within distance $\norm{\ket{\psi}-\ket{\psi_{\theta}}}=\mathbf{O}\left(\lambda_{\max}\sqrt{\epsilon}/(\alpha_A\kappa_S)\right)$ from an eigenstate such that $A\ket{\psi_{\theta}}=\lambda_{\max}e^{i\theta}\ket{\psi_{\theta}}$, where $\lambda_{\max}>0$ is the largest absolute value of eigenvalues of $A$.
Assume that the numerical range $\mathcal{W}(A)$ is enclosed by the disk $\lambda_{\max}\mathcal{D}$.
Then, there exists a quantum algorithm that outputs a value $\widetilde\theta$ with accuracy in centered modulus
\begin{equation}
    \abs{\mathbf{CMod}_{2\pi}\left(\widetilde\theta-\theta\right)}\leq\epsilon
\end{equation}
and probability $1-\pfail$, using
\begin{equation}
    \mathbf{O}\left(\frac{\alpha_A}{\lambda_{\max}\epsilon}\kappa_S\log\left(\frac{1}{\pfail}\right)\right)
\end{equation}
queries to controlled-$O_A$, controlled-$O_\psi$, and their inverses.
\end{theorem}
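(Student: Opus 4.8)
\textbf{Proof proposal for \thm{qeve_extreme}.} The plan is to follow exactly the pattern of the eigenvalue estimation analysis in \sec{est}, but with the Chebyshev history state replaced by the power-series (Faber-for-the-disk) history state, and with the Chebyshev state phase estimation replaced by ordinary quantum phase estimation on the Fourier state. The preceding discussion already establishes that, by tensoring the block encoding of $L_n$ (normalization factor $1$ from \lem{shift}) with the block encoding of $A/\alpha_A$, and taking a linear combination with the ancilla $\left(\sqrt{\alpha_A/\lambda_{\max}}\ket{0}+\ket{1}\right)/\sqrt{\alpha_A/\lambda_{\max}+1}$, one obtains a block encoding of $\left(I_n\otimes I-L_n\otimes\frac{A}{\lambda_{\max}}\right)/\left(\frac{\alpha_A}{\lambda_{\max}}+1\right)$ using $\mathbf{O}(1)$ queries to $O_A$. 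Applying \lem{opt_lin} to invert this with initial state $\ket{0}\ket{\psi}$ produces, in the idealized case $\ket{\psi}=\ket{\psi_\theta}$, the Fourier state $\frac{1}{\sqrt n}\sum_{j=0}^{n-1}e^{ij\theta}\ket{j}\ket{\psi_\theta}$ per \eq{leading_fourier_state}, at a query cost of $\mathbf{O}\left(\frac{\alpha_A}{\lambda_{\max}}n\log(1/\epsilon_{\text{lin}})\right)$, where the condition-number bound $\norm{(I_n\otimes I-L_n\otimes\frac{A}{\lambda_{\max}})^{-1}}=\mathbf{O}(n)$ comes from the Crouzeix--Palencia theorem (\lem{crouzeix_palencia}) together with $\mathcal{W}(A)\subseteq\lambda_{\max}\mathcal{D}$.

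The steps I would carry out, in order, are: (i) reduce to the case where the input block encoding has normalization factor $\alpha_A\geq2\norm{A}$ via the rescaling trick \eq{block_rescaling} (this only affects the constant in $\alpha_A/\lambda_{\max}$ and does not change the numerical range, so it is harmless); (ii) generate the Fourier state as above with $n=\mathbf{\Theta}(1/\epsilon)$ and $\epsilon_{\text{lin}}=\mathbf{O}(1)$ sufficiently small; (iii) run standard quantum phase estimation (a QFT followed by a computational-basis measurement) on the first register to obtain a candidate $\widetilde\theta=2\pi l/n$, which succeeds with constant probability strictly above $1/2$ and accuracy $\abs{\mathbf{CMod}_{2\pi}(\widetilde\theta-\theta)}=\mathbf{O}(1/n)=\mathbf{O}(\epsilon)$; (iv) boost the success probability to $1-\pfail$ by repeating $\mathbf{O}(\log(1/\pfail))$ times and taking the median (using the properties of $\mathbf{CMod}$ established in \sec{est_cmod} to make ``median in circular distance'' meaningful); (v) propagate the two error sources---the linear solver error $\epsilon_{\text{lin}}$ and the initial-state error $\norm{\ket{\psi}-\ket{\psi_\theta}}$---through the amplitude/probability perturbation estimate $\abs{\norm{\Pi U\ket{\varphi_1}}^2-\norm{\Pi U\ket{\varphi_2}}^2}\leq2\delta$ used in the proof of \thm{qeve}, which forces $\epsilon_{\text{lin}}=\mathbf{O}(1)$ small and $\norm{\ket{\psi}-\ket{\psi_\theta}}=\mathbf{O}(\lambda_{\max}\epsilon/\alpha_A)$ small; (vi) collect the total query count, which is $\mathbf{O}\left(\frac{\alpha_A}{\lambda_{\max}}\cdot\frac{1}{\epsilon}\cdot\log(1/\pfail)\right)$.

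For step (v) I would be careful about exactly how $\norm{\ket{\psi}-\ket{\psi_\theta}}$ enters. For a perfect eigenstate the solver output is literally the Fourier state; for a general $\ket{\psi}$, the operator $(I_n\otimes I-L_n\otimes\frac{A}{\lambda_{\max}})^{-1}=\sum_{j=0}^{n-1}L_n^j\otimes\frac{A^j}{\lambda_{\max}^j}$ has norm $\mathbf{O}(n)$, so by \cor{lin_sys_perturb}, or more simply by a direct estimate $\norm{\sum_j L_n^j\otimes\frac{A^j}{\lambda_{\max}^j}(\ket{0}\ket{\psi}-\ket{0}\ket{\psi_\theta})}=\mathbf{O}(n\norm{\ket{\psi}-\ket{\psi_\theta}})$ (again using the Crouzeix--Palencia bound on partial sums of powers), the normalized solver output deviates from the ideal Fourier state by $\mathbf{O}\left(\frac{n\norm{\ket{\psi}-\ket{\psi_\theta}}}{\norm{\sum_j L_n^j\otimes\frac{A^j}{\lambda_{\max}^j}\ket{0}\ket{\psi_\theta}}}\right)=\mathbf{O}(\sqrt n\,\norm{\ket{\psi}-\ket{\psi_\theta}})$, since the ideal unnormalized vector has norm $\sqrt n$. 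With $n=\mathbf{\Theta}(1/\epsilon)$ this requires $\norm{\ket{\psi}-\ket{\psi_\theta}}=\mathbf{O}(\sqrt\epsilon)$ just for constant closeness; but to get the \emph{stated} hypothesis $\mathbf{O}(\lambda_{\max}\epsilon/\alpha_A)$ one should note that the denominator $\norm{\sum_j L_n^j\otimes\frac{A^j}{\lambda_{\max}^j}\ket{0}\ket{\psi_\theta}}=\sqrt n$ and the solver-output perturbation is actually $\mathbf{O}\!\left(\frac{\alpha_A}{\lambda_{\max}}\right)$-free once one tracks it correctly---the factor $\alpha_A/\lambda_{\max}$ lives only in the query count, not the error. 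I expect reconciling this bookkeeping---i.e.\ showing that the hypothesis $\norm{\ket{\psi}-\ket{\psi_\theta}}=\mathbf{O}(\lambda_{\max}\epsilon/\alpha_A)$ is exactly what is needed, rather than something like $\mathbf{O}(\epsilon)$ or $\mathbf{O}(\sqrt\epsilon)$---to be the one genuinely fiddly part of the proof; everything else is a direct transcription of the QEVE argument with the disk-Faber generating function in place of the Chebyshev one and ordinary QPE in place of \thm{chebyshev_qpe}.
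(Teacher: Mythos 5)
Your approach is the same as the paper's: block-encode $(I_n\otimes I - L_n\otimes A/\lambda_{\max})/(\alpha_A/\lambda_{\max}+1)$, invert it with \lem{opt_lin} to produce the Fourier state \eq{leading_fourier_state}, run standard quantum phase estimation, and median-boost. The worry you flag in step (v) is not a gap, and your own calculation resolves it. The paper propagates the initial-state error via \cor{lin_sys_perturb} combined with the crude bound $1/\norm{C^{-1}\ket{\psi}}\le\norm{C}\le 1$ of \eq{denominator_bnd}. With $\norm{C^{-1}}=\mathbf{O}(n\,\alpha_A/\lambda_{\max})$ for the normalized block-encoded matrix $C$, this gives a deviation of $\epsilon_{\text{lin}}+\mathbf{O}(n\,\alpha_A\epsilon_{\text{init}}/\lambda_{\max})$ from the ideal Fourier state; demanding this be a sufficiently small constant with $n=\mathbf{\Theta}(1/\epsilon)$ yields exactly the stated hypothesis $\epsilon_{\text{init}}=\mathbf{O}(\lambda_{\max}\epsilon/\alpha_A)$. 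Your tighter bookkeeping---dividing by the actual $\norm{C^{-1}\ket{0}\ket{\psi_\theta}}\sim\sqrt{n}\,(\alpha_A/\lambda_{\max}+1)$ rather than its crude upper bound---gives the sharper $\mathbf{O}(\sqrt{n}\,\epsilon_{\text{init}})$, under which a weaker hypothesis would already suffice. But a theorem's hypothesis need only be \emph{sufficient}, not tight; the stated $\mathbf{O}(\lambda_{\max}\epsilon/\alpha_A)$ is simply what falls out of the cleaner $\norm{C^{-1}}\norm{C}$ estimate, and your sharper bound confirms it a fortiori. Everything else in your plan matches the paper's proof.
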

\begin{proof}
    Assume that the input state $\ket{\psi}=\ket{\psi_{\theta}}$ is the exact eigenstate, and that the quantum linear system solver makes no error. Then the standard phase estimation allows us to estimate $\theta$ with an accuracy $\epsilon$ (in centered modulus) and a constant probability strictly larger than $1/2$, for some choice of
    \begin{equation}
        n=\mathbf{O}\left(\frac{1}{\epsilon}\right).
    \end{equation}

    Now consider the general case, where the quantum linear system solver has accuracy $\epsilon_{\text{lin}}$ and the initial state has distance $\norm{\ket{\psi}-\ket{\psi_{\theta}}}=\epsilon_{\text{init}}$ to the true eigenstate. Then \cor{lin_sys_perturb} implies that the output state is close to the Fourier state \eq{leading_fourier_state} with Euclidean distance at most
    \begin{equation}
        \epsilon_{\text{lin}}+\mathbf{O}\left({\sqrt{n}\kappa_S\frac{\alpha_A}{\lambda_{\max}}\epsilon_{\text{init}}}\right).
    \end{equation}
    The remaining analysis proceeds in a similar way as in \thm{qeve}.
\end{proof}

\begin{remark}
Note that the above query complexity is expressed in terms of the Jordan condition number, although this can be improved by a tighter estimate of $\norm{A^j}$. If we were to impose the assumption on the numerical range $\mathcal{W}(A)\subseteq\lambda_{\max}\mathcal{D}$, we would find the leading eigenvalue $\lambda_{\max}e^{i\theta}$ on the boundary of $\mathcal{W}(A)$. Consequently, $A$ can be unitarily block diagonalized separating $\lambda_{\max}e^{i\theta}$ away from the remaining spectra~\cite[Theorem 1.6.6]{roger1994topics}. The problem can then be directly solved by QSVT, and there is no need to invoke the above eigenvalue estimation algorithm.
\end{remark}

\section{Discussion}
\label{sec:discuss}
In this work, we have developed quantum algorithms to estimate and transform eigenvalues of high-dimensional matrices accessed by a quantum computer. Our eigenvalue estimation algorithm is provably optimal in the inverse accuracy and failure probability for a diagonalizable input, whereas our eigenvalue transformation algorithm achieves an average performance comparable to previous results for singular value transformation. As immediate applications, we present a quantum differential equation solver for matrices with imaginary spectra achieving a strictly linear time scaling for an average diagonalizable input, as well as a ground state preparation algorithm for matrices with real spectra nearly optimal in the combined scaling with the inverse spectral gap and inverse accuracy. We have extended the results to more general matrices with complex eigenvalues, obtaining a new differential equation solver and a quantum algorithm for estimating leading eigenvalues.
Our work thus provides a unified toolbox for processing eigenvalues of non-normal matrices on quantum computers---a practical problem that is out of reach of the pre-existing quantum singular value algorithm and its descendants.

Our main technical contribution is a method to efficiently generate the Chebyshev history state, which encodes Chebyshev polynomials of the input matrix in quantum superposition. Prior to our work, it was known how to create such a state for Hermitian inputs via discrete-time quantum walk, but no such a mechanism was available for non-normal operators. Our new approach employs a matrix version of the Chebyshev generating function, which is then implemented using the optimal scaling quantum linear system solver. As Chebyshev polynomials provide a close-to-best minimax approximation for functions defined over a real interval, the query complexity of our solution is expected to be nearly optimal. However, our methodology is by no means restricted to only the Chebyshev expansion. We show how to estimate leading eigenvalues by generating a power series of the input matrix. More generally, we present an efficient quantum algorithm to generate a history state of Faber polynomials that provide a nearly optimal basis for function approximations on compact subsets of the complex plane, of which Chebyshev polynomials and power series are two special cases.

When the initial state is prepared close to an eigenstate, the Chebyshev history state we produce contains information about the corresponding eigenvalue in the phase of coefficients. We have developed a Chebyshev phase estimation algorithm that estimates the phase (and thus eigenvalue of the target matrix) with an asymptotically optimal query complexity. However, it is plausible that alternative methods exist that can extract the phase from a Chebyshev state with a better performance in practice. It would also be useful to consider a setting where the initial state is an arbitrary superposition of eigenstates, which is relevant for applications such as period finding. We leave a detailed analysis of this case as a subject for future work.

Recent work developed an alternative approach for implementing functions of non-normal matrices based on contour integrals~\cite{Fang2023timemarchingbased,Takahira2020QuantumCauchy,Takahira21}. 
That approach requires a coherent implementation of the discretized integral on a quantum computer, and its complexity depends largely on the choice of contours. With a circular contour, that method gives a differential equation solver whose complexity has a quadratic dependence on the evolution time~\cite{Fang2023timemarchingbased}. In contrast, our approach implements the Chebyshev expansion of the target function with a predetermined truncate order, and is conceptually quite different. However, contour integrals have appeared in our analysis of Faber polynomials, suggesting a deeper connection between our approach and the contour integral method from previous work. It would also be of interest to construct quantum eigenvalue algorithms with fewer queries to the initial state preparation, on which the recent technique of linear combination of Hamiltonian simulation~\cite{AnChildsLin23} may provide insight.

We have shown that our eigenvalue transformation algorithm achieves a better performance for a random choice of input matrix. This speedup in turn follows from the fact that Fourier series converges faster on average---a powerful result known as the Carleson-Hunt theorem. This complements many recent work that explored the use of randomness in speeding up quantum simulation algorithms~\cite{Childs2019fasterquantum,Campbell18,ChenBrandao21,Zhao21}. Most of those results have focused on improvement of the product-formula algorithm and its variants. Our work demonstrates that randomness can also be used to speed up more advanced quantum algorithms through the faster convergence of Fourier expansion, which has applications to quantum simulation and beyond.

For the purpose of generality, we have expressed the complexity of our algorithms in terms of various matrix functions, with the understanding that these functions can be bounded differently for different problems (see \tab{nonnormal} for a summary of results). In the Chebyshev case, we have further derived bounds for input matrices having Jordan forms, generalizing previous analysis of differential equation algorithms based on the spectra abscissa~\cite[Section 3.2]{Krovi2023improvedquantum}. 
This introduces a dependence on the Jordan condition number $\kappa_S$, which is a common measure of nonnormality in numerical linear algebra~\cite[Page 444]{Trefethen05} and can be upper bounded using techniques from~\cite{JordanCondition}. 
It seems possible to establish a query lower bound for solving differential equations in terms of $\kappa_S$, by analyzing the speed at which a non-Hermitian Hamiltonian evolves quantum states~\cite{Bender07}, though the details remain to be worked out.
In the Faber case, we have derived bounds assuming numerical range or pseudospectrum are enclosed by the Faber region, extending previous differential equation analysis based on the notion of numerical abscissa~\cite[Section 3.1]{Krovi2023improvedquantum}. 

We emphasize again that this is not the only way to analyze our algorithms. 
For instance, one can also apply the Jordan-form analysis in the Faber case, obtaining a dependence on the Jordan condition number.
The reverse direction is however not very interesting: numerical range is enclosed by a real interval, if and only if the matrix itself is Hermitian, per the characterizations from \sec{prelim_matrix}, whereas the pseudospectrum is an open set that can never be enclosed by the real interval.
Anyway, it could be fruitful for future work to establish stronger bounds on matrix functions, which would have a direct influence on the performance of our eigenvalue algorithms.

The output of our QEVT algorithm is a quantum state proportional to the transformed input matrix applied to the initial state. By measuring copies of this state and post-processing the measurement outcomes, one can learn and make predictions about properties of the input matrix. However, it is plausible that more efficient methods exist that bypass this two-step procedure and target directly at properties of high-dimensional non-normal matrices. Our algorithm also maintains a considerable amount of coherence to produce the output state in quantum superposition. But this resource requirement may be relaxed using recent methods developed for Hermitian eigenvalue estimation and transformation~\cite{LinTong22,DongLinTong22}.
On a different note, we have also studied an alternative version of QEVT where the input matrix is transformed as a block encoding, which may be useful when QEVT is invoked as a subroutine in designing other quantum algorithms.

As further developments unfold, we hope the work initiated here will reveal the potential of quantum computers in processing non-normal operators, opening up applications that were previously unexplored.

\section*{Acknowledgements}
Y.S.\ thanks Yu Tong for helpful discussions.

\newpage
\appendix
\section{Analysis of Chebyshev-based algorithms}
\label{append:analysis_cheby}
In this appendix, we analyze the Chebyshev-based eigenvalue algorithms in more detail. Specifically, we review previous bounds for matrix exponentials in \append{analysis_cheby_prelim} based on the spectral abscissa. We show in \append{analysis_cheby_bernstein} how this analysis can be generalized for a polynomial function with the help of Bernstein's theorem. Finally, we bound the average complexity of our algorithms in \append{analysis_cheby_carleson} using the Carleson-Hunt theorem.

\subsection{Matrix exponential bound based on spectral abscissa}
\label{append:analysis_cheby_prelim}

We begin by reviewing a bound on norm of the matrix exponential function used in the analysis of quantum differential equation algorithms~\cite{Krovi2023improvedquantum} \cite[Appendix D]{leveque2007finite}. Given a square input matrix $C$, our goal is to bound $\norm{e^{\tau C}}$ for $\tau>0$. Assuming that $C$ has the Jordan form decomposition $C=SJS^{-1}$, we have $\norm{e^{\tau C}}\leq \kappa_S\norm{e^{\tau J}}$ with any upper bound $\kappa_S\geq\norm{S}\norm{S^{-1}}$ on the Jordan condition number.

We know from \eq{jordan_form_transformation} that $e^{\tau J}$ contains only diagonal blocks of the form
\begin{equation}
    e^{J(\lambda_l,d_l)}=
    \begin{bmatrix}
        e^{\tau\lambda_l} &  &  &  &  & \\
        \tau e^{\tau\lambda_l} & e^{\tau\lambda_l} &  &  &  & \\
        \frac{\tau^2}{2!}e^{\tau\lambda_l} & \tau e^{\tau\lambda_l} & e^{\tau\lambda_l} &  &  & \\
        \vdots & \frac{\tau^2}{2!}e^{\tau\lambda_l} & \tau e^{\tau\lambda_l} & \ddots &\\
        \vdots & \ddots & \ddots & \ddots & \ddots & \\
        \frac{\tau^{d_l-1}}{(d_l-1)!}e^{\tau\lambda_l} & \cdots & \cdots & \frac{\tau^2}{2!}e^{\tau\lambda_l} & \tau e^{\tau\lambda_l} & e^{\tau\lambda_l}\\
    \end{bmatrix}
\end{equation}
Using the scalar version of \lem{block_norm}, we conclude that $\norm{e^{J(\lambda_l,d_l)}}=\mathbf{\Theta}\left(\abs{\tau^{d_l-1}e^{\tau\lambda_l}}\right)$, treating size $d_l$ of the Jordan blocks as constant. This implies the following bound on the matrix exponential function.

\begin{proposition}
    Let $C$ be a square matrix with the Jordan form decomposition $C=SJS^{-1}$. Then,
    \begin{equation}
        \norm{e^{\tau C}}
        =\mathbf{\Theta}\left(\kappa_S\max_l\abs{\tau^{d_l-1}e^{\tau\lambda_l}}\right),
    \end{equation}
    where $\kappa_S\geq\norm{S}\norm{S^{-1}}$ is an upper bound on the Jordan condition number, and the maximization is over all Jordan blocks $J(\lambda_l,d_l)$ with eigenvalue $\lambda_l$ and size $d_l$.

    Depending on the specific value of $\lambda_l$ and $d_l$, the factor $\mathbf{\Theta}\left(\abs{\tau^{d_l-1}e^{\tau\lambda_l}}\right)$ behaves differently as follows:
\begin{enumerate}
    \item $\Re(\lambda_l)<0$: In this case, $\mathbf{\Theta}\left(\abs{\tau^{d_l-1}e^{\tau\lambda_l}}\right)=\mathbf{\Theta}\left(e^{\tau(\Re(\lambda_l)+\mathbf{o}(1))}\right)$ decays exponentially with $\tau$.
    \item $\Re(\lambda_l)=0$ and $d_l=1$: In this case, $\lambda_l$ is a \emph{nondefective} eigenvalue, and $\mathbf{\Theta}\left(\abs{\tau^{d_l-1}e^{\tau\lambda_l}}\right)=\mathbf{\Theta}(1)$ is bounded as a function of $\tau$. 
    \item $\Re(\lambda_l)=0$ and $d_l\geq2$: In this case, $\mathbf{\Theta}\left(\abs{\tau^{d_l-1}e^{\tau\lambda_l}}\right)=\mathbf{\Theta}\left(\tau^{d_l-1}\right)$ grows polynomially.
    \item $\Re(\lambda_l)>0$: In this case, $\mathbf{\Theta}\left(\abs{\tau^{d_l-1}e^{\tau\lambda_l}}\right)=\mathbf{\Theta}\left(e^{\tau(\Re(\lambda_l)+\mathbf{o}(1))}\right)$ grows exponentially with $\tau$.
\end{enumerate}
\end{proposition}

Thus, the scaling of $e^{\tau C}$ largely depends on the \emph{spectral abscissa} defined as
\begin{equation}
    \max_l\Re\left(\lambda_l(C)\right).
\end{equation}
Specifically, $\max_l\Re(\lambda_l)<0$ implies that $\norm{e^{\tau C}}$ asymptotically decays as a function $\tau$; $\max_l\Re(\lambda_l)=0$ corresponds to the case where the growth of $\norm{e^{\tau C}}$ is at most polynomial further determined by size of the Jordan blocks with imaginary spectra; and $\max_l\Re(\lambda_l)>0$ means $\norm{e^{\tau C}}$ blows up exponentially. This estimate is particularly useful for analyzing existing quantum differential equation algorithms such as \cite{Krovi2023improvedquantum}, because the complexity of that algorithm is expressed in terms of spectral norm of the matrix exponential function.

\subsection{Matrix polynomial bound with Bernstein's theorem}
\label{append:analysis_cheby_bernstein}

Unlike \cite{Krovi2023improvedquantum}, our Chebyshev-based eigenvalue algorithms have complexity depending on the spectral norm of various matrix polynomials. In deriving a similar bound as above, the challenge here is to handle the high-order derivatives in \eq{jordan_form_transformation} for a general polynomial function. We overcome this using the following Bernstein's theorem.
\begin{lemma}[Bernstein's theorem {\cite[Eqs.\ (6) and (12)]{Kalmykov21}}]
    Given a degree-$j$ polynomial $p_j$, it holds
    \begin{equation}
        \abs{p_j'(x)}\leq\frac{j}{\sqrt{(x-a)(b-x)}}\norm{p_j}_{\max,[a,b]}
    \end{equation}
    for any $a<x<b$.
\end{lemma}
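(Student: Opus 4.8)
).}

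The plan is to reduce the general interval $[a,b]$ to the canonical interval $[-1,1]$ via an affine change of variables, and then invoke the classical Bernstein inequality on $[-1,1]$ in its pointwise (Schur--Bernstein) form. First I would introduce the affine map $\ell:[-1,1]\to[a,b]$ given by $\ell(t)=\frac{b+a}{2}+\frac{b-a}{2}t$, with inverse $t=\ell^{-1}(x)=\frac{2x-(b+a)}{b-a}$, and set $q(t):=p_j(\ell(t))$. Then $q$ is again a degree-$j$ polynomial, $\norm{q}_{\max,[-1,1]}=\norm{p_j}_{\max,[a,b]}$, and by the chain rule $q'(t)=\frac{b-a}{2}\,p_j'(\ell(t))$, so that for $x=\ell(t)\in(a,b)$ we have $p_j'(x)=\frac{2}{b-a}\,q'(t)$.

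Next I would apply the pointwise Bernstein inequality on $[-1,1]$: for any real polynomial $q$ of degree $j$ and any $t\in(-1,1)$,
\begin{equation}
    \abs{q'(t)}\leq\frac{j}{\sqrt{1-t^2}}\,\norm{q}_{\max,[-1,1]}.
\end{equation}
This is the standard form of Bernstein's inequality (the paper cites \cite{Kalmykov21} for exactly this); I would treat it as a known result rather than reprove it. It remains to translate the weight $\sqrt{1-t^2}$ back to the original variable. Substituting $t=\frac{2x-(b+a)}{b-a}$, a short computation gives
\begin{equation}
    1-t^2=\frac{(b-a)^2-(2x-(b+a))^2}{(b-a)^2}=\frac{4(x-a)(b-x)}{(b-a)^2},
\end{equation}
where the last equality is the elementary factorization $(b-a)^2-(2x-a-b)^2=\bigl((b-a)-(2x-a-b)\bigr)\bigl((b-a)+(2x-a-b)\bigr)=(2b-2x)(2x-2a)$. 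Hence $\sqrt{1-t^2}=\frac{2\sqrt{(x-a)(b-x)}}{b-a}$.

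Finally I would assemble the pieces: combining $p_j'(x)=\frac{2}{b-a}q'(t)$ with the Bernstein bound and the weight identity,
\begin{equation}
    \abs{p_j'(x)}=\frac{2}{b-a}\abs{q'(t)}
    \leq\frac{2}{b-a}\cdot\frac{j}{\sqrt{1-t^2}}\norm{q}_{\max,[-1,1]}
    =\frac{2}{b-a}\cdot\frac{j(b-a)}{2\sqrt{(x-a)(b-x)}}\norm{p_j}_{\max,[a,b]}
    =\frac{j}{\sqrt{(x-a)(b-x)}}\norm{p_j}_{\max,[a,b]},
\end{equation}
which is the claimed inequality. There is essentially no obstacle here: the only nontrivial ingredient is the classical Bernstein inequality on $[-1,1]$, which is quoted from the literature; the rest is a routine affine rescaling, and the single computation that needs care is the weight-transformation identity $1-t^2=4(x-a)(b-x)/(b-a)^2$, which I would verify by the factorization shown above. (One should also remark that the statement is vacuous/trivial at the endpoints $x=a,b$ since the right-hand side is $+\infty$ there, so restricting to $a<x<b$ loses nothing.)
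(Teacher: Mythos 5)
Your derivation is correct, and the affine-rescaling computation checks out: with $t=\frac{2x-(b+a)}{b-a}$ one indeed has $(b-a)^2-(2x-a-b)^2=4(x-a)(b-x)$, so $\sqrt{1-t^2}=\frac{2\sqrt{(x-a)(b-x)}}{b-a}$, and the factor $\frac{2}{b-a}$ from the chain rule cancels exactly against $\frac{b-a}{2}$ coming from the weight, giving the stated bound. Note, however, that the paper itself does not prove this lemma at all: it is quoted verbatim from \cite[Eqs.\ (6) and (12)]{Kalmykov21} with no proof environment, already on the general interval $[a,b]$. So what you have done is not an alternative to the paper's proof but rather a (correct and standard) reduction of the cited general-interval form to the textbook $[-1,1]$ form, which still leaves the $[-1,1]$ Schur--Bernstein inequality itself as an external black box. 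That is a perfectly reasonable thing to do, but a reader should be aware that the nontrivial content remains unproved in both your write-up and the paper; your contribution is to make the affine transport explicit, which the paper leaves implicit by citing the general form directly.
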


This bound should not be confused with the related Markov brothers' inequality often used in the study of query complexity~\cite{childs2017lecture}, which instead reads
\begin{equation}
    \norm{p_j'}_{\max,[a,b]}\leq\frac{2j^2}{b-a}\norm{p_j}_{\max,[a,b]}.
\end{equation}
Here, the Markov inequality has a prefactor scaling quadratically with the polynomial degree, whereas the Bernstein inequality introduces a prefactor linear in the polynomial degree. We thus have
\begin{equation}
    \norm{p_j'}_{\max,[a+\delta,b-\delta]}
    \leq\frac{j}{\sqrt{\delta(b-a-\delta)}}\norm{p_j}_{\max,[a,b]},
\end{equation}
as long as we use a nonzero margin $\delta>0$.
By recursively applying the above analysis, we obtain the following bound.
Alternatively, one may also use the sharper estimate from~\cite[Equation (37)]{Kalmykov21}.
\begin{corollary}[Recursive Bernstein's theorem]
    For degree-$j$ polynomials $p_j$ and $\delta>0$, it holds that
    \begin{equation}
        \norm{p_j^{(d-1)}}_{\max,[a+\delta,b-\delta]}
        =\mathbf{O}\left(\left(\frac{j}{\sqrt{\delta}}\right)^{d-1}\norm{p_j}_{\max,[a,b]}\right),
    \end{equation}
    with $b-a=\mathbf{\Omega}(1)$ and a fixed positive integer $d$.
\end{corollary}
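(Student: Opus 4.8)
The plan is to prove the Corollary by induction on $d$, applying Bernstein's theorem one derivative at a time while carefully shrinking the interval to maintain a nonzero margin at each step. The base case $d=1$ is immediate: $\norm{p_j}_{\max,[a+\delta,b-\delta]}\leq\norm{p_j}_{\max,[a,b]}$ since the smaller interval is contained in the larger. For the inductive step, I would not shrink by the full $\delta$ at once but instead introduce a nested sequence of intervals, e.g. define $\delta_k = \frac{k}{d-1}\delta$ for $k=0,1,\ldots,d-1$, so that $[a+\delta_{k},b-\delta_k]$ interpolates between $[a,b]$ and $[a+\delta,b-\delta]$, with each consecutive pair separated by a margin $\frac{\delta}{d-1} = \mathbf{\Theta}(\delta)$ (treating $d$ as a fixed constant).

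First I would apply Bernstein's theorem to pass from $p_j$ on $[a,b]$ to $p_j'$ on $[a+\delta_1,b-\delta_1]$: since $p_j'$ is a degree-$(j-1)$ polynomial, Bernstein gives $\norm{p_j'}_{\max,[a+\delta_1,b-\delta_1]} \leq \frac{j-1}{\sqrt{\delta_1(b-a-\delta_1)}}\norm{p_j}_{\max,[a,b]}$. Then $p_j'$ itself is a polynomial of degree $j-1 < j$, so applying Bernstein again to $p_j''=(p_j')'$ on the next shrunken interval $[a+\delta_2,b-\delta_2]$ yields a factor $\frac{j-2}{\sqrt{(\delta_2-\delta_1)(b-a-\delta_1-\delta_2)}}$. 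Iterating $d-1$ times, I would collect the product of these $d-1$ prefactors. Each numerator is at most $j$, and each denominator has a $\sqrt{\cdot}$ factor which is $\mathbf{\Omega}(\sqrt{\delta})$ because consecutive margins differ by $\frac{\delta}{d-1}$ and because $b-a=\mathbf{\Omega}(1)$ ensures the $(b-a-\cdots)$ terms stay bounded below by a constant (here I should note that the statement implicitly assumes $\delta$ is small enough relative to $b-a$, say $\delta \leq (b-a)/2$, so these factors are positive and $\mathbf{\Theta}(1)$). The product of $d-1$ such factors is therefore $\mathbf{O}\left(\left(j/\sqrt{\delta}\right)^{d-1}\right)$, with the implicit constant depending only on $d$ and $b-a$, which establishes the claimed bound.

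The main obstacle — though it is more of a bookkeeping subtlety than a genuine difficulty — is ensuring the interval-shrinking is done so that every intermediate margin stays $\mathbf{\Theta}(\sqrt{\delta})$ under the square root; a naive application that tries to go directly from $[a,b]$ to $[a+\delta,b-\delta]$ in one step would only control $\norm{p_j'}$, not the higher derivatives, since Bernstein's theorem as stated bounds $p_j'$ pointwise on the \emph{open} interval but to reapply it to $p_j'$ one needs $p_j'$ controlled on a \emph{closed} subinterval with its own margin. The nested-interval device resolves this cleanly. One should also track that the polynomial degree drops by one at each differentiation, so all numerators are bounded by $j$ (in fact by $j, j-1, \ldots, j-d+2$), which only helps. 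Since $d$ is fixed, the resulting constant $C_d$ absorbing all the $(b-a)$-dependent and $d$-dependent factors is harmless, and the bound is exactly of the stated form.
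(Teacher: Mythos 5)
Your proof is correct and takes essentially the same approach as the paper's own proof: both iterate Bernstein's theorem through a nested sequence of subintervals, losing one degree of differentiation and one margin of width $\mathbf{\Theta}(\delta/(d-1))$ per step, and both invoke $b-a=\mathbf{\Omega}(1)$ and fixed $d$ to absorb the endpoint factors into a constant. The only cosmetic difference is that you build the margin $\delta/(d-1)$ into the nested intervals from the outset, whereas the paper uses margin $\delta$ at each step and then rescales $\delta\mapsto\delta/(d-1)$ at the end.
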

\begin{proof}
    For $d=2$ and a margin $\delta>0$, we have from the Bernstein's theorem that
    \begin{equation}
        \norm{p_j'}_{\max,[a+\delta,b-\delta]}
        \leq\max_{x\in[a+\delta,b-\delta]}\frac{j}{\sqrt{(x-a)(b-x)}}\norm{p_j}_{\max,[a,b]}
        =\frac{j}{\sqrt{\delta(b-a-\delta)}}\norm{p_j}_{\max,[a,b]}.
    \end{equation}
    We shrink the interval by a margin $\delta$ again for $d=3$:
    \begin{equation}
    \begin{aligned}
        \norm{p_j''}_{\max,[a+2\delta,b-2\delta]}
        &\leq\frac{j-1}{\sqrt{\delta(b-a-3\delta)}}\norm{p_j'}_{\max,[a+\delta,b-\delta]}\\
        &\leq\frac{(j-1)j}{\sqrt{\delta(b-a-3\delta)}\sqrt{\delta(b-a-\delta)}}\norm{p_j}_{\max,[a,b]}.
    \end{aligned}
    \end{equation}
    Performing this recursively, 
    \begin{equation}
        \begin{aligned}
            \norm{p_j^{(d-1)}}_{\max,[a+(d-1)\delta,b-(d-1)\delta]}
        &\leq\left(\frac{j}{\sqrt{\delta\left(b-a-(2d-3)\delta\right)}}\right)^{d-1}\norm{p_j}_{\max,[a,b]}\\
        &=\mathbf{O}\left(\left(\frac{j}{\sqrt{\delta}}\right)^{d-1}\norm{p_j}_{\max,[a,b]}\right),
        \end{aligned}
    \end{equation}
    assuming $b-a=\mathbf{\Omega}(1)$ and $d$ is a fixed constant.

    The claimed bound then follows from the rescaling $\delta\mapsto\frac{\delta}{d-1}$, which only introduces a constant prefactor and does not change the asymptotic estimate.
\end{proof}
\begin{corollary}
\label{cor:matrix_poly_bound}
    Let $C$ be a matrix with eigenvalues belonging to the real interval $[a+\delta,b-\delta]$ with $\delta>0$. Suppose that $C=SJS^{-1}$ has a Jordan form decomposition with upper bound $\kappa_S\geq\norm{S}\norm{S^{-1}}$ on the Jordan condition number and size $d_{\max}$ of the largest Jordan block.
    For degree-$j$ polynomials $p_j$, it holds that
    \begin{equation}
        \norm{p_j(C)}=\mathbf{O}\left(\kappa_S\left(\frac{j}{\sqrt{\delta}}\right)^{d_{\max}-1}\norm{p_j}_{\max,[a,b]}\right),
    \end{equation}
    assuming $b-a=\mathbf{\Omega}(1)$ and $d_{\max}=\mathbf{O}(1)$.
\end{corollary}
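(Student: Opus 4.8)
\textbf{Proof proposal for \cref{cor:matrix_poly_bound}.}

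The plan is to reduce the matrix polynomial bound to the Jordan form and then apply the recursive Bernstein estimate block by block. First I would write $C = SJS^{-1}$ for the Jordan form decomposition, so that $p_j(C) = S\,p_j(J)\,S^{-1}$ by the definition \eq{jordan_form_transformation} of the Jordan form transformation (the transformation is well defined here because $p_j$ is a polynomial, hence analytic everywhere, and in particular at all eigenvalues of $C$). Then submultiplicativity of the spectral norm gives
\begin{equation}
    \norm{p_j(C)} \leq \norm{S}\norm{S^{-1}}\norm{p_j(J)} \leq \kappa_S \norm{p_j(J)},
\end{equation}
so it remains to bound $\norm{p_j(J)}$.

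Next I would exploit the block diagonal structure of $J$: since $J = \operatorname{diag}(J(\lambda_0,d_0),\dots,J(\lambda_{s-1},d_{s-1}))$, we have $p_j(J) = \operatorname{diag}(p_j(J(\lambda_l,d_l)))_l$ and hence $\norm{p_j(J)} = \max_l \norm{p_j(J(\lambda_l,d_l))}$. For a single Jordan block, \eq{jordan_form_transformation} gives $p_j(J(\lambda_l,d_l)) = \sum_{r=0}^{d_l-1} \frac{p_j^{(r)}(\lambda_l)}{r!} L_{d_l}^r$, a lower-triangular Toeplitz matrix whose $r$th subdiagonal entry is $\frac{p_j^{(r)}(\lambda_l)}{r!}$. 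Applying the scalar (block size $1$) version of \lem{block_norm} — i.e.\ bounding the spectral norm of a $d_l\times d_l$ matrix by the geometric mean of its max absolute row sum and max absolute column sum — I get
\begin{equation}
    \norm{p_j(J(\lambda_l,d_l))} \leq \sum_{r=0}^{d_l-1}\frac{\abs{p_j^{(r)}(\lambda_l)}}{r!} = \mathbf{O}\!\left(\max_{0\leq r\leq d_l-1}\abs{p_j^{(r)}(\lambda_l)}\right),
\end{equation}
since $d_l \leq d_{\max} = \mathbf{O}(1)$, so the $\frac{1}{r!}$ factors and the number of summands are both absorbed into the constant. Because $\lambda_l \in [a+\delta, b-\delta]$, the recursive Bernstein corollary bounds each derivative: $\abs{p_j^{(r)}(\lambda_l)} \leq \norm{p_j^{(r)}}_{\max,[a+\delta,b-\delta]} = \mathbf{O}\!\left((j/\sqrt{\delta})^{r}\norm{p_j}_{\max,[a,b]}\right)$ for $r \leq d_{\max}-1$. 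Since $j/\sqrt{\delta}$ could be larger or smaller than $1$, the dominant term over $r = 0,\dots,d_l-1$ is at worst $(j/\sqrt{\delta})^{d_{\max}-1}$ up to a constant (one can just bound $(j/\sqrt\delta)^r \le 1 + (j/\sqrt\delta)^{d_{\max}-1}$), so $\norm{p_j(J(\lambda_l,d_l))} = \mathbf{O}\!\left((j/\sqrt{\delta})^{d_{\max}-1}\norm{p_j}_{\max,[a,b]}\right)$ uniformly in $l$. Taking the max over $l$ and multiplying by $\kappa_S$ yields the claim.

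The only genuinely delicate point — and the one I would be most careful about — is the appeal to Bernstein's inequality at the exact eigenvalue location: the recursive corollary is stated for the shrunk interval $[a+(d-1)\delta, b-(d-1)\delta]$ after rescaling $\delta \mapsto \delta/(d-1)$, so I need the eigenvalues to sit at distance at least (a constant multiple of) $\delta$ from the endpoints $a,b$, which is exactly the hypothesis $\lambda_l \in [a+\delta,b-\delta]$; I should make sure the rescaling bookkeeping matches the margin in the hypothesis. Everything else — the submultiplicativity, the block diagonal decomposition, the Toeplitz row/column sum bound, and absorbing $d_{\max}$-dependent constants — is routine given $d_{\max}=\mathbf{O}(1)$ and $b-a=\mathbf{\Omega}(1)$.
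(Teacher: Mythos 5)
Your proof is correct and follows exactly the route the paper intends: the corollary is stated immediately after the recursive Bernstein estimate precisely so that it falls out of conjugating to the Jordan form, bounding the Toeplitz block by its row/column sums (the scalar case of \lem{block_norm}), and then estimating the derivatives at the eigenvalues via Bernstein. The one caveat you raise — that $(j/\sqrt{\delta})^r$ is maximized at $r=0$ when $j<\sqrt{\delta}$, so one really gets $\mathbf{O}\bigl(1+(j/\sqrt{\delta})^{d_{\max}-1}\bigr)$ rather than $\mathbf{O}\bigl((j/\sqrt{\delta})^{d_{\max}-1}\bigr)$ — is a genuine edge case, but it is an imprecision in the corollary's own statement, not in your argument; in every application in the paper one has $j\geq 1$ and $\delta=\mathbf{O}(1)$, so the displayed form holds.
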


We now apply this result to analyze the performance of Chebyshev-based eigenvalue algorithms. Let us first bound $\max_{j=0,1,\ldots,n-1}\norm{\mathbf{U}_{j}\left(\frac{A}{\alpha_A}\right)}$,
which appears in the asymptotic complexity expression of the algorithm for generating Chebyshev history states. By using the rescaling trick \eq{block_rescaling} for block encoding, we may assume without loss of generality that the normalization factor satisfies $\alpha_A\geq2\norm{A}$. For a fixed value of $j$, we thus have
\begin{equation}
    \norm{\mathbf{U}_{j}\left(\frac{A}{\alpha_A}\right)}
    =\mathbf{O}\left(\kappa_Sj^{d_{\max}-1}\norm{\mathbf{U}_j}_{\max,[-\frac{3}{4},\frac{3}{4}]}\right)
    =\mathbf{O}\left(\kappa_Sj^{d_{\max}-1}\norm{\mathbf{T}_j}_{\max,[-1,1]}\right)
    =\mathbf{O}\left(\kappa_Sj^{d_{\max}-1}\right),
\end{equation}
using the fact that $\mathbf{T}_j^2(x)-(x^2-1)\mathbf{U}_{j-1}^2(x)=1$. This implies the scaling
\begin{equation}
    \alphaU=\mathbf{O}\left(\kappa_Sn^{d_{\max}-1}\right)
\end{equation}
claimed in the remark succeeding \thm{generate_history}.

Next, we consider $\max_{l=0,1,\ldots,n-1}\norm{\sum_{k=l}^{n-1}\widetilde\beta_k\widetilde{\mathbf{T}}_{k-l}\left(\frac{A}{\alpha_A}\right)\ket{\psi}}$, which is used to describe complexity of the eigenvalue estimation algorithm. We again assume $\alpha_A\geq2\norm{A}$ without loss of generality, obtaining for a fixed $l$
\begin{equation}
\begin{aligned}
    \norm{\sum_{k=l}^{n-1}\widetilde\beta_k\widetilde{\mathbf{T}}_{k-l}\left(\frac{A}{\alpha_A}\right)\ket{\psi}}
    &\leq\norm{S}\norm{\sum_{k=l}^{n-1}\widetilde\beta_k\widetilde{\mathbf{T}}_{k-l}\left(J\right)}\norm{S^{-1}\ket{\psi}}\\
    &=\mathbf{O}\left(\norm{S}\norm{S^{-1}\ket{\psi}}
    n^{d_{\max}-1}
    \norm{\sum_{k=l}^{n-1}\widetilde\beta_k\widetilde{\mathbf{T}}_{k-l}}_{\max,[-1,1]}\right).
\end{aligned}
\end{equation}
Here, the shifted Chebyshev partial sum can be further bounded as
    \begin{equation}
    \begin{aligned}
    \abs{\sum_{k=l}^{n-1}\widetilde{\beta}_k \widetilde{\mathbf{T}}_{k-l}\left(x\right)}
    &=\abs{\sum_{k=l}^{n-1}\widetilde{\beta}_k \cos\left((k-l)\arccos\left(x\right)\right)}\\
    &=\abs{\sum_{k=l}^{n-1}\widetilde{\beta}_k\frac{e^{i(k-l)\arccos\left(x\right)}+e^{-i(k-l)\arccos\left(x\right)}}{2}}\\
    &\leq\frac{1}{2}\abs{\sum_{k=l}^{n-1}\widetilde{\beta}_ke^{ik\arccos\left(x\right)}}
    +\frac{1}{2}\abs{\sum_{k=l}^{n-1}\widetilde{\beta}_ke^{-ik\arccos\left(x\right)}}\\
    &\leq\frac{1}{2}\abs{\sum_{k=0}^{n-1}\widetilde{\beta}_ke^{ik\arccos\left(x\right)}}
    +\frac{1}{2}\abs{\sum_{k=0}^{l-1}\widetilde{\beta}_ke^{ik\arccos\left(x\right)}}\\
    &\quad+\frac{1}{2}\abs{\sum_{k=0}^{n-1}\widetilde{\beta}_ke^{-ik\arccos\left(x\right)}}
    +\frac{1}{2}\abs{\sum_{k=0}^{l-1}\widetilde{\beta}_ke^{-ik\arccos\left(x\right)}}.\\
    \end{aligned}
    \end{equation}
Note that all the four terms in the last step are one-sided Fourier partial sums, but we would recover the original polynomial $p$ if we had the two-sided Fourier series. Thus, by \eq{worst_partial}, they all have the scaling $\sim\norm{p}_{\max,[-1,1]}\log(n)$ in the worst case. Hence,
\begin{equation}
    \alphaT=\mathbf{O}\left(\norm{S}\norm{S^{-1}\ket{\psi}}
    n^{d_{\max}-1}\log(n)
    \norm{p}_{\max,[-1,1]}\right).
\end{equation}
This establishes the asymptotic scaling asserted in the remark succeeding \thm{qevt}.

\subsection{Average-case analysis with Carleson-Hunt theorem}
\label{append:analysis_cheby_carleson}

In the previous subsection, we have analyzed asymptotic scaling of the Chebyshev-based eigenvalue algorithms using a recursive version of the Bernstein's theorem. For the quantum differential equation algorithm, we have a polynomial that approximates the exponential function $e^{-i\alpha_Atx}$, so the above analysis yields
\begin{equation}
    \norm{\sum_{k=l}^{n-1}\widetilde\beta_k\widetilde{\mathbf{T}}_{k-l}\left(\frac{A}{\alpha_A}\right)\ket{\psi}}
    \lesssim\norm{S}\norm{S^{-1}\ket{\psi}}
    \log(n)
    \norm{e^{-i\alpha_At(\cdot)}}_{\max,[-1,1]}
    =\norm{S}\norm{S^{-1}\ket{\psi}}
    \log(n)
\end{equation}
for a diagonalizable input $A=S\Lambda S^{-1}$.
On the other hand,
\begin{equation}
\begin{aligned}
    \norm{e^{-itA}\ket{\psi}}
    &=\sqrt{\bra{\psi}S^{-\dagger} e^{it\Lambda}
    S^\dagger S
    e^{-it\Lambda}S^{-1}\ket{\psi}}\\
    &\geq\sqrt{\lambda_{\min}\left(S^\dagger S\right)\bra{\psi}S^{-\dagger} S^{-1}\ket{\psi}}\\
    &=\sqrt{\frac{1}{\norm{(S^\dagger S)^{-1}}}}
        \norm{S^{-1}\ket{\psi}}
    =\frac{\norm{S^{-1}\ket{\psi}}}{\norm{S^{-1}}}.
\end{aligned}
\end{equation}
This implies that 
\begin{equation}
    \frac{\alphaT}{\alphaFPsi}=\mathbf{O}\left(\kappa_S\log(n)\right)
\end{equation}
in the worst case as is claimed in the remark succeeding \thm{diff_eq}.

However, we show that one can shave off the $\log(n)$ factor when running the algorithm on an average input. To this end, recall that the one-sided Fourier expansion $\widetilde{\mathbf{H}}(g)(\omega)=\sum_{j=0}^\infty \xi_je^{-ij\omega}$ relates to the two-sided expansion $g(\omega)=\sum_{j=-\infty}^\infty \xi_je^{-ij\omega}$ via the Hilbert transform $\mathbf{H}(g)(\omega)$ as
\begin{equation}
    \widetilde{\mathbf{H}}(g)(\omega)
    =-\frac{i}{2}\mathbf{H}(g)(\omega)
    +\frac{1}{4\pi}\int_{-\pi}^{\pi}\mathrm{d}u\
    g(u)
    +\frac{1}{2}g(\omega).
\end{equation}
Therefore, by the Riesz inequality (\lem{riesz}), the one-sided expansion as a function of $\omega$ has the $\mathcal{L}_2$-norm
\begin{equation}
    \norm{\widetilde{\mathbf{H}}(g)}_{2,[-\pi,\pi]}
    \leq\frac{1}{2}\norm{\mathbf{H}(g)}_{2,[-\pi,\pi]}
    +\frac{1}{2\sqrt{2\pi}}\norm{g}_{1,[-\pi,\pi]}
    +\frac{1}{2}\norm{g}_{2,[-\pi,\pi]]}
    =\mathbf{O}\left(\norm{g}_{\max,[-\pi,\pi]}\right).
\end{equation}
Using the Carleson-Hunt theorem (\lem{carleson_hunt}), we have
\begin{equation}
    \norm{\mathbf{S}_*\widetilde{\mathbf{H}}(g)}_{2,[-\pi,\pi]}
    =\mathbf{O}\left(\norm{\widetilde{\mathbf{H}}(g)}_{2,[-\pi,\pi]}\right)
    =\mathbf{O}\left(\norm{g}_{\max,[-\pi,\pi]}\right)
\end{equation}
for the Fourier maximal function
$\mathbf{S}_*\widetilde{\mathbf{H}}(g)(\omega)=\sup_{n=0,1,\ldots}\abs{\sum_{j=0}^{n-1} \xi_je^{-ij\omega}}$.

Given a polynomial $p$ and its Chebyshev expansion $p(x)=\sum_{j=0}^{n-1}\widetilde{\beta}_j\widetilde{\mathbf{T}}_j(x)$, we can bound the shifted Chebyshev partial sum as
\begin{equation}
\begin{aligned}
    \norm{\sum_{k=l}^{n-1}\widetilde\beta_k\widetilde{\mathbf{T}}_{k-l}\left(\frac{A}{\alpha_A}\right)\ket{\psi}}
    &=\sqrt{\bra{\psi}S^{-\dagger}\left(\sum_{j=l}^{n-1}\widetilde\beta_j\widetilde{\mathbf{T}}_{j-l}\left(\frac{\Lambda}{\alpha_A}\right)\right)^\dagger
    S^\dagger S\sum_{k=l}^{n-1}\widetilde\beta_k\widetilde{\mathbf{T}}_{k-l}\left(\frac{\Lambda}{\alpha_A}\right)S^{-1}\ket{\psi}}\\
    &\leq\norm{S}\sqrt{\bra{\psi}S^{-\dagger}\left(\sum_{j=l}^{n-1}\widetilde\beta_j\widetilde{\mathbf{T}}_{j-l}\left(\frac{\Lambda}{\alpha_A}\right)\right)^\dagger
    \sum_{k=l}^{n-1}\widetilde\beta_k\widetilde{\mathbf{T}}_{k-l}\left(\frac{\Lambda}{\alpha_A}\right)S^{-1}\ket{\psi}},
\end{aligned}
\end{equation}
where the diagonal entries satisfy
    \begin{equation}
    \begin{aligned}
    \abs{\sum_{k=l}^{n-1}\widetilde{\beta}_k \widetilde{\mathbf{T}}_{k-l}\left(x\right)}^2
    &\leq\abs{\sum_{k=0}^{n-1}\widetilde{\beta}_ke^{ik\arccos\left(x\right)}}^2
    +\abs{\sum_{k=0}^{l-1}\widetilde{\beta}_ke^{ik\arccos\left(x\right)}}^2\\
    &\quad+\abs{\sum_{k=0}^{n-1}\widetilde{\beta}_ke^{-ik\arccos\left(x\right)}}^2
    +\abs{\sum_{k=0}^{l-1}\widetilde{\beta}_ke^{-ik\arccos\left(x\right)}}^2.\\
    \end{aligned}
    \end{equation}
The four terms above are all Fourier partial sums, and we only bound one of them without loss of generality.
Now denote the eigenvalues of $\arccos(A/\alpha_A)$ as $\omega_0,\ldots,\omega_{d-1}$, and suppose that they satisfy a probability distribution with density $q(\omega_0,\ldots,\omega_{d-1})$.
Then on average, 
\begin{equation}
\begin{aligned}
    \int_{-\pi}^{\pi}\mathrm{d}\omega_0\cdots\int_{-\pi}^{\pi}\mathrm{d}\omega_{d-1}\
    q(\omega_0,\ldots,\omega_{d-1})
    \abs{\sum_{j=0}^{n-1}\widetilde{\beta}_j e^{-ij\omega_l}}^2
    &\leq 
    \int_{-\pi}^{\pi}\mathrm{d}\omega_l\
    q_l(\omega_l)
    \mathbf{S}_*\widetilde{\mathbf{H}}(p(\cos(\omega_l)))^2\\
    &\leq\norm{q_l}_{\max,[-\pi,\pi]}
    \norm{\mathbf{S}_*\widetilde{\mathbf{H}}(p(\cos))}_{2,[-\pi,\pi]}^2\\
    &=\mathbf{O}(\norm{q_l}_{\max,[-\pi,\pi]}\norm{p}_{\max,[-1,1]}^2),
\end{aligned}
\end{equation}
where
\begin{equation}
    q_l(\omega_l)=\int_{-\pi}^{\pi}\mathrm{d}\omega_0\cdots\int_{-\pi}^{\pi}\mathrm{d}\omega_{l-1}\int_{-\pi}^{\pi}\mathrm{d}\omega_{l+1}\cdots\int_{-\pi}^{\pi}\mathrm{d}\omega_{d-1}\
        q(\omega_0,\ldots,\omega_{l-1},\omega_l,\omega_{l+1},\ldots,\omega_{d-1})
\end{equation}
is the $l$th marginal density function.
This means that on average, 
\begin{equation}
\begin{aligned}
    &\int_{-\pi}^{\pi}\mathrm{d}\omega_0\cdots\int_{-\pi}^{\pi}\mathrm{d}\omega_{d-1}\
    q(\omega_0,\ldots,\omega_{d-1})
    \sqrt{\bra{\psi}S^{-\dagger}\sum_{j=0}^{n-1}\widetilde{\beta}_j e^{ij\arccos\left(\frac{\Lambda}{\alpha_A}\right)}
    \sum_{k=0}^{n-1}\widetilde{\beta}_k e^{-ik\arccos\left(\frac{\Lambda}{\alpha_A}\right)}S^{-1}\ket{\psi}}\\
    &\leq\sqrt{\bra{\psi}S^{-\dagger}\int_{-\pi}^{\pi}\mathrm{d}\omega_0\cdots\int_{-\pi}^{\pi}\mathrm{d}\omega_{d-1}\
    q(\omega_0,\ldots,\omega_{d-1})\sum_{j=0}^{n-1}\widetilde{\beta}_j e^{ij\arccos\left(\frac{\Lambda}{\alpha_A}\right)}
    \sum_{k=0}^{n-1}\widetilde{\beta}_k e^{-ik\arccos\left(\frac{\Lambda}{\alpha_A}\right)}S^{-1}\ket{\psi}}\\
    &=\mathbf{O}\left(\sqrt{\max_l\norm{q_l}_{\max,[-\pi,\pi]}}\norm{p}_{\max,[-1,1]}\norm{S^{-1}\ket{\psi}}\right),
\end{aligned}
\end{equation}
where we have used Jensen's inequality and positive semidefinite property of matrices.
Assuming $\max_l\norm{q_l}_{\max,[-\pi,\pi]}=\mathbf{O}(1)$ is independent of the target polynomial $p(x)$, this gives the scaling
\begin{equation}
    \alphaT=\mathbf{O}\left(\norm{S}\norm{S^{-1}\ket{\psi}}
    \norm{p}_{\max,[-1,1]}\right)
\end{equation}
in analyzing the average-case runtime of eigenvalue transformation and
\begin{equation}
    \frac{\alphaT}{\alphaFPsi}=\mathbf{O}\left(\kappa_S\right)
\end{equation}
for the differential equation solver,
justifying the claims in the remarks following \thm{qevt} and \thm{diff_eq}.

We now briefly explain how our analysis can be modified to handle additional logarithmic factors, which is the case for the complexity of our algorithms. To this end, suppose we have some nonnegative function $g(\omega_0,\ldots,\omega_{d-1})$ of the eigenvalues of $\arccos(A/\alpha_A)$.
Then, by the Cauchy-Schwarz inequality,
\begin{equation}
\begin{aligned}
    &\ \int_{-\pi}^{\pi}\mathrm{d}\omega_0\cdots\int_{-\pi}^{\pi}\mathrm{d}\omega_{d-1}\
    q(\omega_0,\ldots,\omega_{d-1})
    g(\omega_0,\ldots,\omega_{d-1})\log^r\left(g(\omega_0,\ldots,\omega_{d-1})\right)\\
    \leq&\
    \sqrt{\int_{-\pi}^{\pi}\mathrm{d}\omega_0\cdots\int_{-\pi}^{\pi}\mathrm{d}\omega_{d-1}\
    q(\omega_{d-1},\ldots,\omega_{d-1})
    g^2(\omega_0,\ldots,\omega_{d-1})}\\
    &\ \cdot\sqrt{\int_{-\pi}^{\pi}\mathrm{d}\omega_0\cdots\int_{-\pi}^{\pi}\mathrm{d}\omega_{d-1}\
    q(\omega_0,\ldots,\omega_{d-1})
    \log^{2r}\left(g(\omega_0,\ldots,\omega_{d-1})\right)}.
\end{aligned}
\end{equation}
We already know how to bound the first term. To handle the second term, we need to prove the concavity of $\ln^{2r}(y)$ for large values of $y$. Since
\begin{equation}
\begin{aligned}
    \frac{\mathrm{d}\ln^{2r}(y)}{\mathrm{d}y}
    &=\frac{2r\ln^{2r-1}(y)}{y},\\
    \frac{\mathrm{d}^2\ln^{2r}(y)}{\mathrm{d}y^2}
    &=\frac{2r(2r-1)\ln^{2r-2}(y)-2r\ln^{2r-1}(y)}{y^2}
    =\frac{2r\ln^{2r-2}(y)}{y^2}\left(2r-1-\ln(y)\right),
\end{aligned}
\end{equation}
the function $\ln^{2r}(y)$ is indeed concave for $y$ sufficiently large. Therefore, Jensen's inequality yields
\begin{equation}
\begin{aligned}
    &\ \int_{-\pi}^{\pi}\mathrm{d}\omega_0\cdots\int_{-\pi}^{\pi}\mathrm{d}\omega_{d-1}\
    q(\omega_0,\ldots,\omega_{d-1})
    \log^{2r}\left(g(\omega_0,\ldots,\omega_{d-1})\right)\\
    \leq&\ \log^{2r}\left(\int_{-\pi}^{\pi}\mathrm{d}\omega_0\cdots\int_{-\pi}^{\pi}\mathrm{d}\omega_{d-1}\
    q(\omega_0,\ldots,\omega_{d-1})
    g(\omega_0,\ldots,\omega_{d-1})\right).
\end{aligned}
\end{equation}
The remaining analysis now proceeds as before.

\section{Analysis of Faber-based algorithms}
\label{append:analysis_faber}
In this appendix, we analyze the Faber-based eigenvalue algorithms in more detail. Specifically, we review previous bounds for matrix exponentials in \append{analysis_faber_prelim} based on the numerical abscissa. We show in \append{analysis_faber_crouzeix} how this idea can be generalized to bound a matrix function with the help of numerical range and Crouzeix-Palencia theorem. We prove an analogous bound in \append{analysis_faber_pseudospectrum} based on the notion of pseudospectrum. Finally, we bound the average complexity of our algorithms in \append{analysis_faber_carleson} using the Carleson-Hunt theorem.

\subsection{Matrix exponential bound based on numerical abscissa}
\label{append:analysis_faber_prelim}

We begin by reviewing a previous bound on norm of the matrix exponential function used in the analysis of quantum differential equation algorithms~\cite{Krovi2023improvedquantum} \cite[Appendix D]{leveque2007finite}. Specifically, our goal is to bound $\norm{e^{\tau C}}$ for $\tau>0$ and square matrix $C$. The bound we will present does not depend on the Jordan condition number, which avoids the potential issue of a ill-conditioned Jordan basis arising in the setting of \append{analysis_cheby}.

To this end, recall that any square matrix $C$ can be uniquely written as~\cite[Theorem 4.1.2]{horn2012matrix}
\begin{equation}
    C=\Re(C)+i\Im(C)
\end{equation}
for Hermitian matrices $\Re(C)$ and $\Im(C)$. Indeed, the existence of such a \emph{Toeplitz decomposition} follows by setting
\begin{equation}
    \Re(C)=\frac{C+C^\dagger}{2},\qquad
    \Im(C)=\frac{C-C^\dagger}{2i},
\end{equation}
whereas an equality $\Re_1(C)+i\Im_1(C)=\Re_2(C)+i\Im_2(C)$ would imply that $\Re_1(C)-\Re_2(C)=i\Im_2(C)-i\Im_1(C)$ is both Hermitian and anti-Hermitian, forcing $\Re_1(C)=\Re_2(C)$ and $\Im_1(C)=\Im_2(C)$. We then define the \emph{numerical abscissa} as
\begin{equation}
    \max_l\lambda_l(\Re(C))=\lambda_{\max}(\Re(C))=\lambda_{\max}\left(\frac{C+C^\dagger}{2}\right).
\end{equation}
As the following proposition shows, the numerical abscissa can be used to bound the growth of matrix exponentials~\cite[Theorem 17.1]{Trefethen05}. We present a short proof based on the Lie-Trotter product formula which may be more familiar to readers of this work, due to its role in the study of quantum simulation and beyond~\cite{CSTWZ19}.

\begin{proposition}
    For $\tau>0$ and square matrix $C$,
    \begin{equation}
        \norm{e^{\tau C}}\leq e^{\tau\lambda_{\max}(\Re(C))}.
    \end{equation}
\end{proposition}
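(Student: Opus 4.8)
The plan is to prove the bound $\norm{e^{\tau C}}\leq e^{\tau\lambda_{\max}(\Re(C))}$ via the Lie--Trotter product formula, reducing the estimate of the non-normal exponential to a product of unitaries and Hermitian-positive exponentials, each of which is easy to bound. First I would write the Toeplitz decomposition $C=\Re(C)+i\Im(C)$ with $\Re(C)=\tfrac{C+C^\dagger}{2}$ and $\Im(C)=\tfrac{C-C^\dagger}{2i}$, both Hermitian, as already recalled in the excerpt. Then I would invoke the Lie--Trotter product formula: for any two matrices $X,Y$,
\begin{equation}
    e^{X+Y}=\lim_{m\to\infty}\left(e^{X/m}e^{Y/m}\right)^m.
\end{equation}
Applying this with $X=\tau\Re(C)$ and $Y=i\tau\Im(C)$ gives
\begin{equation}
    e^{\tau C}=\lim_{m\to\infty}\left(e^{\frac{\tau}{m}\Re(C)}e^{i\frac{\tau}{m}\Im(C)}\right)^m.
\end{equation}

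The key estimate is then submultiplicativity of the spectral norm together with two elementary facts. Since $\Im(C)$ is Hermitian, $e^{i\frac{\tau}{m}\Im(C)}$ is unitary, so $\norm{e^{i\frac{\tau}{m}\Im(C)}}=1$. Since $\Re(C)$ is Hermitian, it has a spectral decomposition, and hence $\norm{e^{\frac{\tau}{m}\Re(C)}}=e^{\frac{\tau}{m}\lambda_{\max}(\Re(C))}$ (here using $\tau>0$ and $m>0$, so the exponent scales monotonically and the top eigenvalue of $\Re(C)$ controls the norm). Therefore
\begin{equation}
    \norm{\left(e^{\frac{\tau}{m}\Re(C)}e^{i\frac{\tau}{m}\Im(C)}\right)^m}
    \leq\left(\norm{e^{\frac{\tau}{m}\Re(C)}}\,\norm{e^{i\frac{\tau}{m}\Im(C)}}\right)^m
    =\left(e^{\frac{\tau}{m}\lambda_{\max}(\Re(C))}\right)^m
    =e^{\tau\lambda_{\max}(\Re(C))}.
\end{equation}
Finally, since the spectral norm is continuous and the bound on the right-hand side is independent of $m$, taking $m\to\infty$ preserves the inequality, yielding $\norm{e^{\tau C}}\leq e^{\tau\lambda_{\max}(\Re(C))}$, as claimed.

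The main obstacle, such as it is, is mostly a matter of citing the right standard facts cleanly rather than any genuine difficulty: one must be careful that the Lie--Trotter formula is invoked in a form valid for arbitrary (possibly non-commuting, non-normal) matrices, and that the limit-passage step is justified by continuity of $\norm{\cdot}$. A slightly more self-contained alternative, which I would mention as a remark, avoids Trotter entirely: one can instead show directly that $\frac{\mathrm d}{\mathrm d\tau}\norm{e^{\tau C}x}\leq\lambda_{\max}(\Re(C))\norm{e^{\tau C}x}$ for any vector $x$, by differentiating $\norm{y(\tau)}^2=\langle y(\tau),y(\tau)\rangle$ with $y(\tau)=e^{\tau C}x$ and using $\langle Cy,y\rangle+\langle y,Cy\rangle=2\Re\langle Cy,y\rangle=\langle (C+C^\dagger)y,y\rangle\leq 2\lambda_{\max}(\Re(C))\norm{y}^2$, then applying Gr\"onwall's inequality. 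Either route is routine; the Trotter-based argument is preferable here since, as the excerpt notes, product formulas are familiar to the intended readership.
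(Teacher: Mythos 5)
Your proof is correct and takes the same Lie--Trotter route as the paper, but in its cleanest limit form: you invoke the product formula $e^{\tau C}=\lim_{m\to\infty}\left(e^{\tau\Re(C)/m}\,e^{i\tau\Im(C)/m}\right)^m$ as a black box, bound each factor by submultiplicativity (the unitary factor has norm $1$, the Hermitian factor has norm $e^{\tau\lambda_{\max}(\Re(C))/m}$), and pass to the limit by continuity of $\norm{\cdot}$. The paper reaches the identical conclusion without citing the product-formula limit theorem as such: it proves a quantitative single-step Trotter error bound, divides $[0,\tau]$ into $r$ substeps, propagates the bound through the $r$-fold product, and sends an auxiliary $\epsilon$ to zero, so its version is longer but entirely self-contained and in effect re-derives the convergence your argument invokes; your Gr\"onwall alternative is also valid and is the classical log-norm proof of the same fact.
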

\begin{proof}
    We start with the Lie-Trotter splitting:
    \begin{equation}
        \norm{e^{\tau C}-e^{\tau\Re(C)}e^{i\tau\Im(C)}}
        \leq\frac{\tau^2}{2}\norm{\left[\Re(C),\Im(C)\right]}e^{\tau(\norm{\Re(C)}+\norm{\Im(C)})}
        \leq\frac{\tau^2}{2}\norm{C}^2e^{2\tau\norm{C}}.
    \end{equation}
    This implies through the triangle inequality that
    \begin{equation}
        \norm{e^{\tau C}}
        \leq\norm{e^{\tau\Re(C)}e^{i\tau\Im(C)}}+\norm{e^{\tau C}-e^{\tau\Re(C)}e^{i\tau\Im(C)}}
        \leq e^{\tau\lambda_{\max}(\Re(C))}+\frac{\tau^2}{2}\norm{C}^2e^{2\tau\norm{C}},
    \end{equation}
    where we have used the fact that $e^{i\tau\Im(C)}$ is unitary and that $\norm{e^{\tau\Re(C)}}=e^{\tau\lambda_{\max}(\Re(C))}$.

    This bound works well when $\tau\rightarrow0$. To analyze a long-time evolution, we divide it into $r$ steps and apply the above estimate within each step, obtaining
    \begin{equation}
        \norm{e^{\tau C}}
        \leq\norm{e^{\frac{\tau}{r}C}}^r
        \leq\left(e^{\frac{\tau}{r}\lambda_{\max}(\Re(C))}+\frac{\tau^2}{2r^2}\norm{C}^2e^{2\frac{\tau}{r}\norm{C}}\right)^r.
    \end{equation}
    Let us choose
    \begin{equation}
        r\geq\max\left\{2\tau\norm{C},\frac{e\tau^2\norm{C}^2}{2\epsilon}\right\}
    \end{equation}
    so that
    \begin{equation}
        e^{2\frac{\tau}{r}\norm{C}}\leq e,\qquad
        \frac{\tau^2}{2r^2}\norm{C}^2e\leq\frac{\epsilon}{r}.
    \end{equation}
    This implies
    \begin{equation}
    \begin{aligned}
        \norm{e^{\tau C}}
        &\leq\left(e^{\frac{\tau}{r}\lambda_{\max}(\Re(C))}+\frac{\epsilon}{r}\right)^r
        =e^{\tau\lambda_{\max}(\Re(C))}\left(1+\frac{\epsilon}{re^{\frac{\tau}{r}\lambda_{\max}(\Re(C))}}\right)^r\\
        &\leq e^{\tau\lambda_{\max}(\Re(C))} \exp\left(\frac{\epsilon}{e^{\frac{\tau}{r}\lambda_{\max}(\Re(C))}}\right)
        \leq e^{\tau\lambda_{\max}(\Re(C))}\exp\left(\sqrt{e}\epsilon\right).
    \end{aligned}
    \end{equation}
    Here, the second inequality follows from the estimate $(1+y)^{1/y}\leq e$ for $y>0$ and the last inequality follows from the observation that
    \begin{equation}
        \abs{\frac{\tau}{r}\lambda_{\max}(\Re(C))}
        \leq\frac{\abs{\lambda_{\max}(\Re(C))}}{2\norm{C}}
        \leq\frac{1}{2}.
    \end{equation}
    The claimed bound is now established by letting $\epsilon\rightarrow0$.
\end{proof}

We thus have that $\norm{e^{\tau C}}$ is bounded for all $\tau>0$ when the numerical abscissa satisfies $\lambda_{\max}(\Re(C))\leq0$, and it decays exponentially with $\tau$ when the strict inequality $\lambda_{\max}(\Re(C))<0$ holds. The bound does not depend on the Jordan condition number, and is thus applicable even when the Jordan basis is ill conditioned. Note however that any matrix $C$ satisfying $\lambda_{\max}(\Re(C))\leq0$ must automatically satisfy $\max_l\Re\left(\lambda_l(C)\right)\leq0$, so the numerical abscissa leads to a stronger
assumption on the input matrix in this respect.

\subsection{Matrix function bound with Crouzeix-Palencia theorem}
\label{append:analysis_faber_crouzeix}

Unlike~\cite{Krovi2023improvedquantum}, our Faber-based eigenvalue algorithms have complexity scaling with the spectral norm of various matrix functions. To derive a similar bound independent of the Jordan condition number, we will need the notion of numerical range which extends the numerical abscissa introduced above. See~\cite{roger1994topics,Wu2023,LiNote,SHAPIRONote} for more detailed discussions about this concept.

Given a square matrix $C$, we define its \emph{numerical range} (also known as the \emph{field of values}) to be the set
\begin{equation}
    \mathcal{W}(C)=\left\{\bra{\psi}C\ket{\psi}\ |\ \norm{\ket{\psi}}=1\right\}.
\end{equation}
It is clear that when $C$ is a normal matrix, $\mathcal{W}(C)$ is the convex hull generated by all its eigenvalues. In general, $\mathcal{W}(C)$ is still a convex and compact set containing all eigenvalues of $C$ following the Toeplitz-Hausdorff theorem, although it is no longer related to the spectra of $C$ in a straightforward way.

As an illustration of this concept, we have that $H$ is a Hermitian matrix if and only if $\mathcal{W}(H)\subseteq\mathbb{R}$ is a closed and bounded real interval. This follows directly from the characterizations of Hermitian matrices in \sec{prelim_matrix}. Also, $N$ is a normal matrix if and only if all its eigenvalues lie on the boundary $\partial\mathcal{W}(N)$ of its numerical range~\cite[Condition 66]{GRONE1987213}. 
In general, when an eigenvalue of a matrix lies on the boundary of its numerical range $\lambda_l(C)\in\mathcal{W}(C)$, then $\lambda_l(C)$ is nondefective and the eigenspace $\mathbf{Ker}\left(C-\lambda_l(C)I\right)$ is perpendicular to the remaining Jordan subspaces~\cite[Theorem 1.6.6]{roger1994topics}.
As yet another example, we point out that the $n$-by-$n$ lower shift matrix has numerical range $\mathcal{W}(L_n)=\cos(\frac{\pi}{n+1})\mathcal{D}$ given by the disk of radius $\cos(\frac{\pi}{n+1})$ centered at the origin~\cite{VANDANJAV201476}. This fact will be used to analyze the block encoding of the matrix Faber generating function in \sec{faber_history}.

The notion of numerical range provides a useful tool for bounding the spectral norm of matrix functions. This is confirmed by the following Crouzeix-Palencia theorem.

\begin{lemma}[Crouzeix-Palencia theorem {\cite{CrouzeixPalencia17,Ransford18}}]
\label{lem:crouzeix_palencia}
    Given a square matrix $C$, for any function $g$ analytic in the interior of $\mathcal{W}(C)$ and continuous up to the boundary $\partial\mathcal{W}(C)$,
    \begin{equation}
        \norm{g(C)}\leq\left(1+\sqrt{2}\right)\norm{g}_{\max,\mathcal{W}(C)}.
    \end{equation}
\end{lemma}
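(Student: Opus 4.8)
The final statement to prove is the Crouzeix-Palencia theorem, \lem{crouzeix_palencia}, which asserts that $\norm{g(C)}\leq(1+\sqrt{2})\norm{g}_{\max,\mathcal{W}(C)}$ for any matrix $C$ and any $g$ analytic in the interior of $\mathcal{W}(C)$ and continuous up to the boundary. This is a deep result, originally conjectured by Crouzeix with constant $2$ and proved with the constant $1+\sqrt{2}$ by Crouzeix and Palencia; I would not attempt to reprove it from scratch but instead cite the original sources~\cite{CrouzeixPalencia17,Ransford18} as the paper already does. The plan is therefore to present the proof at the level of a clean exposition following Ransford's simplification, flagging it as a known theorem whose details are in the references.

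The approach I would take runs as follows. First, by a density argument it suffices to treat the case where $g$ is analytic on a neighborhood of the (convex, compact) set $\mathcal{W}(C)$; one can also assume $\norm{g}_{\max,\mathcal{W}(C)}\leq 1$ by scaling. The key object is the Cauchy transform: using the conformal map $\varphi$ from the exterior of the unit disk onto the exterior of $\mathcal{W}(C)$ (available since $\mathcal{W}(C)$ is convex with nonempty interior, or via a limiting argument if $\mathcal{W}(C)$ is degenerate), one writes $g(C)$ via a boundary-integral representation. The heart of the argument is the inequality $\norm{g(C)}\leq \norm{g(C)^*+h(C)}$ for a suitably chosen auxiliary analytic function $h$ with $h(C)$ ``complementary'' to $\overline{g(C)}$, combined with the fact that the Cauchy-transform operator on $\partial\mathcal{W}(C)$ has norm at most $1+\sqrt{2}$ against the sup norm (this last fact, that the Cauchy kernel on the boundary of a convex domain induces an operator of norm $\leq 1+\sqrt{2}$, is the technical crux established in~\cite{CrouzeixPalencia17}). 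Putting these pieces together yields the stated constant.

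The main obstacle, and the reason I would defer to the cited literature rather than grind it out, is the norm bound on the Cauchy transform: showing that for a convex curve $\Gamma=\partial\mathcal{W}(C)$ the operator $f\mapsto \frac{1}{2\pi i}\int_\Gamma \frac{f(w)}{w-(\cdot)}\,dw$ maps $\mathcal{L}^\infty(\Gamma)$ into functions whose harmonic extension, paired against $C$, picks up only a factor $1+\sqrt{2}$. This requires the geometry of convex domains in an essential way and is precisely where Crouzeix-Palencia improved on earlier constants. For the present paper the theorem is used only as a black box to control $\norm{g(C)}$ in terms of $\norm{g}_{\max,\mathcal{W}(C)}$ — for instance in bounding $\norm{L_n\mathbf{\Psi}(L_n^{-1})}$ via $\mathcal{W}(L_n)=\cos(\frac{\pi}{n+1})\mathcal{D}$ and in the Faber truncation bounds \lem{trunc_matrix_exp_faber} — so I would simply state the lemma with its attribution and remark that the proof, being substantial and not specific to our setting, can be found in~\cite{CrouzeixPalencia17,Ransford18}.
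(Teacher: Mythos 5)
Your proposal matches the paper's handling exactly: the paper states this lemma purely as a cited black box from \cite{CrouzeixPalencia17,Ransford18} and offers no proof of its own, and you correctly identify it as a deep external result to be used rather than rederived. Your informal sketch of the Ransford-Schwenninger route (density reduction, Cauchy-transform representation, pairing $g(C)$ with an auxiliary $h(C)^*$, and extracting $1+\sqrt{2}$ from an operator-theoretic lemma) is a reasonable gloss, though the precise role of the $1+\sqrt{2}$ is slightly misattributed---it emerges from the final operator lemma applied to $\norm{g(C)+h(C)^*}\leq 2$, not from a norm bound on the Cauchy operator itself; this is immaterial since you correctly defer the details to the references, as the paper does.
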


To illustrate the power of this result, let us prove a numerical-abscissa-type bound similar to that of the previous subsection with only a one-line calculation:
\begin{equation}
    \norm{e^{\tau C}}
    \lesssim\norm{e^{\tau(\cdot)}}_{\max,\mathcal{W}(C)}
    =\max e^{\tau\Re(\mathcal{W}(C))}
    =\max e^{\tau\mathcal{W}(\Re(C))}
    =e^{\tau\lambda_{\max}(\Re(C))}.
\end{equation}
Here, we have used the property that the calculation of numerical range ``commutes'' with taking the real part, which is formally stated as:
\begin{proposition}
    For any square matrix $C$,
    \begin{equation}
        \Re(\mathcal{W}(C))=\mathcal{W}(\Re(C)).
    \end{equation}
    Here $\Re$ on the left-hand side takes the real part of a complex number, whereas $\Re$ on the right-hand side returns the Hermitian part of a square matrix.
\end{proposition}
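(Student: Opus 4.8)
The plan is to prove the identity $\Re(\mathcal{W}(C)) = \mathcal{W}(\Re(C))$ by unwinding both sides directly from the definition of the numerical range and the definition of $\Re$ on matrices. First I would fix a unit vector $\ket{\psi}$ and compute $\Re\left(\bra{\psi}C\ket{\psi}\right)$ explicitly: writing $C = \Re(C) + i\Im(C)$ with $\Re(C) = \frac{C+C^\dagger}{2}$ and $\Im(C) = \frac{C-C^\dagger}{2i}$ both Hermitian (as recalled in \append{analysis_faber_prelim}), we have $\bra{\psi}C\ket{\psi} = \bra{\psi}\Re(C)\ket{\psi} + i\bra{\psi}\Im(C)\ket{\psi}$, where both $\bra{\psi}\Re(C)\ket{\psi}$ and $\bra{\psi}\Im(C)\ket{\psi}$ are real numbers because $\Re(C)$ and $\Im(C)$ are Hermitian. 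Hence $\Re\left(\bra{\psi}C\ket{\psi}\right) = \bra{\psi}\Re(C)\ket{\psi}$ for every unit $\ket{\psi}$.

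The theorem then follows immediately from this pointwise identity by taking the set of all values as $\ket{\psi}$ ranges over unit vectors. Explicitly, $\Re(\mathcal{W}(C)) = \{\Re(\bra{\psi}C\ket{\psi}) : \norm{\ket{\psi}}=1\} = \{\bra{\psi}\Re(C)\ket{\psi} : \norm{\ket{\psi}}=1\} = \mathcal{W}(\Re(C))$. The two set equalities at the ends are just the definition of $\mathcal{W}$ (and of the image of a set under $\Re$), and the middle equality is the pointwise fact just established. There is essentially no analytic content here — only the observation that the diagonal entry $\bra{\psi}C\ket{\psi}$ of a general matrix has real part equal to the diagonal entry $\bra{\psi}\Re(C)\ket{\psi}$ of its Hermitian part, which in turn rests on the elementary fact that a Hermitian form takes only real values.

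I do not anticipate a genuine obstacle; the only point worth stating carefully is the justification that $\bra{\psi}\Re(C)\ket{\psi} \in \mathbb{R}$, which is exactly the characterization of Hermitian matrices listed in \sec{prelim_matrix} ($H^\dagger = H$ if and only if $\bra{\psi}H\ket{\psi}\in\mathbb{R}$ for all unit $\ket{\psi}$). One could also phrase the whole argument in one line: $\Re(\bra{\psi}C\ket{\psi}) = \tfrac{1}{2}\left(\bra{\psi}C\ket{\psi} + \overline{\bra{\psi}C\ket{\psi}}\right) = \tfrac{1}{2}\left(\bra{\psi}C\ket{\psi} + \bra{\psi}C^\dagger\ket{\psi}\right) = \bra{\psi}\tfrac{C+C^\dagger}{2}\ket{\psi} = \bra{\psi}\Re(C)\ket{\psi}$, using $\overline{\bra{\psi}C\ket{\psi}} = \bra{\psi}C^\dagger\ket{\psi}$. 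This is the form I would actually write in the paper, as it avoids even mentioning $\Im(C)$.
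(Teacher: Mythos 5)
Your proof is correct and coincides with the paper's: both reduce the claim to the pointwise identity $\Re(\bra{\psi}C\ket{\psi}) = \bra{\psi}\tfrac{C+C^\dagger}{2}\ket{\psi}$ and then pass to the set of all values over unit $\ket{\psi}$. The one-line version you give at the end is essentially verbatim the chain of equalities in the paper's own proof.
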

\begin{proof}
    The claimed equality follows from a direct verification
    \begin{equation}
        \Re(\mathcal{W}(C))
        =\left\{\Re(\bra{\psi}C\ket{\psi})\ |\ \norm{\ket{\psi}}=1\right\}
        =\left\{\bra{\psi}\frac{C+C^\dagger}{2}\ket{\psi}\ |\ \norm{\ket{\psi}}=1\right\}
        =\mathcal{W}(\Re(C)).
    \end{equation}
\end{proof}

Therefore, by applying the Crouzeix-Palencia theorem, we can reproduce the numerical abscissa bound from the previous subsection up to a constant prefactor of $1+\sqrt{2}$. The advantage of this approach, however, is that it can be easily generalized to other analytic functions and regions of the complex plane, through which our Faber-based eigenvalue algorithms can be analyzed.

Let us first bound $\norm{\frac{\mathbf{F}_j'\left(\frac{A}{\alpha_A}\right)}{j}}$, which appears in the asymptotic complexity expression of the algorithm for generating Faber history states. To this end, we assume that the numerical range is enclosed by the Faber region as $\mathcal{W}\left(\frac{A}{\alpha_A}\right)\subseteq\mathcal{E}$. Using the Crouzeix-Palencia theorem, we have
\begin{equation}
    \norm{\frac{\mathbf{F}_j'\left(\frac{A}{\alpha_A}\right)}{j}}
    \lesssim\norm{\frac{\mathbf{F}_j'}{j}}_{\max,\mathcal{W}\left(\frac{A}{\alpha_A}\right)}
    =\norm{\frac{\mathbf{F}_j'}{j}}_{\max,\partial\mathcal{W}\left(\frac{A}{\alpha_A}\right)}
    \lesssim\norm{\mathbf{F}_j}_{\max,\partial\mathcal{W}\left(\frac{A}{\alpha_A}\right)}
    \leq\norm{\mathbf{F}_j}_{\max,\mathcal{E}}
    =\mathbf{O}(1),
\end{equation}
where the second inequality follows from the Bernstein's theorem for twice continuously differentiable curves stated below, and the last equality holds as long as boundary of the Faber region has a finite total rotation $\mathbf{V}(\partial\mathcal{E})$. Under all these assumptions on the Faber region (enclosing the numerical range $\mathcal{W}\left(\frac{A}{\alpha_A}\right)$ with a twice continuously differentiable and finite rotation boundary), we have
\begin{equation}
    \alphaFP=\mathbf{O}(1),
\end{equation}
which justifies the claim in the remark succeeding \thm{faber_history}.
\begin{lemma}[Bernstein's theorem for twice continuously differentiable Jordan curves {\cite[Eqs.\ (26) and (27)]{Kalmykov21}}]
    Given a degree-$j$ polynomial $p_j$ and a Jordan curve $\mathcal{C}$ twice continuously differentiable in a neighborhood of $z\in\mathcal{C}$, it holds
    \begin{equation}
        \abs{p_j'(z)}\leq\left(1+\mathbf{o}(1)\right)j2\pi\omega_{\mathcal{C}}(z)\norm{p_j}_{\max,\mathcal{C}},
    \end{equation}
    where $\omega_{\mathcal{C}}(z)$ is the equilibrium density. Thus, if the entire curve $\mathcal{C}$ is twice continuously differentiable,
    \begin{equation}
        \norm{p_j'}_{\max,\mathcal{C}}
        \leq\left(1+\mathbf{o}(1)\right)j2\pi\max_{z\in\mathcal{C}}\omega_{\mathcal{C}}(z)\norm{p_j}_{\max,\mathcal{C}}
        =\mathbf{O}\left(j\norm{p_j}_{\max,\mathcal{C}}\right).
    \end{equation}
\end{lemma}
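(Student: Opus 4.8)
The plan is to deduce the second (global) bound from the first (pointwise) one, and then prove the pointwise bound by transporting the classical Bernstein inequality to $\mathcal{C}$ through the exterior conformal map, generalizing the interval case treated in \append{analysis_cheby_bernstein}. For the reduction: on a curve that is twice continuously differentiable everywhere, the equilibrium density $\omega_{\mathcal{C}}$ is continuous (in fact Hölder), hence bounded, so $\max_{z\in\mathcal{C}}\omega_{\mathcal{C}}(z)<\infty$; taking the supremum over $z$ in the pointwise estimate then gives the displayed $\mathbf{O}(j\norm{p_j}_{\max,\mathcal{C}})$ consequence. Thus everything reduces to the pointwise inequality at a point $z_0$ where $\mathcal{C}$ is twice continuously differentiable.

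First I would introduce the bounded Jordan domain $\Omega$ with $\partial\Omega=\mathcal{C}$ and its exterior Riemann map $\mathbf{\Phi}:\Omega^c\to\mathcal{D}^c$ (as in \sec{faber_prelim}), and record the identity $2\pi\,\omega_{\mathcal{C}}(z)=\abs{\mathbf{\Phi}'(z)}$ for $z\in\mathcal{C}$, which comes from $\log\abs{\mathbf{\Phi}}$ being the Green's function with pole at infinity; so the claim becomes $\abs{p_j'(z_0)}\le(1+\mathbf{o}(1))\,j\abs{\mathbf{\Phi}'(z_0)}\,\norm{p_j}_{\max,\mathcal{C}}$. The two imported facts are (i) the Bernstein--Walsh growth estimate $\abs{p_j(z)}\le\norm{p_j}_{\max,\mathcal{C}}\,\abs{\mathbf{\Phi}(z)}^j$ for $z\in\Omega^c$, and (ii) boundary regularity: since $\mathcal{C}$ is $C^2$ near $z_0$, the Kellogg--Warschawski theorem makes $\mathbf{\Phi}$ a $C^1$ (indeed $C^{1,\alpha}$) map up to the boundary there, so $\abs{\mathbf{\Phi}(z)}=1+\abs{\mathbf{\Phi}'(z_0)}\,\mathbf{Dist}(z,\mathcal{C})+\mathbf{o}\!\left(\mathbf{Dist}(z,\mathcal{C})\right)$ as $z\to z_0$ from outside $\Omega$.

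With these in hand, the crude version is quick: apply Cauchy's formula $p_j'(z_0)=\frac{1}{2\pi i}\int_{\abs{z-z_0}=\rho}p_j(z)(z-z_0)^{-2}\,\mathrm{d}z$, bound $\abs{p_j}$ on the outer part of the circle by (i)--(ii) and on the inner part by the maximum on a fixed outer level line, obtaining $\abs{p_j'(z_0)}\le\rho^{-1}\bigl(1+\abs{\mathbf{\Phi}'(z_0)}\rho+\mathbf{o}(\rho)\bigr)^{j}\norm{p_j}_{\max,\mathcal{C}}$; taking $\rho=c/j$ and $c\to0$ already yields $\abs{p_j'(z_0)}=\mathbf{O}(j\norm{p_j}_{\max,\mathcal{C}})$, which is all that is needed downstream for $\alphaFP=\mathbf{O}(1)$ in \append{analysis_faber_crouzeix}. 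To get the \emph{sharp} constant I would instead linearize at $z_0$: the map $w=\mathbf{\Phi}(z)$ sends $\mathcal{C}$ locally onto the unit circle with $C^1$ (asymptotically isometric, by (ii)) distortion, $p_j\circ\mathbf{\Psi}$ is analytic and bounded by $\norm{p_j}_{\max,\mathcal{C}}$ in a one-sided neighbourhood of $\abs{w}=1$, and a further conformal chart flattening the circle to a line converts the problem into Bernstein's classical inequality $\abs{f'}\le\sigma\norm{f}_\infty$ for functions of exponential type, with type $\sigma=(1+\mathbf{o}(1))\,j\abs{\mathbf{\Phi}'(z_0)}$ read off from $\abs{\mathbf{\Phi}(z)}^j\sim e^{\,j\abs{\mathbf{\Phi}'(z_0)}\mathbf{Dist}(z,\mathcal{C})}$; pulling the bound back through charts that are $C^1$-close to isometries recovers the stated constant. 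This is the route of Totik's polynomial--inverse--image technique, refined by Kalmykov.

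The hard part is this last step---the sharp constant rather than the mere order---because it hinges on quantitative boundary regularity of $\mathbf{\Phi}$ (this is where the $C^2$ hypothesis is spent) and on producing an honest local model in which the classical Bernstein inequality applies with exactly the right exponential type, so that the $\mathbf{o}(1)$ distortions genuinely vanish as $j\to\infty$. Since every use of this lemma in the paper only invokes the coarse estimate $\norm{p_j'}_{\max,\mathcal{C}}=\mathbf{O}(j\norm{p_j}_{\max,\mathcal{C}})$, in practice one may quote the sharp form from Kalmykov and fall back on the short Cauchy-formula argument above for self-containedness.
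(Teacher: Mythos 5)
The paper does not prove this lemma at all---it is quoted verbatim from \cite[Eqs.\ (26) and (27)]{Kalmykov21}, and the only thing the paper does with it downstream is extract the coarse estimate $\norm{p_j'}_{\max,\mathcal{C}}=\mathbf{O}\left(j\norm{p_j}_{\max,\mathcal{C}}\right)$ for use in \append{analysis_faber_crouzeix}. So you are supplying a proof the paper chose to outsource, and there is no authorial argument to compare against.

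Your route is sound where it matters. The reduction from the pointwise to the global bound via boundedness of the equilibrium density on a $C^2$ curve is correct, the identity $2\pi\,\omega_{\mathcal{C}}(z)=\abs{\mathbf{\Phi}'(z)}$ (Green's function $\log\abs{\mathbf{\Phi}}$, normal derivative on $\mathcal{C}$) is the right translation, and Bernstein--Walsh plus Kellogg--Warschawski are exactly the two external inputs one should reach for. The Cauchy-formula derivation of the coarse bound is legitimate: on the inner arc one uses maximum modulus, on the outer arc $\abs{p_j}\le\norm{p_j}_{\max,\mathcal{C}}\abs{\mathbf{\Phi}}^j$ with $\abs{\mathbf{\Phi}(z)}\le 1+\abs{\mathbf{\Phi}'(z_0)}\rho+\mathbf{o}(\rho)$, and $\abs{p_j'(z_0)}\le\rho^{-1}\bigl(1+\abs{\mathbf{\Phi}'(z_0)}\rho+\mathbf{o}(\rho)\bigr)^j\norm{p_j}_{\max,\mathcal{C}}$ follows. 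One slip: you write ``taking $\rho=c/j$ and $c\to 0$''---that does not work, since the prefactor $j/c$ diverges. You should take $c$ fixed of order $1/\abs{\mathbf{\Phi}'(z_0)}$; the minimum of $c^{-1}e^{c\abs{\mathbf{\Phi}'(z_0)}}$ is $e\abs{\mathbf{\Phi}'(z_0)}$, giving $\abs{p_j'(z_0)}\le\bigl(e+\mathbf{o}(1)\bigr)\,j\abs{\mathbf{\Phi}'(z_0)}\norm{p_j}_{\max,\mathcal{C}}$. That is an explicit $\mathbf{O}(j\norm{p_j}_{\max,\mathcal{C}})$, just with constant $e$ rather than $1$. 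For the lemma's sharp $(1+\mathbf{o}(1))$ constant, your linearization sketch is the standard Totik/Kalmykov strategy but is not a proof as written---you correctly flag this and defer to the citation, which is the honest thing to do and mirrors what the paper itself does.
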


Next, we consider $\max_{l=0,1,\ldots,n-1}\norm{\sum_{k=l}^{n-1}\beta_k\mathbf{F}_{k-l}\left(\frac{A}{\alpha_A}\right)\ket{\psi}}$, which is used to describe complexity of the Faber eigenvalue transformation algorithm. Assuming that $\mathcal{W}\left(\frac{A}{\alpha_A}\right)\subseteq\mathcal{E}$, we apply the Crouzeix-Palencia theorem to get
\begin{equation}
\begin{aligned}
    \norm{\sum_{k=l}^{n-1}\beta_k\mathbf{F}_{k-l}\left(\frac{A}{\alpha_A}\right)\ket{\psi}}
    &\leq\norm{\sum_{k=l}^{n-1}\beta_k\mathbf{F}_{k-l}\left(\frac{A}{\alpha_A}\right)}
    \lesssim\norm{\sum_{k=l}^{n-1}\beta_k\mathbf{F}_{k-l}}_{\max,\mathcal{W}\left(\frac{A}{\alpha_A}\right)}\\
    &\leq\norm{\sum_{k=l}^{n-1}\beta_k\mathbf{F}_{k-l}}_{\max,\mathcal{E}}
    =\norm{\sum_{k=l}^{n-1}\beta_k\mathbf{F}_{k-l}}_{\max,\partial\mathcal{E}}
    =\norm{\sum_{k=l}^{n-1}\beta_k\mathbf{F}_{k-l}}_{\max,\mathbf{\Psi}(\partial\mathcal{D})}.
\end{aligned}
\end{equation}
Recall that for regions with finite total rotation, we have the integral representation of Faber polynomials
\begin{equation}
    \mathbf{F}_{k}(\mathbf{\Psi}(e^{i\omega}))
    =
    \begin{cases}
        \frac{1}{\pi}\int_{0}^{2\pi}e^{ik\varphi}\mathrm{d}_{\varphi}v(\varphi,\omega),\qquad &k\geq1,\\
        \frac{1}{2\pi}\int_{0}^{2\pi}\mathrm{d}_{\varphi}v(\varphi,\omega)=1,\qquad &k=0,\\
    \end{cases}
\end{equation}
with the angular function $v(\varphi,\omega)=\mathbf{Arg}\left(\mathbf{\Psi}(e^{i\varphi})-\mathbf{\Psi}(e^{i\omega})\right)$.
This means that
\begin{equation}
\begin{aligned}
    \abs{\sum_{k=n-j}^n\beta_k\mathbf{F}_{k+j-n}(\mathbf{\Psi}(e^{i\omega}))}
    &=\abs{\sum_{k=n-j}^n\beta_k
    \frac{1}{\pi}\int_{0}^{2\pi}e^{i(k+j-n)\varphi}\mathrm{d}_{\varphi}v(\varphi,\omega)-\beta_{n-j}}\\
    &\leq\frac{1}{\pi}\int_{0}^{2\pi}\abs{\sum_{k=n-j}^n\beta_k e^{ik\varphi}}\abs{\mathrm{d}_{\varphi}v(\varphi,\omega)}+\abs{\beta_{n-j}}\\
    &\leq\frac{1}{\pi}\int_{0}^{2\pi}
    \abs{\mathbf{S}_{[n-j,n]}(\varphi)}
    \abs{\mathrm{d}_{\varphi}v(\varphi,\omega)}+\norm{p}_{\max,\partial\mathcal{E}}\\
    &\leq\frac{\mathbf{V}(\partial\mathcal{E})}{\pi}\left(\frac{4}{\pi^2}\log(n)+\mathbf{O}(1)\right)\norm{p}_{\max,\partial\mathcal{E}}
    +\norm{p}_{\max,\partial\mathcal{E}},
\end{aligned}
\end{equation}
where the coefficients $\beta_k$ are the Fourier coefficients
\begin{equation}
    \beta_k=\frac{1}{2\pi}\int_{0}^{2\pi}\mathrm{d}v\
    e^{iv}\frac{p(\mathbf{\Psi}(e^{iv}))}{e^{i(k+1)v}}.
\end{equation}
Hence,
\begin{equation}
    \alphaF=\mathbf{O}\left(\log(n)\norm{p}_{\max,\partial\mathcal{E}}\right),
\end{equation}
which establishes the claimed asymptotic complexity in the remark following \thm{qevt_faber}.

Finally, for solving linear differential equations, we have the following requirements on the Faber region $\mathcal{E}$ that encloses numerical range of the input matrix.
\begin{enumerate}
    \item To ensure that $\alphaFP=\mathbf{O}(1)$ in \eq{alphaFP_alphaPsiFaber}, we require that the boundary $\partial\mathcal{E}$ is twice continuously differentiable.
    \item To ensure that $\alphaF=\mathbf{O}\left(\log(n)\norm{p}_{\max,\partial\mathcal{E}}\right)$ in \eq{alphaF_alphaPPsi}, we need $\partial\mathcal{E}$ to have a finite total rotation.
    \item To ensure the validity of \lem{trunc_matrix_exp_faber}, we assume that $\mathcal{E}$ is convex and symmetric with respect to the real axis, lying on the left half of the complex plane.
\end{enumerate}
One possible Faber region satisfying all the above requirements is shown in \fig{region_general2}, which is a smooth deformation of the Elliott semidisk of \fig{region_general}. This is sufficient to prove the asymptotic query complexity of the differential equation solver. Note however that this is not the only option for choosing $\mathcal{E}$. One may instead numerically construct conformal maps that are easier to compute leading to better constant prefactors in the gate complexity. We leave a complete study of such optimizations as a subject for future work.

\subsection{Matrix function bound based on pseudospectrum}
\label{append:analysis_faber_pseudospectrum}
Using the notion of numerical range and Crouzeix-Palencia theorem, we have shown that complexity of the Faber-based method can be independent of the condition number of the basis transformation. This yields an efficient QEVT algorithm even when the input matrix has an ill-conditioned Jordan basis.
We now provide an alternative analysis of the Faber-based algorithms based on the concept of \emph{pseudospectrum}, which also applies to matrices with ill-conditioned Jordan basis. We begin by reviewing the definition of pseudospectrum and its properties most relevant to our paper, referring the reader to~\cite{Trefethen05,PseudospectraNotes} for further discussions of this topic.

For a given square matrix $C$ and $\delta>0$, we define the \emph{$\delta$-pseudospectrum} to be the set of complex numbers satisfying any one of the following equivalent conditions:
\begin{equation*}
\begin{aligned}
    \mathcal{S}_\delta(C)&=\left\{z\in\mathbb{C},\ \norm{C\ket{\psi}-z\ket{\psi}}<\delta\text{ for some }\norm{\ket{\psi}}=1\right\}\\
    &=\left\{z\in\mathbb{C},\ z\text{ is an eigenvalue of }\widetilde C\text{ for some }\norm{\widetilde C-C}<\delta\right\}\\
    &=\left\{z\in\mathbb{C},\ \norm{\frac{1}{z-C}}>\frac{1}{\delta}\right\}\\
    &=\left\{z\in\mathbb{C},\ \sigma_{\min}(C-z)<\delta\right\}.
\end{aligned}
\end{equation*}
Thus, pseudospectrum relaxes the definition of eigenvalues $C\ket{\psi}=\lambda\ket{\psi}$ or $\sigma_{\min}(C-\lambda)=0$ to approximately hold as $\norm{C\ket{\psi}-z\ket{\psi}}<\delta$ or $\sigma_{\min}(C-z)<\delta$, whereas the usual spectrum is recovered from the infinite intersection $\cap_{\delta>0}\mathcal{S}_\delta(C)$. This is an open and bounded set in the complex plane that contains all $\delta$-balls around eigenvalues of $C$:
\begin{equation*}
    \mathcal{S}_\delta(C)\supseteq\left\{\lambda_j(C)\right\}+\mathcal{D}(0,\delta),
\end{equation*}
with equality if and only if $C$ is a normal matrix. On the other hand, it is upper bounded by the numerical range up to a perturbation of $\delta$:
\begin{equation*}
    \mathcal{S}_\delta(C)\subseteq\mathcal{W}(C)+\mathcal{D}(0,\delta).
\end{equation*}

By definition, we have that $z\in\mathcal{S}_\delta(C)$ if and only if $\norm{\frac{1}{z-C}}>\frac{1}{\delta}$, and $z\notin\mathcal{S}_\delta(C)$ if and only if $\norm{\frac{1}{z-C}}\leq\frac{1}{\delta}$. We then use a continuity argument to conclude that $\norm{\frac{1}{z-C}}=\frac{1}{\delta}$ on the boundary $z\in\partial\mathcal{S}_\delta(C)$. More generally, letting $\mathcal{C}$ be a contour enclosing the pseudospectrum $\mathcal{S}_\delta(C)$ that are all contained in the region of analyticity of $f$, we have from the Cauchy's integral formula
\begin{equation*}
    f(C)=\frac{1}{2\pi i}\int_{\mathcal{C}}\mathrm{d}z\ f(z)(zI-C)^{-1}
\end{equation*}
that
\begin{equation*}
    \norm{f(C)}\leq\frac{\mathbf{L}_\delta(\mathcal{C})}{2\pi\delta}\norm{f}_{\max,\mathcal{C}}
\end{equation*}
for $\mathbf{L}_\delta(\mathcal{C})=\int_{\mathcal{C}}\abs{\mathrm{d}z}$ arc length of the contour $\mathcal{C}$.
This pseudospectrum-type bound for matrix functions can be compared to that of \lem{crouzeix_palencia} based on the notion of numerical range.

We now analyze the performance of Faber-based eigenvalue transformation algorithms using pseudospectrum. We first bound $\norm{\frac{\mathbf{F}_j'\left(\frac{A}{\alpha_A}\right)}{j}}$, which shows up in the asymptotic complexity of creating the Faber history state. Similar to the assumption on numerical range, we assume that the pseudospectrum is encosed by the Faber region as $\mathcal{S}_\delta\left(\frac{A}{\alpha_A}\right)\subseteq\mathcal{E}$. We have
\begin{equation*}
    \norm{\frac{\mathbf{F}_j'\left(\frac{A}{\alpha_A}\right)}{j}}
    \lesssim\norm{\frac{\mathbf{F}_j'}{j}}_{\max,\partial\mathcal{E}}
    \lesssim\norm{\mathbf{F}_j}_{\max,\partial\mathcal{E}}
    =\norm{\mathbf{F}_j}_{\max,\mathcal{E}}
    =\mathbf{O}(1),
\end{equation*}
where in the second asymptotic estimate we have used Bernstein's theorem assuming the boundary $\mathcal{E}$ is sufficiently smooth, and in the last asymptotic estimate we have assumed that boundary of the Faber region has a finite total rotation. This means that
\begin{equation*}
    \alphaFP=\mathbf{O}(1),
\end{equation*}
which justifies the claim in the remark succeeding \thm{faber_history}.

Next, we consider $\max_{l=0,1,\ldots,n-1}\norm{\sum_{k=l}^{n-1}\beta_k\mathbf{F}_{k-l}\left(\frac{A}{\alpha_A}\right)\ket{\psi}}$, which appears in the asymptotic complexity of the Faber eigenvalue transformation algorithm. Assuming that $\mathcal{S}_\delta\left(\frac{A}{\alpha_A}\right)\subseteq\mathcal{E}$, we obtain
\begin{equation*}
    \norm{\sum_{k=l}^{n-1}\beta_k\mathbf{F}_{k-l}\left(\frac{A}{\alpha_A}\right)\ket{\psi}}
    \leq\norm{\sum_{k=l}^{n-1}\beta_k\mathbf{F}_{k-l}\left(\frac{A}{\alpha_A}\right)}
    \lesssim\norm{\sum_{k=l}^{n-1}\beta_k\mathbf{F}_{k-l}}_{\max,\partial\mathcal{E}}
    =\norm{\sum_{k=l}^{n-1}\beta_k\mathbf{F}_{k-l}}_{\max,\mathbf{\Psi}(\partial\mathcal{D})}.
\end{equation*}
Proceeding in a similar way as in the numerical range case, we have
\begin{equation*}
    \alphaF=\mathbf{O}\left(\log(n)\norm{p}_{\max,\partial\mathcal{E}}\right),
\end{equation*}
which establishes the claimed complexity in the remark following \thm{qevt_faber}.

\subsection{Average-case analysis with Carleson-Hunt theorem}
\label{append:analysis_faber_carleson}

In the previous subsection, we have analyzed asymptotic scaling of the Faber-based eigenvalue algorithms based on either the numerical range or the pseudospectrum. However, there is a $\sim\log(n)$ factor in the bound of $\alphaF$. Similar to the Chebyshev case, this logarithmic factor can be removed for an average diagonalizable input matrix with the help of Carleson-Hunt theorem (by trading in a scaling with the Jordan condition number), although the reasoning is somewhat different which we present below.

For any $z$ in the interior of the Faber region $\mathcal{E}$, we apply the Cauchy integral theorem to the Faber generating function and get
\begin{equation}
\begin{aligned}
    \abs{\sum_{k=n-j}^n\beta_k\mathbf{F}_{k+j-n}(z)}
    &=\abs{\sum_{k=n-j}^n\beta_k
    \frac{1}{2\pi i}\int_{\partial\mathcal{D}}\mathrm{d}\xi\ \frac{\xi^{k+j-n}\mathbf{\Psi}'(\xi)}{\mathbf{\Psi}(\xi)-z}}\\
    &\leq\frac{1}{2\pi}\int_{\partial\mathcal{D}}\mathrm{d}\xi\ \abs{\sum_{k=n-j}^n\beta_k\xi^{k}
    \frac{\mathbf{\Psi}'(\xi)}{\mathbf{\Psi}(\xi)-z}}.
\end{aligned}
\end{equation}
The result can be further upper bounded by a linear combination of partial sums of the form $\frac{1}{2\pi}\int_{\partial\mathcal{D}}\mathrm{d}\xi \abs{\sum_{k=0}^n\beta_k\xi^{k}
    \frac{\mathbf{\Psi}'(\xi)}{\mathbf{\Psi}(\xi)-z}}$.
By definition of the contour integral,
\begin{equation}
\begin{aligned}
    \frac{1}{2\pi}\int_{\partial\mathcal{D}}\mathrm{d}\xi\ \abs{\sum_{k=0}^n\beta_k\xi^{k}
    \frac{\mathbf{\Psi}'(\xi)}{\mathbf{\Psi}(\xi)-z}}
    &=\frac{1}{2\pi}\int_{0}^{2\pi}\mathrm{d}u\
    \abs{\sum_{k=0}^n\beta_ke^{iku}
    \frac{\mathbf{\Psi}'(e^{iu})}{\mathbf{\Psi}(e^{iu})-z}},
\end{aligned}
\end{equation}
where the coefficients $\beta_k$ are the Fourier coefficients
\begin{equation}
    \beta_k=\frac{1}{2\pi}\int_{0}^{2\pi}\mathrm{d}v\
    e^{iv}\frac{p(\mathbf{\Psi}(e^{iv}))}{e^{i(k+1)v}}.
\end{equation}
Therefore, $\sum_{k=0}^n\beta_ke^{iku}$ is exactly the one-sided Fourier partial sum for the function $p(\mathbf{\Psi}(e^{iu}))$ or, equivalently, the two-sided Fourier partial sum for the Hilbert transform of $p(\mathbf{\Psi}(e^{iu}))$. Therefore, we have
\begin{equation}
    \frac{1}{2\pi}\int_{0}^{2\pi}\mathrm{d}u\
    \abs{\sum_{k=0}^n\beta_ke^{iku}
    \frac{\mathbf{\Psi}'(e^{iu})}{\mathbf{\Psi}(e^{iu})-z}}
    \leq\frac{1}{2\pi}\int_{0}^{2\pi}\mathrm{d}u\
    \abs{\mathbf{S}_n\widetilde{\mathbf{H}}\left(p(\mathbf{\Psi}(e^{iu}))\right)}
    \abs{\frac{\mathbf{\Psi}'(e^{iu})}{\mathbf{\Psi}(e^{iu})-z}}.
\end{equation}

Here, the second factor can be further bounded as
\begin{equation}
    \abs{\frac{\mathbf{\Psi}'(e^{iu})}{\mathbf{\Psi}(e^{iu})-z}}
    \leq\frac{\norm{\mathbf{\Psi}'}_{\max,\partial\mathcal{D}}}{\mathbf{Dist}(z,\partial\mathcal{E})},
\end{equation}
whereas the first factor is handled by the Carleson-Hunt theorem as
\begin{equation}
\begin{aligned}
    \int_{0}^{2\pi}\mathrm{d}u\
    \abs{\mathbf{S}_n\widetilde{\mathbf{H}}\left(p(\mathbf{\Psi}(e^{iu}))\right)}
    &\lesssim\sqrt{\int_{0}^{2\pi}\mathrm{d}u\
    \abs{\mathbf{S}_n\widetilde{\mathbf{H}}\left(p(\mathbf{\Psi}(e^{iu}))\right)}^2}\\
    &\lesssim \norm{p\left(\mathbf{\Psi}\left( e^{i(\cdot)}\right)\right)}_{\max,[0,2\pi]}
    =\norm{p}_{\max,\partial\mathcal{E}}.
\end{aligned}
\end{equation}
Altogether,
\begin{equation}
    \abs{\sum_{k=n-j}^n\beta_k\mathbf{F}_{k+j-n}(z)}
    \lesssim\frac{\norm{\mathbf{\Psi}'}_{\max,\partial\mathcal{D}}}{\mathbf{Dist}(z,\partial\mathcal{E})}
    \norm{p}_{\max,\partial\mathcal{E}}.
\end{equation}

Now, suppose that $z$ is randomly distributed in $\mathcal{E}$ with probability density function $q(z)$ such that $\int_{\mathcal{E}}\mathrm{d}z\ q(z)=1$. Then,
\begin{equation}
    \int_{\mathcal{E}}\mathrm{d}z\ q(z)\abs{\sum_{k=n-j}^n\beta_k\mathbf{F}_{k+j-n}(z)}^2
    \lesssim\norm{\mathbf{\Psi}'}_{\max,\partial\mathcal{D}}^2\norm{p}_{\max,\partial\mathcal{E}}^2
    \int_{\mathcal{E}}\mathrm{d}z\ \frac{q(z)}{\mathbf{Dist}^2(z,\partial\mathcal{E})}.
\end{equation}
We assume that $\int_{\mathcal{E}}\mathrm{d}z\ \frac{q(z)}{\mathbf{Dist}^2(z,\partial\mathcal{E})}$ is finite, which happens if eigenvalues of the input matrix have less chance to appear near boundary $\partial\mathcal{E}$ of the Faber region. In any case, this is a constant depending only on properties of the Faber region.
Proceeding as in the Chebyshev case, we obtain
\begin{equation}
    \alphaF=\mathbf{O}\left(\norm{S}\norm{S^{-1}\ket{\psi}}\norm{p}_{\max,\partial\mathcal{E}}\right)
\end{equation}
for diagonalizable input matrices.
This justifies the claimed scaling in the remark succeeding \thm{qevt_faber}.

We end this section with a brief comment on the analysis of quantum eigenvalue algorithms. We have bounded query complexity of the Chebyshev-based algorithms in \append{analysis_cheby} using the Jordan condition number as well as size of the largest Jordan block, whereas we have handled the Faber-based algorithms in \append{analysis_faber} mostly using the notion of numerical range and pseudospectrum. See \tab{nonnormal} for a summary of results. This is however not the only way in which our algorithms can be analyzed. For instance, our above analysis in \append{analysis_faber_carleson} works for diagonalizable input matrices and is essentially a generalization of the Jordan condition number bound. The reverse direction however does not lead to interesting new results. This is because, in order to apply the Crouzeix-Palencia theorem, we would need to enclose numerical range of the input matrix inside of a real interval (the pseudospectrum is an open set and can never be enclosed by a real interval). But per the characterizations of Hermitian matrices in \sec{prelim_matrix}, this only happens if the input matrix is Hermitian, which is already handled by the existing QSVT algorithm.
In any case, it would be fruitful to prove tighter bounds on the matrix functions used by our algorithms, as well as to construct alternative methods for quantum eigenvalue processing that scale with different matrix functions.

\clearpage
\bibliographystyle{myhamsplain2}
\bibliography{QEVP.bib}

\end{document}